\documentclass[10pt,prd,aps,amsfonts,eqsecnum,superscriptaddress,nofootinbib,notitlepage,longbibliography,draft]{revtex4-1}

\usepackage[a4paper, left=2cm, right=2cm, top=2cm, bottom=2cm]{geometry}

\usepackage[english]{babel}
\addto\extrasenglish{%
	\def\appendixautorefname{Appendix}%
}
\usepackage[utf8]{inputenc}
\usepackage[T1]{fontenc}
\usepackage{lmodern}
\usepackage{amsmath,amsthm,mathdots,amssymb,bm,bbm,mathrsfs,mathtools}
\usepackage[final]{graphicx} 
\usepackage{tikz,pgfplots}\pgfplotsset{compat=1.16}


\usepackage{thm-restate}
\newtheorem{thm}{Theorem}[section]
\newtheorem{cor}{Corollary}[section]
\newtheorem{lem}{Lemma}[section]

\newtheorem{fact}{Fact}[section]
\newtheorem{prop}{Proposition}[section]

\newtheorem{defn}{Definition}[section]

\usepackage{comment}

\usepackage{mleftright}\mleftright 

\usepackage{enumitem}
\mathtoolsset{showonlyrefs=true}
\usepackage{microtype,color} 
\usepackage{xcolor}
\usepackage[colorlinks=true, urlcolor=violet, linkcolor=blue, citecolor=red, hyperindex=true, linktocpage=true, pagebackref=true, draft=false]{hyperref}

\renewcommand{\vec}{\bm}
\newcommand{\CA}{\mathcal{A}}
\newcommand{\CB}{\mathcal{B}}

\newcommand{\BC}{\mathbb{C}}
\newcommand{\CD}{\mathcal{D}}

\newcommand{\CF}{\mathcal{F}}

\newcommand{\CH}{\mathcal{H}}
\newcommand{\CI}{\mathcal{I}}

\newcommand{\CL}{\mathcal{L}}
\newcommand{\CM}{\mathcal{M}}
\newcommand{\CN}{\mathcal{N}}

\newcommand{\CO}{\mathcal{O}}

\newcommand{\BR}{\mathbb{R}}
\newcommand{\CS}{\mathcal{S}}
\newcommand{\CT}{\mathcal{T}}

\newcommand{\BZ}{\mathbb{Z}}

\newcommand{\vA}{\bm{A}}

\newcommand{\vB}{\bm{B}}

\newcommand{\vF}{\bm{F}}

\newcommand{\vH}{\bm{H}}
\newcommand{\vI}{\bm{I}}

\newcommand{\vK}{\bm{K}}

\newcommand{\vL}{\bm{L}}

\newcommand{\vM}{\bm{M}}
\newcommand{\vN}{\bm{N}}
\newcommand{\vO}{\bm{O}}
\newcommand{\vP}{\bm{P}}

\newcommand{\vQ}{\bm{Q}}
\newcommand{\vR}{\bm{R}}

\newcommand{\bt}{\bar{t}}
\newcommand{\vU}{\bm{U}}
\newcommand{\vV}{\bm{V}}
\newcommand{\vW}{\bm{W}}
\newcommand{\vX}{\bm{X}}
\newcommand{\vY }{\bm{Y }}
\newcommand{\vZ}{\bm{Z}}

\newcommand{\bbeta}{\bar{ \beta}}
\newcommand{\vrho}{\bm{ \rho}}
\renewcommand{\L}{\left}
\newcommand{\R}{\right}

\newcommand{\bnu}{\bar{\nu}}

\newcommand{\bomega}{\bar{\omega}}
\newcommand{\tOmega}{\tilde{\Omega}}
\newcommand{\tCO}{\tilde{\CO}}

\newcommand{\dagg}{\dagger}

\newcommand{\vertiii}[1]{{\left\vert\kern-0.25ex\left\vert\kern-0.25ex\left\vert #1 \right\vert\kern-0.25ex\right\vert\kern-0.25ex\right\vert}}
\newcommand{\norm}[1]{\Vert {#1} \Vert}

\newcommand{\normp}[2]{\norm{#1}_{#2}}
\newcommand{\lnormp}[2]{\lnorm{#1}_{#2}}

\newcommand{\labs}[1]{\left\vert {#1} \right\vert}
\newcommand{\lnorm}[1]{\left\Vert {#1} \right\Vert}
\newcommand{\e}{\mathrm{e}}

\newcommand{\ri}{\mathrm{i}}
\newcommand{\rd}{\mathrm{d}}
\newcommand*{\tr}{\mathrm{Tr}}
\newcommand*{\poly}{\mathrm{Poly}}

\newcommand{\indicator}{\mathbbm{1}}
\newcommand{\Lword}[1]{\text{Lindbladian}}

\newcommand{\nrm}[1]{\left\| #1 \right\|}

\newcommand{\undersetbrace}[2]{ \underset{#1}{\underbrace{#2}}}

\DeclareMathOperator{\sinc}{sinc}
\DeclareMathOperator{\sgn}{sgn}
\DeclareMathOperator{\erf}{erf}
\DeclareMathOperator{\erfc}{erfc}
\DeclarePairedDelimiterX{\braket}[1]{\langle}{\rangle}{#1}
\DeclarePairedDelimiterX\ketbra[2]{| }{|}{#1 \delimsize\rangle\!\delimsize\langle #2}	
\DeclarePairedDelimiterX\dotp[2]{\langle}{\rangle}{#1, #2}
\newcommand{\bigO}[1]{\mathcal{O}\left( #1 \right)}
\newcommand{\bigOt}[1]{\widetilde{\mathcal{O}}\left( #1 \right)}

\DeclareMathAlphabet{\dutchcal}{U}{dutchcal}{m}{n}
\SetMathAlphabet{\dutchcal}{bold}{U}{dutchcal}{b}{n}
\DeclareMathAlphabet{\dutchbcal} {U}{dutchcal}{b}{n}

\makeatletter
\DeclareRobustCommand*{\pmzerodot}{%
	\nfss@text{%
		\sbox0{$\vcenter{}$}
		\sbox2{0}%
		\sbox4{0\/}%
		\ooalign{%
			0\cr
			\hidewidth
			\kern\dimexpr\wd4-\wd2\relax 
			\raise\dimexpr(\ht2-\dp2)/2-\ht0\relax\hbox{%
				\if b\expandafter\@car\f@series\@nil\relax
				\mathversion{bold}%
				\fi
				$\cdot\m@th$%
			}%
			\hidewidth
			\cr
			\vphantom{0}
		}%
	}%
}


\usepackage{mmacells}

\mmaSet{
	morefv={gobble=2},
	linklocaluri=mma/symbol/definition:#1,
	morecellgraphics={yoffset=0mm},
	leftmargin=11.5mm,
	labelsep=1mm,
}


\usepackage{ifdraft}
\ifdraft{
	\newcommand{\authnote}[3]{{\color{#3} {\bf  #1:} #2}}	
%
}{
	\newcommand{\authnote}[3]{}
}

\setcounter{tocdepth}{1}

\usetikzlibrary{quantikz2}
\makeatletter
\def\l@subsection#1#2{}
\def\l@subsubsection#1#2{}
\makeatother
\begin{document}
\renewcommand{\appendixautorefname}{Appendix}
\renewcommand{\chapterautorefname}{Chapter}
\renewcommand{\sectionautorefname}{Section}
\renewcommand{\subsubsectionautorefname}{Section}	

\title{An efficient and exact noncommutative quantum Gibbs sampler}
\author{Chi-Fang Chen}
\email{achifchen@gmail.com}
\affiliation{Institute for Quantum Information and Matter,
California Institute of Technology, Pasadena, CA, USA}
    \affiliation{AWS Center for Quantum Computing, Pasadena, CA}
    \author{Michael J. Kastoryano}
    \affiliation{AWS Center for Quantum Computing, Pasadena, CA}
    \affiliation{IT University of Copenhagen, Denmark}
    \author{Andr\'as Gily\'en}
    \affiliation{Alfr\'ed R\'enyi Institute of Mathematics, HUN-REN, Budapest, Hungary}
 
\begin{abstract}
Preparing thermal and ground states is an essential quantum algorithmic task for quantum simulation. In this work, we construct the first efficiently implementable and exactly detailed-balanced Lindbladian for Gibbs states of arbitrary noncommutative Hamiltonians. Our construction can also be regarded as a continuous-time quantum analog of the Metropolis-Hastings algorithm. To prepare the quantum Gibbs state, our algorithm invokes Hamiltonian simulation for a time proportional to the mixing time and the inverse temperature $\beta$, up to polylogarithmic factors. Moreover, the gate complexity reduces significantly for lattice Hamiltonians as the corresponding Lindblad operators are (quasi-) local (with radius $\sim\beta$) and only depend on local Hamiltonian patches. Meanwhile, purifying our Lindbladians yields a temperature-dependent family of frustration-free ``parent Hamiltonians''\kern-1mm, prescribing an adiabatic path for the canonical purified Gibbs state (i.e., the Thermal Field Double state). These favorable features suggest that our construction serves as a quantum algorithmic counterpart to classical Markov chain Monte Carlo sampling.
\end{abstract}
\maketitle

\section{Introduction}
One of the leading candidate applications of quantum computers~\cite{dalzell2023QuantumAlgSurvey} is to simulate quantum systems~\cite{feynman1982SimQPhysWithComputers}. In particular, the preparation of thermal states or ground states for materials and molecules has received significant attention~\cite{babbush2018low,Chamberland2020BuildingAF,THC_google,2021_Microsoft_catalysis}. Surprisingly, there has not been a consensus on the ``go-to'' quantum algorithm for this task due to a lack of provable guarantees or empirical evidence~\cite{lee2022there}. Recently, several Monte Carlo-style, nonunitary quantum algorithms have been proposed~\cite{temme2009QuantumMetropolis, yung2010QuantumQuantumMetropolis, wocjan2021szegedy, Rall_thermal_22, chen2023QThermalStatePrep,ding2023single}. While their efficacy has only been validated using small-scale numerics and under strong theoretical assumptions~\cite{ETH_thermalization_Chen21,Shtanko2021AlgorithmsforGibbs,ding2023single}, there are reasons for optimism. On physical grounds, these algorithms resemble naturally occurring system-bath dynamics~\cite{Mozgunov2020completelypositive}; if a system rapidly cools in a refrigerator, the same plausibly applies to a ``cooling algorithm'' that emulates this process. Alternatively, from a computer science perspective, these algorithms are cousins of classical Markov chain Monte Carlo (MCMC) methods but with quantum mechanical effects and complications. This work sets out to complete this line of thought and construct an ideal quantum MCMC algorithm where the robustness, simplicity, and empirical success of the classical case may be transferable. 

The cornerstone of classical Markov chain Monte Carlo methods is \textit{detailed balance} (see, e.g.,~\cite{Markovchain_mixing}): given a target state $\pi$, we impose a certain symmetry of the Markov chain $\vM$ 
\begin{align}
    \vM_{s's} \pi_{s} = \pi_{s'} \vM_{s's} \quad \text{for each configuration} \quad s,s', \quad \text{ensuring stationarity} \quad \vM[\pi] = \pi.
\end{align}
This simple recipe for the stationary state has been crucial in constructing and analyzing the Metropolis-Hastings algorithm and related Markov chains. 
Notably, detailed balance gives a conceptually simple picture of convergence via the spectral gap of $\vM$, a quantity amenable to numerical and analytic bounds. If the problem at hand has a local structure, detailed balance can often be imposed locally and efficiently, relegating the algorithm's complexity to the \textit{mixing time}, the time scale of convergence towards stationarity. While the mixing time may be challenging to analyze, MCMC methods can often be employed heuristically. 
In particular, we are often interested in sampling the \textit{Gibbs distribution} $\pi_\beta \propto e^{-\beta H}$ of a certain energy functional $H$ at temperature $1/\beta$. Analogously, the central idea of \textit{quantum Gibbs sampling} is to construct a detailed-balanced quantum process where the quantum Gibbs state is stationary. In this work, we focus on designing a \Lword{} $\CL_{\beta}$ (the quantum analog of a continuous-time Markov chain generator) such that
\begin{align}
\e^{\CL_{\beta}t}[\vrho_{\beta}]=\vrho_{\beta} \quad\text{where}\quad \vrho_\beta := \e^{-\beta \vH}/\tr(\e^{-\beta \vH})\label{eq:main_Gibbs_fixed_point},  
\end{align}
for \textit{any} target quantum Hamiltonian $\vH$. 
As in the classical case, we can prepare samples of quantum Gibbs states if the Lindbladian evolution can be \textit{efficiently} implemented and the state converges \textit{rapidly} to the Gibbs state.

The main issue with existing quantum Gibbs sampling algorithms is that \textit{quantum detailed balance} (Figure~\ref{fig:QuantumDB})  only holds \textit{approximately} unless we can distinguish individual energy eigenstates exactly, which is generally intractable except for fast-forwardable Hamiltonians (e.g., Hamiltonians with commuting terms). Consequently, we either lose accuracy guarantees for the stationary state or the efficiency for the individual steps of the Gibbs sampling algorithm, leading to significant aggregated complexity plaguing various constructions; see Ref.~\cite{chen2023QThermalStatePrep} for a comprehensive catalog. The algorithmic challenge in enforcing quantum detailed balance is the energy-time uncertainty principle rooted in \textit{metrology}: for each energy estimate, the uncertainty scales inversely proportional to the Hamiltonian simulation time. Indeed, all existing quantum MCMC algorithms attempt to attain detailed balance via an ``energy estimation'' subroutine (quantum phase estimation~\cite{temme2009QuantumMetropolis,wocjan2021szegedy, Rall_thermal_22} or operator Fourier Transform~\cite[Appendix A]{chen2023QThermalStatePrep}). 
Consequently, this error propagates to the desired Gibbs state and impacts the implementation cost. 

To our knowledge, the best general lower-bound on the Hamiltonian simulation time is $\Omega(\beta)$ per Gibbs sample~\cite[Proposition G.5]{chen2023QThermalStatePrep}. This comes from a sensitivity argument that the Gibbs state is a smooth matrix function of $\vH$ with derivatives bounded by $\CO( \beta)$. This conceptual gap motivates our guiding question: 
\begin{align}
    \textit{Can we design an efficiently implementable yet exactly detailed-balanced quantum Gibbs sampler?}
\end{align}
If so, we may recover both the simplicity and versatility of classical MCMC algorithms. In this work, we answer this question in the affirmative by explicitly constructing an exactly detailed-balanced \Lword{} at a moderate cost: $\tCO(\beta)$-Hamiltonian simulation cost per \textit{unit time} of Linbladian evolution $\e^{\CL}$. (The unit-time evolution for a continuous-time generator should be regarded as ``one step,'' corresponding to one discrete Markov chain update.) Furthermore, for lattice Hamiltonians (with local jumps), our Lindbladian is (quasi-)local with locality scaling as $\tCO(\beta)$. Thus, one step of the algorithm only needs to simulate localized Hamiltonian patches; this starkly contrasts with previous works, whose cost per unit time step generally scales with the system size due to simulating the \textit{global} Hamiltonian. The mathematical and conceptual simplicity of our result immediately initiates a list of new directions, which we discuss in detail in~\autoref{sec:discussion}.

The key revelation behind our construction is that quantum detailed balance can be enforced \textit{smoothly} without ever knowing the energy. Indeed, the standard \textit{metrology} lower bound $\sim\Omega(\frac{1}{\epsilon})$ is not an obstruction because having access to a detailed-balanced \Lword{} (or the Gibbs state) does not give energy estimates. We have seen that Quantum Signal Processing~\cite{low2016HamSimQSignProc} or Quantum Singular Value Transform (QSVT)~\cite{gilyen2018QSingValTransf} allows one to directly access smooth (polynomial) functions of Hamiltonians without a phase-estimation subroutine; the costs often scale linearly with the largest derivatives and only \textit{logarithmically} with the precision. However, Lindbladians, as superoperators, are more restrictive to manipulate than Hermitian matrices. A key design ingredient is a carefully chosen \textit{coherent} term in our \Lword{}
\begin{align}
    \CL_{\beta}[\vrho] = \underset{\text{``coherent''}}{\underbrace{-\ri [\vB, \vrho]}} + (\text{``dissipative''}),
\end{align}
which appears necessary to \textit{coherently} and \textit{exactly} cancel out certain unwanted errors from the dissipative part.

As a by-product, purifying our Lindbladian yields a temperature-dependent family of ``parent Hamiltonians'' whose zero-eigenstate is a canonical purification of the Gibbs state.\footnote{This purification coincides with the \textit{Thermal Field Double} state featured in recent quantum gravity discussions. See, e.g.,~\cite{Maldacena2013CoolHF}.} 
Similarly to how classical Markov chains can be ``quantized'' to prepare the purified stationary state, here we prepare the purified Gibbs state by following a prescribed adiabatic path (called \textit{quantum simulated annealing}~\cite{szegedy2004QMarkovChainSearch,yung2010QuantumQuantumMetropolis}), drawing a surprisingly simple connection between thermal dissipation and adiabatic evolution. In particular, for lattice Hamiltonians, the parent Hamiltonian inherits the (quasi-)locality, which curiously connects the purified Gibbs state to the ground state of (quasi-)local Hamiltonians. 

\begin{figure}[t]
\includegraphics[width=0.8\textwidth]{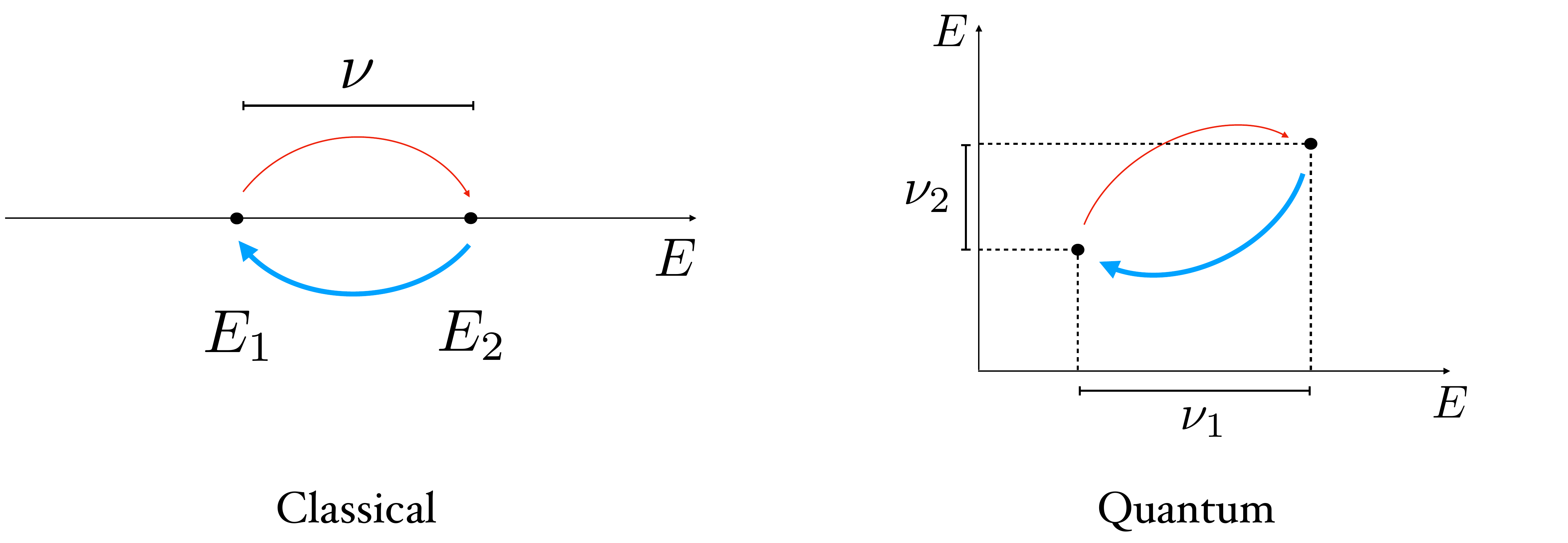}
\caption{
(Left) For the classical Gibbs distribution, the detailed balance condition is a pairwise relation between heating (red) and cooling (blue) transition rates, depending on the energy difference $\nu$ of states. (Right) For the quantum Gibbs state, the detailed balance condition refers to pairs of matrix elements of the density operator (expanded in the energy basis), where each matrix element is described by a pair of energies (of the basis elements in the ket and bra respectively) therefore the relation depends on both of the respective energy differences $\nu_1$ and $\nu_2$.
}\label{fig:QuantumDB}
\end{figure}

\subsection{Main results}

Our main results, based on the algorithmic framework of~\cite[Section III]{chen2023QThermalStatePrep}, consider the following \Lword{} in the Schrodinger picture
\begin{align}
		\CL_{\beta}[\cdot] := \underset{\text{``coherent''}}{\underbrace{-\ri [\vB, \cdot]}} + \sum_{a\in A} 
		\int_{-\infty}^{\infty} \gamma(\omega) \left(\underset{\text{``transition''}}{\underbrace{\hat{\vA}^a(\omega)(\cdot)\hat{\vA}^{a}(\omega)^\dagg}} - \underset{\text{``decay''}}{\underbrace{\frac{1}{2}\{\hat{\vA}^{a}(\omega)^\dagg\hat{\vA}^a(\omega),\cdot\}}}\right)\rd\omega\label{eq:exact_DB_L}
\end{align}
which is parameterized by the following terms (with convenient normalization conditions~\cite[Section I.B]{chen2023QThermalStatePrep}): 
\begin{itemize}
    \item The distinct \textit{jump operators} $\vA^a$ ``drive'' the transitions, and can be chosen arbitrarily as long as their adjoints are included
\begin{align}
    \{\vA^a\colon a\in A\}=\{\vA^{a\dagg}\colon a\in A\}\quad \text{and}\quad \nrm{\sum_{a\in A}\vA^{a\dagg} \vA^a}\leq 1.\label{eq:AAdagger}
\end{align}
For better mixing and ergodicity, the jumps should be ``scrambling'' and not commute with the Hamiltonian (e.g., breaking the symmetries of the Hamiltonian). For lattice Hamiltonians, the jump operators may be chosen simply to be the single-site Pauli operators, but global jumps could also be helpful in some cases, as in the classical case (e.g., cluster updates).

    \item The \textit{Operator Fourier Transform} (OFT)\footnote{Note the sign convention, which might differ from that of other works in the literature. } (\autoref{sec:OFT}) weighted by a Gaussian \textit{filter} with a tunable width $\sim \sigma_E^{-1}$
    \begin{align}\label{eq:OpOFT}
	\hat{\vA}^a(\omega) :=  \frac{1}{\sqrt{2\pi}}\int_{-\infty}^{\infty} \e^{\ri \vH t} \vA^a \e^{-\ri \vH t} \e^{-\ri \omega t} f(t)\rd t \quad \text{where}\quad     f(t)&:=\e^{-\sigma_E^2 t^2}\sqrt{\sigma_E \sqrt{2/\pi}}\\
 &=\frac{\e^{-t^2/\beta^2}}{\sqrt{\beta \sqrt{\pi/2}}} \quad \text{if}\quad \sigma_E = \frac{1}{\beta}. 
\end{align}
In particular, the Gaussian is normalized $\int_{-\infty}^{\infty} \labs{f(t)}^2\rd t =1$. Naturally, the Heisenberg evolution $\e^{\ri \vH t} \vA^a \e^{-\ri \vH t}$ diagnoses the energy difference $\omega$ before and after the jump. Integrating over time $\frac{1}{\sqrt{2\pi}}\int_{-\infty}^{\infty} (\cdot)\e^{-\ri \omega t} f(t)\rd t $ yields the operator Fourier Transform $\hat{\vA}^a(\omega)$, which selects the transitions of $\vA^a$ that \textit{increase} the energy by roughly $\sim \omega \pm \CO(\sigma_E)$. At first glance, the Gaussian filter seems to merely ensure good concentration for both the frequency and time domain, but it turns out to have a more intimate connection~\cite{Moussa2019LowDepthQM} to quantum detailed balance (see also \autoref{sec:why_gaussian} for an alternative justification).

\item The \textit{transition weight} $\gamma(\omega)$ follows (yet another) Gaussian with a tunable variance $\sigma_{\gamma}>0$:
\begin{align}
	\gamma(\omega) &= \exp\L(- \frac{(\omega + \omega_{\gamma})^2}{2\sigma_{\gamma}^2}\R)\quad \text{with variance}\quad \sigma_{\gamma}^2 := \frac{2\omega_{\gamma}}{\beta}-\sigma_E^2 \label{eq:ExctDissip}\\
 &= \exp\L(- \frac{(\beta\omega + 1)^2}{2}\R)\quad  \text{if}\quad  \sigma_E = \sigma_{\gamma} = \omega_{\gamma} = \frac{1}{\beta}.\label{eq:gaussian_gamma_beta}
\end{align}
The normalization is such that $\norm{\gamma(\omega)}_{\infty} \le 1$, and the maximum is attained at $\omega = -\omega_{\gamma}$. 
\item A coherent (i.e., nondissipative) term generated by a fine-tuned Hermitian matrix $\vB$. The expression depends on $\omega_{\gamma}, \sigma_E, \beta$ in the general case (see \autoref{cor:GaussianLikeWeight}), but it simplifies to
  \begin{align}
		\kern-7mm\vB&\!:= \sum_{a\in A} 
		\int_{-\infty}^{\infty}\!b_1(t)\e^{-\ri\beta\vH t} \L(\int_{-\infty}^{\infty}b_2(t')\e^{\ri\beta\vH t'}\vA^{a\dagger}\e^{-2\ri\beta\vH t'}\vA^a\e^{\ri\beta\vH t'}\rd t' \R)\e^{\ri\beta\vH t}\rd t\quad \text{if}\quad \omega_{\gamma}= \sigma_E = \sigma_{\gamma} =\frac{1}{\beta}\quad\quad \label{eq:mainBDef}
	\end{align}
 for some carefully chosen smooth and rapidly decaying functions $b_1,b_2$ normalized by $\norm{b_1}_1,\norm{b_2}_1 \le 1$. The coherent term $\vB$ may appear intimidating but plays a crucial role in ensuring \textit{quantum detailed balance} for the Gibbs state $\vrho_\beta$, defined as:
\begin{align}
    \CL_{\beta}^{\dagger}[\cdot] = \sqrt{\vrho_{\beta}}^{-1}\CL_{\beta}[ \sqrt{\vrho_{\beta}} \cdot \sqrt{\vrho_{\beta}}]\sqrt{\vrho_{\beta}}^{-1}\quad \text{for fixed}\quad \beta, \vH.\label{eq:main_DB}
\end{align}
It implies the stationarity of Gibbs state exactly (see~\autoref{defn:DB}). 
\end{itemize}

The Gaussian transition weight~\eqref{eq:ExctDissip} is inspired by an observation of~\cite{Moussa2019LowDepthQM}:\footnote{Their algorithm~\cite{Moussa2019LowDepthQM} seems qualitatively different from Monte Carlo style quantum algorithms~\cite{temme2009QuantumMetropolis, yung2010QuantumQuantumMetropolis, wocjan2021szegedy, Rall_thermal_22, chen2023QThermalStatePrep} and closer to performing phase estimation on trial states; see the discussion in \cite[Page 5]{Moussa2019LowDepthQM}.} the functional form of Gaussians is naturally compatible with exact detailed balance\footnote{We thank Jonathan Moussa for pointing us to his paper and raising the question of whether detailed balance can hold exactly in the precursor of this work~\cite{chen2023QThermalStatePrep}.} if we make conscious choices of $\omega_{\gamma}, \sigma_E$
\begin{align}
    \exp\L( - \frac{(\omega+\omega_{\gamma})^2}{2\sigma^2}\R) = \exp(-\frac{2\omega_{\gamma}}{\sigma^2} \cdot \omega) \exp\L(- \frac{(-\omega+\omega_{\gamma})^2}{2\sigma^2}\R).
\end{align}
Compared with the usual step-function-like Metropolis weight $\min(1,\e^{-\beta \omega})$, the Gaussian weight is more selective, only allowing energy transitions $-\omega_\gamma\pm\CO(\sigma_\gamma)$; this narrower window could potentially freeze the dynamics, leading to a long mixing time.

Fortunately, quantum detailed balance~\eqref{eq:main_DB} is preserved under \textit{linear combination} of \Lword{}s; hence, choosing a linear combination of $\gamma$ covering a range of different widths $\sigma_\gamma$ can remove the heavy restriction on energy transitions. Surprisingly, a suitable linear combination recovers Metropolis-like transition weights, which we focus on as the representative. To obtain the corresponding exactly detailed-balanced \Lword{}, the only change compared to \eqref{eq:ExctDissip}-\eqref{eq:mainBDef} is the choice of transition weight
\begin{align}\label{eq:Metropolish2}
    \text{(Metropolis-Style)}\quad\gamma^M(\omega) := \exp\L(-\beta\max\left(\omega+\frac{1}{2\beta},0\right)\R) \quad \text{if}\quad \sigma_E = \frac{1}{\beta}
\end{align}
with the corresponding coherent term $\vB^M$ parameterized by another function $b_2^M(t)$ (the function $b^M_1(t)=b_1(t)$ remains the same as in \eqref{eq:mainBDef}).\footnote{The generalized function (distribution) $b_2^M(t)$ should be interpreted as the Cauchy principal value $\lim_{\eta\rightarrow 0+}\indicator(|t|>\eta)b_2^M(t)$.
	In case $[\vH,\sum_{a\in A} \vA^{a\dagger}\vA^a]\neq 0$ an additional correction term $\frac{1}{16\sqrt{2}\pi}\delta(t)$ should be added.} {In fact, it was later shown that in general it suffices if $\gamma(\omega) e^{\beta\omega + \sigma^2/2} =\gamma(-\omega-\sigma^2\beta^2)$, i.e., one may take any $\gamma_0(\nu)e^{\beta \nu} = \gamma_0(-\nu)$ and apply a shift $\gamma(\omega)=\gamma_0(\omega+\sigma^2\beta^2/2)$~\cite[Lemma 7.1]{ramkumar2024mixing}.}

Now, we present the first main result: the Gibbs state is an \textit{exact} stationary state of the advertised \Lword{} (see~\autoref{sec:prove_exactDB} for the proof). Although we have mainly focused on the Gibbs state, we can formally invoke Gibbs sampling for $\beta \vH = \log(\vrho_{fix})$ for an arbitrary target stationary state $\vrho_{fix}$, albeit with potential overhead from implementing the matrix logarithm.
\begin{thm}[Gibbs state is stationary]\label{thm:Exact_fixedpoint}
For any $\beta \ge 0$, the \Lword{}~\eqref{eq:exact_DB_L}-\eqref{eq:OpOFT} with $\sigma_E = \frac{1}{\beta}$, Gaussian transition wieght~\eqref{eq:gaussian_gamma_beta}, and the coherent term $\vB$~\eqref{eq:mainBDef} satisfies detailed balance~\eqref{eq:main_DB} exactly. Therefore, the Gibbs state is stationary
\begin{align}
    \CL_{\beta}[\vrho_{\beta}] =0.
\end{align}
The same applies for the Metropolis transition weight $\gamma^M(\omega)$~\eqref{eq:Metropolish2} with the corresponding coherent term $\vB^M$. 
\end{thm}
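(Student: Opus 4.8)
\emph{Proof plan.} The plan is to verify the detailed-balance relation~\eqref{eq:main_DB} directly; stationarity is then immediate, since putting the test argument equal to $\vI$ in~\eqref{eq:main_DB} and using $\CL_{\beta}^{\dagg}[\vI]=0$ (trace preservation of $\CL_{\beta}$) gives $\sqrt{\vrho_{\beta}}^{-1}\CL_{\beta}[\vrho_{\beta}]\sqrt{\vrho_{\beta}}^{-1}=0$. I would work in an eigenbasis $\vH=\sum_i E_i\ketbra{i}{i}$, write $\vA^a(t):=\e^{\ri\vH t}\vA^a\e^{-\ri\vH t}$ (continued analytically to complex $t$), so that $\braket{i|\hat{\vA}^a(\omega)|j}=\braket{i|\vA^a|j}\,\hat f\big(\omega-(E_i-E_j)\big)$ with $\hat f$ the (Gaussian) Fourier transform of $f$, and record the single computation that powers everything: the imaginary-time translation $\sqrt{\vrho_{\beta}}^{\mp1}\vA^a(t)\sqrt{\vrho_{\beta}}^{\pm1}=\vA^a(t\mp\ri\beta/2)$, which after completing the square inside the operator Fourier transform becomes $\sqrt{\vrho_{\beta}}^{-1}\hat{\vA}^a(\omega)\sqrt{\vrho_{\beta}}=\e^{\beta\omega/2+1/4}\,\hat{\vA}^a(\omega+1/\beta)$ when $\sigma_E=1/\beta$ (the frequency shift being $\sigma_E^2\beta$ in general).

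First I would expand both sides of~\eqref{eq:main_DB} into a coherent piece, a ``sandwich'' piece of the form $\hat{\vA}(\cdot)\hat{\vA}^{\dagg}$, and an anticommutator piece, and show the sandwich pieces cancel. On the $\CL_{\beta}^{\dagg}$ side the sandwich piece is $\sum_{a}\int\gamma(\omega)\hat{\vA}^a(\omega)^{\dagg}\sigma\hat{\vA}^a(\omega)\,\rd\omega$; using $\hat{\vA}^a(\omega)^{\dagg}=\hat{\vA}^{\bar a}(-\omega)$ with $\vA^{\bar a}=\vA^{a\dagg}$ (legitimate by the closure condition~\eqref{eq:AAdagger}) and relabelling the sum, it equals $\sum_{a}\int\gamma(-\omega)\hat{\vA}^a(\omega)\sigma\hat{\vA}^a(\omega)^{\dagg}\,\rd\omega$. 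On the $\sqrt{\vrho_{\beta}}$-conjugated side, applying the displayed identity to the two $\hat{\vA}^a(\omega)$ factors and substituting $\omega\mapsto\omega-1/\beta$ turns the sandwich piece into $\sum_{a}\int\gamma(\omega-1/\beta)\e^{\beta\omega-1/2}\hat{\vA}^a(\omega)\sigma\hat{\vA}^a(\omega)^{\dagg}\,\rd\omega$. Equality therefore reduces to the scalar identity $\gamma(-\omega)=\gamma(\omega-1/\beta)\e^{\beta\omega-1/2}$, which is exactly the relation that the Gaussian weight~\eqref{eq:gaussian_gamma_beta} was engineered to obey (this is the role of the tuning $\sigma_{\gamma}^2=2\omega_{\gamma}/\beta-\sigma_E^2$), and which also holds for the Metropolis weight~\eqref{eq:Metropolish2} by a short case analysis on the position of its kink.

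The anticommutator pieces do \emph{not} cancel, which is what forces the coherent term. Setting $\vC_a:=\int\gamma(\omega)\hat{\vA}^a(\omega)^{\dagg}\hat{\vA}^a(\omega)\,\rd\omega$ (which is Hermitian), conjugation by $\sqrt{\vrho_{\beta}}^{\pm1}$ shifts its two operator-Fourier-transform factors in \emph{opposite} directions, so the conjugated anticommutator term on the right of~\eqref{eq:main_DB} differs from the one on the left by $-\tfrac12\sum_{a}(\vE_a\sigma+\sigma\vE_a^{\dagg})$, with $\vE_a:=\sqrt{\vrho_{\beta}}^{-1}\vC_a\sqrt{\vrho_{\beta}}-\vC_a$; meanwhile the coherent pieces contribute $\ri\big(\vB+\sqrt{\vrho_{\beta}}^{-1}\vB\sqrt{\vrho_{\beta}}\big)\sigma-\ri\sigma\big(\vB+\sqrt{\vrho_{\beta}}^{-1}\vB\sqrt{\vrho_{\beta}}\big)^{\dagg}$. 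Since this must hold for every $\sigma$, relation~\eqref{eq:main_DB} collapses to the single operator equation $\{\sqrt{\vrho_{\beta}},\vB\}=\tfrac{\ri}{2}\sum_{a}[\vC_a,\sqrt{\vrho_{\beta}}]$. I would then observe that $\vX\mapsto\{\sqrt{\vrho_{\beta}},\vX\}$ is invertible and preserves Hermiticity and that the right-hand side is Hermitian (each $\tfrac{\ri}{2}[\vC_a,\sqrt{\vrho_{\beta}}]$ is $\ri$ times a commutator of Hermitian operators), so there is a unique Hermitian $\vB$, with eigenbasis matrix elements $\braket{i|\vB|j}=\tfrac{\ri}{2}\tanh\!\big(\tfrac{\beta(E_i-E_j)}{4}\big)\sum_{a}\braket{i|\vC_a|j}$.

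Finally I would check that formula~\eqref{eq:mainBDef} reproduces this operator (up to an irrelevant multiple of $\vI$, which drops out of $[\vB,\cdot]$). Carrying out the two nested conjugations in~\eqref{eq:mainBDef} and rotating the two time variables shows that $\braket{i|\vB|j}$ equals $\sum_{a,k}\braket{i|\vA^{a\dagg}|k}\braket{k|\vA^a|j}$ weighted by $\big(\int b_1(t)\e^{-\ri\beta(E_i-E_j)t}\,\rd t\big)\big(\int b_2(t')\e^{\ri\beta(E_i+E_j-2E_k)t'}\,\rd t'\big)$; on the other hand, because $f$ is Gaussian the integral defining $\braket{i|\vC_a|j}$ factorises level by level into $\e^{-\beta^2(E_i-E_j)^2/8}$ times a convolution $J(E_i+E_j-2E_k)$, where $J(\mu):=\int\gamma(\omega)\e^{-\frac{\beta^2}{2}(\omega+\mu/2)^2}\,\rd\omega$. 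Matching the $(E_i-E_j)$-dependence forces $b_1$ to be the inverse Fourier transform of $\nu\mapsto\tfrac{\ri}{2}\tanh(\beta\nu/4)\,\e^{-\beta^2\nu^2/8}$ — a $C^{\infty}$, rapidly decaying function that is \emph{independent of $\gamma$} (so indeed $b_1^M=b_1$) and can be normalised to $\nrm{b_1}_1\le1$ — while matching the $(E_i+E_j-2E_k)$-dependence forces $b_2$ to be the inverse Fourier transform of (a rescaling of) $J$, a shifted Gaussian in the Gaussian case. For the Metropolis weight one writes $\gamma^M$ as a superposition of the Gaussian weights and takes the corresponding superposition of solutions; then $J$ inherits an exponential tail, so its Fourier transform $b_2^M$ develops a simple pole and must be read as a Cauchy principal value, and a further $\propto\delta(t)$ term is needed precisely when $[\vH,\sum_{a}\vA^{a\dagg}\vA^a]\neq0$, so that the $E_i=E_j$ contributions — no longer killed by the $\tanh$ factor — come out correct. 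The superoperator bookkeeping in the first two steps is essentially forced once the imaginary-time identity is in hand; I expect the real obstacle to be this last step — establishing the $\nu$-versus-$\mu$ factorisation of $\braket{i|\vC_a|j}$ and verifying the analyticity, decay, $L^1$-normalisation, and (in the Metropolis case) the principal-value regularisation and $\delta(t)$ correction that define $b_1,b_2,b_2^M$.
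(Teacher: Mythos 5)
Your proposal is correct and follows essentially the same route as the paper: establish detailed balance of the transition part via the imaginary-time shift of the Gaussian-filtered operator Fourier transform (the paper phrases this as a skew-symmetry $\alpha_{\nu_1,\nu_2}=\alpha_{-\nu_2,-\nu_1}\e^{-\beta(\nu_1+\nu_2)/2}$ of the integrated coefficients, which is your pointwise identity $\gamma(-\omega)=\gamma(\omega-1/\beta)\e^{\beta\omega-1/2}$ after carrying out the $\omega$-integral), then solve the residual operator equation for the unique Hermitian coherent term $\vB_{\nu}=\frac{\ri}{2}\tanh(\beta\nu/4)\vR_{\nu}$ — your $\{\sqrt{\vrho_\beta},\vB\}=\frac{\ri}{2}[\vR,\sqrt{\vrho_\beta}]$ is the paper's \autoref{lem:FindingCoherenceTerm} in different packaging — and finally match the time-domain formula~\eqref{eq:mainBDef} by the $(\nu_1-\nu_2)$/$(\nu_1+\nu_2)$ factorization, exactly as in the paper's \autoref{sec:timedomain}. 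The only minor divergence is that you verify the Metropolis identity directly by a case analysis on the kink, whereas the paper deduces it from linearity of detailed balance under the linear combination of Gaussians; both are valid.
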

 Furthermore, building on the algorithmic machinery developed in Ref.~\cite[Section III]{chen2023QThermalStatePrep}, the \Lword{} can be efficiently simulated at a moderate cost (see~\autoref{sec:prove_L_cost} for the proof).
\begin{thm}[Efficient implementation]\label{thm:L_cost}
Instantiate the \Lword{} parameters of~\autoref{thm:Exact_fixedpoint} for either the Gaussian $\gamma(\omega)$ or Metropolis $\gamma^M(\omega)$ transition weight. Then, the Lindbladian evolution 
\begin{align}
\e^{\CL_{\beta}t} \quad \text{for each} \quad t \ge 1 
\end{align}
can be implemented efficiently in $\epsilon$-diamond distance with cost:
\begin{align}
    &\tCO(t \cdot \beta) \quad \text{(total Hamiltonian simulation time)}\\
    &\tCO(t) \quad \text{(block-encodings for the jumps $\sum_{a\in A} \ket{a} \otimes \vA^a$)},
\end{align}
{and additionally to those in the block-encoding and the Hamiltonian simulation,
\begin{align}
    &\tCO(1) \quad \text{(resettable ancilla)}\\
    &\tCO(t) \quad \text{(two-qubit gates)}.
\end{align}}
The $\tCO(\cdot)$ notation absorbs polylogarithmic dependencies on $t,\beta, \norm{\vH}, n, 1/\epsilon, \labs{A}$. 
\end{thm}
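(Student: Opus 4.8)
The plan is to reduce the simulation of $\e^{\CL_{\beta}t}$ to two block-encoding problems — for the operator Fourier transforms $\hat{\vA}^a(\omega)$ and for the coherent generator $\vB$ — and then invoke the Lindbladian-simulation framework of~\cite[Section III]{chen2023QThermalStatePrep}. That framework takes as ingredients a block-encoding of the jumps $\sum_{a\in A}\ket{a}\otimes\vA^a$, access to Hamiltonian evolution $\e^{\pm\ri\vH s}$, and a block-encoding of the coherent term $\vB$, and simulates $\e^{\CL_{\beta}t}$ to $\epsilon$-diamond accuracy using $\tCO(t)$ applications of these ingredients and $\tCO(1)$ resettable ancillas: the $\omega$-integral and the transition weight $\gamma(\omega)$ (or $\gamma^M(\omega)$) are realized coherently via a clock/frequency register and a Fourier transform, the dissipative action is obtained from a Stinespring dilation of the (suitably $\gamma$-weighted) $\hat{\vA}^a(\omega)$ followed by resetting the ancillas, and the coherent part is incorporated by interleaving $\e^{-\ri\vB s}$. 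Hence the genuinely new work is block-encoding the intricate $\vB$ of~\eqref{eq:mainBDef} and tracking how the choice $\sigma_E=1/\beta$ propagates into the $\beta$-dependent cost.

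First I would build the OFT. Since $f(t)\propto\e^{-t^2/\beta^2}$ decays on the time scale $\labs t\sim\beta$, the integral in~\eqref{eq:OpOFT} can be truncated to $\labs t\le T:=\tCO(\beta)$ and discretized on $\tCO(1)$ nodes with error $\epsilon\cdot\tCO(1)$; running the truncated evolution against a clock register costs $\tCO(\beta)$ Hamiltonian-simulation time, one jump block-encoding, and $\tCO(1)$ other gates, and a Fourier transform on the clock reads out $\omega$. A Gaussian filter makes $\nrm{\hat{\vA}^a(\omega)}$ decay Gaussian-fast once $\omega$ leaves an $\tCO(1/\beta)$-neighborhood of $[-2\nrm{\vH},2\nrm{\vH}]$ (for the bounded weight $\gamma^M$ the decay comes entirely from the OFT), so the $\omega$-support may be truncated to $\labs\omega\le\nrm{\vH}+\tCO(1/\beta)$ and discretized with $\tCO(1)$ overhead. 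Next, for $\vB$: because $b_1,b_2$ are smooth, rapidly decaying, and normalized by $\nrm{b_1}_1,\nrm{b_2}_1\le1$, I truncate and discretize the two nested integrals; the inner integral needs Hamiltonian-simulation time $\labs{t'}\cdot\tCO(\beta)$ and two uses of the jump block-encoding — the key point being the Cauchy–Schwarz/block-matrix bound $\bigl\Vert\sum_{a}\vA^{a\dagger}\vU\vA^a\bigr\Vert\le\bigl\Vert\sum_{a}\vA^{a\dagger}\vA^a\bigr\Vert\le1$ for any unitary $\vU$ (by~\eqref{eq:AAdagger}), which prevents the summation over $a\in A$ from inflating the subnormalization — and the outer integral adds Hamiltonian-simulation time $\labs t\cdot\tCO(\beta)$ and a $\tCO(1)$-size linear combination of unitaries. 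This gives a block-encoding of $\vB$ with $\CO(1)$ subnormalization at cost $\tCO(\beta)$ Hamiltonian-simulation time, $\CO(1)$ jump block-encodings, and $\tCO(1)$ other gates; a standard block-encoded Hamiltonian simulation then yields $\e^{-\ri\vB s}$ for $\labs s\le 1$ with $\tCO(1)$ queries to that block-encoding. For the Metropolis weight, the principal-value distribution $b_2^M$ is regularized by the cutoff $\indicator(\labs{t'}>\eta)$ with $\eta$ polynomially small, and the resulting perturbation of $\vB^M$ is bounded in operator norm.

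Finally I would assemble: plug these block-encodings into~\cite[Section III]{chen2023QThermalStatePrep} and bound the error. The truncation/discretization errors of the $t$-, $t'$- and $\omega$-integrals, the Hamiltonian-simulation errors, the block-encoding error of $\vB$, the $\eta$-regularization in the Metropolis case, and the intrinsic error of the framework each contribute $\epsilon\cdot\tCO(1)$ once the internal precisions are taken logarithmically large; summed over the $\tCO(t)$ applications they total at most $\epsilon$ in diamond distance. The resources are then $\tCO(t)$ applications, each costing $\tCO(\beta)$ Hamiltonian-simulation time, $\CO(1)$ jump block-encodings, and $\tCO(1)$ two-qubit gates, with the $\tCO(1)$ ancillas reset between applications — i.e.\ $\tCO(t\beta)$ total Hamiltonian-simulation time, $\tCO(1)$ resettable ancilla, $\tCO(t)$ jump block-encodings, and $\tCO(t)$ other two-qubit gates, as claimed (the restriction $t\ge1$ lets the additive $\tCO(1)$ from the eigenvalue transformation be absorbed). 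I expect the main obstacle to be the block-encoding of $\vB$: one must establish $\nrm{\vB}=\CO(1)$ and control the nested LCU so that the two integrals keep a constant overall subnormalization — otherwise $\e^{-\ri\vB s}$ and the coherent step become expensive — and, in the Metropolis case, make the principal-value regularization quantitative. A secondary subtlety is carrying the diamond-norm error estimate through the dilation-based simulator without picking up factors of the Hilbert-space dimension.
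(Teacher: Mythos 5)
Your proposal is correct and follows essentially the same route as the paper: reduce to the black-box Lindbladian simulation framework of~\cite[Theorem III.2]{chen2023QThermalStatePrep}, reuse its block-encodings for the dissipative (OFT-plus-weight) part, and supply the one genuinely new ingredient — an LCU block-encoding of the coherent term $\vB$ from its factorized time-domain representation with $\norm{b_1}_1,\norm{b_2}_1\le 1$, plus the $\eta$-regularization of $b_2^M$ costing only a $\log(\beta\norm{\vH}/\epsilon)$ subnormalization in the Metropolis case. The paper's proof is just a terse citation of these same pieces (\autoref{prop:encoding_coherent} and the precursor's Section III.B), so no further comparison is needed.
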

Here, our Lindbladian is normalized~\eqref{eq:AAdagger},~\eqref{eq:OpOFT},~\eqref{eq:ExctDissip},~\eqref{eq:mainBDef} (or the Metropolis-like weight~\eqref{eq:Metropolish2} with its coherent term $\vB^M$) such that 
\begin{align}
    \norm{\CL_{\beta}}_{1-1} = \tCO(1).
\end{align}
Therefore, evolving for unit time $t=1$ corresponds to a $\tCO(1)$-strength update $\e^{\CL_{\beta}}$ and only requires a characteristic Hamiltonian simulation time $\sim \beta$. This is precisely the cost for implementing both the operator Fourier Transform~\eqref{eq:OpOFT} and the coherent term~\eqref{eq:mainBDef} via Linear Combination of Unitaries. To be more careful, the $\tCO(\cdot)$ notation also includes polylogarithmic factors due to discretization, truncation, and Hamiltonian simulation error, which is typical in quantum algorithms. To clarify, the idealized map $\CL_{\beta}$ remains exactly detailed-balanced (the nice object to analyze), and the algorithmic implementation error can be made arbitrarily small given the desired runtime.

We expect the \textit{total Hamiltonian simulation time} to be the figure of merit for the algorithmic cost, among others. The jump operators $\vA^a$ can be as simple as Pauli operators, but we consider a black-box query model in case more complex jumps are needed for faster mixing.  In the last line, the use of the other two-qubit gates comes from easier-to-implement unitaries, including the Quantum Fourier Transform, state preparation unitary for the Gaussian filter $\ket{f}$, controlled transition weight; see~\autoref{sec:block_encodings} and~\cite[Section III.B]{chen2023QThermalStatePrep}. 

Combining~\autoref{thm:Exact_fixedpoint} and~\autoref{thm:L_cost}, we can prepare the Gibbs state by simulating the Lindbladian until convergence, resulting in the cost
\begin{align}
	(\text{total Hamiltonian simulation time per Gibbs sample}) = \bigOt{ t_{mix}(\CL_{\beta}) \cdot \beta}.
\end{align}
Formally, the \textit{mixing time} $t_{mix}(\CL_{\beta})$ quantifies the shortest time scale that any two input states become indistinguishable (see~\autoref{prop:mixing_time_from_gap}). To obtain end-to-end gate complexities, we should also instantiate the Hamiltonian simulation cost, a subroutine whose complexity for various systems has been thoroughly studied. For example, on $D$-dimensional lattices with local jumps $\vA^a$, we expect the actual cost to be (up to logarithmic error dependence)
\begin{align}
   (\text{gate complexity per unit evolution time}) \sim   \underset{\text{Ham. sim. time}}{\underbrace{\beta}} \times \underset{\text{volume}}{\underbrace{(v_{LR}\beta)^{D}}},\tag*{(spatially local Hamiltonians)}
\end{align}
where $v_{LR}$ is the Lieb-Robinson velocity and $v_{LR} \beta$ is roughly the radius of the Heisenberg evolution $\vA^a(\cdot)$ at time $\sim\beta$. Indeed, the Lindbladian is a sum over quasi-local Lindbladian operators (Figure~\ref{fig:localized})
\begin{align}
    \CL_{\beta} = \sum_{a\in A}  \CL_{\beta}^{a}\quad \text{each centered at}\quad \vA^a\quad \text{with radius}\quad \tCO( v_{LR} \beta ).
\end{align}
In particular, the cost per unit Lindbladian evolution time is essentially \textit{independent} of the system size (up to logarithmic dependencies), as we only need to simulate the Hamiltonian patch surrounding each jump $\vA^a$.\footnote{In fact, we can further parallelize the Lindbladian evolution to improve the circuit depth; see~\autoref{sec:arealaw}.} 

\begin{figure}[t]
\includegraphics[width=0.8\textwidth]{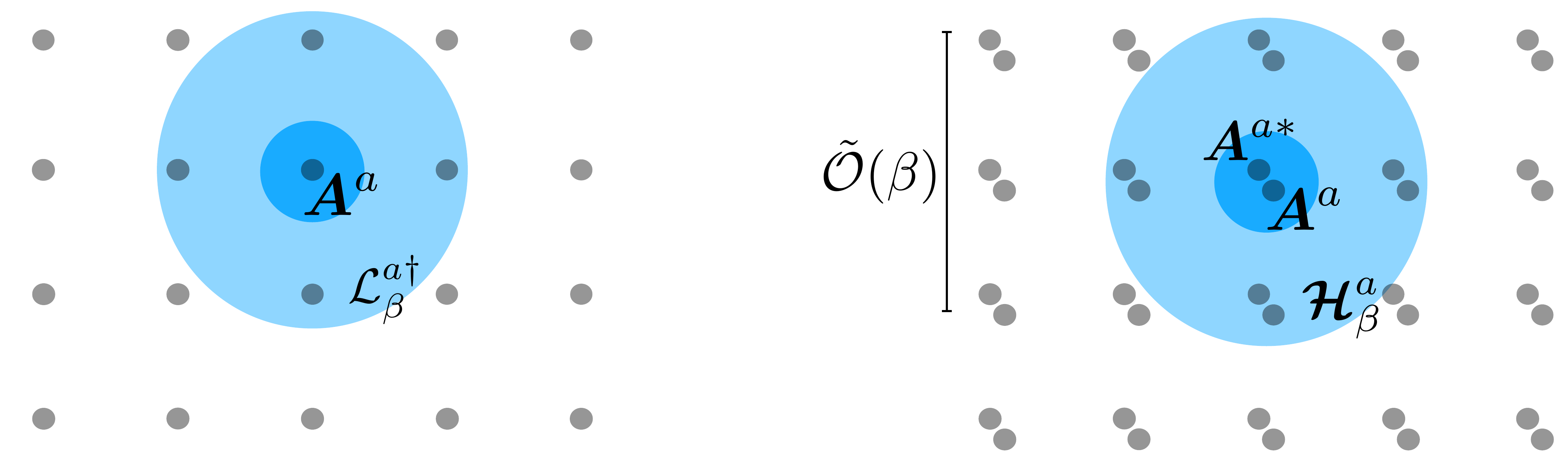}
\caption{
{(Left) For lattice Hamiltonians, our \Lword{} is a sum of quasi-local terms $\CL_{\beta}^{a}$ localized around each jump $\vA^a$ with radius $\tCO(\beta)$. Indeed, detailed balance is really about the \textit{energy difference}, which can be diagnosed by Fourier Transforming the Heisenberg evolution $\vA^a(t) = \e^{\ri \vH t}\vA^a \e^{-\ri \vH t}$. Due to the Lieb-Robinson bounds, the localized \Lword{} terms effectively only depend on the local Hamiltonian patch nearby (up to exponentially decaying tail). (Right) This locality persists after purification, where two copies of the system are glued together.}\label{fig:localized}
}
\end{figure}

\subsubsection{Purifying the \Lword{}s}
We may purify the \Lword{} to prepare the \textit{purified Gibbs state}~\cite{wocjan2021szegedy}
\begin{align}
    \ket{\sqrt{\vrho_{\beta}}} &:= \frac{1}{\sqrt{\tr[ \e^{-\beta \vH }]}} \sum_i \e^{-\beta E_i/2} \ket{\psi_i} \otimes \ket{\psi^*_i};\label{eq:purified_Gibbs}
\end{align}
this particular purification is reminiscent of the quantum walk formalism for detailed-balanced classical Markov chains. The relevant ``parent Hamiltonian'' (playing the role of a ``quantum walk'' operator) is the \textit{discriminant} associated with the \Lword{}; the expression may appear intimidating but resembles how classical detailed-balanced Markov chains are quantized
\begin{align}
\CH_{\beta}:=  \vrho^{-1/4}\CL[ \vrho^{1/4}\cdot  \vrho^{1/4}]  \vrho^{-1/4}.
\end{align}
The above should be regarded as the \Lword{} under similarity transformation. Further, to implement the quantum walk, the superoperator needs to be \textit{vectorized} into an operator on duplicated Hilbert spaces
\begin{align}
    (\text{vectorization})\quad \CH_{\beta}: \CB(\BC^{2^n})\rightarrow \CB(\BC^{2^n}) \simeq \vec{\CH_{\beta}} \in \CB(\BC^{2^n}\otimes \BC^{2^n}).
\end{align}
Indeed, the Gibbs state $\vrho_{\beta}$, as the \Lword{} stationary state, corresponds to the purified Gibbs state $\ket{\sqrt{\vrho_{\beta}}}$~\eqref{eq:purified_Gibbs}, as a zero-eigenvector of the discriminant. As a sanity check, the quantum detailed balance condition~\ref{eq:main_DB} naturally ensures that the operator is Hermitian $\vec{\CH_{\beta}} =\vec{\CH_{\beta}}^{\dagger}.$ 
\begin{prop}[Purifying \Lword{}s]\label{prop:exact_discriminant_fixedpoint}
    Instantiate the \Lword{} parameters of~\autoref{thm:Exact_fixedpoint} for the Gaussian or Metropolis transition weight.
    Then, the corresponding discriminant $\vec{\CH_{\beta}}$ is Hermitian, frustration-free, and annihilates the purified Gibbs state.
    \begin{align}
\vec{\CH_{\beta}} = \sum_{a\in A} \vec{\CH^a_{\beta}}\quad \text{such that}\quad \vec{\CH^a_{\beta}}\ket{\sqrt{\vrho_{\beta}}} = 0\quad \text{for each}\quad a\in A.
    \end{align}
\end{prop}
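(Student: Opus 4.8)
The plan is to derive \autoref{prop:exact_discriminant_fixedpoint} as a direct corollary of \autoref{thm:Exact_fixedpoint} together with the abstract vectorization dictionary, so that essentially no new analytic work is needed beyond bookkeeping. First I would recall the standard correspondence between the GNS/KMS-type similarity transformation and vectorization: for the canonical ``vec'' map $\vec{\ketbra{i}{j}} = \ket{i}\otimes\ket{j}^*$ (matching the convention implicit in~\eqref{eq:purified_Gibbs}), one has $\vec{X\vrho Y} = (X\otimes \bar Y)\vec{\vrho}$ and, crucially, $\vec{\sqrt{\vrho_\beta}} = \ket{\sqrt{\vrho_\beta}}$ up to the normalization already built into~\eqref{eq:purified_Gibbs}. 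Applying this to $\CH_\beta[\cdot] = \vrho_\beta^{-1/4}\CL_\beta[\vrho_\beta^{1/4}\cdot\vrho_\beta^{1/4}]\vrho_\beta^{-1/4}$ shows that $\vec{\CH_\beta}$ is the operator obtained from $\CL_\beta$ (vectorized) by conjugation with $\vrho_\beta^{-1/4}\otimes\overline{\vrho_\beta^{-1/4}}$ on the left and $\vrho_\beta^{1/4}\otimes\overline{\vrho_\beta^{1/4}}$ on the right; these are inverse positive operators, so the transformation is a genuine (non-unitary) similarity.

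Next I would establish the three asserted properties in turn. \emph{Annihilating the purified Gibbs state}: since $\CL_\beta[\vrho_\beta]=0$ by \autoref{thm:Exact_fixedpoint}, writing $\vrho_\beta = \sqrt{\vrho_\beta}\cdot\mathbbm{1}\cdot\sqrt{\vrho_\beta} = \vrho_\beta^{1/4}\,\sqrt{\vrho_\beta}\,\vrho_\beta^{1/4}$ gives $\CH_\beta[\sqrt{\vrho_\beta}] = \vrho_\beta^{-1/4}\CL_\beta[\vrho_\beta]\vrho_\beta^{-1/4} = 0$, which vectorizes to $\vec{\CH_\beta}\ket{\sqrt{\vrho_\beta}}=0$. \emph{Hermiticity}: this is where the detailed balance condition~\eqref{eq:main_DB} does the work. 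The KMS detailed balance $\CL_\beta^\dagger[\cdot] = \sqrt{\vrho_\beta}^{-1}\CL_\beta[\sqrt{\vrho_\beta}\cdot\sqrt{\vrho_\beta}]\sqrt{\vrho_\beta}^{-1}$ is exactly the statement that, under the GNS inner product $\langle X,Y\rangle_\beta := \tr(X^\dagger \vrho_\beta^{1/2} Y \vrho_\beta^{1/2})$, the generator $\CL_\beta$ is self-adjoint; the similarity transformation to $\CH_\beta$ is precisely the one that turns this weighted inner product into the flat Hilbert--Schmidt inner product, so $\CH_\beta$ is self-adjoint for $\langle\cdot,\cdot\rangle_{HS}$, i.e.\ $\vec{\CH_\beta} = \vec{\CH_\beta}^\dagger$. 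I would spell out this computation cleanly: $\vec{\CH_\beta}^\dagger$ equals the vectorization of the Hilbert--Schmidt adjoint of $\CH_\beta$, and a short manipulation using $(\vrho_\beta^{\pm1/4})^\dagger = \vrho_\beta^{\pm1/4}$ reduces the claim to~\eqref{eq:main_DB}. \emph{Frustration-freeness and the term-by-term decomposition}: here I would note that the Lindbladian splits as $\CL_\beta = -\ri[\vB,\cdot] + \sum_{a\in A}\CL_\beta^{a,\mathrm{diss}}$, and that detailed balance in the form used in this paper actually holds at a finer granularity — either each summand is individually KMS-symmetric, or (more likely, given the role of $\vB$) the proof of \autoref{thm:Exact_fixedpoint} in \autoref{sec:prove_exactDB} produces a pairing of the coherent piece against the dissipative pieces so that one can write $\CH_\beta = \sum_a \CH_\beta^a$ with each $\CH_\beta^a$ self-adjoint and positive semidefinite and annihilating $\ket{\sqrt{\vrho_\beta}}$. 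I would show each $\vec{\CH_\beta^a}\succeq 0$ by exhibiting it in manifestly Lindblad-dissipator form after the similarity transformation (a single-jump dissipator $\hat{\vA}^a(\omega)(\cdot)\hat{\vA}^a(\omega)^\dagger - \frac12\{\cdots,\cdot\}$ conjugated by $\vrho_\beta^{\pm1/4}$, which is KMS-symmetric and negative semidefinite, hence $-\CH_\beta^a$ of definite sign), so that $\vec{\CH_\beta}\succeq 0$ with zero-eigenvector $\ket{\sqrt{\vrho_\beta}}$, which is exactly frustration-freeness.

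The main obstacle I anticipate is the precise allocation of the coherent term $\vB$ across the single-jump summands: $-\ri[\vB,\cdot]$ is not itself a sum of one-local dissipators, so establishing $\vec{\CH_\beta} = \sum_a \vec{\CH_\beta^a}$ with each $\vec{\CH_\beta^a}\ket{\sqrt{\vrho_\beta}}=0$ requires knowing how the internal structure of the proof of \autoref{thm:Exact_fixedpoint} distributes $\vB$ (from~\eqref{eq:mainBDef}, $\vB = \sum_{a\in A}\vB^a$ is already an explicit sum over $a$, which is the key enabling fact). I would therefore organize the argument so that the per-jump discriminant is $\vec{\CH_\beta^a} = $ (vectorized similarity transform of) $\big(-\ri[\vB^a,\cdot] + \CL_\beta^{a,\mathrm{diss}}\big)$, and verify that this combined single-$a$ superoperator is itself KMS-detailed-balanced and annihilates $\vrho_\beta$ — which should follow by inspecting the proof of \autoref{thm:Exact_fixedpoint} at the level of a fixed $a$, since the cancellation engineered by $\vB^a$ is jump-by-jump. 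A secondary, purely notational, hazard is keeping the complex-conjugation/transpose conventions in the vectorization consistent with the definition of $\ket{\sqrt{\vrho_\beta}}$ in~\eqref{eq:purified_Gibbs} (note the $\ket{\psi_i^*}$ on the second factor); I would fix the convention once at the outset and check it against the sanity condition already flagged in the text, namely that detailed balance implies $\vec{\CH_\beta}$ Hermitian.
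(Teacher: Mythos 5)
Your overall architecture is the right one and matches what the paper does implicitly: the paper gives no standalone proof of this proposition, but assembles it from \autoref{thm:Exact_fixedpoint} (annihilation of $\vrho_\beta$), \autoref{defn:DB} (Hermiticity of the discriminant is literally the self-adjointness clause of KMS detailed balance), and the explicit energy-domain form of the discriminant in \autoref{prop:symmetrized_Dis}, exactly the three ingredients you list. However, there is a genuine gap in the step you flag as the main obstacle and then resolve too optimistically. You assert that the single-$a$ superoperator $-\ri[\vB^a,\cdot]+\CL^{a,\mathrm{diss}}_\beta$ is individually KMS-detailed-balanced and annihilates $\vrho_\beta$ because ``the cancellation engineered by $\vB^a$ is jump-by-jump.'' It is not. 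The detailed-balance computation in \autoref{prop:DB_energy_domain} explicitly invokes the closure $\{\vA^a\}=\{\vA^{a\dagg}\}$ to replace $(\vA^{a}_{-\nu})^\dagg$ by a \emph{different} member of the jump set, so the cancellation pairs $a$ with the index $a'$ for which $\vA^{a'}=\vA^{a\dagg}$. Concretely, setting $\vR^a:=\sum_{\nu_1,\nu_2}\alpha_{\nu_1,\nu_2}(\vA^a_{\nu_2})^\dagg\vA^a_{\nu_1}$ and using \autoref{prop:symmetrized_Dis} together with the skew-symmetry \eqref{eq:alpha_DB}, one finds
\begin{align}
\CH^a_\beta[\sqrt{\vrho_\beta}]=\vrho_\beta^{1/4}\left(\vR^{a'}-\vR^{a}\right)\vrho_\beta^{1/4},
\end{align}
which vanishes for each single $a$ only when $\vA^a=\vA^{a\dagg}$; for a non-Hermitian jump only the sum over the pair $\{a,a'\}$ vanishes. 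The same pairing is needed for the Hermiticity and sign-definiteness of the individual terms, since the Hilbert--Schmidt adjoint of the $a$-th transition term of the discriminant is the $a'$-th transition term.

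The fix is minor but necessary: index the frustration-free terms by adjoint-closed pairs $\{a,a'\}$ (singletons when the jumps are Hermitian, e.g.\ Paulis). Each such pair constitutes a bona fide detailed-balanced \Lword{} with jump set $\{\vA^a,\vA^{a\dagg}\}$, so its discriminant is Hermitian, annihilates $\ket{\sqrt{\vrho_{\beta}}}$, and is negative semidefinite because it is Hermitian and cospectral with the generator of a CPTP semigroup (whose spectrum lies in the closed left half-plane). Relatedly, your parenthetical claim that a single transition-minus-decay dissipator conjugated by $\vrho_\beta^{\pm 1/4}$ is ``KMS-symmetric and negative semidefinite'' is not right as stated: a lone dissipator without its coherent term and without its adjoint partner is not even Hermitian with respect to the Hilbert--Schmidt inner product, so definiteness only makes sense after both are included. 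The remainder of your argument (the vectorization dictionary, $\vec{\sqrt{\vrho_\beta}}=\ket{\sqrt{\vrho_\beta}}$, and Hermiticity of the full $\vec{\CH_\beta}$ from \eqref{eq:main_DB}) is correct.
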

In other words, preparing the purified Gibbs state boils down to the \textit{ground state} problem (up to a negative sign) for a \textit{frustration-free} parent Hamiltonian parameterized by $\beta$\footnote{Strictly speaking, originating from a Lindbladian, here the parent Hamiltonian is negative semi-definite, and the purified Gibbs state is the top-eigenstate. Introducing a global negative sign will make it the ground state.}; for lattice Hamiltonians with local jumps $\vA^a$, the parent Hamiltonian inherits the \textit{quasi-locality} of our \Lword{}, with individual terms of radius $\sim v_{LR}\beta.$ The algorithmic cost to implement the parent Hamiltonian is analogous to the \Lword{} case, where $\vH_{\beta}$ roughly corresponds to a constant-time Lindbladian evolution.

\begin{thm}[Block encodings for the discriminants]\label{thm:D_cost} 
    Instantiate the \Lword{} parameters of~\autoref{thm:Exact_fixedpoint} for either the Gaussian or Metropolis transition weight. Then, the corresponding discriminant {$\frac{1}{2} \vec{\CH_{\beta}}$} can be block-encoded approximately in $\epsilon$-spectral norm using
\begin{align}
    &\tCO(\beta) \quad \text{(Hamiltonian simulation time)}\\
    &\tCO(1) \quad \text{(block-encodings for the jumps $\sum_{a\in A} \ket{a} \otimes \vA^a$ and its transposes $\sum_{a\in A} \ket{a} \otimes (\vA^{a})^T$)},
\end{align}
{and additionally to those in the block-encodings and the Hamiltonian simulation,
\begin{align}
    &\tCO(1) \quad \text{(resettable ancilla)}\\
    &\tCO(1) \quad \text{(two-qubit gates)}.
\end{align}}
The $\tCO(\cdot)$ notation absorbs {poly}logarithmic dependencies on $t,\beta, \norm{\vH}, n, 1/\epsilon, \labs{A}$.
\end{thm}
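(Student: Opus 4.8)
The plan is to turn the abstract definition $\vec{\CH_{\beta}} \simeq \vrho^{-1/4}\CL_{\beta}[\vrho^{1/4}(\cdot)\vrho^{1/4}]\vrho^{-1/4}$ into an explicit finite sum of (weighted) operator Fourier transforms of single jumps acting on the two copies, and then block-encode it with the Linear-Combination-of-Unitaries toolbox of~\cite[Section III.B]{chen2023QThermalStatePrep} exactly as in~\autoref{thm:L_cost}. Concretely, I would first write $\vec{\CH_{\beta}}=\sum_{a\in A}\vec{\CH^a_{\beta}}$ and expand each $\vec{\CH^a_{\beta}}$ into the transition, decay, and coherent contributions inherited from~\eqref{eq:exact_DB_L}, conjugating the input by $\vrho^{1/4}(\cdot)\vrho^{1/4}$ and the output by $\vrho^{-1/4}(\cdot)\vrho^{-1/4}$ and vectorizing via $MXN\mapsto(M\otimes N^{T})\vec{X}$. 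The single fact driving the computation is that $\vrho^{s}=\e^{-s\beta\vH}/\tr(\e^{-s\beta\vH})$ commutes with every $\e^{\ri\vH t}$, so conjugating any operator built out of real-time evolutions $\e^{\ri\vH t}$, jumps $\vA^a$, and scalar envelopes by $\vrho^{\pm1/4}$ equals shifting the Heisenberg time $t$ into the complex plane by $\pm\ri\beta/4$ inside the defining integrals and then moving the contour back to the real axis, which is legitimate thanks to the rapid (Gaussian) decay of $f$, $b_1$, $b_2$. Carrying this out for the transition term via~\eqref{eq:OpOFT}, $\vrho^{-1/4}\hat{\vA}^a(\omega)\vrho^{1/4}$ becomes $\e^{-\beta\omega/4}$ times a \emph{frequency-shifted} transform $\hat{\vA}^a(\omega-c/\beta)$ (with some $\CO(1)$ constant $c$) times an $\CO(1)$ scalar from the re-centered envelope $f(t-\ri\beta/4)$; and --- this is precisely the design principle behind the constraint $\sigma_{\gamma}^2=2\omega_{\gamma}/\beta-\sigma_E^2$ in~\eqref{eq:ExctDissip} --- the combination $\sqrt{\gamma(\omega)}\,\e^{\pm\beta\omega/4}$ is again a shifted, rescaled Gaussian in $\omega$ with $\ell^{\infty}$-norm $\CO(1)$. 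The upshot is that, up to an $\CO(1)$ rescaling, every piece of $\vec{\CH^a_{\beta}}$ is a finite sum of: (a) a Gaussian-weighted transform $\hat{\vA}^{a}(\cdot)$ on one copy tensored with its conjugate $\overline{\hat{\vA}^{a}(\cdot)}$ on the other (one $\sqrt{\gamma}$ going to each copy); (b) a single-copy doubly-smeared operator $\int w(\omega)\,\hat{\vA}^{a}(\omega_1)^{\dagg}\hat{\vA}^{a}(\omega_2)\,\rd\omega$ (or its conjugate) tensored with the identity; and (c) $(\vrho^{-1/4}\vB^{a}\vrho^{1/4})\otimes\vI$ (or its conjugate), where $\vB^{a}$ is the same object whose block encoding is built in~\autoref{thm:L_cost}, now with contour-shifted time arguments. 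Hermiticity, frustration-freeness, and the kernel property come for free from~\autoref{prop:exact_discriminant_fixedpoint}, and the same cancellations give $\norm{\vec{\CH_{\beta}}}=\tCO(1)$, so the $\tCO(1)$ normalization in the statement is meaningful.

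Next I would block-encode each building block with the standard machinery: a weighted operator Fourier transform $\int w(\omega)\hat{\vA}^a(\omega)\rd\omega$ on one copy is implemented by preparing the Gaussian filter state $\ket{f}$ on a time register, applying controlled Hamiltonian simulation $\sum_t\ketbra{t}{t}\otimes\e^{\ri\vH t}(\cdot)\e^{-\ri\vH t}$ with $\labs{t}$ truncated at $\tCO(\beta)$ (legitimate since $f$ has width $\sim\beta$), calling the jump block encoding $\sum_a\ket{a}\otimes\vA^a$ once, a quantum Fourier transform onto a frequency register, and a single controlled rotation for the bounded weight $w$; the doubly-smeared operators and the internal structure of $\vB^{a}$ are compositions of two such blocks, still $\tCO(\beta)$ total simulation time because all envelopes have width $\tCO(\beta)$. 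For the conjugated copy I would substitute $\e^{\ri\vH t}\mapsto\e^{\ri\vH^{T}t}$ and $\vA^a\mapsto\overline{\vA^a}=((\vA^a)^{\dagg})^{T}$, which is a transpose of a jump by~\eqref{eq:AAdagger} --- this is exactly why the statement permits queries to $\sum_a\ket{a}\otimes(\vA^a)^{T}$ --- and for the coherent blocks I would reuse verbatim the block-encoding procedure for $\vB^{a}$ from~\autoref{sec:prove_L_cost}, tensored with the identity on the other copy.

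Finally I would combine the $\CO(\labs{A})$ signed blocks with one outer Linear-Combination-of-Unitaries layer (prepare/select over $a\in A$ and over the $\CO(1)$ pieces), costing a further $\tCO(1)$ rescaling; the normalization $\norm{\CL_{\beta}}_{1-1}=\tCO(1)$ together with the first step rules out any exponential-in-$n$ blow-up. For the error budget I would invoke the routine estimates: truncating and discretizing the time integrals at scale $\tCO(\beta)$ costs $\epsilon$ in spectral norm with polylogarithmic overhead (Gaussian tails and standard quadrature); implementing $\e^{\ri\vH t}$ and $\e^{\ri\vH^{T}t}$ to precision $\epsilon/\poly$ adds another polylogarithmic factor; and the contour-shift identities of the first step hold exactly for the idealized superoperator and hence contribute no error. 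Collecting gives $\tCO(\beta)$ total Hamiltonian-simulation time, $\tCO(1)$ resettable ancillas, $\tCO(1)$ calls to the jump and transpose-jump block encodings, and $\tCO(1)$ other two-qubit gates, with every $\tCO(\cdot)$ absorbing $\mathrm{polylog}(t,\beta,\norm{\vH},n,1/\epsilon,\labs{A})$. The Metropolis transition weight $\gamma^M$ and its coherent term $\vB^M$ are handled in exactly the same way, the only change being that the relevant conjugated kernels are bounded and integrable rather than pure Gaussians.

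The step I expect to be the main obstacle is the second half of the first step: showing that conjugating the coherent term $\vB$ and the doubly-smeared decay operators by $\vrho^{\pm1/4}$ genuinely collapses to contour shifts landing on bounded, Gaussian-weighted operator Fourier transforms, with no surviving $\e^{+\beta\norm{\vH}/4}$-type factors. This is where the tuned relation $\sigma_{\gamma}^2=2\omega_{\gamma}/\beta-\sigma_E^2$ and the precise choices of $b_1,b_2$ (and $b_2^M$) are essential, and it amounts to repackaging the exact-detailed-balance computation behind~\autoref{thm:Exact_fixedpoint}; the bookkeeping of which Gaussian prefactor attaches to which shifted transform is the tedious --- though conceptually routine --- part.
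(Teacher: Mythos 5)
Your proposal is correct and follows the same overall strategy as the paper: reduce $\vec{\CH_{\beta}}$ to explicit weighted time-integral representations of Heisenberg-evolved jumps on the doubled Hilbert space, then assemble the block encoding with the LCU toolbox, using the transposed-jump oracle for the second copy and paying only an $\tCO(1)$ subnormalization. The differences are in bookkeeping rather than substance. First, you derive the conjugated kernels by contour-shifting the time integrals and splitting $\sqrt{\gamma(\omega)}\,\e^{\pm\beta\omega/4}$ between the two copies at fixed $\omega$; the paper instead works in the Bohr-frequency domain, where $\vrho^{-1/4}\vA_{\nu}\vrho^{1/4}=\e^{\beta\nu/4}\vA_{\nu}$ gives $h_{\nu_1,\nu_2}=\e^{\beta(\nu_1+\nu_2)/4}\alpha_{\nu_1,\nu_2}$ directly, and then factorizes in the variables $\nu_{\pm}=\nu_1\pm\nu_2$ to obtain the double integral $\int\!\!\int h_-(t_-)h_+(t_+)\,\vA^a(t_+-t_-)\otimes\vA^a(-t_--t_+)^T\,\rd t_+\rd t_-$ implemented by \autoref{prop:bilinear_block}; both factorizations are legitimate for the Gaussian kernels and cost the same. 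Second, you keep the conjugated decay and coherent contributions as separate blocks, whereas the paper combines them into a single operator $\vN=-\sum_{a}\sum_{\nu_1,\nu_2}\frac{\alpha_{\nu_1,\nu_2}}{\cosh(\beta(\nu_1-\nu_2)/4)}(\vA^a_{\nu_2})^{\dagg}\vA^a_{\nu_1}$ via the identity $\e^{x}(\tanh(x)-1)=-1/\cosh(x)$, so that $\vec{\CH_{\beta}}$ has exactly the same three-block structure ($h_+h_-$ transition part plus $\frac{1}{2}(\vN\otimes\vI+\vI\otimes\vN^*)$) as the Lindbladian itself and the $\vN$ block can be encoded verbatim by \autoref{prop:encoding_coherent} with new envelopes $n_1,n_2$. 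Your separate treatment still works --- the individually conjugated kernels $\e^{\beta(\nu_1-\nu_2)/4}\alpha_{\nu_1,\nu_2}$ and $\e^{\beta(\nu_1-\nu_2)/4}\tanh(\beta(\nu_1-\nu_2)/4)\alpha_{\nu_1,\nu_2}$ remain bounded because the OFT filter contributes the factor $\exp(-(\nu_1-\nu_2)^2/(8\sigma_E^2))$, which dominates the exponential growth in $\nu_1-\nu_2$ --- but the paper's combination is cleaner and makes the $\ell_1$-norm accounting (and hence the advertised subnormalization) more transparent. The step you flag as the main obstacle is indeed where the work lies, and it is resolved in the paper by the explicit computations of \autoref{prop:symmetrized_Dis}, \autoref{prop:linear_comb_h}, and \autoref{cor:N_timedomain_explicit}.
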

Like in~\autoref{thm:L_cost}, we have expanded many other unitaries in the last line; see~\autoref{sec:block_encodings}. Note that we had to downscale the discriminant to implement the block encodings; see~\autoref{sec:prove_D_cost} for the proof. For readers familiar with quantum walks, obtaining a block-encoding for $\vI + \vec{\CH_{\beta}}$ (instead of $\vec{\CH_{\beta}}$) would be nicer as it leads to a quadratic speedup in terms of the spectral gap of $\vec{\CH_{\beta}}$. However, we have not found such a direct block-encoding for our particular construction, leaving this an open problem. This is in contrast with our earlier approximate construction~\cite{wocjan2021szegedy,chen2023QThermalStatePrep}, where such a direct block-encoding was possible, which might be advantageous in some cases.

\subsection{Roadmap}
The remaining body of text is organized thematically by the analysis and the algorithm. We begin with the analysis (\autoref{sec:analysis}) regarding how we design our \Lword{} to satisfy quantum detailed balance exactly. Next, we give efficient algorithms (\autoref{sec:algorithm}) in terms of modularized block encodings to implement the advertised \Lword{} and its purification. 

In the appendix, we include a dictionary of notations (\autoref{sec:recap_notation}) and independent expositions: A connection between the discriminant gap, mixing time, and the area law of entanglement (\autoref{sec:arealaw}) and an alternative heuristic derivation of our detailed balance Lindbladian in the time domain (\autoref{sec:why_gaussian}).

\section{Analysis}\label{sec:analysis}
In this section, we execute the calculations circling the exact detailed balance condition. 
First, we review the operator Fourier Transform in the frequency domain. Second, we review the notion of detailed balance, including the stationary state and spectral theory of convergence. Third, we plug in the advertised functional forms and derive the required coherent term $\vB$ for achieving detailed balance.

\subsection{Operator Fourier Transform}
\label{sec:OFT}
Our Lindbladian features the operator Fourier Transform $\hat{\vA}(\omega)$ of a jump operator $\vA$ according to the Hamiltonian $\vH$ (dropping the jump label for this section). To analyze it, we need to consider the frequency domain representation instead of the time domain. Decompose the operator in the energy basis, and regroup in terms of the energy change (called the \textit{Bohr frequencies} $\nu \in B(\vH) := \{ E_i - E_j \, | \, E_i, E_j \in \mathrm{Spec}(\vH) \}$)
\begin{align}\label{eqn:Aoperator}
	\vA&:= \sum_{E_1,E_2 \in \mathrm{spec}(\vH)} \vP_{E_2} \vA \vP_{E_1} = \sum_{\nu \in B(\vH)} \sum_{E_2-E_1 = \nu} \vP_{E_2} \vA \vP_{E_1}
 =: \sum_{\nu \in B(\vH)} \vA_{\nu}\quad \text{such that}\quad (\vA_{\nu})^{\dagger} = (\vA^{\dagger})_{-\nu}
\end{align}
where $\vP_{E}$ denotes the orthogonal projector onto the eigensubspace of $\vH$ with \textit{exact} energy $E$. This decomposition naturally solves the Heisenberg evolution
\begin{align}
\e^{\ri \vH t} \vA \e^{-\ri \vH t} &= \sum_{\nu\in B} \vA_{\nu}\e^{\ri \nu t}\quad \text{since}\quad [\vH,\vA_{\nu}] = \nu \vA_{\nu}.
\end{align}
Indeed, the energy differences $\nu \in B$ (shorthand of $B(\vH)$) naturally arise from the commutator (as opposed to the absolute energies $E \in \mathrm{spec}(\vH)$).

At first glance, the reference to the exact energies seems unphysical as each of them individually requires a long (likely exponential) Hamiltonian simulation time to access algorithmically. Fortunately, all that we are manipulating are the \textit{smooth} weights on these Bohr frequencies; indeed, the operator Fourier Transform can be conveniently expressed by
\begin{align}\label{eq:BohrDecomposedOFT}
	\hat{\vA}_{f}(\omega) &=  \frac{1}{\sqrt{2\pi}}\int_{-\infty}^{\infty} \e^{\ri \vH t} \vA \e^{-\ri \vH t} \e^{-\ri \omega t} f(t)\rd t
 =\sum_{\nu \in B} \vA_{\nu} \hat{f}(\omega-\nu),
\end{align}
where $\hat{f}(\omega)=\frac{1}{\sqrt{2\pi}}\int_{-\infty}^{\infty} f(t) \e^{-\ri \omega t}\rd t$ is the Fourier Transform of the filter function $f(t)$. Our choice of $f(t)$ is
\begin{align}
    f(t):=\e^{-\sigma_E^2 t^2}\sqrt{\sigma_E \sqrt{2/\pi}}
    \quad \text{such that}\quad \hat{f}(\omega)=\frac{1}{\sqrt{\sigma_E\sqrt{2\pi}}} \exp\L(- \frac{\omega^2}{4\sigma_E^2}\R) \quad \text{and} \quad \int_{-\infty}^{\infty} \labs{f(t)}^2\rd t = 1, 
\end{align}
therefore $\hat{\vA}(\omega)$ becomes simply
\begin{align}\label{eq:OpFTDef}
	\hat{\vA}(\omega)&=\frac{1}{\sqrt{\sigma_E\sqrt{2\pi}}}\sum_{\nu \in B} \exp\L(- \frac{(\omega-\nu)^2}{4\sigma_E^2}\R) \vA_{\nu}.
\end{align}
The above uses the Gaussian integrals, which will also be constantly recalled.
\begin{fact}[Gaussian integrals]\label{fact:gaussian_int}
    For any $b\in \BC$ and $\sigma >0$, we have that $\int_{-\infty}^\infty \,\e^{-\frac{(\omega - b)^2}{2\sigma^2}}\rd\omega = \sqrt{2\pi} \sigma$.
\end{fact}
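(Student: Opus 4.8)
The plan is to reduce everything to the classical real Gaussian integral and then promote the identity to complex $b$ by analytic continuation.

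\emph{Step 1: the real case.} For $b\in\BR$, substitute $u=(\omega-b)/\sigma$ to obtain $\int_{-\infty}^\infty \e^{-\frac{(\omega-b)^2}{2\sigma^2}}\rd\omega = \sigma\int_{-\infty}^\infty \e^{-u^2/2}\rd u$, so it suffices to show $I:=\int_{-\infty}^\infty \e^{-u^2/2}\rd u=\sqrt{2\pi}$. Evaluate $I^2$ as a double integral over the plane and pass to polar coordinates:
\begin{align}
I^2 = \int_{-\infty}^\infty\!\!\int_{-\infty}^\infty \e^{-(u^2+v^2)/2}\rd u\,\rd v = \int_0^{2\pi}\!\!\int_0^\infty \e^{-r^2/2}\, r\,\rd r\,\rd\theta = 2\pi.
\end{align}
Since $I>0$, we get $I=\sqrt{2\pi}$, which establishes the claim for real $b$.

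\emph{Step 2: complex $b$.} Define $F(b):=\int_{-\infty}^\infty \e^{-\frac{(\omega-b)^2}{2\sigma^2}}\rd\omega$ for $b\in\BC$. The integrand is entire in $b$, and on any bounded subset of $\BC$ it is dominated (uniformly in $b$) by a fixed integrable Gaussian in $\omega$; hence $F$ is holomorphic on all of $\BC$, e.g.\ by differentiation under the integral sign or by Morera's theorem. By Step 1, $F(b)=\sqrt{2\pi}\,\sigma$ for every $b\in\BR$, so the identity theorem forces $F\equiv\sqrt{2\pi}\,\sigma$ on $\BC$. (Alternatively, writing $b=b_1+\ri b_2$, shifting $\omega\mapsto\omega+b_1$, and applying Cauchy's theorem to $\e^{-z^2/(2\sigma^2)}$ on the rectangle with vertices $\pm R$ and $\pm R+\ri b_2$ gives the same conclusion, since the two vertical sides tend to $0$ as $R\to\infty$ by Gaussian decay.)

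\emph{Main obstacle.} This is a standard fact, so there is no real difficulty; the only point needing a little care is the passage to complex $b$ in Step 2 --- namely the uniform Gaussian domination that legitimizes treating $F$ as a holomorphic function (or, in the contour version, the vanishing of the vertical side contributions). Everything else is a routine change of variables together with the classical polar-coordinates evaluation.
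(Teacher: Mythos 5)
Your proof is correct. The paper states this as a \emph{Fact} with no proof at all, treating it as a standard textbook identity, so there is nothing to compare against; your argument --- the polar-coordinates evaluation for real $b$ followed by analytic continuation (or equivalently the contour-shift via Cauchy's theorem, with the vertical sides vanishing by Gaussian decay) --- is the standard complete justification, and you correctly identify the complex-$b$ step as the only point requiring care.
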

We can think of the width $\sigma_E$ as the uncertainty in energy, which scales inversely proportionally to the time width~$\sim\sigma_E^{-1}$.

\subsection{Exact detailed balance from that of the transition part}

Our notion of detailed balance for \Lword{}s is analogous to its classical cousin, ensuring a stationary state~$\vrho$. For the mathematical audience, we should mention that other forms of quantum detailed balance have also been studied (see~\autoref{sec:otherDB}), but we will dominantly focus on the following as it appears to be especially nice.
\begin{defn}[Kubo–Martin–Schwinger detailed balance condition]\label{defn:DB}
For a normalized, full-rank state $ \vrho\succ 0$, we say that an super-operator $\CL$ satisfies $\vrho$-detailed balance (or $\vrho$-DB in short) if
\begin{align}
     \CL^{\dagger}[ \cdot] =\vrho^{-1/2}\CL[ \vrho^{1/2}\cdot  \vrho^{1/2}]  \vrho^{-1/2},
\end{align}
or equivalently, whenever the associated {\rm discriminant} is self-adjoint with respect to $\vrho$, i.e., 
\begin{align}
\CD(\vrho, \CL)&:=  \vrho^{-1/4}\CL[ \vrho^{1/4}\cdot  \vrho^{1/4}]  \vrho^{-1/4}\\
 &=\vrho^{1/4}\CL^{\dagger}[ \vrho^{-1/4}\cdot  \vrho^{-1/4}]  \vrho^{1/4}=\CD(\vrho, \CL)^{\dagger}. 
\end{align}
In the above, $(\CL)^{\dagger}$ denotes the adjoint for superoperators with respect to trace (i.e., the Hilbert-Schmidt inner product).
\end{defn}
One may interpret the conjugation with the state as a similarity transformation under which the \Lword{} becomes Hermitian (w.r.t. to the KMS inner product). The above generalizes classical detailed balance by considering super-operators and permitting the stationary distribution to be an operator. 
\begin{prop}[Fixed point]\label{prop:DB_fixedpoint}
	If a \Lword{} $\CL$ is $\vrho$-detailed-balanced, then 
	\begin{align}
		\CL[\vrho] =0.
	\end{align}
\end{prop}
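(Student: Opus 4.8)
The plan is to exploit the defining detailed balance identity directly, evaluated at the identity operator. Recall that for any \Lword{} (which is trace-preserving), the adjoint $\CL^\dagger$ is unital: $\CL^\dagger[\vI] = 0$, since $\tr(\CL[\vX]) = \tr(\vI \cdot \CL[\vX]) = \tr(\CL^\dagger[\vI] \cdot \vX)$ must vanish for all $\vX$ by trace preservation. This is the one external fact I would invoke; everything else is algebra.

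The key steps, in order, are as follows. First I would write down the $\vrho$-DB condition from \autoref{defn:DB} in the form
\begin{align}
    \CL^{\dagger}[ \vX] =\vrho^{-1/2}\CL[ \vrho^{1/2}\vX\vrho^{1/2}]  \vrho^{-1/2}
\end{align}
valid for all operators $\vX$. Second, I would substitute $\vX = \vI$. The left-hand side becomes $\CL^\dagger[\vI] = 0$ by unitality of the adjoint Lindbladian. The right-hand side becomes $\vrho^{-1/2}\CL[\vrho^{1/2}\vI\vrho^{1/2}]\vrho^{-1/2} = \vrho^{-1/2}\CL[\vrho]\vrho^{-1/2}$. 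Third, I would conclude $\vrho^{-1/2}\CL[\vrho]\vrho^{-1/2} = 0$, and since $\vrho \succ 0$ is full-rank, $\vrho^{-1/2}$ is invertible, so multiplying on both sides by $\vrho^{1/2}$ gives $\CL[\vrho] = 0$.

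Alternatively, one could phrase the same argument through the discriminant formulation: $\CD(\vrho,\CL)$ is self-adjoint, and $\CD(\vrho,\CL)[\vI] = \vrho^{-1/4}\CL[\vrho^{1/2}]\vrho^{-1/4}$; pairing with the fact that the adjoint Lindbladian kills the identity after undoing the similarity transformation yields the same conclusion. I would likely present the first route as it is the most transparent.

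There is no real obstacle here — the only point requiring a moment's care is justifying $\CL^\dagger[\vI] = 0$, i.e., that $\CL$ being a \Lword{} means it is trace-preserving and hence its trace-adjoint is unital; this is standard and could be stated in one line, possibly with a pointer to the GKSL form. The full-rank hypothesis $\vrho \succ 0$ is what makes the cancellation of $\vrho^{-1/2}$ legitimate, so I would make sure to note it is used.
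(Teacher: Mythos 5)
Your proof is correct and matches the paper's approach: the paper omits a proof for this proposition but proves the more general version with unitary drift (\autoref{prop:DBU_fixedpoint}) by exactly your argument — evaluate the detailed-balance identity at $\vI$, use $\CL^\dagger[\vI]=0$ from trace preservation, and cancel the invertible factors $\vrho^{\pm 1/2}$. Nothing is missing.
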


Recently, \textit{quantum approximate detailed balance} has also been studied in the precursor of this work~\cite[Section II.A]{chen2023QThermalStatePrep}, discussing nonasymptotic error bounds relating mixing times to fixed point error. Exact detailed balance gives a much simpler conceptual picture. Still, we may again need to recall approximate detailed balance for non-fine-tuned \Lword{}s (such as those from Nature) or amid intermediate steps of analysis (such as when truncating the radius of the quasi-local jumps). 

At first glance, the detailed balance condition is merely a linear equation that can be solved abstractly. However, the difficulty arises due to two additional constraints.
\begin{itemize}
\item (\textbf{Complete Positivity}.) \Lword{}s have a particular \textit{quadratic} dependence on the Lindblad operators to ensure complete positivity and trace preservation of $\e^{t\CL}$ for any $t$.
\item (\textbf{Efficiency}.) The \Lword{} (i.e., the block-encoding for the jumps and the Hamiltonian) must be efficiently implemented using a limited Hamiltonian simulation time.
\end{itemize}
The main challenge is to satisfy both constraints simultaneously. Indeed, Davies' generator~\cite{davies74,davies76} satisfies the first but not the second because it uses an infinite-time operator Fourier Transform; using QSVT, one might be able to directly implement the Boltzmann weight smoothly at moderate costs, but it may break the \Lword{} structure. Our approach begins by isolating the ``transition'' part of the \Lword{}~\eqref{eq:exact_DB_L} with abstract Lindblad operators $\vL_j$
\begin{align}
    \CL[\cdot] &:= -\ri[\vB,\cdot] +  \underset{\CT:=}{\underbrace{\sum_{j}\vL_j\cdot\vL_j^{\dagger}}} - \frac{1}{2}\{\underset{\vR:=}{\underbrace{\sum_j \vL_j^{\dagger}\vL_j}},\cdot\} 
\end{align}
where $\vB$ and $\vR$ are both Hermitian. This decomposition is helpful because conjugating with the stationary state preserves the form of the transition part
\begin{align}
  \sqrt{\vrho}^{-1}\L(\sum_{j} \vL_j( \sqrt{\vrho}\cdot\sqrt{\vrho} )\vL_j^{\dagger} \R)\sqrt{\vrho}^{-1} = \sum_j \vL'_j(\cdot) \vL_{j}^{'\dagger}.
\end{align}
However, the commutator $\vB$ and anti-commutator terms $\vR$ \textit{mix} with each other under conjugation with Gibbs state. 
Based on the above observation, our recipe for constructing a detailed-balanced Lindbladian consists of three steps:
\begin{enumerate}
    \item \textit{Guess} a set of Lindblad operators $\vL_j$ such that the transition part (which is $\CT =\sum_a \int_{-\infty}^{\infty}  \gamma(\omega) \hat{\vA}^a(\omega) \cdot \hat{\vA}^a(\omega)^{\dagger}\rd\omega$ in our case) obeys detailed balance. 
    \item According to the transition part $\CT$, determine the decay part parameterized by $\vR$. This gives a purely dissipative \Lword{}.
    \item According to the decay part $\vR$, tailor the commutator term $\vB$ to ensure detailed balance. Remarkably, such a $\vB$ always exists, can be found explicitly, and is essentially \textit{unique}. Of course, whether the map is efficiently implementable is a separate question.\footnote{This is inspired by a related work~\cite{guo2025designing} at an early stage.}
\end{enumerate}

To simplify the presentation, we introduce the following notation for conjugating any full-rank state $\vrho$:
\begin{eqnarray}
	\Gamma_{\vrho}[\cdot]:=\vrho^{1/2}(\cdot)\vrho^{1/2}\quad \text{and}\quad
	\Lambda_{\vrho}[\cdot]:=\vrho^{-1/2}(\cdot)\vrho^{1/2}. 
\end{eqnarray}
Observe that for Hermitian operator $\vX$, we have the identities $\Gamma_{\vrho}[\vX]^\dag=\Gamma_{\vrho}[\vX]$ and $\Lambda_{\vrho}[\vX]^\dag=\Lambda^{-1}_{\vrho}[\vX]$ 
When the context is clear, we will omit the subscript $\vrho$. 

The main calculation of this section is summarized as follows. For any $\vR$, we can give a general solution\footnote{In fact, it is possible to require a Gibbs fixed point without imposing detailed balance~\cite{guo2025designing}.} for the coherent term $\vB$; this calculation is possible because the coherent term is not constrained by the complete-positivity structure of \Lword{}s and only needs to be Hermitian.

\begin{lem}[Prescribing the coherent term]\label{lem:FindingCoherenceTerm}
For any full-rank state $\vrho$ and Hermitian operator $\vR$, there exists a unique Hermitian operator $\vB$ (up to adding any scalar multiples of the identity $\vI$) such that the super-operator
    \begin{align}
        \CS[\cdot]:=-\ri[\vB,\cdot] -\frac{1}{2}\{\vR,\cdot\}
    \end{align}
satisfies $\vrho$-DB.\footnote{Actually, our proof shows an even stronger statement: for any Hermitian $\vF$ commuting with $\vrho$ there is a unique $\vB$ (up to an additive term proportional to $\vI$) such that $\CS^{\dagger}[\cdot]-\Gamma^{-1}\circ\CS\circ\Gamma[\cdot]=-\ri[\vF,\cdot]$. The only change is that one should set $\vB_0:=-\frac{1}{2}\vF$. This relates to a more general notion of detailed balance that allows for a unitary drift \autoref{defn:unitary_DB}, see also~\cite[Section 5]{fangola2007GeneratorsDetailedBal}. This might be useful for breaking degeneracies of the state $\vrho$ (or the Hamiltonian $\vH$).} For a Gibbs state $\vrho\propto\exp(-\beta\vH)$, we can express the solution decomposed according to the Bohr frequencies $\nu \in B$ as
\begin{align}
\vB =\frac{\ri}{2} \sum_{\nu \in B} \tanh\left(\frac{\beta \nu}{4}\right) \vR_{\nu} .
\end{align}
\end{lem}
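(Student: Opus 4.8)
The plan is to rewrite the KMS detailed-balance condition of \autoref{defn:DB} for $\CS$ as the superoperator identity $\CS^{\dagger}=\Gamma^{-1}\circ\CS\circ\Gamma$ and then solve it for $\vB$. First I would record two elementary adjoint computations with respect to the Hilbert--Schmidt inner product: for Hermitian $\vR$ the anticommutator map $\vX\mapsto-\tfrac12\{\vR,\vX\}$ is self-adjoint, while for Hermitian $\vB$ the commutator map $\vX\mapsto-\ri[\vB,\vX]$ is anti-self-adjoint; hence $\CS^{\dagger}[\cdot]=\ri[\vB,\cdot]-\tfrac12\{\vR,\cdot\}$. Next I would expand the right-hand side by commuting $\vrho^{\pm 1/2}$ through the commutator and anticommutator, using $\Gamma[\vX]=\vrho^{1/2}\vX\vrho^{1/2}$ and the shorthand $\Lambda[\vX]=\vrho^{-1/2}\vX\vrho^{1/2}$, obtaining $\Gamma^{-1}\!\circ\CS\circ\Gamma[\vX]=-\ri\bigl(\Lambda[\vB]\vX-\vX\Lambda^{-1}[\vB]\bigr)-\tfrac12\bigl(\Lambda[\vR]\vX+\vX\Lambda^{-1}[\vR]\bigr)$. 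Equating the two expressions for all $\vX$ and collecting left- and right-multiplications turns detailed balance into the identity $\vA\vX=\vX\vC$ holding for every $\vX$, where I abbreviate $\vA:=\ri(\vB+\Lambda[\vB])-\tfrac12(\vR-\Lambda[\vR])$ and $\vC:=\ri(\vB+\Lambda^{-1}[\vB])+\tfrac12(\vR-\Lambda^{-1}[\vR])$.

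The structural heart of the argument is then the observation that $\vA\vX=\vX\vC$ for all $\vX$ forces (put $\vX=\vI$, then use that $\vA$ must lie in the center of the full matrix algebra) $\vA=\vC=c\vI$ for a scalar $c$. Using the identity $\Lambda[\vX]^{\dagger}=\Lambda^{-1}[\vX]$ for Hermitian $\vX$ (recorded just before the lemma) one checks $\vA^{\dagger}=-\vC$, so $c$ is purely imaginary; and since replacing $\vB$ by $\vB+s\vI$ with $s\in\BR$ sends $\vA\mapsto\vA+2\ri s\vI$ and changes nothing else, I may take $c=0$ without loss of generality. This collapses the whole condition to the single linear equation $\vB+\Lambda[\vB]=\tfrac{\ri}{2}\bigl(\Lambda[\vR]-\vR\bigr)$. (The unitary-drift refinement mentioned in the footnote is handled identically, the scalar $c\vI$ on the right being replaced by the $\Lambda$-invariant operator $-\ri\vF$, which makes the $\nu=0$ part of the solution equal to $-\tfrac12\vF$.)

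Finally I would diagonalize, passing to the eigenbasis of $\vrho\propto\e^{-\beta\vH}$ and decomposing everything into Bohr-frequency components $\vB=\sum_{\nu\in B}\vB_{\nu}$, $\vR=\sum_{\nu\in B}\vR_{\nu}$. Since $[\vH,\vX_{\nu}]=\nu\vX_{\nu}$ yields $\Lambda[\vX_{\nu}]=\e^{\beta\nu/2}\vX_{\nu}$, the equation decouples across $\nu$ into $(1+\e^{\beta\nu/2})\vB_{\nu}=\tfrac{\ri}{2}(\e^{\beta\nu/2}-1)\vR_{\nu}$; because $1+\e^{\beta\nu/2}>0$ always, this has the unique solution $\vB_{\nu}=\tfrac{\ri}{2}\tanh(\beta\nu/4)\vR_{\nu}$ (using $\tfrac{\e^{x}-1}{\e^{x}+1}=\tanh\tfrac{x}{2}$), and summing over $\nu$ gives exactly the claimed formula; the sole remaining ambiguity is the $\nu=0$ block, which the equation pins to a scalar multiple of $\vI$, matching the statement. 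A short check using $(\vR_{\nu})^{\dagger}=(\vR^{\dagger})_{-\nu}=\vR_{-\nu}$ together with $\tanh$ being odd confirms this $\vB$ is genuinely Hermitian. I expect the only real difficulty to be bookkeeping rather than conceptual: keeping the $\Lambda$ versus $\Lambda^{-1}$ conjugations and their signs straight when pushing $\vrho^{\pm 1/2}$ through the (anti)commutators, and making the step ``$\vA\vX=\vX\vC$ for all $\vX$ implies $\vA=\vC$ is a scalar'' — including the residual scalar-$\vI$ freedom — fully airtight.
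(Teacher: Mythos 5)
Your proposal is correct and follows essentially the same route as the paper: rewrite $\vrho$-DB as a ``left-multiplication equals right-multiplication'' identity, argue the resulting operator must be a (purely imaginary) scalar multiple of $\vI$ absorbable into $\vB$, and then decouple over Bohr frequencies to get $\vB_{\nu}=\frac{\ri}{2}\tanh(\beta\nu/4)\vR_{\nu}$. The only cosmetic difference is that the paper first substitutes $\vK:=\vB-\frac{\ri}{2}\vR$ so that the two operators you call $\vA$ and $\vC$ collapse into a single $\vQ:=\vK^{\dagger}+\Lambda[\vK]$ with the condition $\vQ(\cdot)-(\cdot)\vQ^{\dagger}=0$, which streamlines exactly the sign bookkeeping you flag as the main hazard.
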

The above can be applied to a purely dissipative \Lword{}, where the transition part already satisfies $\vrho$-DB. 
\begin{cor}[$\vrho$-DB \Lword{}s]\label{cor:findingQ}
    Suppose we have a purely-dissipative $\CL_{diss}$ \Lword{} such that the transition part satisfies $\vrho$-DB for a full-rank state $\vrho$
\begin{align}
    \Gamma^{-1}_{\vrho}\circ \CT\circ \Gamma^{1}_{\vrho} = \CT^{\dagger},
\end{align}
then we can accordingly prescribe $\vB$ such that $-\ri[\vB,\cdot] +\CL_{diss}$ satisfies $\vrho$-DB.
\end{cor}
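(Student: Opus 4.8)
The plan is to reduce the statement directly to \autoref{lem:FindingCoherenceTerm}, exploiting that the $\vrho$-DB condition is \emph{additive} in the super-operator. Write the purely dissipative generator in its transition/decay form $\CL_{diss}[\cdot] = \CT[\cdot] - \frac{1}{2}\{\vR,\cdot\}$, where $\vR := \sum_j \vL_j^{\dagger}\vL_j$ is Hermitian (in fact positive semidefinite). For an as-yet-undetermined Hermitian operator $\vB$, set $\CS[\cdot] := -\ri[\vB,\cdot] - \frac{1}{2}\{\vR,\cdot\}$, so that the candidate generator is $-\ri[\vB,\cdot] + \CL_{diss} = \CT + \CS$; crucially, adding the commutator term does not alter the transition part $\CT$, so this is exactly the transition/decay split to which \autoref{lem:FindingCoherenceTerm} applies.

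First I would record that both maps appearing in \autoref{defn:DB}, namely $\CL \mapsto \CL^{\dagger}$ and $\CL \mapsto \Gamma_{\vrho}^{-1}\circ\CL\circ\Gamma_{\vrho}$, are additive in $\CL$; hence the set of $\vrho$-DB super-operators is closed under sums. Equivalently, the discriminant $\CD(\vrho,\cdot)$ is linear and Hermitian conjugation commutes with addition, so a sum of super-operators each with self-adjoint discriminant again has a self-adjoint discriminant. Consequently it suffices to make $\CT$ and $\CS$ \emph{separately} $\vrho$-DB. For $\CT$ this is precisely the hypothesis $\Gamma_{\vrho}^{-1}\circ\CT\circ\Gamma_{\vrho} = \CT^{\dagger}$.

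It then remains to choose $\vB$ so that $\CS[\cdot] = -\ri[\vB,\cdot] - \frac{1}{2}\{\vR,\cdot\}$ is $\vrho$-DB, and this is exactly the content of \autoref{lem:FindingCoherenceTerm} applied to the full-rank state $\vrho$ and the Hermitian operator $\vR$: it furnishes a Hermitian $\vB$, unique up to an additive multiple of $\vI$, given in the Gibbs case by $\vB = \tfrac{\ri}{2}\sum_{\nu\in B}\tanh(\beta\nu/4)\,\vR_{\nu}$. With this $\vB$, both $\CT$ and $\CS$ are $\vrho$-DB, hence so is their sum $\CT + \CS = -\ri[\vB,\cdot] + \CL_{diss}$, which is the claim.

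I expect no genuine obstacle: essentially all the work is hidden inside \autoref{lem:FindingCoherenceTerm}, and what remains is bookkeeping — confirming that the commutator term leaves $\CT$ untouched, that $\vR$ inherited from $\CL_{diss}$ is Hermitian as the lemma requires, and that the resulting $\vB$ is a legitimate (bounded, Hermitian) coherent term so that $-\ri[\vB,\cdot] + \CL_{diss}$ is still a bona fide Lindbladian. The one conceptual point worth stating explicitly is the additivity/linearity of the DB condition, since that is what lets us treat the already-balanced transition part $\CT$ and the lemma-provided $\CS$ independently.
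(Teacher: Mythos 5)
Your proof is correct and is essentially the argument the paper intends: the corollary follows from the linearity of the $\vrho$-DB condition together with \autoref{lem:FindingCoherenceTerm} applied to the Hermitian decay operator $\vR=\sum_j\vL_j^{\dagger}\vL_j$, exactly as you lay out. The paper leaves this proof implicit, so your explicit note that the DB condition is additive (hence $\CT$ and $\CS=-\ri[\vB,\cdot]-\frac{1}{2}\{\vR,\cdot\}$ can be balanced separately) is the right and only substantive point to make.
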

\begin{proof}[Proof of \autoref{lem:FindingCoherenceTerm}]

Let $\vK:=\vB-\frac{\ri}{2}\vR$ and observe that
\begin{align}
    \CS[\cdot] &:= - \ri [\vB,\cdot] - \frac{1}{2}\{\vR,\cdot\} = -\ri \vK(\cdot) + \ri (\cdot) \vK^{\dagger}\\
    \CS^{\dagger}[\cdot] &:= \ri [\vB,\cdot] - \frac{1}{2}\{\vR,\cdot\} = \ri \vK^{\dagger} (\cdot) - \ri (\cdot) \vK
\end{align}
using that $\vB$ and $\vR$ are Hermitian.
Then, 
\begin{align}
    \CS^{\dagger}[\cdot]-\Gamma^{-1}\circ\CS\circ\Gamma[\cdot] &= \ri \vK^{\dagger}(\cdot) - \ri(\cdot)\vK+\ri \Gamma^{-1}\L( \vK \Gamma[\cdot]-\Gamma[\cdot]\vK^{\dagger} \R)\\
    &=\ri \L(\vK^{\dagger}+\Lambda[\vK]\R)(\cdot) - \ri (\cdot) \L(\vK + (\Lambda[\vK])^\dagger\R)\tag*{(using that $\Gamma^{-1}\vK \Gamma[\cdot] = \Lambda[\vK](\cdot)$)}\\ 
	&=: \ri(\vQ(\cdot)-(\cdot)\vQ^\dag),
\end{align}
where we define $\vQ:= \vK^\dag+\Lambda[\vK]$. To ensure quantum detailed balance, we need the RHS to vanish
\begin{align}
    \vQ(\cdot)-(\cdot)\vQ^\dag = 0 \quad \iff \quad \vQ=0 + \lambda \vI \quad \text{for}\quad \lambda \in \BR,
\end{align}
that is, $\vQ$ vanishes up to a real multiple of the identity $\lambda \vI$; for simplicity, the identity part can be dropped for now and added back. Since $\vrho$ has full rank, we can assume without loss of generality that $\vrho\propto\exp(-\beta\vH)$ for some $\vH$. Now we compute 
\begin{align}
    \vQ &= \vK^{\dagger} + \Lambda[\vK]\\
    &= (1+ \Lambda )\vB + \frac{\ri}{2} (1-\Lambda) \vR \\
    &= \sum_{\nu \in B} (1+\e^{\beta\nu/2 }) \vB_{\nu} + \frac{\ri}{2}(1-\e^{\beta\nu/2}) \vR_{\nu}.\tag*{(using that $\Lambda(\vB_{\nu}) = \e^{\beta \nu/2}\vB_{\nu}$)}
\end{align}
where we denote $B:=B(\vH)$ the set of Bohr frequencies of $\vH$. Since the operators $\vB_{\nu}$ (and $\vR_{\nu}$) are linearly independent for different Bohr frequencies, 
\begin{align}
    \vQ = 0 \quad \iff \quad \vB_{\nu} = \frac{\ri}{2}\tanh\left(\frac{\beta \nu}{4}\right) \vR_{\nu}\quad \text{for each}\quad \nu \in B\tag*{(using that $\tanh(x) = \frac{\e^{2x}-1}{\e^{2x}+1}$)}.
\end{align}
In particular, $\vB_0 = 0$ since $\tanh(0)=0$.

Finally, since $\vR$ is Hermitian and $\tanh(\omega)$ is an odd function, we have that $\vB$ is Hermitian as well by 
\begin{align}
    \sum_{\nu \in B} (\vB_{\nu})^{\dagger}
    &= \frac{\ri}{2} \sum_{\nu \in B} \tanh\left(-\frac{\beta \nu}{4}\right)(\vR_{\nu})^{\dagger}\tag*{(Using $-\tanh(x)=\tanh(-x)$)}.\\
    &= \frac{\ri}{2} \sum_{\nu \in B} \tanh\left(-\frac{\beta \nu}{4}\right)\vR_{-\nu}\tag*{(Using that $\vR=\vR^{\dagger}$ implies $(\vR_{\nu})^{\dagger}=\vR_{-\nu}$)}\\
    &= \frac{\ri}{2} \sum_{\nu \in B} \tanh\left(\frac{\beta \nu}{4}\right)\vR_{\nu} = \sum_{\nu \in B} \vB_{\nu}. \tag*{(Change of variables $\nu\rightarrow -\nu$)\quad \qedhere}
\end{align}
Adding the identity part $\lambda\vI$ to conclude the proof.
\end{proof}

\subsection{Exact detailed balance for Gaussian weights and their linear combinations}
In this section, we carry out the abstract recipe for our advertised \Lword{} to prove the exact detailed balance condition (\autoref{defn:DB}). We begin with the transition part and then solve for the coherent term.
\subsubsection{Exact detailed balance of the transition part}
First, we show that the Gaussian ansatz indeed leads to detailed balance for the ``transition'' part
\begin{align}
    \Gamma^{-1}_{\vrho}\circ \CT\circ \Gamma_{\vrho} = \CT^{\dagger}.
\end{align}
It is instructive to rewrite the abstract equation above in terms of Bohr frequencies. Let $(\cdot)$ be any input matrix, then the transition part of our Lindbladian reads
 \begin{align}
 	\CT= \sum_{a\in A} \int_{-\infty}^{\infty}  \gamma(\omega) \hat{\vA^a}(\omega)(\cdot)\hat{\vA^a}(\omega)^{\dagger}\rd\omega
 	&=\frac{1}{\sigma_E\sqrt{2\pi}}\sum_{a\in A}\sum_{ \nu_1,\nu_2\in B} \int_{-\infty}^{\infty} \e^{- \frac{(\omega + \omega_\gamma)^2}{2\sigma_\gamma^2}}\e^{- \frac{(\omega-\nu_1)^2}{4\sigma_E^2}}\e^{- \frac{(\omega-\nu_2)^2}{4\sigma_E^2}} \vA^a_{\nu_1}(\cdot)(\vA^a_{\nu_2})^{\dagger}\rd\omega\\&
 	=: \sum_{a\in A} \sum_{ \nu_1,\nu_2\in B}\alpha^{(\omega_\gamma,\sigma_\gamma)}_{\nu_1,\nu_2} \vA^a_{\nu_1}(\cdot)(\vA^a_{\nu_2})^{\dagger}.\label{eq:exact_FT_form}
 \end{align}
Since the bilinear expression holds for any input $(\cdot)$, taking traces on both sides yields that 
\begin{align}
\sum_{a\in A}\int_{-\infty}^{\infty}  \gamma(\omega) \hat{\vA^a}(\omega)^{\dagger}\hat{\vA^a}(\omega)\rd\omega = \sum_{a\in A} \sum_{ \nu_1,\nu_2\in B}\alpha^{(\omega_\gamma,\sigma_\gamma)}_{\nu_1,\nu_2} (\vA^a_{\nu_2})^{\dagger}\vA^a_{\nu_1}.    
\end{align}
In terms of Bohr frequencies, the exact detailed balance condition is a certain symmetry of the coefficient matrix $\alpha^{(\omega_\gamma,\sigma_\gamma)}$.
\begin{prop}[Detailed balance in the Energy domain]\label{prop:DB_energy_domain}
Consider a super-operator parameterized by a Hamiltonian $\vH$, $\beta$, and a set of operators including its adjoints $\{\vA^a\colon a\in A\}=\{\vA^{a\dagg}\colon a\in A\}$:
\begin{align}
    \CT = \sum_{a\in A} \sum_{\nu_1,\nu_2\in B}\alpha_{\nu_1,\nu_2} \vA^a_{\nu_1}(\cdot)(\vA^a_{\nu_2})^{\dagger}\quad \text{such that}\quad \alpha_{\nu_1,\nu_2}= \alpha_{-\nu_2,-\nu_1} \e^{-\beta (\nu_1+\nu_2)/2}\quad \text{for each}\quad \nu_1,\nu_2 \in B
\end{align}
Then, 
    \begin{align}
\Gamma^{-1}_{\vrho}\circ \CT\circ \Gamma^{1}_{\vrho} = \CT^{\dagger}.
    \end{align}
\end{prop}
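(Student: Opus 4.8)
The plan is to prove the identity $\Gamma^{-1}_{\vrho}\circ\CT\circ\Gamma_{\vrho}=\CT^{\dagger}$ by expanding both superoperators as sums of the ``sandwich'' maps $\vA^a_{\nu_1}(\cdot)(\vA^a_{\nu_2})^{\dagger}$ indexed by $(a,\nu_1,\nu_2)\in A\times B\times B$, and then comparing the scalar coefficients term by term. Since both sides will be supported on exactly the same family of sandwich maps, no linear‑independence argument for these maps is needed.

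First I would compute the Hilbert–Schmidt adjoint. Daggering each summand (conjugate the scalar, swap and dagger the two legs) gives
\begin{align}
  \CT^{\dagger}[\cdot]=\sum_{a\in A}\sum_{\nu_1,\nu_2\in B}\overline{\alpha_{\nu_1,\nu_2}}\,(\vA^a_{\nu_1})^{\dagger}(\cdot)\,\vA^a_{\nu_2}.
\end{align}
To restore the ``$\vA_{\nu_1}(\cdot)(\vA_{\nu_2})^{\dagger}$'' shape I would use the two structural facts already in the excerpt: the adjoint/Bohr identity $(\vA_{\nu})^{\dagger}=(\vA^{\dagger})_{-\nu}$ from \eqref{eqn:Aoperator}, and the closure of the jump set $\{\vA^a\colon a\in A\}=\{\vA^{a\dagger}\colon a\in A\}$ from \eqref{eq:AAdagger}, which lets me relabel the $a$‑sum by passing to adjoints. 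Together with the change of variables $\nu_j\mapsto-\nu_j$ (legitimate because $B=B(\vH)$ is symmetric under negation) this turns the display into $\sum_{a}\sum_{\nu_1,\nu_2}\overline{\alpha_{-\nu_1,-\nu_2}}\,\vA^a_{\nu_1}(\cdot)(\vA^a_{\nu_2})^{\dagger}$. Here I would note that the coefficient matrix is Hermitian, $\overline{\alpha_{\nu_1,\nu_2}}=\alpha_{\nu_2,\nu_1}$ — which holds in every instance we care about, e.g.\ the $\alpha^{(\omega_\gamma,\sigma_\gamma)}_{\nu_1,\nu_2}$ of \eqref{eq:exact_FT_form} is real and symmetric in $\nu_1\leftrightarrow\nu_2$, and more generally complete positivity of the transition part forces $\alpha\succeq 0$ — so that $\CT^{\dagger}[\cdot]=\sum_{a}\sum_{\nu_1,\nu_2}\alpha_{-\nu_2,-\nu_1}\,\vA^a_{\nu_1}(\cdot)(\vA^a_{\nu_2})^{\dagger}$.

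Next I would compute the conjugated map, whose only ingredient is the modular commutation relation already exploited in the proof of \autoref{lem:FindingCoherenceTerm}: from $[\vH,\vA_{\nu}]=\nu\vA_{\nu}$, equivalently the analytic continuation of $\e^{\ri\vH t}\vA_{\nu}\e^{-\ri\vH t}=\e^{\ri\nu t}\vA_{\nu}$, one gets $\Lambda_{\vrho}[\vA_{\nu}]=\e^{\beta\nu/2}\vA_{\nu}$, hence $\vA^a_{\nu_1}\vrho^{1/2}=\e^{\beta\nu_1/2}\vrho^{1/2}\vA^a_{\nu_1}$, and, since $(\vA^a_{\nu_2})^{\dagger}$ carries Bohr frequency $-\nu_2$, also $\vrho^{1/2}(\vA^a_{\nu_2})^{\dagger}=\e^{\beta\nu_2/2}(\vA^a_{\nu_2})^{\dagger}\vrho^{1/2}$. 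Pushing these $\vrho^{1/2}$ factors to the outside and cancelling them against the $\vrho^{-1/2}$ factors of $\Gamma^{-1}_{\vrho}$ yields
\begin{align}
  \Gamma^{-1}_{\vrho}\circ\CT\circ\Gamma_{\vrho}[\cdot]=\sum_{a\in A}\sum_{\nu_1,\nu_2\in B}\alpha_{\nu_1,\nu_2}\,\e^{\beta(\nu_1+\nu_2)/2}\,\vA^a_{\nu_1}(\cdot)(\vA^a_{\nu_2})^{\dagger}.
\end{align}
Comparing this with the expression for $\CT^{\dagger}$ obtained above, the two maps agree precisely when $\alpha_{\nu_1,\nu_2}\,\e^{\beta(\nu_1+\nu_2)/2}=\alpha_{-\nu_2,-\nu_1}$ for all $\nu_1,\nu_2\in B$, which is exactly the stated hypothesis $\alpha_{\nu_1,\nu_2}=\alpha_{-\nu_2,-\nu_1}\e^{-\beta(\nu_1+\nu_2)/2}$; this closes the argument.

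I expect the only delicate point to be the bookkeeping in the $\CT^{\dagger}$ step: correctly composing the dagger, the sign flip $(\vA_{\nu})^{\dagger}=(\vA^{\dagger})_{-\nu}$, the jump‑set relabeling, and the $\nu\mapsto-\nu$ substitution so that the two legs come out in the same order as on the $\Gamma^{-1}_{\vrho}\circ\CT\circ\Gamma_{\vrho}$ side, together with flagging explicitly where Hermiticity (reality and $\nu_1\leftrightarrow\nu_2$ symmetry) of $\alpha$ is used. Everything else is a one‑line modular‑flow computation.
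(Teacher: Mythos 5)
Your proof is correct and follows essentially the same route as the paper's: the same modular commutation relation $\vrho^{-1/2}\vA_\nu\vrho^{1/2}=\e^{\beta\nu/2}\vA_\nu$, the identity $(\vA_\nu)^\dagger=(\vA^\dagger)_{-\nu}$, the closure of the jump set under adjoints, and the symmetry $B=-B$, merely organized as bringing both sides to a common canonical form and matching coefficients rather than transforming $\Gamma^{-1}_{\vrho}\circ\CT\circ\Gamma_{\vrho}$ step by step into $\CT^\dagger$. Your explicit flagging of the Hermiticity $\overline{\alpha_{\nu_1,\nu_2}}=\alpha_{\nu_2,\nu_1}$ is a fair point of extra care: the paper's own proof uses it tacitly in the final identification with $\CT^\dagger$ and only records it as an explicit hypothesis later (in \autoref{cor:ExctDissip}).
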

Indeed, one recovers the classical detailed balance condition for inputs diagonal in the energy basis. However, the quantum detailed balance condition also constrains the amplitudes between off-diagonal matrix elements (Figure~\ref{fig:QuantumDB}).  
\begin{proof}
We can directly calculate
      \begin{align}
        \Gamma^{-1}_{\vrho}\circ \CT\circ \Gamma^{1}_{\vrho} 	&=\sum_{a\in A}\sum_{\nu_1,\nu_2\in B} \alpha_{\nu_1,\nu_2}\sqrt{\vrho_{\beta}^{-1}}\vA^a_{\nu_1}\sqrt{\vrho_{\beta}}(\cdot)\sqrt{\vrho_{\beta}}(\vA^{a}_{\nu_2})^\dagg\sqrt{\vrho_{\beta}^{-1}}\\ &
        = \sum_{a\in A}\sum_{\nu_1,\nu_2\in B} \alpha_{\nu_1,\nu_2}\e^{\frac{\beta}{2}\nu_1}\vA^a_{\nu_1}(\cdot)(\vA^{a}_{\nu_2})^\dagg\e^{\frac{\beta}{2}\nu_2} \tag*{(since $\vrho_{\beta}\propto\e^{-\beta \vH}$)}\\&
		=\sum_{a\in A}\sum_{\nu_1,\nu_2\in B} \alpha_{-\nu_2,-\nu_1}\vA^a_{\nu_1}(\cdot)(\vA^{a}_{\nu_2})^\dagg \tag*{(since $\alpha_{\nu_1,\nu_2}\e^{\frac{\beta(\nu_1+\nu_2)}{2}}=\alpha_{-\nu_2,-\nu_1}$)}\\&
		=\sum_{a\in A}\sum_{\nu_1,\nu_2\in B} \alpha_{-\nu_2,-\nu_1}((\vA^{a\dagg})_{-\nu_1})^\dagg(\cdot)(\vA^{a\dagg})_{-\nu_2}\tag*{(since $(\vA_{\nu})^\dagg = (\vA^\dagg)_{-\nu}$)}\\&
		=\sum_{a\in A}\sum_{\nu_1,\nu_2\in B} \alpha_{-\nu_2,-\nu_1}(\vA^a_{-\nu_1})^\dagg(\cdot)\vA^{a}_{-\nu_2}\tag*{(since $\{\vA^a\colon a\in A\}=\{\vA^{a\dagg}\colon a\in A\}$)}\\&
		=\sum_{a\in A}\sum_{\nu_1,\nu_2\in B} \alpha_{\nu_2,\nu_1}(\vA^{a}_{\nu_1})^\dagg(\cdot)\vA^a_{\nu_2}  \tag*{(since $B = \text{spec}(\vH) -\text{spec}(\vH) = -B$)}\\&
        =\sum_{a\in A}\sum_{\nu_1,\nu_2\in B} \alpha_{\nu_1,\nu_2}(\vA^{a}_{\nu_2})^\dagg(\cdot)\vA^a_{\nu_1}
	= \CT^{\dagger}.\tag*{(relabelling $\nu_1\leftrightarrow \nu_2$)\qedhere}
    \end{align}
\end{proof}
While the above representation explicitly addresses the energy basis, we note that positivity becomes obscured as the left and right energy labels $\nu_1, \nu_2$ can differ. The positivity now becomes implicit in the coefficient matrix. 
\begin{prop}[Positive semi-definite]\label{prop:CP_implies_H}
    If the coefficients $\alpha_{\nu_1,\nu_2}$ as a matrix $\vec{\alpha}$ is positive-semi-definite 
    \begin{align}
        \vec{\alpha} \ge 0,
    \end{align}
    (and thus Hermitian $\alpha_{\nu_1,\nu_2}=\left(\alpha_{\nu_2,\nu_1}\right)^*$), then 
\begin{align}
\sum_{a\in A} \sum_{\nu_1,\nu_2\in B} \alpha_{\nu_1,\nu_2}\left(\vA^a_{\nu_1}(\cdot)(\vA^{a}_{\nu_2})^\dagg - \frac{1}{2}\{(\vA^{a}_{\nu_2})^\dagg\vA^a_{\nu_1},\cdot\} \right) \quad \text{gives a \Lword{}}.
\end{align}
\end{prop}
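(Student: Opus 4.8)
The plan is to reduce the displayed superoperator to the standard Gorini–Kossakowski–Sudarshan–Lindblad (GKSL) form by diagonalizing the coefficient matrix $\vec{\alpha}$. Since $\vH$ acts on a finite-dimensional space, the Bohr-frequency set $B=B(\vH)$ is finite, so $\vec{\alpha}=(\alpha_{\nu_1,\nu_2})_{\nu_1,\nu_2\in B}$ is a finite matrix; by hypothesis it is positive semi-definite, hence Hermitian, and admits a spectral decomposition $\vec{\alpha}=\sum_k \lambda_k \vec{v}_k\vec{v}_k^{\dagger}$ with eigenvalues $\lambda_k\geq 0$ and orthonormal eigenvectors $\vec{v}_k=((v_k)_\nu)_{\nu\in B}$.

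First I would define, for each jump label $a\in A$ and each eigenvector index $k$, the operator $\vL^a_k:=\sqrt{\lambda_k}\sum_{\nu\in B}(v_k)_\nu\,\vA^a_\nu$; here the square roots are real precisely because $\vec{\alpha}\succeq 0$. Using $\alpha_{\nu_1,\nu_2}=\sum_k\lambda_k (v_k)_{\nu_1}(v_k)_{\nu_2}^*$, a direct substitution gives, for each $a$,
\begin{align}
    \sum_{\nu_1,\nu_2\in B}\alpha_{\nu_1,\nu_2}\,\vA^a_{\nu_1}(\cdot)(\vA^a_{\nu_2})^{\dagger}=\sum_k\vL^a_k(\cdot)(\vL^a_k)^{\dagger},\qquad
    \sum_{\nu_1,\nu_2\in B}\alpha_{\nu_1,\nu_2}\,(\vA^a_{\nu_2})^{\dagger}\vA^a_{\nu_1}=\sum_k(\vL^a_k)^{\dagger}\vL^a_k.
\end{align}
Substituting both identities into the superoperator in the statement, it becomes
\begin{align}
    \sum_{a\in A}\sum_k\left(\vL^a_k(\cdot)(\vL^a_k)^{\dagger}-\tfrac{1}{2}\{(\vL^a_k)^{\dagger}\vL^a_k,\cdot\}\right),
\end{align}
which is exactly the GKSL generator with vanishing Hamiltonian part and Lindblad operators $\{\vL^a_k\}_{a,k}$. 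By the GKSL characterization this generates a completely positive, trace-preserving semigroup $\e^{t\CL}$ for all $t\geq 0$; in particular it is a \Lword{}. (Trace preservation is also immediate, since the two terms cancel under the trace.)

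I do not expect a genuine obstacle — this is essentially the converse of the observation that any \Lword{} may be written with Lindblad operators expanded over Bohr frequencies. The only points deserving a line of care are: (i) $B$ is finite, so there is no operator-norm convergence issue in defining $\vL^a_k$ as a finite linear combination; (ii) positive semi-definiteness of $\vec{\alpha}$ is exactly what makes $\lambda_k\geq 0$, hence $\sqrt{\lambda_k}\in\BR$ and the $\vL^a_k$ legitimate Lindblad operators; and (iii) the coefficient matrix $\vec{\alpha}$ is independent of $a\in A$, so a single spectral decomposition suffices and the Lindblad operators are simply indexed by the pair $(a,k)$.
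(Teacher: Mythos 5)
Your proof is correct: diagonalizing the positive semi-definite coefficient matrix $\vec{\alpha}=\sum_k\lambda_k\vec{v}_k\vec{v}_k^{\dagger}$ and absorbing $\sqrt{\lambda_k}$ into the Lindblad operators $\vL^a_k=\sqrt{\lambda_k}\sum_{\nu}(v_k)_{\nu}\vA^a_{\nu}$ reduces the expression to the standard GKSL form, which is exactly the canonical justification. The paper states this proposition without proof, treating it as the standard GKSL characterization, so your argument supplies precisely the omitted (and intended) reasoning.
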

Positivity indeed holds for~\eqref{eq:exact_FT_form} and can be seen by the integral form of the coefficients.
To conclude this section, it remains to verify that the coefficients arising from the Gaussian indeed satisfy the symmetry.
\begin{lem}[Exact ``skew-symmetry'' of coefficients]\label{lem:exact_sym_coefficients}
For each $\omega_\gamma,\sigma_\gamma$, the coefficients $\alpha^{(\omega_\gamma,\sigma_\gamma)}_{\nu_1,\nu_2}$ defined by~\eqref{eq:exact_FT_form} factorize 
\begin{align}\label{eq:productGaussianAlphas}
    \alpha^{(\omega_\gamma,\sigma_\gamma)}_{\nu_1,\nu_2}=\frac{\sigma_\gamma}{\sqrt{\sigma_E^2+\sigma_\gamma^2}}\cdot \exp\L( - \frac{(\nu_1+\nu_2 +2\omega_\gamma)^2}{8(\sigma_E^2+\sigma_{\gamma}^2)}\R)  \cdot \exp\L(-\frac{(\nu_1-\nu_2)^2}{8\sigma_E^2}\R),
\end{align}
and have a certain ``skew-symmetry'' under negation and transpose
\begin{align}\label{eq:alpha_DB}
\alpha^{(\omega_\gamma,\sigma_\gamma)}_{\nu_1,\nu_2} = \alpha^{(\omega_\gamma,\sigma_\gamma)}_{-\nu_2,-\nu_1} \e^{-\beta (\nu_1+\nu_2)/2} \quad \text{for} \quad \beta := \frac{2\omega_\gamma}{\sigma_E^2+\sigma_\gamma^2}. 
\end{align}
\end{lem}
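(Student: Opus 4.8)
The plan is to evaluate the single Gaussian integral defining the coefficient $\alpha^{(\omega_\gamma,\sigma_\gamma)}_{\nu_1,\nu_2}$ in~\eqref{eq:exact_FT_form} in closed form, and then read off both the factorization~\eqref{eq:productGaussianAlphas} and the skew-symmetry~\eqref{eq:alpha_DB} directly from the result. The only tool needed beyond \autoref{fact:gaussian_int} is the elementary ``product of two Gaussians'' identity $\e^{-(x-a)^2/2s^2}\,\e^{-(x-b)^2/2t^2}=\e^{-(a-b)^2/2(s^2+t^2)}\,\e^{-(x-c)^2/2u^2}$ with $u^2=s^2t^2/(s^2+t^2)$ and $c=(at^2+bs^2)/(s^2+t^2)$, which I would apply twice to the integrand.

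First I would multiply the two filter factors $\e^{-(\omega-\nu_1)^2/4\sigma_E^2}\,\e^{-(\omega-\nu_2)^2/4\sigma_E^2}$; since these have equal widths, the identity produces exactly the $\omega$-independent ``relative'' factor $\e^{-(\nu_1-\nu_2)^2/8\sigma_E^2}$ (which pulls out of the integral and becomes the last factor in~\eqref{eq:productGaussianAlphas}) times a single Gaussian in $\omega$ centered at the midpoint $\mu:=(\nu_1+\nu_2)/2$ of width $\sigma_E$. Next I would combine this Gaussian with the transition-weight Gaussian $\e^{-(\omega+\omega_\gamma)^2/2\sigma_\gamma^2}$; the $\omega$-independent byproduct is $\e^{-(\mu+\omega_\gamma)^2/2(\sigma_E^2+\sigma_\gamma^2)}$, and since $\mu+\omega_\gamma=(\nu_1+\nu_2+2\omega_\gamma)/2$ this is precisely the ``center-of-mass'' factor in~\eqref{eq:productGaussianAlphas}. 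What remains is one Gaussian in $\omega$ of width $u$ with $u^2=\sigma_E^2\sigma_\gamma^2/(\sigma_E^2+\sigma_\gamma^2)$; integrating it with \autoref{fact:gaussian_int} gives $\sqrt{2\pi}\,u$, and multiplying by the prefactor $1/(2\sigma_E\sqrt{2\pi})$ from~\eqref{eq:exact_FT_form} collapses cleanly to $\sigma_\gamma/(2\sqrt{\sigma_E^2+\sigma_\gamma^2})$, the claimed constant. This establishes~\eqref{eq:productGaussianAlphas}.

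For the skew-symmetry~\eqref{eq:alpha_DB}, I would note that under $(\nu_1,\nu_2)\mapsto(-\nu_2,-\nu_1)$ the relative variable $\nu_1-\nu_2$ is invariant, so the last factor of~\eqref{eq:productGaussianAlphas} is unchanged, while the center-of-mass variable $S:=\nu_1+\nu_2$ merely flips sign. It then suffices to compare the exponents $-(S+2\omega_\gamma)^2/8(\sigma_E^2+\sigma_\gamma^2)$ and $-(-S+2\omega_\gamma)^2/8(\sigma_E^2+\sigma_\gamma^2)$; their difference, by the difference-of-squares identity $(S+2\omega_\gamma)^2-(2\omega_\gamma-S)^2=8\omega_\gamma S$, equals $-\omega_\gamma S/(\sigma_E^2+\sigma_\gamma^2)$, which is exactly $-\beta(\nu_1+\nu_2)/2$ once $\beta:=2\omega_\gamma/(\sigma_E^2+\sigma_\gamma^2)$, consistent with~\eqref{eq:ExctDissip}. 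This yields $\alpha^{(\omega_\gamma,\sigma_\gamma)}_{\nu_1,\nu_2}=\alpha^{(\omega_\gamma,\sigma_\gamma)}_{-\nu_2,-\nu_1}\,\e^{-\beta(\nu_1+\nu_2)/2}$, as desired.

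The argument is bookkeeping rather than insight; the one genuinely useful move is to reorganize everything around the relative coordinate $\nu_1-\nu_2$ and the center-of-mass coordinate $\nu_1+\nu_2$, which diagonalizes the quadratic form in the exponent, makes the Gaussian factorization fall out, and renders the reflection symmetry under $(\nu_1,\nu_2)\mapsto(-\nu_2,-\nu_1)$ manifest. The step I expect to require the most care is tracking the numerical prefactors through the two Gaussian multiplications and the final integration, so that the normalization emerges as exactly $\sigma_\gamma/(2\sqrt{\sigma_E^2+\sigma_\gamma^2})$ and not off by a stray factor of $\sigma_E$ or $\sqrt{2\pi}$.
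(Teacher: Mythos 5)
Your proposal is correct and takes essentially the same route as the paper: both directly evaluate the Gaussian integral in~\eqref{eq:exact_FT_form} and read off the factorization and the skew-symmetry from the closed form (the paper completes the square on the full exponent in $\omega$ and then uses $2(\nu_1^2+\nu_2^2)=(\nu_1+\nu_2)^2+(\nu_1-\nu_2)^2$, whereas you organize the identical algebra as two applications of the pairwise Gaussian-product identity, which is a cosmetic difference in bookkeeping). Your prefactor tracking and the difference-of-squares verification of~\eqref{eq:alpha_DB} both check out.
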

\begin{proof}
We directly calculate the Gaussian integrals in \eqref{eq:exact_FT_form}, preserving the quadratic nature of the exponents. 
\begin{align}
&\alpha^{(\omega_\gamma,\sigma_\gamma)}_{\nu_1,\nu_2} = \frac{1}{\sigma_E\sqrt{2\pi}} \int_{-\infty}^{\infty}  \exp\L( -\frac{\omega^2}{2}\cdot\left(\frac{1}{\sigma_\gamma^2}+\frac{1}{\sigma_E^2}\right) -\omega \cdot\L(\frac{\omega_\gamma}{\sigma_\gamma^2}-\frac{\nu_1+\nu_2}{2\sigma_E^2}\R) - \frac{\nu_1^2+\nu_2^2}{4\sigma_E^2} -\frac{\omega_\gamma^2}{2\sigma_\gamma^2}\R)\rd\omega = \\&
	= \frac{1}{\sigma_E\sqrt{2\pi}} \int_{-\infty}^{\infty}  \kern-3mm\exp\L( -\frac{\!\!\L(\omega + \left(\frac{1}{\sigma_\gamma^2}+\frac{1}{\sigma_E^2}\right)^{\!\!-1}\!\L(\frac{\omega_\gamma}{\sigma_\gamma^2}-\frac{\nu_1+\nu_2}{2\sigma_E^2}\R) \R)^{\!\!2}\!\!\!}{2\left(\frac{1}{\sigma_\gamma^2}+\frac{1}{\sigma_E^2}\right)^{\!\!-1}} + \frac{\!\left(\frac{1}{\sigma_\gamma^2}+\frac{1}{\sigma_E^2}\right)^{\!\!-1}\!\!\!\!}{2}\L(\frac{\omega_\gamma}{\sigma_\gamma^2}-\frac{\nu_1+\nu_2}{2\sigma_E^2}\R)^{\!\!2} \!- \frac{\nu_1^2+\nu_2^2}{4\sigma_E^2} -\frac{\omega_\gamma^2}{2\sigma_\gamma^2}\R)\rd\omega\\[-5mm]&        
	= \frac{1}{\sigma_E}\cdot \frac{1}{\sqrt{\frac{1}{\sigma_\gamma^2}+\frac{1}{\sigma_E^2}}} \exp\L( \frac{1}{2\left(\frac{1}{\sigma_\gamma^2}+\frac{1}{\sigma_E^2}\right)}\L(\frac{\omega_\gamma}{\sigma_\gamma^2}-\frac{\nu_1+\nu_2}{2\sigma_E^2}\R)^2 - \frac{\nu_1^2+\nu_2^2}{4\sigma_E^2} -\frac{\omega_\gamma^2}{2\sigma_\gamma^2}\R) \tag*{(by \autoref{fact:gaussian_int})}\\
&=\frac{1}{\sqrt{\frac{\sigma_E^2}{\sigma_\gamma^2}+1}}\cdot \exp\L(\!- \frac{(\nu_1+\nu_2 +2\omega_\gamma)^2}{8(\sigma_E^2+\sigma_{\gamma}^2)}\R)  \cdot \exp\L(\!-\frac{(\nu_1-\nu_2)^2}{8\sigma_E^2}\R)\tag*{(since $2(\nu^2_1\!+\!\nu_2^2) = (\nu_1\!+\!\nu_2)^2+(\nu_1\!-\!\nu_2)^2$)}\\
 &= \alpha^{(\omega_\gamma,\sigma_\gamma)}_{-\nu_2,-\nu_1} \e^{-\beta (\nu_1+\nu_2)/2}. \tag*{(due to our choice of $\beta$)\qedhere}
\end{align}
\end{proof}

We see that we may tune the parameters to match a desirable exponent $\beta$. Remarkably, the width $\sigma_E$ can be finite (i.e., do not scale linearly the precision $\frac{1}{\epsilon}$ like in metrology) while retaining exact detailed balance; a reasonable choice is, e.g.,
\begin{align}
	\omega_\gamma=\sigma_E =\sigma_\gamma = \frac{1}{\beta}.
\end{align}
This will imply that the algorithmic cost for implementing the Gaussian weighted operator Fourier Transform will only need to scale with $\beta$ (and polylogarithmically with the precision due to discretization and truncation error)! However, the Gaussian transition weight comes with the price of a narrower band of transitions peaked at $\omega_\gamma \pm \CO(\sigma_\gamma)$, which might result in a substantially increased mixing time compared to, e.g., Metropolis weight $\gamma(\omega) = \min (1,\e^{-\beta \omega})$; we will revisit this issue by taking \textit{linear combinations} of Gaussians at~\autoref{sec:convex_comb_gaussian} since all our calculation are linear. For clarity, we first focus on the Gaussian weights.

Why does Gaussian interplay so perfectly with quantum detailed balance?In~\autoref{sec:why_gaussian}, we attempted to derive Gaussian from the first principle. Indeed, Gaussians are very natural if we impose several conditions on the function.

\subsubsection{Adding the unitary term}\label{sec:prove_exactDB}

Now that the ``transition'' part satisfies the detailed balance condition exactly, we proceed to complete the Lindbladian by adding the ``decay'' part and the ``coherent'' part. The decay part is fixed by trace-preserving;~\autoref{lem:FindingCoherenceTerm} then uniquely prescribes the required coherent term, which we display as follows in the frequency domain. The explicit form will be useful for implementation.

\begin{cor}[An exactly detailed-balanced \Lword{} with Gaussian filtering]\label{cor:ExctDissip}
	The \Lword{} 
	\begin{align}
		\CL[\cdot] := -\ri \sum_{\nu \in B} [\vB_{\nu}, \cdot] + \sum_{a\in A} \sum_{\nu_1,\nu_2\in B} \alpha_{\nu_1,\nu_2}\left(\vA^a_{\nu_1}(\cdot)(\vA^{a}_{\nu_2})^\dagg - \frac{1}{2}\{(\vA^{a}_{\nu_2})^\dagg\vA^a_{\nu_1},\cdot\} \right)
	\end{align}
	corresponding to a self-adjoint set of jump operators $\{\vA^a\colon a\in A\}=\{\vA^{a\dagg}\colon a\in A\}$, parametrized by coefficients $\alpha_{\nu_1,\nu_2}\in \BC$ satisfying $\alpha_{\nu_1,\nu_2}^* = \alpha_{\nu_2,\nu_1}$ and
\begin{align}
\alpha_{\nu_1,\nu_2}\e^{\frac{\beta(\nu_1+\nu_2)}{2}}=\alpha_{-\nu_2,-\nu_1},
\end{align}
	and amended by the coherent terms
	\begin{align}
		\vB_{\nu}&:=\sum_{a\in A} \sum_{\stackrel{\nu_1-\nu_2 = \nu}{\nu_1,\nu_2\in B}}  \underset{\hat{f}(\nu_1,\nu_2) :=}{\underbrace{\frac{\tanh(-\beta (\nu_1-\nu_2)/4)}{2\ri}\alpha_{\nu_1,\nu_2}}} (\vA^{a}_{\nu_2})^\dagg\vA^{a}_{\nu_1}\label{eq:BE}
	\end{align}
	satisfies $\vrho_{\beta}$-detailed balance.
\end{cor}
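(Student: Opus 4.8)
The plan is to assemble \autoref{cor:ExctDissip} as an immediate consequence of the abstract recipe already established. First I would identify the ``transition'' part
\[
\CT[\cdot] := \sum_{a\in A}\sum_{\nu_1,\nu_2\in B}\alpha_{\nu_1,\nu_2}\,\vA^a_{\nu_1}(\cdot)(\vA^a_{\nu_2})^\dagger
\]
and invoke \autoref{prop:DB_energy_domain}: the hypothesis $\alpha_{\nu_1,\nu_2}\e^{\beta(\nu_1+\nu_2)/2}=\alpha_{-\nu_2,-\nu_1}$ is precisely the skew-symmetry needed, so $\Gamma^{-1}_{\vrho}\circ\CT\circ\Gamma_{\vrho}=\CT^{\dagger}$. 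Next I would note that the ``decay'' part $-\tfrac12\{\vR,\cdot\}$ with $\vR:=\sum_{a\in A}\sum_{\nu_1,\nu_2}\alpha_{\nu_1,\nu_2}(\vA^a_{\nu_2})^\dagger\vA^a_{\nu_1}$ is exactly what trace preservation forces, and that $\vR$ is Hermitian: using $\alpha_{\nu_1,\nu_2}^*=\alpha_{\nu_2,\nu_1}$ together with $(\vA_\nu)^\dagger=(\vA^\dagger)_{-\nu}$ and the self-adjointness of the jump set $\{\vA^a\}=\{\vA^{a\dagger}\}$, one checks $\vR^\dagger=\vR$ by relabelling. This puts us exactly in the setting of \autoref{cor:findingQ} (equivalently \autoref{lem:FindingCoherenceTerm}) for the purely dissipative Lindbladian $\CL_{diss}=\CT-\tfrac12\{\vR,\cdot\}$, so there is a unique Hermitian $\vB$ making $-\ri[\vB,\cdot]+\CL_{diss}$ satisfy $\vrho_\beta$-DB.

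The remaining work is purely bookkeeping: verify that the displayed $\vB=\sum_\nu \vB_\nu$ with $\vB_\nu$ as in \eqref{eq:BE} agrees with the formula $\vB=\tfrac{\ri}{2}\sum_{\nu\in B}\tanh(\beta\nu/4)\vR_\nu$ from \autoref{lem:FindingCoherenceTerm}. For this I would decompose $\vR$ into its Bohr-frequency components: since $[\vH,(\vA^a_{\nu_2})^\dagger\vA^a_{\nu_1}]=(\nu_1-\nu_2)(\vA^a_{\nu_2})^\dagger\vA^a_{\nu_1}$, the part of $\vR$ at Bohr frequency $\nu$ is exactly $\vR_\nu=\sum_{a\in A}\sum_{\nu_1-\nu_2=\nu}\alpha_{\nu_1,\nu_2}(\vA^a_{\nu_2})^\dagger\vA^a_{\nu_1}$. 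Substituting into $\vB_\nu=\tfrac{\ri}{2}\tanh(\beta\nu/4)\vR_\nu$ and using $\tanh(\beta\nu/4)=-\tanh(-\beta\nu/4)$ gives $\vB_\nu=\sum_{a\in A}\sum_{\nu_1-\nu_2=\nu}\tfrac{\tanh(-\beta(\nu_1-\nu_2)/4)}{2\ri}\alpha_{\nu_1,\nu_2}(\vA^a_{\nu_2})^\dagger\vA^a_{\nu_1}$, which is exactly \eqref{eq:BE}. I would also record that Hermiticity of this $\vB$ follows the same oddness-of-$\tanh$ argument already used at the end of the proof of \autoref{lem:FindingCoherenceTerm}, or alternatively note $\hat f(\nu_1,\nu_2)^*=\hat f(\nu_2,\nu_1)$ using $\alpha_{\nu_1,\nu_2}^*=\alpha_{\nu_2,\nu_1}$.

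There is no serious obstacle here — this corollary is a repackaging of \autoref{lem:FindingCoherenceTerm} and \autoref{prop:DB_energy_domain} in the Bohr-frequency language. The only mild subtlety worth stating carefully is the identification of $\vR_\nu$ with the $(\nu_1-\nu_2=\nu)$-slice of $\vR$, i.e.\ that the product $(\vA^a_{\nu_2})^\dagger\vA^a_{\nu_1}$ lives in the Bohr sector $\nu_1-\nu_2$; this is just the observation that conjugation by $\e^{\ri\vH t}$ multiplies $\vA_{\nu_1}$ by $\e^{\ri\nu_1 t}$ and $(\vA_{\nu_2})^\dagger$ by $\e^{-\ri\nu_2 t}$. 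Once that is in place, the chain \autoref{prop:DB_energy_domain} $\Rightarrow$ trace-preserving decay term $\Rightarrow$ \autoref{cor:findingQ}/\autoref{lem:FindingCoherenceTerm} $\Rightarrow$ matching the explicit $\vB_\nu$ closes the argument, and \autoref{prop:CP_implies_H} (with the positive-semidefiniteness of $\vec\alpha$, which holds for the Gaussian coefficients \eqref{eq:exact_FT_form}) confirms the object is a genuine Lindbladian.
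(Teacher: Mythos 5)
Your proposal is correct and follows essentially the same route as the paper: the paper's own proof simply says ``apply \autoref{lem:FindingCoherenceTerm}'' and records the single nontrivial observation you also flag, namely that $(\vA^{a}_{\nu_2})^\dagg\vA^{a}_{\nu_1}$ sits in the Bohr sector $\nu_1-\nu_2$, so that the $\tanh$ formula from the lemma reproduces \eqref{eq:BE}. The extra steps you spell out (invoking \autoref{prop:DB_energy_domain} for the transition part, Hermiticity of $\vR$, and matching the $\frac{\ri}{2}\tanh(\beta\nu/4)$ prefactor via oddness of $\tanh$) are exactly what the paper leaves implicit, and they all check out.
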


\begin{proof}
Apply~\autoref{lem:FindingCoherenceTerm}. Note that the operator 
\begin{align}
(\vA^{a}_{\nu_2})^\dagg\vA^{a}_{\nu_1} = \sum_{E_i-E_j = \nu_1, E_i-E_k = \nu_2} \vP_{E_k} \vA^{a\dagger} \vP_{E_i} \vA^a \vP_{E_j}    
\end{align}
must have the energy difference contained in the set of Bohr frequencies $B$
\begin{align}
    \nu_1-\nu_2 &= E_i - E_j - (E_i -E_k) \quad \text{for some}\quad E_i, E_j, E_k \in Spec(\vH)\\
    &= E_k -E_i \in B.
\end{align}
\end{proof}

The above corollary essentially leads to~\autoref{thm:Exact_fixedpoint} but is written in the Bohr-frequency decomposition.
\begin{proof}[Proof of~\autoref{thm:Exact_fixedpoint}]
Combine~\autoref{prop:CP_implies_H}, \autoref{lem:exact_sym_coefficients}, \autoref{prop:DB_fixedpoint}, and \autoref{cor:ExctDissip} to conclude the proof.
\end{proof}

\subsubsection{Linear combination of Gaussians}
\label{sec:convex_comb_gaussian}

Can we go beyond Gaussians? As we discussed, the Gaussians have a narrower band of transitions; it would be desirable to lift this restriction to accelerate the mixing time. In this section, we give a family of filters by exploiting the freedom to tune the Gaussian parameters ($\omega_{\gamma},\sigma_E,\sigma_{\gamma}$) and taking a linear combination of Gaussians.

\begin{cor}[Linear combination of Gaussians]\label{cor:exact_sym_coefficients}
	Fix $\sigma_E$ and $g\in\ell_1(\BR)$ and set $\gamma^{(g)}(\omega):=\int_{\frac{\beta\sigma_E^2}{2}}^{\infty}g(x)\e^{-\frac{(\omega + x)^2}{4x/\beta-2\sigma_E^2}}\rd x$.\footnote{Note that in principle we could also vary $\sigma_E$, but that would complicate both the analysis and the implementation due to the required adjustments to the parameters of the performed operator Fourier Transform.}
	Then, analogous to \eqref{eq:exact_FT_form}, the coefficients as a linear combination over integration variable $x$ 
 \begin{align}
     \alpha^{(g)}_{\nu_1,\nu_2} := \int_{\frac{\beta\sigma_E^2}{2}}^{\infty}g(x) \alpha^{(\omega_{\gamma}, \sigma_{\gamma})}_{\nu_1,\nu_2} \rd x \quad \text{for}\quad (\omega_{\gamma}(x),\sigma_{\gamma}(x)) = (x,\sqrt{2x/\beta - \sigma_E^2}),\label{eq:alphaGenDef}
 \end{align}
satisfy the symmetries $\alpha^{(g)*}_{\nu_1,\nu_2} = \alpha_{\nu_2,\nu_1}^{(g)}$ and
		$\alpha^{(g)}_{\nu_1,\nu_2} = \alpha^{(g)}_{-\nu_1,-\nu_2} \e^{-\beta (\nu_1+\nu_2)/2}$ for each $\nu_1,\nu_2 \in \BR$. If $g(x) \ge 0$ for each $x$, then we also have that $\vec{\alpha^{(g)}}\ge 0$, however, this is not a necessary condition. 
\end{cor}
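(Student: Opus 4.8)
The plan is to deduce all three claims from \autoref{lem:exact_sym_coefficients} by exploiting the linearity of the coefficient map $\gamma\mapsto\alpha$ in \eqref{eq:exact_FT_form}, once one observes that the one-parameter family $(\omega_\gamma(x),\sigma_\gamma(x))=(x,\sqrt{2x/\beta-\sigma_E^2})$ is precisely the curve along which \autoref{lem:exact_sym_coefficients} produces the \emph{fixed} target inverse temperature $\beta$. First I would check admissibility: for $x>\tfrac{\beta\sigma_E^2}{2}$ one has $\sigma_\gamma(x)^2=2x/\beta-\sigma_E^2>0$, so $(\omega_\gamma(x),\sigma_\gamma(x))$ is a genuine pair of Gaussian parameters, and substituting into $\beta:=\tfrac{2\omega_\gamma}{\sigma_E^2+\sigma_\gamma^2}$ of \eqref{eq:alpha_DB} gives $\tfrac{2x}{\sigma_E^2+(2x/\beta-\sigma_E^2)}=\beta$ for every admissible $x$; hence \autoref{lem:exact_sym_coefficients} applies pointwise in $x$ with one common $\beta$. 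Moreover $\gamma^{(g)}(\omega)=\int g(x)\exp\!\big(-(\omega+\omega_\gamma(x))^2/(2\sigma_\gamma(x)^2)\big)\rd x$ is exactly the weight obtained by inserting this parametrization into \eqref{eq:ExctDissip}, so $\alpha^{(g)}$ in \eqref{eq:alphaGenDef} is indeed the coefficient matrix of the operator-Fourier-transform \Lword{} with transition weight $\gamma^{(g)}$, by linearity of \eqref{eq:exact_FT_form} in $\gamma$.

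Next I would push the pointwise-in-$x$ properties through the $x$-integral. From the explicit product form \eqref{eq:productGaussianAlphas}, each single-Gaussian coefficient $\alpha^{(\omega_\gamma(x),\sigma_\gamma(x))}_{\nu_1,\nu_2}$ is a nonnegative real number, symmetric under $\nu_1\leftrightarrow\nu_2$, and bounded by $\tfrac12$ (since $\sigma_\gamma\le\sqrt{\sigma_E^2+\sigma_\gamma^2}$ and the two exponential factors lie in $(0,1]$); hence for fixed $\nu_1,\nu_2$ the integrand $g(x)\,\alpha^{(\omega_\gamma(x),\sigma_\gamma(x))}_{\nu_1,\nu_2}$ is dominated by $\tfrac12|g(x)|$, integrable because $g\in\ell_1(\BR)$, so $\alpha^{(g)}_{\nu_1,\nu_2}$ is well defined and the $x$-integral commutes with complex conjugation and with the (finitely many) Bohr-frequency sums. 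Hermiticity $\alpha^{(g)*}_{\nu_1,\nu_2}=\alpha^{(g)}_{\nu_2,\nu_1}$ then follows from realness and $\nu_1\leftrightarrow\nu_2$ symmetry of each single-Gaussian coefficient together with $g$ being real-valued (which also shows $\alpha^{(g)}$ itself is symmetric); and the skew-symmetry follows by integrating $\alpha^{(\omega_\gamma(x),\sigma_\gamma(x))}_{\nu_1,\nu_2}=\alpha^{(\omega_\gamma(x),\sigma_\gamma(x))}_{-\nu_2,-\nu_1}e^{-\beta(\nu_1+\nu_2)/2}$ against $g(x)$, giving $\alpha^{(g)}_{\nu_1,\nu_2}=\alpha^{(g)}_{-\nu_2,-\nu_1}e^{-\beta(\nu_1+\nu_2)/2}=\alpha^{(g)}_{-\nu_1,-\nu_2}e^{-\beta(\nu_1+\nu_2)/2}$, the last step using the $\nu_1\leftrightarrow\nu_2$ symmetry of $\alpha^{(g)}$.

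For the positivity claim when $g\ge0$, I would write $\vec{\alpha^{(g)}}=\int g(x)\,\vec{\alpha^{(\omega_\gamma(x),\sigma_\gamma(x))}}\,\rd x$ as a nonnegatively-weighted integral of the matrices $\vec{\alpha^{(\omega_\gamma(x),\sigma_\gamma(x))}}$, each of which is positive semidefinite: indeed $\alpha^{(\omega_\gamma,\sigma_\gamma)}_{\nu_1,\nu_2}=\tfrac{1}{2\sigma_E\sqrt{2\pi}}\int\gamma(\omega)v_{\nu_1}(\omega)v_{\nu_2}(\omega)\,\rd\omega$ with $v_\nu(\omega):=\exp(-(\omega-\nu)^2/4\sigma_E^2)$ real and $\gamma\ge0$, so $\sum_{\nu_1,\nu_2}\overline{c_{\nu_1}}c_{\nu_2}\alpha^{(\omega_\gamma,\sigma_\gamma)}_{\nu_1,\nu_2}=\tfrac{1}{2\sigma_E\sqrt{2\pi}}\int\gamma(\omega)\big|\sum_\nu c_\nu v_\nu(\omega)\big|^2\rd\omega\ge0$ (this is the positivity of \eqref{eq:exact_FT_form} remarked on after \autoref{prop:CP_implies_H}); a weighted integral of positive semidefinite matrices with nonnegative weights is positive semidefinite. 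That $g\ge0$ is not necessary follows since a positive semidefinite $\vec{\alpha^{(g)}}$ survives small sign-indefinite perturbations of $g$ (and the symmetry and Hermiticity arguments above used only that $g$ is real).

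I expect no serious obstacle: the construction is engineered so that \autoref{lem:exact_sym_coefficients} is linear and the curve $(\omega_\gamma(x),\sigma_\gamma(x))$ keeps $\beta$ fixed. The only points requiring care are (a) confirming that the lower integration limit $\beta\sigma_E^2/2$ is exactly the threshold where $\sigma_\gamma(x)^2$ becomes positive, and (b) the routine analytic bookkeeping — the uniform bound $\tfrac12$ on the single-Gaussian coefficients, which makes $g\in\ell_1(\BR)$ suffice for absolute convergence and for interchanging the $x$-integral with conjugation and with the (finite) Bohr-frequency sums.
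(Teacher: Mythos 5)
Your proposal is correct and follows essentially the same route as the paper, whose proof is the one-line observation that the symmetries of \autoref{lem:exact_sym_coefficients} are linear in the weight and that nonnegative combinations preserve the positive semidefinite cone. You simply make explicit the details the paper leaves implicit — the admissibility check that the curve $(\omega_\gamma(x),\sigma_\gamma(x))$ keeps $\beta$ fixed, the uniform bound justifying absolute convergence for $g\in\ell_1(\BR)$, the $\nu_1\leftrightarrow\nu_2$ symmetry reconciling the index order with \eqref{eq:alpha_DB}, and the Gram-type integral argument for positivity — all of which are sound.
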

\begin{proof}
Recall the meaning of the superscripts $\alpha^{(\omega_\gamma,\sigma_\gamma)}_{\nu_1,\nu_2}$. The proof is merely the linearity of symmetries and the fact that a convex combination preserves the cone of positive semidefinite matrices.    
\end{proof}

To widen the band of transitions, a natural choice is to weigh each Gaussian $\e^{- \frac{(\omega + \omega_\gamma)^2}{2\sigma_\gamma^2}}$ with its inverse $\ell_1$-weight. Surprisingly, this leads to filters that resemble the Metropolis and Glauber weights; while other choices are plausible, we spell out the calculation for this as a natural representative.

\begin{prop}[Metropolis and Glauber-like filters]
     Setting $g(\omega_\gamma)=\!\frac{1}{\sqrt{2\pi}\sigma_\gamma}\!=\!\frac{1}{\!\sqrt{2\pi(\frac{2\omega_\gamma}{\beta}-\sigma_E^2)}}$ 
     yields
\begin{align}\label{eq:convolutionFilterFinite}
	\gamma^{(s)}_{\sigma_E}(\omega) &:=\int_{\frac{\beta\sigma_E^2}{2}}^{\frac{\beta\sigma_E^2}{2}+\frac{s^2}{\beta}}\frac{\e^{-\frac{(\omega + x)^2}{4x/\beta-2\sigma_E^2}}}{\sqrt{2\pi(2x/\beta-\sigma_E^2)}}\rd x\\&
	=\e^{-\beta\max\left(\omega +\frac{\beta \sigma_{E}^2}{2},0\right)}\!\!\cdot\underset{\leq 1}{\underbrace{\frac{1}{2}  \left[\left(\!\erf\left(\frac{s}{2}-\frac{\beta}{s}\left|\omega +\!\frac{\beta \sigma_{E}^2}{2}\right|\right)\!+1\right)\!+\e^{\beta\left|\omega +\frac{\beta \sigma_{E}^2}{2}\right|}\! \left(\!\erf\left(\frac{s}{2}+\frac{\beta}{s}\left|\omega +\!\frac{\beta \sigma_{E}^2}{2}\right|\right)\!-1\right)\right]}},\kern2mm
\end{align}
which, in the $s\rightarrow\infty$ limit, coincides with the Metropolis weight shifted by $\frac{\beta \sigma_{E}^2}{2}$:
\begin{align}\label{eq:convolutionFilter}
	\gamma^{(\infty)}_{\sigma_E}(\omega) =\e^{-\beta\max\left(\omega +\frac{\beta \sigma_{E}^2}{2},0\right)}.
\end{align}
Restricting the above $g(\omega_\gamma)$ to the interval $\omega_\gamma\in\left(\frac{3\beta\sigma_E^2}{2},\infty\right)$ results in the following smooth variant of~\eqref{eq:convolutionFilter}
\begin{align}\label{eq:convolutionFilterAPX}
\tilde{\gamma}^{(\infty)}_{\sigma_E}(\omega)&
=\e^{-\beta\max\left(\omega +\frac{\beta \sigma_{E}^2}{2},0\right)}\!\!\cdot\underset{\leq 1}{\underbrace{\frac{1}{2}  \left[\erfc\left(\frac{1}{\sigma_{E}}\left(\frac{\beta \sigma_{E}^2}{2}-\left|\omega +\frac{\beta \sigma_{E}^2}{2}\right|\right)\right)+\e^{\beta\left|\omega\!+\frac{\beta \sigma_{E}^2}{2}\right|} \erfc\left(\frac{1}{\sigma_{E}}\left(\frac{\beta \sigma_{E}^2}{2}+\left|\omega +\frac{\beta \sigma_{E}^2}{2}\right|\right)\right)\right]}},\kern2mm
\end{align}
which resembles the Glauber filter also shifted by $\frac{\beta \sigma_{E}^2}{2}$ (\autoref{fig:weightFunctions}).
\end{prop}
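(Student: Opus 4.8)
The plan is to evaluate in closed form the one–parameter family of integrals defining $\gamma^{(s)}_{\sigma_E}$ by reducing each of them to elementary (possibly incomplete) Gaussian integrals. With $g(\omega_\gamma)=\frac{1}{\sqrt{2\pi}\sigma_\gamma}$ the integrand appearing in \autoref{cor:exact_sym_coefficients} is literally a normalized Gaussian $\frac{1}{\sqrt{2\pi}\sigma_\gamma}\e^{-(\omega+\omega_\gamma)^2/(2\sigma_\gamma^2)}$ with $\sigma_\gamma^2=\frac{2\omega_\gamma}{\beta}-\sigma_E^2$. First I would change the integration variable from $\omega_\gamma=x$ to $v:=\sigma_\gamma^2=\frac{2x}{\beta}-\sigma_E^2$ (so $\rd x=\frac{\beta}{2}\rd v$ and $4x/\beta-2\sigma_E^2=2v$) and abbreviate $\tilde\omega:=\omega+\frac{\beta\sigma_E^2}{2}$, so that $\omega+x=\tilde\omega+\frac{\beta v}{2}$. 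The crucial algebraic step is the split
\[
\frac{(\tilde\omega+\beta v/2)^2}{2v}=\frac{\tilde\omega^2}{2v}+\frac{\beta\tilde\omega}{2}+\frac{\beta^2 v}{8},
\]
which factors $\e^{-\beta\tilde\omega/2}$ out of the integral and leaves $\frac{\beta}{2\sqrt{2\pi}}\,\e^{-\beta\tilde\omega/2}\int v^{-1/2}\e^{-\tilde\omega^2/(2v)-\beta^2 v/8}\,\rd v$ over the relevant $v$-range.

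For the $s\to\infty$ case the $v$-range is $(0,\infty)$ and the remaining integral is the Laplace/Glasser identity $\int_0^\infty v^{-1/2}\e^{-a^2v-b^2/v}\rd v=\frac{\sqrt\pi}{a}\e^{-2ab}$ (itself a consequence of \autoref{fact:gaussian_int} after completing the square $a^2v+b^2/v=(a\sqrt v-b/\sqrt v)^2+2ab$ and substituting $\sqrt v$ for a new variable); with $a=\frac{\beta}{2\sqrt2}$ and $b=\frac{|\tilde\omega|}{\sqrt2}$ this produces $\e^{-\beta|\tilde\omega|/2}$, hence $\gamma^{(\infty)}_{\sigma_E}(\omega)=\e^{-\beta\tilde\omega/2}\e^{-\beta|\tilde\omega|/2}=\e^{-\beta\max(\tilde\omega,0)}$, which is \eqref{eq:convolutionFilter}. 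For finite $s$ the only change is that $v$ now runs over $(0,2s^2/\beta^2]$; after the further substitution $w=\sqrt v$ the same completion of the square — now keeping both versions $(\alpha w\mp\delta/w)^2$ with $\alpha=\frac{\beta}{2\sqrt2}$, $\delta=\frac{|\tilde\omega|}{\sqrt2}$ — reduces the task to incomplete Gaussian integrals $\int_0^{c}\e^{-(\alpha w\mp\delta/w)^2}\rd w$ with $c=\sqrt2 s/\beta$, each of which is a complementary error function in $\alpha c\mp\delta/c$; assembling the two pieces reproduces the bracketed combination of \eqref{eq:convolutionFilterFinite}, and letting $s\to\infty$ recovers \eqref{eq:convolutionFilter}. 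The Glauber-like variant \eqref{eq:convolutionFilterAPX} is the identical computation with the integration domain $v\in[2\sigma_E^2,\infty)$ (i.e.\ $\omega_\gamma\in(\tfrac{3\beta\sigma_E^2}{2},\infty)$), which replaces the finite interval by a half-line truncated from below and produces $\erfc$'s with arguments $\frac{1}{\sigma_E}\big(\frac{\beta\sigma_E^2}{2}\mp|\tilde\omega|\big)$.

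The two ``$\le 1$'' bounds require no extra computation: each of $\gamma^{(s)}_{\sigma_E}$ and $\tilde\gamma^{(\infty)}_{\sigma_E}$ integrates the same positive integrand over a subset of the domain $(0,\infty)$ used for $\gamma^{(\infty)}_{\sigma_E}$, so both are $\le\gamma^{(\infty)}_{\sigma_E}(\omega)=\e^{-\beta\max(\tilde\omega,0)}$, and dividing out the displayed prefactor $\e^{-\beta\max(\tilde\omega,0)}$ leaves the bracket in $[0,1]$; nonnegativity of the bracket is immediate. I would also note at the outset that $g\ge0$ here, so by \autoref{cor:exact_sym_coefficients} the associated coefficients satisfy $\vec{\alpha^{(g)}}\ge0$ and $\gamma^{(g)}$ genuinely defines an exactly detailed-balanced Lindbladian — the point of exhibiting these filters. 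The main obstacle I anticipate is purely bookkeeping: carrying the incomplete Gaussian integrals through with the $\erf$/$\erfc$ arguments and the overall $\tfrac12$ normalization exactly right, and seeing the absolute value $|\tilde\omega|=|\omega+\tfrac{\beta\sigma_E^2}{2}|$ emerge symmetrically — it enters through the constraint $\delta=|\tilde\omega|/\sqrt2\ge0$ in the square completion, and both signs $(\alpha w\mp\delta/w)$ must be retained because neither parametrization is monotone over all of $\BR$ on a finite interval.
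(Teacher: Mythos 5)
Your derivation is correct, but it is a genuinely different route from the paper's: the paper's entire proof of this proposition is ``we directly found the above by Mathematica,'' with the symbolic verification relegated to \autoref{apx:Mathematica}. You instead give an honest analytic computation — substituting $v=\sigma_\gamma^2(x)$ so the integrand becomes $\frac{\beta}{2\sqrt{2\pi v}}\e^{-(\tilde\omega+\beta v/2)^2/(2v)}$, splitting the exponent as $\frac{\tilde\omega^2}{2v}+\frac{\beta\tilde\omega}{2}+\frac{\beta^2v}{8}$ to extract the prefactor $\e^{-\beta\tilde\omega/2}$, and reducing what remains to $\int v^{-1/2}\e^{-a^2v-b^2/v}\rd v$ over the appropriate range. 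The infinite-range identity $\int_0^\infty v^{-1/2}\e^{-a^2v-b^2/v}\rd v=\frac{\sqrt\pi}{a}\e^{-2ab}$ with $a=\frac{\beta}{2\sqrt2}$, $b=\frac{|\tilde\omega|}{\sqrt2}$ gives exactly $\e^{-\beta\tilde\omega/2}\e^{-\beta|\tilde\omega|/2}=\e^{-\beta\max(\tilde\omega,0)}$, and the finite/truncated ranges give the incomplete versions via the antiderivative $\frac{\sqrt\pi}{4a}\bigl[\e^{2ab}\erf(aw+b/w)+\e^{-2ab}\erf(aw-b/w)\bigr]$ (which one can confirm by direct differentiation — a cleaner justification than the square-completion sketch you give, since the naive substitution $u=a\sqrt v-b/\sqrt v$ does not carry the measure correctly without the usual symmetrization). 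Your observation that the two ``$\le 1$'' annotations follow for free from positivity of the integrand and monotonicity in the integration domain is also cleaner than anything the paper offers. What the paper's approach buys is only speed and a machine check; what yours buys is an actual proof and, via the explicit $a,b$ dependence, transparency about where the $|\tilde\omega|$ and the Metropolis kink come from.

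One concrete payoff of doing the computation by hand: evaluating $aC$ and $b/C$ at $C=\sqrt{2}s/\beta$ gives $\erf\bigl(\frac{s}{2}\pm\frac{\beta}{2s}|\tilde\omega|\bigr)$, and at the lower cutoff $w=\sqrt{2}\sigma_E$ gives $\erfc\bigl(\frac{\beta\sigma_E}{2}\mp\frac{|\tilde\omega|}{2\sigma_E}\bigr)$. These agree with the Mathematica output reproduced in \autoref{apx:Mathematica} (where the arguments read $\frac{s}{2}\pm\frac{\beta|\beta\sigma^2+2\omega|}{4s}$, i.e.\ $\frac{s}{2}\pm\frac{\beta}{2s}|\tilde\omega|$), but \emph{not} with the arguments $\frac{s}{2}\pm\frac{\beta}{s}|\tilde\omega|$ and $\frac{1}{\sigma_E}\bigl(\frac{\beta\sigma_E^2}{2}\mp|\tilde\omega|\bigr)$ as printed in the proposition, which appear to carry a spurious factor of $2$ on the $|\tilde\omega|$-dependent term. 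This does not affect the prefactor, the $s\to\infty$ limit \eqref{eq:convolutionFilter}, or the $\le 1$ bounds, but when you ``assemble the pieces'' you should expect to land on the appendix's version of the formula rather than the displayed one.
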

\begin{proof}
    We directly found the above by Mathematica; see~\autoref{apx:Mathematica} for the code. 
\end{proof}

\begin{figure}[ht]
	\begin{center}	
		\begin{tikzpicture}[scale=1.]
			\begin{axis}[xlabel={$\underset{\phantom{f}}{\beta\omega}$}, ylabel={$\phantom{e}^{\displaystyle\gamma(\omega)}$}, axis lines = middle, domain=-3.14:
				3.14, ymin=0, ymax=1,
				thick, trig format plots=rad,no markers,
				xtick = {-3,-2,-1,0,1,2,3},
				width=0.8\textwidth,
				xticklabels = {$-3\phantom{-}$,$-2\phantom{-}$,$-1\phantom{-}$,$0$,$1$,$2$,$3$},
				ytick = {0.5,1},
				yticklabels = {$\frac{1}{2}\kern-2mm$,$1$},
				height=0.3\textwidth,
				]	
				\addplot[teal,smooth,domain=0:3.1] {exp(-max(x,0))};				
				\addlegendentry{\(\exp(-\beta\max(\omega,0))\)}				
				\addplot[blue,smooth,domain=-0.5:3.14] {exp(-max(x+1/2,0))};
				\addlegendentry{\(\gamma^{(\infty)}_{\beta^{-1}}(\omega)\text{ in }\eqref{eq:convolutionFilter}\)}	
				\addplot[cyan,smooth,domain=-3.14:3.1] {1.6454275348389272*10^(-12)*x^(17)-3.1594574766943596*10^(-12)*x^(16)-5.879669816924585*10^(-11)*x^(15)+2.0696342289194524*10^(-10)*x^(14)+1.7256578885166587*10^(-9)*x^(13)-9.471557317729988*10^(-9)*x^(12)-3.868107642387552*10^(-8)*x^(11)+3.3465577750223796*10^(-7)*x^(10)+5.286965671404834*10^(-7)*x^9-9.25394938555528*10^(-6)*x^8+2.018957670733737*10^(-6)*x^7+0.000195809*x^6-0.000342935*x^5-0.00299429*x^4+0.0103029*x^3+0.0293662*x^2-0.219465*x+0.363887};
				\addlegendentry{\(\tilde{\gamma}^{(\infty)}_{\beta^{-1}}(\omega)\text{ in }\eqref{eq:convolutionFilterAPX}\)}	
				\addplot[orange,smooth,domain=-3.14:3.1] {1/(1+exp(x+0/2))};	
				\addlegendentry{\(1/(1+\exp(\beta\omega))\)}						
				\addplot[teal,smooth,domain=-3.14:0] {1};
				\addplot[blue,smooth,domain=-3.14:-0.5] {1};
			\end{axis}
		\end{tikzpicture}
	\end{center}
	\caption{
		A plot of the filter functions $\gamma(\omega)$ for Metropolis, Glauber and our filters arising from Gaussian linear combination~\eqref{eq:convolutionFilter}-\eqref{eq:convolutionFilterAPX}  (with $\sigma_{E}=\frac{1}{\beta}$).
	}\label{fig:weightFunctions}
\end{figure}

\section{Algorithms}\label{sec:algorithm}

This section presents efficient quantum algorithms for simulating the advertised \Lword{} and the associated parent Hamiltonian. The former mainly builds on black-box Lindbladian simulation algorithms~\cite{cleve2016EffLindbladianSim,childs2016SparseLindbladianSim,li2022SimMarkOpen}, more precisely their improved variant described in~\cite[Theorem III.2]{chen2023QThermalStatePrep} whose complexity boils down to constructing block encoding for the Lindblad operators (\autoref{def:blockLindladian}); the latter merely requires block-encoding the parent Hamiltonian, which feeds into quantum simulated annealing (see~\cite[Appendix G]{chen2023QThermalStatePrep} for a modern discussion) to prepare the purified Gibbs state.

Thus, the main algorithmic contribution is to assemble the block encodings associated with our synthetic Lindbladian~\eqref{eq:exact_DB_L} and our parent Hamiltonians (\autoref{thm:D_cost}). The frequency domain representation (\autoref{sec:analysis}), which is natural in the context of analyzing quantum detailed balance, is less instructive for algorithmic implementation. Indeed, addressing the exact energy eigenstates (or the exact Bohr frequencies) is generally inefficient. 
Nevertheless, the algorithmic task becomes straightforward in the time domain representation. Indeed, our Lindbladian can be expressed in terms of weighted time integrals $\int (\cdot) \rd t$ of some rapidly decaying functions; a standard Linear-Combination-of-Unitary argument (under suitable discretization) leads to the algorithmic complexity in terms of controlled Hamiltonian simulation time.

For the Lindbladian~\eqref{eq:exact_DB_L}, block-encodings for the dissipative part are already constructed in~\cite[Section III.B]{chen2023QThermalStatePrep}. Thus, it remains to construct the coherent term; for the parent Hamiltonian, we will need to construct block encodings from scratch, but the manipulations are analogous. In the following sections, we first present the time-domain expressions (\autoref{sec:timedomainL}-\ref{sec:timeDomainH}), which immediately yield the corresponding block encodings (\autoref{sec:block_encodings}) and the overall complexities (\autoref{sec:prove_L_cost}-\ref{sec:prove_D_cost}).

\subsection{Time-domain representation of our Lindbladians}
\label{sec:timedomainL}
Applying a two-dimensional Fourier Transform for the coherent term~\eqref{eq:BE} leads to the following time-domain representation, where LCU techniques are naturally applicable. See~\autoref{sec:L-time} for the calculations.
\begin{restatable}[Coherent term for the Gaussian case]{cor}{coherentTermGaussian}\label{cor:GaussianLikeWeight}
For each $\beta >0$ and parameters $\sigma_E = \sigma_{\gamma} = \omega_{\gamma} = \frac{1}{\beta}$, the coherent term $\vB$~\eqref{eq:BE} corresponding to the Gaussian weight $\gamma(\omega) = \exp\L(- \frac{(\beta\omega + 1)^2}{2}\R)$ can be written as
\begin{align}\label{eq:GaussianDimLessTime}
		\vB&:= \sum_{a\in A} 
		\int_{-\infty}^{\infty}b_1(t)\e^{-\ri\beta\vH t} \L(\int_{-\infty}^{\infty}b_2(t')\vA^{a\dagger}(\beta t')\vA^a(-\beta t')\rd t'\R) \e^{\ri \beta\vH t}\rd t,
	\end{align}
	where 
 \begin{align}
     b_1(t) &:=2\sqrt{\pi}\e^{\frac{1}{8}} \L(\!\frac{1}{\cosh\L(2\pi t\R)\!}*_t\sin\left(\!- t\right)  \exp\L(\!-2 t^2\R)\!\R)\quad &\text{such that}\quad \norm{b_1}_1 < 1\label{eq:b1}\\ 
 b_2(t) &:=\frac{1}{\pi}\sqrt{\frac{1}{\pi}}\exp\L(-4 t^2-2\ri t \R) \quad &\text{such that}\quad \norm{b_2}_1 < \frac{1}{8}\label{eq:b2}.
 \end{align}
\end{restatable}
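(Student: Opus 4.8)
The plan is to work backwards from the Bohr-frequency expression for the coherent term given by \autoref{cor:ExctDissip}, i.e.
\begin{align}
\vB = \sum_{a\in A}\sum_{\nu_1,\nu_2\in B}\hat f(\nu_1,\nu_2)\,(\vA^{a}_{\nu_2})^\dagg\vA^{a}_{\nu_1},\qquad \hat f(\nu_1,\nu_2)=\frac{\tanh(-\beta(\nu_1-\nu_2)/4)}{2\ri}\,\alpha_{\nu_1,\nu_2},
\end{align}
and to substitute the closed form of $\alpha_{\nu_1,\nu_2}$ from \autoref{lem:exact_sym_coefficients}~\eqref{eq:productGaussianAlphas} specialized to $\sigma_E=\sigma_\gamma=\omega_\gamma=\tfrac1\beta$. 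After this substitution one sees that $\hat f$ \emph{factorizes} as a function of the ``difference'' $\nu_1-\nu_2$ times a function of the ``sum'' $\nu_1+\nu_2$: the $\tanh$ and the Gaussian $\exp(-\beta^2(\nu_1-\nu_2)^2/8)$ depend only on $\nu_1-\nu_2$, whereas the shifted Gaussian $\exp(-(\beta(\nu_1+\nu_2)+2)^2/16)$ depends only on $\nu_1+\nu_2$. This is exactly the structure that lets a two-dimensional (inverse) Fourier transform collapse into a product of two one-dimensional transforms, one per coordinate.

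Next I would expand the candidate time-domain expression~\eqref{eq:GaussianDimLessTime} in Bohr frequencies. Writing $\vA^{a\dagg}(s)=\sum_{\nu}(\vA^{a}_{\nu})^\dagg\e^{-\ri\nu s}$ and $\vA^{a}(s)=\sum_{\nu}\vA^{a}_{\nu}\e^{\ri\nu s}$ gives $\vA^{a\dagg}(\beta t')\vA^{a}(-\beta t')=\sum_{\nu_1,\nu_2}(\vA^{a}_{\nu_2})^\dagg\vA^{a}_{\nu_1}\,\e^{-\ri\beta(\nu_1+\nu_2)t'}$, so the inner $t'$-integral against $b_2$ produces a factor $\propto\widehat{b_2}(\beta(\nu_1+\nu_2))$ depending only on $\nu_1+\nu_2$. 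For the outer conjugation I would use that $(\vA^{a}_{\nu_2})^\dagg\vA^{a}_{\nu_1}$ carries Bohr frequency $\nu_1-\nu_2$ (as already recorded in the proof of \autoref{cor:ExctDissip}), so that $\e^{-\ri\beta\vH t}(\vA^{a}_{\nu_2})^\dagg\vA^{a}_{\nu_1}\e^{\ri\beta\vH t}=\e^{-\ri\beta(\nu_1-\nu_2)t}(\vA^{a}_{\nu_2})^\dagg\vA^{a}_{\nu_1}$ and the outer $t$-integral against $b_1$ produces a factor $\propto\widehat{b_1}(\beta(\nu_1-\nu_2))$. Hence~\eqref{eq:GaussianDimLessTime} equals $\sum_{a}\sum_{\nu_1,\nu_2}(\vA^{a}_{\nu_2})^\dagg\vA^{a}_{\nu_1}\cdot\widehat{b_1}(\beta(\nu_1-\nu_2))\,\widehat{b_2}(\beta(\nu_1+\nu_2))$ up to the usual Fourier normalization constant, and matching with $\hat f$ reduces the claim to choosing $b_1,b_2$ with $\widehat{b_1}(\xi)\propto\tfrac{1}{2\ri}\tanh(-\xi/4)\exp(-\xi^2/8)$ and $\widehat{b_2}(\eta)\propto\exp(-(\eta+2)^2/16)$.

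Inverting these two one-dimensional Fourier transforms is the computational core. For $b_2$ it is immediate: the inverse transform of a shifted Gaussian is a Gaussian times a linear phase, which gives~\eqref{eq:b2}, and $\norm{b_2}_1$ is a single Gaussian integral (\autoref{fact:gaussian_int}). For $b_1$ the obstruction is that $\tanh$ is not integrable against a plain exponential; the resolution is to never transform $\tanh$ on its own but to regard $\widehat{b_1}(\xi)$ as a \emph{product} $\tfrac1{\cosh(\xi/4)}\cdot\bigl[\tfrac{1}{2\ri}\sinh(-\xi/4)\exp(-\xi^2/8)\bigr]$ of two Schwartz functions, so that $b_1$ is the \emph{convolution} of their inverse transforms. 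The first factor is a hyperbolic secant, and the self-dual pair $\int \tfrac{1}{\cosh(at)}\e^{-\ri\omega t}\rd t=\tfrac{\pi}{a}\tfrac{1}{\cosh(\pi\omega/2a)}$ gives its inverse transform as $\propto\tfrac1{\cosh(2\pi t)}$; completing the square shows $\tfrac{1}{2\ri}\sinh(-\xi/4)\exp(-\xi^2/8)=\tfrac{\e^{1/8}}{4\ri}\bigl(\exp(-\tfrac18(\xi+1)^2)-\exp(-\tfrac18(\xi-1)^2)\bigr)$ is an odd pair of unit-shifted Gaussians whose inverse transform is $\propto\sin(-t)\exp(-2t^2)$. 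Convolving the two reproduces~\eqref{eq:b1}, and collecting the Fourier-convention factors, the $\tfrac{1}{2\sqrt2}$ prefactor of $\alpha$, and the $\e^{1/8}$ from the square pins down the constants $2\sqrt\pi\e^{1/8}$ and $\tfrac{1}{2\pi}\sqrt{1/\pi}$. Hermiticity of the resulting $\vB$ needs no separate check: it is guaranteed by \autoref{lem:FindingCoherenceTerm}, and is anyway transparent from $\{\vA^a\}=\{\vA^{a\dagg}\}$.

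The step I expect to be the main obstacle is the honest treatment of the $\tanh$ factor together with the $L^1$ bookkeeping: the $\sinh/\cosh$ splitting is what converts the derivation into genuine $L^1$ convolutions, but one still has to verify that the convolution is integrable and, more annoyingly, that $\norm{b_1}_1<1$. The crude estimate $\norm{b_1}_1\le\norm{\tfrac1{\cosh(2\pi\cdot)}}_1\cdot\|t\mapsto\sin(-t)\exp(-2t^2)\|_1$ from Young's inequality and $|\sin|\le1$ overshoots $1$ (it yields $\tfrac{\pi}{\sqrt2}\e^{1/8}\approx 2.5$), so one must exploit the sign cancellation of the $\sin(-t)$ factor inside the convolution --- either via a short explicit estimate or by direct numerical evaluation (as the paper does elsewhere with Mathematica) --- to bring the norm below $1$; pinning the exact leading constants to the stated values is a similarly fiddly, but routine, accounting of Fourier conventions.
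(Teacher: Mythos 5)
Your proposal is correct and follows essentially the same route as the paper: factorize $\hat f(\nu_1,\nu_2)$ into sum/difference coordinates so the 2D Fourier transform collapses into two 1D transforms (the content of \autoref{prop:B_timedomain}), invert the shifted Gaussian for $b_2$, and split $\tanh=\sinh/\cosh$ so that $b_1$ becomes a convolution of a hyperbolic secant with $\sin(-t)\e^{-2t^2}$. Your diagnosis of the $\norm{b_1}_1<1$ step is also on target — the paper likewise cannot use a naive Young bound and instead applies H\"older's inequality with a $(1+t^2)$ weight together with a numerically evaluated integral $\int|(1+t^2)b_1|^2<0.625$.
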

Indeed, we can verify that $\vB$ is Hermitian by
\begin{align}
    \vB^{\dagger}&= \sum_{a\in A}
		\int_{-\infty}^{\infty}b^*_1(t)\e^{-\ri\beta\vH t} \L(\int_{-\infty}^{\infty}b^*_2(t') \vA^{a\dagger}(-\beta t')\vA^a(\beta t')\rd t'\R) \e^{\ri \beta\vH t}\rd t\\
  &=\sum_{a\in A}
		\int_{-\infty}^{\infty}b_1(t)\e^{-\ri\beta\vH t} \L(\int_{-\infty}^{\infty}b_2(-t') \vA^{a\dagger}(-\beta t')\vA^a(\beta t')\rd t'\R) \e^{\ri \beta\vH t}\rd t \tag*{(By $b_1^*(t)=b_1(t)$ and $b_2^*(t)=b_2(-t)$)}\\
  &=\sum_{a\in A}
		\int_{-\infty}^{\infty}b_1(t)\e^{-\ri\beta\vH t} \L(\int_{-\infty}^{\infty}b_2(t') \vA^{a\dagger}(\beta t')\vA^a(-\beta t')\rd t'\R) \e^{\ri \beta\vH t}\rd t = \vB \tag*{(Change of variable $t'\rightarrow -t'$)}.
\end{align}

The explicit weights corresponding to Metropolis weights are slightly more cumbersome due to taming a mild (logarithmic) singularity. We can also verify that $\vB^{M,\eta}$ is Hermitian by $b_2^{M,\eta*}(t) =b_2^{M,\eta}(-t)$. 

\begin{restatable}[Approximate coherent term for the Metropolis-like weight]{cor}{MetropolisLikeFilter}\label{cor:coherentMetropolis}
	If $\sigma_{E}=\frac{1}{\beta}$, then the coherent term $\vB^M$ corresponding to the Metropolis-like weight $\gamma^M(\omega) = \exp\L(-\beta\max\left(\omega+\frac{1}{2\beta},0\right)\R)$ satisfies 
	\begin{align}\label{eq:Bdiff}
		\nrm{\vB^M-\vB^{M,\eta}}\leq\nrm{\sum_{a\in A}\vA^{a\dagg}\vA^a}\min\left(\frac{\sqrt{2}\eta\beta\nrm{\vH}}{\pi},\bigO{(\eta\beta\nrm{\vH})^3}\right),
	\end{align}
	where 
\begin{align}\label{eq:MetropolisDimLessTime}
		\vB^{M,\eta}:=\sum_{a\in A}\int_{-\infty}^{\infty}b_1(t)\e^{-\ri\beta \vH t}\left( \int_{-\infty}^{\infty}b_2^{M,\eta}(t')\vA^{a\dagger}(\beta t')\vA^a(-\beta t')\rd t'+\frac{1}{8\sqrt{2}\pi}\vA^{a\dagger}\vA^a\right) \e^{\ri \beta\vH t}\rd t,
	\end{align}     
	with $b_1(t)$ as in~\eqref{eq:b1}, and 
\begin{align}
  b_2^{M,\eta}(t):=\frac{1}{2\sqrt{2}\pi}\frac{\exp\L(-2t^2 -\ri t\R)+\indicator(|t|\leq\eta)\ri\left(2t+\ri\right)}{t (2 t+\ri)}  
	\quad \text{such that}\quad \lVert b_2^{M,\eta}\rVert_1<\frac{2}{5}+\!\frac{1}{\sqrt{2}\pi}\ln(1/\eta).\label{eq:B2Meta}
\end{align}
Further, if $\left[\sum_{a\in A}\vA^{a\dagger}\vA^a\!,\vH\right]=0$, we can drop the second term in \eqref{eq:MetropolisDimLessTime} after the integral in $t'$ since $\int_{-\infty}^\infty b_1(t)=0$.    
\end{restatable}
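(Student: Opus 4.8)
The plan is to mirror the proof of the Gaussian case (\autoref{cor:GaussianLikeWeight}, carried out in \autoref{sec:L-time}) and to track only the one new feature: the mild singularity forced by the kink in the Metropolis weight. I would start from the frequency-domain expression \eqref{eq:BE} for the coherent term (\autoref{cor:ExctDissip}) specialized to the Metropolis coefficients. By the Metropolis--Glauber-like filter proposition, $\gamma^{M}(\omega)=\exp\!\big(-\beta\max(\omega+\tfrac{1}{2\beta},0)\big)$ equals the Gaussian linear combination $\gamma^{(\infty)}_{1/\beta}$ of \eqref{eq:convolutionFilter} with $g(\omega_\gamma)=\big(\sqrt{2\pi}\,\sigma_\gamma\big)^{-1}$, so by \autoref{cor:exact_sym_coefficients} the relevant coefficients are $\alpha^{M}_{\nu_1,\nu_2}=\alpha^{(g)}_{\nu_1,\nu_2}$. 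The key structural input is the factorization of \autoref{lem:exact_sym_coefficients}: each $\alpha^{(\omega_\gamma,\sigma_\gamma)}_{\nu_1,\nu_2}$ is a function of $\nu_1-\nu_2$ times a function of $\nu_1+\nu_2$, and since $\sigma_E=1/\beta$ is held fixed the $(\nu_1-\nu_2)$-factor $\exp(-\beta^2(\nu_1-\nu_2)^2/8)$ is independent of the integration variable $x$ and pulls out of the $x$-integral. Combined with the $\tfrac{\tanh(-\beta(\nu_1-\nu_2)/4)}{2\ri}$ prefactor in \eqref{eq:BE} (also purely a function of $\nu_1-\nu_2$), this writes the coherent coefficient as $p(\nu_1-\nu_2)\,q^{M}(\nu_1+\nu_2)$ with the \emph{same} $p$ as in the Gaussian case --- hence the \emph{same} $b_1$ of \eqref{eq:b1} --- and a new factor $q^{M}$.

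Next I would observe that the two nested time integrals in \eqref{eq:MetropolisDimLessTime} realize a Fourier multiplier in each variable, exactly as for \eqref{eq:GaussianDimLessTime}: conjugating $(\vA^a_{\nu_2})^\dagger\vA^a_{\nu_1}$ (which has Bohr frequency $\nu_1-\nu_2$) by $\e^{-\ri\beta\vH t}(\cdot)\,\e^{\ri\beta\vH t}$ and integrating against $b_1$ multiplies it by a constant times $\hat b_1\!\big(\beta(\nu_1-\nu_2)\big)$, while inside $\vA^{a\dagger}(\beta t')\vA^a(-\beta t')$ that operator carries the phase $\e^{-\ri\beta(\nu_1+\nu_2)t'}$, so integrating against $b_2$ multiplies it by a constant times $\hat b_2\!\big(\beta(\nu_1+\nu_2)\big)$. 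Matching to $p\cdot q^{M}$ then forces $b_1$ to be (a rescaled) inverse transform of $p$ --- whose closed form is the advertised convolution of $\tfrac{1}{\cosh(2\pi t)}$ (the transform of the odd $\tanh$-type factor) with $\sin(-t)\e^{-2t^2}$ (the transform of the Gaussian), with $\|b_1\|_1<1$ as before --- and $b_2^{M}$ to be (a rescaled) inverse transform of $q^{M}$.

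The main work, and the main obstacle, is this last inverse transform. The factor $q^{M}$ is smooth (morally $\gamma^{M}$ smeared by Gaussians) but, because $\gamma^{M}(\omega)\to1$ as $\omega\to-\infty$, $q^{M}(u)$ tends to a nonzero constant as $u\to-\infty$. Consequently its inverse Fourier transform decomposes into (i) a $\delta(t)$-term, which when fed through $\sum_a\int b_1(t)\,\e^{-\ri\beta\vH t}(\cdot)\,\e^{\ri\beta\vH t}\rd t$ produces exactly the extra summand $\tfrac{1}{16\sqrt2\pi}\vA^{a\dagger}\vA^a$ in \eqref{eq:MetropolisDimLessTime} (since $\vA^{a\dagger}(0)\vA^a(0)=\vA^{a\dagger}\vA^a$); (ii) a principal-value tail that behaves like $1/t$ near the origin, which is the source of the logarithmic divergence of the $\ell_1$-norm; and (iii) a Schwartz-class remainder. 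Carrying this out explicitly (by Mathematica, \autoref{apx:Mathematica}) gives $b_2^{M}(t)=\tfrac{1}{4\sqrt2\pi}\tfrac{\e^{-2t^2-\ri t}}{t(2t+\ri)}$ interpreted as a Cauchy principal value. To obtain an honest operator I would regularize: excise $|t|\le\eta$ and add the bounded compensator $\indicator(|t|\le\eta)\,\ri(2t+\ri)$ in the numerator, which exactly cancels the $1/t$ behaviour there, so $b_2^{M,\eta}$ is bounded near the origin (with value $\tfrac{1}{4\sqrt2\pi}$); the bound $\|b_2^{M,\eta}\|_1<\tfrac15+\tfrac{1}{2\sqrt2\pi}\ln(1/\eta)$ of \eqref{eq:B2Meta} then follows by integrating the genuine tail ($\sim1/|t|$ on $(\eta,1)$, Gaussian-suppressed beyond).

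Finally I would assemble the error bound, Hermiticity, and the commuting-case simplification. Since $b_2^{M}$ and $b_2^{M,\eta}$ agree off $[-\eta,\eta]$ and differ there by exactly $\tfrac{-\ri}{4\sqrt2\pi\,t}$, we get $\vB^{M}-\vB^{M,\eta}=\sum_a\int b_1(t)\,\e^{-\ri\beta\vH t}\big(\mathrm{p.v.}\!\int_{-\eta}^{\eta}\tfrac{-\ri}{4\sqrt2\pi\,t'}\,\vA^{a\dagger}(\beta t')\vA^a(-\beta t')\,\rd t'\big)\e^{\ri\beta\vH t}\rd t$; the odd singular kernel sees only the odd-in-$t'$ part of the smooth integrand, which vanishes at $t'=0$ and is $O(\beta\nrm{\vH}\,|t'|)$, so the inner integral is $O(\eta\beta\nrm{\vH})$ in norm (the sharper small-$\eta$ bound in \eqref{eq:Bdiff} needs a finer expansion of the Heisenberg evolution), summing over $a$ via the operator Cauchy--Schwarz inequality produces the factor $\nrm{\sum_{a\in A}\vA^{a\dagger}\vA^a}$, and $\|b_1\|_1<1$ closes the estimate \eqref{eq:Bdiff}. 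Hermiticity follows exactly as after \autoref{cor:GaussianLikeWeight}: check $b_1^{*}(t)=b_1(t)$ and $b_2^{M,\eta*}(t)=b_2^{M,\eta}(-t)$ directly from \eqref{eq:b1} and \eqref{eq:B2Meta}, note $\vA^{a\dagger}\vA^a$ is manifestly Hermitian, and rerun the change-of-variables computation (and send $\eta\to0$ for $\vB^{M}$). For the commuting case, if $[\vH,\sum_a\vA^{a\dagger}\vA^a]=0$ then $\sum_a\e^{-\ri\beta\vH t}\vA^{a\dagger}\vA^a\e^{\ri\beta\vH t}=\sum_a\vA^{a\dagger}\vA^a$, so the extra summand contributes $\tfrac{1}{16\sqrt2\pi}\big(\sum_a\vA^{a\dagger}\vA^a\big)\int b_1(t)\rd t$; but $\int b_1(t)\rd t=\sqrt{2\pi}\,\hat b_1(0)$ is proportional to $p(0)=\tanh(0)=0$ (equivalently, $b_1$ is odd, being the convolution of an even with an odd function), so the term drops out. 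The genuinely delicate step is (ii)--(iii) above --- pinning down the closed form of $b_2^{M}$ and converting the principal-value prescription into the quantitative bound \eqref{eq:Bdiff}; the remainder is a routine transcription of the Gaussian argument.
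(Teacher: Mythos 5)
Your proposal follows the same route as the paper's: specialize \eqref{eq:BE} to the Metropolis coefficients via the Gaussian linear combination, use the factorization of \autoref{lem:exact_sym_coefficients} (with $\sigma_E=1/\beta$ fixed) to keep the same $b_1$, compute the inverse Fourier transform of the $(\nu_1+\nu_2)$-factor as a $\delta$-term plus a principal value plus a Schwartz remainder (this is exactly the paper's manipulation of subtracting $\frac{1-\sgn(\nu)}{4}$ in \autoref{prop:MetropolisLikeFilterf}), regularize by excising $|t'|\le\eta$ with the compensator in the numerator, bound $\lVert b_2^{M,\eta}\rVert_1$ by splitting off the $1/|t'|$ tail on $(\eta,1)$, and use parity of $b_1$ for the commuting-case simplification. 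One point where your route is actually cleaner: you identify $\vB^M-\vB^{M,\eta}$ directly as $\sum_a\int b_1(t)\e^{-\ri\beta\vH t}\bigl(\mathrm{p.v.}\int_{-\eta}^{\eta}\frac{-\ri}{4\sqrt{2}\pi t'}\vA^{a\dagger}(\beta t')\vA^a(-\beta t')\rd t'\bigr)\e^{\ri\beta\vH t}\rd t$, whereas the paper detours through the ``exact form'' \autoref{prop:exactMetro} and the auxiliary function $\mu_a$; these are the same object, since $\mu_a(\beta\vH t')$ is (up to normalization) $-\ri$ times the odd-in-$t'$ part of $\vA^{a\dagger}(\beta t')\vA^a(-\beta t')$, but your phrasing makes the parity structure transparent without the $\sinc/\cos/\sin$ bookkeeping.

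The genuine gap is the second branch of the $\min$ in \eqref{eq:Bdiff}. Your odd-kernel argument gives the linear bound $\bigO{\eta\beta\nrm{\vH}}$ cleanly, but you explicitly defer the $\bigO{(\eta\beta\nrm{\vH})^3}$ bound to ``a finer expansion of the Heisenberg evolution,'' and this is where all the remaining content sits. Concretely, writing $g(t')=\vA^{a\dagger}(\beta t')\vA^a(-\beta t')$, the odd part has Taylor expansion $g_{\mathrm{odd}}(t')=g'(0)t'+\bigO{t'^3}$ with $-\ri g'(0)=\beta\bigl(\{\vH,\vA^{a\dagger}\vA^a\}-2\vA^{a\dagger}\vH\vA^a\bigr)$, and integrating $1/t'$ against the linear term over $[-\eta,\eta]$ contributes $2\eta\,(-\ri g'(0))$, i.e.\ a term that is linear (not cubic) in $\eta$ unless this first-order coefficient is shown to vanish or be otherwise absorbed. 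The paper's proof handles exactly this point by asserting that the odd entire function $\mu_a(\vX)$ has vanishing derivative at $\vX=0$, so that $\mu_a(\vX)=\bigO{\nrm{\vA^a}^2\nrm{\vX}^3}$; any complete write-up must verify this claim (it is not implied by oddness alone, and the first-order coefficient above is a genuine double-commutator-type expression for noncommuting $\vH$ and $\vA^a$). Two smaller omissions: you do not actually track the constant $\frac{1}{16\sqrt{2}\pi}$ in front of the $\delta$-induced $\vA^{a\dagger}\vA^a$ term (the rescalings $t\mapsto\beta t'$ and $b_2=f_+/(2\pi\sqrt{\pi})$ must be chased through), and the promotion from $\sum_a\nrm{\vA^a}^2$ to $\nrm{\sum_a\vA^{a\dagger}\vA^a}$ requires rewriting each term of the odd part in the sandwich form $\vX\vA^{a\dagger}\vY\vA^a\vZ$ before applying the block-encoding inequality, as the paper does at the very end.
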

See~\autoref{sec:cal_Metropolis} for the proof. Via LCU, using the weighing function $\frac{b_1(t)}{\norm{b_1}_1}\cdot\frac{b_2^{M,\eta}(t')}{\norm{b_2^{M,\eta}}_1}$ as per the above integral representation, we can directly get an approximate block-encoding of $\vB^{M,\eta}/\alpha$ where $\alpha=\norm{b_1}_1\cdot\norm{b_2^{M,\eta}}_1=\bigO{\log(1/\eta)}$. In order to get an $\frac{\epsilon}{2}$-accurate approximation of $\vB^{M,\eta}/\alpha$ in the block-encoding it suffices to truncate the respective integrals at $t'\sim \sqrt{\log(1/\epsilon)}$ and $t\sim \bigO{\log(1/\epsilon)}$, which in turn gives an $\frac{\epsilon}{2}+ \frac{\eta}{2}\beta\nrm{\vH}\nrm{\sum_{a\in A}\vA^{a\dagg}\vA^a}$ approximate block-encoding of $\frac{\vB^M}{\alpha}$. Due to such time truncation the resulting LCU implementation requires merely $\bigO{\beta\log(1/\epsilon)}$ (controlled) Hamiltonian simulation, and by setting $\eta=\frac{\epsilon}{\beta\nrm{\vH}\nrm{\sum_{a\in A}\vA^{a\dagg}\vA^a}}$ results in an $\epsilon$-approximate block-encoding of $\vB^M/\alpha$ where $\alpha=\bigO{\log(\frac{\beta\nrm{\vH}\nrm{\sum_{a\in A}\vA^{a\dagg}\vA^a}}{\epsilon})}$. Apart from this mild logarithmic subnormalization $\alpha$, we expect the utilized (controlled) Hamiltonian simulation time to be optimal.

\subsection{Time-domain representation of our parent Hamiltonians}
\label{sec:timeDomainH}

Recall that the central mathematical object for coherent Gibbs sampling is the \textit{discriminant} (i.e., the Lindbladian under a similarity transformation)
\begin{align}
	\CH(\vrho,\CL) :=  \vrho^{-1/4}\CL[ \vrho^{1/4}\cdot  \vrho^{1/4}]  \vrho^{-1/4}
\end{align}
This amounts to a mild calculation that transfers the heavy lifting done already in the Lindbladian context. One adaptation for the coherent algorithm is that the fundamental object is not a superoperator but an operator on a doubled Hilbert space. Formally, we define the \emph{vectorization} of a super-operator by
\begin{align*}
\mathcal{M}[\cdot]&=\sum_j  \alpha_j \vA_j[\cdot]\vB_j \rightarrow \vec{\mathcal{M}}=\sum_j \alpha_j \vA_j\otimes\vB^T_j \quad \text{(vectorization)}
\end{align*}
where $\vB^T_j$ denotes the transpose of the matrix $\vB_j$ in the computational basis $\ket{i}$. We use curly fonts $\CM$ for super-operators and bold fonts $\vec{\CM}$ for the vectorized super-operators (which is, a matrix). For a matrix $\vrho$, let us denote its vectorized version by 
\begin{align}
\ket{\vrho}:= (I \otimes T^{-1})\vrho \quad \text{where}\quad T \ket{i}=\bra{i}\quad \text{for each}\quad \ket{i}.
\end{align}
In the time domain, our parent Hamiltonian takes the following form (see~\autoref{sec:parentHamTime} for the calculations)
\begin{align}
        \vec{\CH}_{\beta} &= \undersetbrace{\text{transition part}}{\sum_{a \in A} \int_{-\infty}^{\infty}\int_{-\infty}^{\infty} h_-(t_-)h_+(t_+) \cdot \vA^a\L(t_+-t_-\R)\otimes\vA^{a}\L(-t_--t_+\R)^T \rd t_+\rd t_- }
        + \undersetbrace{\text{ decay and coherent part}}{\frac{1}{2}(\vN\otimes \vI + \vI\otimes\vN^{*})}.
\end{align}
Compared with the Lindbladian case, the first term is essentially the transition part under a similarity transformation, and the second term combines the decay and coherent part. Here, we do not care about complete positivity or trace-preserving but merely Hermiticity. 
\subsubsection{The transition part}
The transition parts for both Gaussian and Metropolis weights are as follows.
\begin{restatable}[The transition part for Gaussian weights]{cor}{transitionGaussian}\label{cor:transitionGaussian} 
For the Gaussian weight $\gamma(\omega) = \exp\L(- \frac{(\beta\omega + 1)^2}{2}\R)$ with $\omega_{\gamma} = \sigma_E = \sigma_{\gamma} =1/\beta$, the discriminant $\vec{\CH}_{\beta}$ is described in the time domain by
\begin{align}
h_+(t)&= \frac{1}{\beta} \e^{-1/4}\exp(-\frac{4t^2}{\beta^2}) \quad \text{and}\quad h_-(t) = \frac{2}{\pi\beta} \exp( -\frac{2t^2}{\beta^2})\quad \text{such that}\quad \normp{h_-}{1},\normp{h_+}{1} \le 1.\label{eq:h_-}
\end{align}
\end{restatable}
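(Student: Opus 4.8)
The plan is to unwind the vectorization of the transition part of the discriminant and match it term-by-term against the Bohr-frequency expression obtained from $\CT$ under the similarity transformation $\vrho^{-1/4}(\cdot)\vrho^{1/4}$, then recognize the resulting two-variable Gaussian kernel as a product of one-variable Gaussians that can be Fourier-inverted. Concretely, starting from $\CT=\sum_{a}\sum_{\nu_1,\nu_2}\alpha_{\nu_1,\nu_2}\vA^a_{\nu_1}(\cdot)(\vA^a_{\nu_2})^\dagger$ with the explicit product formula for $\alpha^{(\omega_\gamma,\sigma_\gamma)}_{\nu_1,\nu_2}$ from~\autoref{lem:exact_sym_coefficients}, I would first conjugate: the transition part of $\CH_\beta$ picks up factors $\e^{\beta\nu_1/4}$ on the left and $\e^{\beta\nu_2/4}$ on the right (from $\vrho^{-1/4}\vA^a_{\nu_1}\vrho^{1/4}=\e^{\beta\nu_1/4}\vA^a_{\nu_1}$ and the analogous factor on the bra side), giving a modified coefficient $\tilde\alpha_{\nu_1,\nu_2}=\alpha_{\nu_1,\nu_2}\e^{\beta(\nu_1+\nu_2)/4}$. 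Vectorizing sends $\vA^a_{\nu_1}(\cdot)(\vA^a_{\nu_2})^\dagger$ to $\vA^a_{\nu_1}\otimes(\vA^a_{\nu_2})^{\dagger T}=\vA^a_{\nu_1}\otimes(\vA^{a*}_{\nu_2})=\vA^a_{\nu_1}\otimes\overline{\vA^a_{\nu_2}}$, and using the self-adjointness of the jump set one rewrites $\overline{\vA^a_{\nu_2}}$ in terms of $(\vA^a)^T$-type Heisenberg components, matching the $\vA^{a}(-t_--t_+)^T$ factor in the claimed formula.

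Next I would substitute $\omega_\gamma=\sigma_E=\sigma_\gamma=1/\beta$ into the product formula, so that up to the constant prefactor $\tfrac{\sigma_\gamma}{2\sqrt{\sigma_E^2+\sigma_\gamma^2}}=\tfrac{1}{2\sqrt2}$ the coefficient becomes
\begin{align}
\tilde\alpha_{\nu_1,\nu_2} \;\propto\; \exp\!\L(-\frac{(\nu_1+\nu_2+2/\beta)^2}{16/\beta^2}\R)\exp\!\L(-\frac{(\nu_1-\nu_2)^2}{8/\beta^2}\R)\e^{\beta(\nu_1+\nu_2)/4}.
\end{align}
The key algebraic step is to absorb the linear term $\e^{\beta(\nu_1+\nu_2)/4}$ back into the first Gaussian by completing the square; after simplification the exponent in $\nu_1+\nu_2$ becomes a clean Gaussian of the form $\e^{-\beta^2(\nu_1+\nu_2)^2/16}$ times an overall constant, so the whole coefficient factorizes as $c\cdot\e^{-\beta^2(\nu_1+\nu_2)^2/16}\cdot\e^{-\beta^2(\nu_1-\nu_2)^2/16}$, i.e., it depends on $\nu_1$ and $\nu_2$ only through two independent one-dimensional Gaussians after the change of variables $(\nu_1+\nu_2,\nu_1-\nu_2)$. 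I would then recognize each one-dimensional Gaussian in frequency as the Fourier transform of a Gaussian in time via~\autoref{fact:gaussian_int}: a weight $\e^{-a\nu^2}$ corresponds to a time-domain kernel $\propto\e^{-t^2/(4a)}$, and writing $\e^{\ri\nu_1 s_1}$, $\e^{\ri\nu_2 s_2}$ inside $\vA^a_{\nu_1}(s_1)$, $\vA^a_{\nu_2}(s_2)$ (Heisenberg evolution) converts the $\nu$-sums into time integrals. Re-expressing $s_1,s_2$ in terms of $t_+=(s_1+s_2)/2$-type and $t_-=(s_1-s_2)/2$-type variables (matching the arguments $t_+-t_-$ and $-t_--t_+$ in the statement) then yields exactly $h_+(t_+)h_-(t_-)$ with $h_\pm$ as claimed, and the $L^1$-norm bounds $\normp{h_\pm}{1}\le1$ follow from a direct Gaussian integral together with the $\e^{-1/4}$ and $\tfrac{1}{\pi}$ prefactors.

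The main obstacle is bookkeeping the constants and the change of variables correctly: there are several interlocking normalizations — the $\tfrac{1}{2\sqrt{2\sigma_E\sqrt{2\pi}}}$ from the operator Fourier transform, the $\tfrac{\sigma_\gamma}{2\sqrt{\sigma_E^2+\sigma_\gamma^2}}$ prefactor of $\alpha$, the Jacobian of the $(\nu_1,\nu_2)\mapsto(\nu_\pm)$ and $(s_1,s_2)\mapsto(t_\pm)$ substitutions, and the $\sqrt{2\pi}\sigma$ factors from each Gaussian integral — and the completing-the-square step that eats the $\e^{\beta(\nu_1+\nu_2)/4}$ must be done carefully to land on precisely $\e^{-1/4}$ in $h_+$ and no stray exponential in $h_-$. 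I expect the cleanest route is to not pass through the Bohr-frequency sums explicitly at all, but to start from the time-domain form of $\hat{\vA}^a(\omega)$, plug it into $\CT$, do the $\omega$-integral against $\gamma(\omega)$ first (a single Gaussian integral giving a Gaussian in $t-t'$ and a phase), then conjugate by $\vrho^{\pm1/4}=\e^{\mp\beta\vH/4}$, which simply shifts the time arguments of the Heisenberg-evolved operators by $\pm\ri\beta/4$ — an analytic continuation that, combined with the Gaussian weight, reproduces the real-time kernel $h_+(t_+)h_-(t_-)$ after a final linear change of variables. Verifying that this analytic continuation is legitimate (i.e., the Gaussian decay dominates) and that the final kernel is manifestly real and gives a Hermitian $\vec{\CH}_\beta$ is the one genuinely delicate point; everything else is routine Gaussian calculus, and the $b_1,b_2$ functions of~\autoref{cor:GaussianLikeWeight} provide a template for exactly this kind of manipulation.
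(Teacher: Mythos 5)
Your plan follows the paper's own route almost exactly: conjugate the transition part by $\vrho^{\pm 1/4}$ to pick up the factor $\e^{\beta(\nu_1+\nu_2)/4}$, absorb it into the $\nu_+$-Gaussian by completing the square (this is exactly where the $\e^{-1/4}$ prefactor comes from), observe the resulting product structure in $(\nu_+,\nu_-)$, and apply the decoupled two-dimensional Fourier transform --- the paper packages these steps as \autoref{prop:symmetrized_Dis}, \autoref{prop:b_2dFT}, and \autoref{prop:linear_comb_h} specialized to $g(x)=\delta(x-\omega_\gamma)$ with $\omega_\gamma=\sigma_E=\sigma_\gamma=1/\beta$. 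The one slip in your sketch is that the $\nu_-$ factor is $\exp\L(-\beta^2(\nu_1-\nu_2)^2/8\R)$ (from $8\sigma_E^2=8/\beta^2$), not $\exp\L(-\beta^2(\nu_1-\nu_2)^2/16\R)$; carried through, the wrong constant would give $h_-(t)\propto\exp(-4t^2/\beta^2)$ instead of the stated $\exp(-2t^2/\beta^2)$, so this is exactly the kind of bookkeeping you flagged and must get right.
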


\begin{restatable}[The transition part for Metropolis weights]{cor}{transitionMetropolis}
\label{cor:transitionMetropolis}
For $\sigma_{E}=\frac{1}{\beta}$, the Metropolis-like weight $\gamma^M(\omega) = \exp\L(-\beta\max\left(\omega+\frac{1}{2\beta},0\right)\R)$ yields $\vec{\CH}_{\beta}$ described in the time domain by the same $h_-(t)$ as in Eq.~\eqref{eq:h_-} and by
\begin{align}
	h_+(t)=\frac{\e^{-1/8}}{\beta}\frac{\e^{-2t^2/\beta^2}}{4\sqrt{2\pi}(\frac{t^2}{\beta^2} + \frac{1}{16})}\quad \text{such that}\quad \norm{h_+(t)}_1 \le 1.	
\end{align}
\end{restatable}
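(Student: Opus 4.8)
\textbf{Proof proposal for Corollary~\ref{cor:transitionMetropolis}.}

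The plan is to specialize the general time-domain formula for the discriminant $\vec{\CH}_\beta$ to the Metropolis-like weight $\gamma^M(\omega) = \exp(-\beta\max(\omega+\tfrac{1}{2\beta},0))$ and then read off the kernels $h_\pm$. First I would recall the frequency-domain structure: the transition part of $\CH(\vrho,\CL)$ is obtained from $\CT$ by conjugation with $\vrho_\beta^{1/4}$ rather than $\vrho_\beta^{1/2}$, so the coefficient matrix $\alpha_{\nu_1,\nu_2}$ picks up a symmetric factor $\e^{\beta(\nu_1+\nu_2)/4}$. Using the factorization from \autoref{lem:exact_sym_coefficients} (or its linear-combination analog \autoref{cor:exact_sym_coefficients} since $\gamma^M$ is a linear combination of Gaussians via the $g$ from the preceding proposition), the discriminant coefficient splits into a function of $\nu_1+\nu_2$ times a function of $\nu_1-\nu_2$. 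The $\nu_1-\nu_2$ factor is the same Gaussian $\exp(-(\nu_1-\nu_2)^2/(8\sigma_E^2))$ appearing in \eqref{eq:productGaussianAlphas} and is \emph{independent of the weight}; this is why $h_-$ is identical to the Gaussian case \eqref{eq:h_-}, and I would only need to inverse-Fourier-transform $\exp(-\beta^2\omega^2/8)$ (with $\sigma_E=1/\beta$) to confirm $h_-(t) = \tfrac{1}{\pi\beta}\e^{-2t^2/\beta^2}$ with unit $L^1$ norm by \autoref{fact:gaussian_int}.

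Next I would handle the weight-dependent factor, a function of $\nu_+:=\nu_1+\nu_2$. The symmetrized discriminant weight is $\gamma^M$-derived but made symmetric under $\nu_+\mapsto -\nu_+$; concretely, from the identity $\gamma^M(\omega) = \e^{-\beta\max(\omega+\frac{1}{2\beta},0)}$ one has (shifting by $\frac{1}{2\beta}=\frac{\beta\sigma_E^2}{2}$) that the relevant symmetric kernel is proportional to $\e^{-\beta|\nu_+/2 + 1/(2\beta)|}$ up to the Gaussian smoothing from the operator Fourier transform filters. The key step is to convolve this with the OFT filter contributions and inverse-Fourier-transform in the variable conjugate to $\nu_+$, which produces $h_+$. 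The factor $\tfrac{1}{t^2/\beta^2 + 1/16}$ is the Fourier transform of a two-sided exponential $\e^{-|\cdot|/4}$-type kernel (the $\tfrac{1}{16}$ coming from the $\tfrac{1}{2\beta}$ shift, since $(\tfrac{\beta}{2}\cdot\tfrac{1}{2\beta})^2$-type bookkeeping gives the $4$ in the denominator scaling), multiplied by the Gaussian $\e^{-2t^2/\beta^2}$ coming from the filter; the constant $\tfrac{\e^{-1/8}}{4\sqrt{2\pi}\beta}$ is fixed by normalization. I would verify $\norm{h_+}_1 \le 1$ by a direct Gaussian/Lorentzian integral estimate — this is the only genuinely computational piece and is where I would appeal to Mathematica as the paper does elsewhere (\autoref{apx:Mathematica}).

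The main obstacle I anticipate is bookkeeping the various $\beta$-scalings and the $\tfrac{1}{2\beta}$ shift consistently across the two Fourier transforms (time $\leftrightarrow\nu_1-\nu_2$ and time $\leftrightarrow\nu_1+\nu_2$), together with the fact that the Metropolis weight is only piecewise-smooth at $\omega=-\tfrac{1}{2\beta}$, so its inverse transform is a tempered distribution requiring the principal-value interpretation; the Gaussian filter from the OFT regularizes this, but one must check that the regularization leaves no residual $\delta$-function contribution to the transition part (in contrast to the coherent term in \autoref{cor:coherentMetropolis}, where a $\delta(t)$ correction does appear). Because \autoref{cor:transitionGaussian} already establishes the same structure for the Gaussian case, I would organize the proof as: (i) invoke the general time-domain discriminant formula from \autoref{sec:parentHamTime}; (ii) note $h_-$ is weight-independent hence inherited from \eqref{eq:h_-}; (iii) compute the weight-dependent $h_+$ by the two-dimensional Fourier argument, checking the $L^1$ bound. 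This reduces the corollary to one Lorentzian-times-Gaussian integral.
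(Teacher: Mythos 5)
Your outline is correct in structure and ends up at the right place, but it takes a genuinely different (and harder) route to $h_+$ than the paper does. You propose to symmetrize the Metropolis weight in the $\nu_+$ variable and then inverse-Fourier-transform the resulting Gaussian-smoothed two-sided-exponential kernel directly, worrying along the way about the non-smoothness at $\omega=-\tfrac{1}{2\beta}$, principal values, and possible residual $\delta(t)$ terms. The paper instead never Fourier-transforms the Metropolis weight at all: it first derives (in \autoref{prop:linear_comb_h}) the time-domain kernel $h_+(t)=\int_{\beta\sigma_E^2/2}^{\infty}g(x)\sigma_\gamma(x)\exp\L(-\tfrac{4t^2x}{\beta}-\tfrac{\beta x}{4}\R)\rd x$ for a general linear combination of Gaussians, and then plugs in $g(x)\sigma_\gamma(x)=1/\sqrt{2\pi}$, so the whole computation collapses to the elementary integral $\int_{c}^{\infty}\e^{-ax}\rd x=\e^{-ac}/a$ with $a=4\beta(\tfrac{t^2}{\beta^2}+\tfrac{1}{16})$ and $c=\tfrac{\beta\sigma_E^2}{2}$; the Lorentzian factor is just the $1/a$, the Gaussian and the $\e^{-1/8}$ are the $\e^{-ac}$, and the $\ell_1$ bound is the single integral $\tfrac{1}{\e^{1/8}4\sqrt{2\pi}}\int_{-\infty}^{\infty}\tfrac{\e^{-2x^2}}{x^2+1/16}\rd x=\sqrt{\pi/2}\,\erfc(1/\sqrt{8})<0.78$. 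This makes manifest that $h_+$ is a positive $L^1$ function with no distributional subtleties, whereas your route would have to re-derive that fact by hand. Your observation that $h_-$ is weight-independent and inherited from \eqref{eq:h_-} matches the paper exactly.

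Two small corrections to your sketch. First, the symmetrized $\nu_+$-kernel is \emph{not} shifted: the factor $\e^{\beta(\nu_1+\nu_2)/4}$ in $h_{\nu_1,\nu_2}=\e^{\beta(\nu_1+\nu_2)/4}\alpha_{\nu_1,\nu_2}$ exactly cancels the cross term $\e^{-\beta\nu_+/4}$ in $\exp(-(\nu_++2x)^2\beta/(16x))$, leaving a kernel even in $\nu_+$ (consistent with $h_+(t)$ being real with no phase $\e^{-\ri c t}$); the $\tfrac{1}{2\beta}$ shift survives only as the lower limit of the $x$-integral, i.e.\ as the constant $\e^{-1/8}$. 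Second, your worry about a residual $\delta(t)$ is moot for the transition part precisely because $\hat h_+(\nu)=\e^{\beta\nu/4}\hat f_+(\nu)$ decays in both directions and is in $\ell_1$; the step-function tail (and hence the $\delta(t)$ and principal value) only afflicts the unsymmetrized $\hat f_+$ appearing in $\vB^M$ and $\vN^M$.
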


\subsubsection{The $\vN$-term}
The time-domain presentation of the $\vN$ term
is as follows.
\begin{restatable}[$\vN$ term for Gaussian weights]{cor}{Ngaussian}\label{cor:NGaussian}
For each $\beta$, the Gaussian weight $\gamma(\omega) = \exp\L(- \frac{(\beta\omega + 1)^2}{2}\R)$ with $\sigma_E = \sigma_{\gamma} = \omega_{\gamma} = \frac{1}{\beta}$ corresponds to the discriminant where
\begin{align}\label{eq:NGaussianDimLessTime}
		\vN&= \sum_{a\in A} 
		\int_{-\infty}^{\infty}n_1(t)\e^{-\ri\beta\vH t} \L(\int_{-\infty}^{\infty}n_2(t')\vA^{a\dagger}(\beta t')\vA^a(-\beta t')\rd t'\R) \e^{\ri \beta\vH t}\rd t,
	\end{align}
	where 
 \begin{align}
     n_1(t) &:=\frac{1}{3}\cdot 2\sqrt{\pi} \L(\!\frac{1}{\cosh\L(2\pi t\R)\!}*_t  \exp\L(\!-2 t^2\R)\!\R)\quad &\text{such that}\quad \norm{n_1}_1 = \frac{\pi}{3\sqrt{2}} < 1\label{eq:n1}\\
 n_2(t) &:=3\cdot 2\cdot \frac{1}{2\pi}\sqrt{\frac{1}{\pi}}\exp\L(-4 t^2-2\ri t \R)=6\cdot b_2(t) \quad &\text{such that}\quad \norm{n_2}_1 < 1
 \end{align}
 with $b_2$ as in~\eqref{eq:b2}. 
\end{restatable}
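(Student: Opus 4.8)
The plan is to run the same two-dimensional Fourier transform argument that was used for the coherent term $\vB$ in Corollary~\ref{cor:GaussianLikeWeight}, now applied to the $\vN$-term that packages the decay and coherent pieces of the discriminant. First I would recall the frequency-domain definition of $\vN$: for the purified picture, $\vec{\CH}_\beta$ carries a term $\tfrac12(\vN\otimes\vI + \vI\otimes\vN^*)$ whose $\vN$ is the discriminant-conjugated combination of the decay operator $\vR = \sum_a\sum_{\nu_1,\nu_2}\alpha_{\nu_1,\nu_2}(\vA^a_{\nu_2})^\dagger \vA^a_{\nu_1}$ and the coherent operator $\vB_\nu$ from \eqref{eq:BE}. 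Concretely, after the $\vrho^{1/4}$-similarity transformation the coefficient attached to $(\vA^a_{\nu_2})^\dagger\vA^a_{\nu_1}$ becomes $\alpha_{\nu_1,\nu_2}$ multiplied by a factor combining $\e^{\pm\beta(\nu_1-\nu_2)/4}$ (from $\vrho^{\pm1/4}$) with the $\tanh$ weight of $\vB_\nu$; using the identity $\tfrac12\big(\e^{x}+\e^{-x}\big) - \tfrac12\tanh(x/2)\big(\e^{x}-\e^{-x}\big) = \mathrm{sech}(x/2)\cdot(\text{const})$-type collapse, the net weight should simplify to something proportional to $\mathrm{sech}(\beta(\nu_1-\nu_2)/4)\,\alpha_{\nu_1,\nu_2}$, i.e.\ the same structure as $\vB$ but with $\tanh$ replaced by a hyperbolic secant and an extra constant factor. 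This is exactly why $n_1 = \tfrac14\cdot 2\sqrt\pi\,(\mathrm{sech}(2\pi\,\cdot)*_t\exp(-2\,\cdot^2))$ has the $\sin(-t)$ factor of $b_1$ stripped out (secant being even, not odd) and $n_2 = 8\,b_2$ carries only a rescaling.

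Next I would substitute the explicit Gaussian coefficients $\alpha^{(\omega_\gamma,\sigma_\gamma)}_{\nu_1,\nu_2}$ from \eqref{eq:productGaussianAlphas} with $\omega_\gamma=\sigma_E=\sigma_\gamma=1/\beta$, so that $\alpha_{\nu_1,\nu_2}\propto \exp(-\tfrac{(\beta(\nu_1+\nu_2)+2)^2}{16})\exp(-\tfrac{\beta^2(\nu_1-\nu_2)^2}{8})$, and apply the same double inverse Fourier transform over the two Bohr-frequency slots $\nu_1,\nu_2$ (equivalently, over the sum and difference variables) that produced \eqref{eq:GaussianDimLessTime}. The sum-variable Gaussian Fourier-transforms to the $\e^{-\ri\beta\vH t}(\cdots)\e^{\ri\beta\vH t}$ conjugation kernel $n_1(t)$, while the difference-variable part, now weighted by $\mathrm{sech}(\beta(\nu_1-\nu_2)/4)$ instead of $\tanh$, Fourier-transforms to the convolution $\tfrac{1}{\cosh(2\pi t)}*_t \exp(-2t^2)$ (up to the constant $\tfrac14\cdot2\sqrt\pi$) together with the $\vA^{a\dagger}(\beta t')\vA^a(-\beta t')$ Heisenberg structure giving $n_2(t')=8b_2(t')$; here I would reuse Fact~\ref{fact:gaussian_int} for the Gaussian integrals and the standard pair $\widehat{\mathrm{sech}}\propto\mathrm{sech}$. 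The $\ell_1$-norm bounds are then a short computation: $\norm{n_1}_1 = \tfrac14\cdot2\sqrt\pi\cdot\norm{\mathrm{sech}(2\pi\,\cdot)}_1\cdot\norm{\exp(-2\,\cdot^2)}_1$ by submultiplicativity of the $\ell_1$-norm under convolution, which evaluates to $\tfrac{\pi}{4\sqrt2}<1$, and $\norm{n_2}_1 = 8\norm{b_2}_1 < 8\cdot\tfrac{1}{16} = \tfrac12$ directly from \eqref{eq:b2}.

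The main obstacle I anticipate is bookkeeping the constants and sign conventions through the $\vrho^{1/4}$-conjugation and the vectorization/transpose: one has to be careful that the $\e^{\pm\beta\nu/4}$ factors from $\Gamma^{1/2}$ (rather than $\Gamma^{1}$ in the Lindbladian case) combine correctly with the $\tanh(\beta\nu/4)$ of $\vB_\nu$ to yield precisely $\mathrm{sech}$ with the stated prefactor $\tfrac14$, and that the residual constant $\e^{\pm1/8}$-type Gaussian normalization factors (compare the $\e^{1/8}$ in $b_1$) land where the corollary claims. A secondary check is confirming that $\vN$ is well-defined as written — that the inner $t'$-integral converges, which it does since $n_2$ is Gaussian — and that the claimed identity genuinely reproduces the frequency-domain $\vN$ on every Bohr-frequency pair, which follows because the operators $(\vA^a_{\nu_2})^\dagger\vA^a_{\nu_1}$ indexed by distinct $(\nu_1,\nu_2)$ are linearly independent, exactly as in the proof of Corollary~\ref{cor:ExctDissip}. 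I would defer all of the routine Gaussian-integral manipulation to the appendix reference (\autoref{sec:parentHamTime}), mirroring how Corollary~\ref{cor:GaussianLikeWeight} defers to \autoref{sec:L-time}.
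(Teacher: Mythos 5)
Your proposal is correct and follows essentially the same route as the paper: derive the frequency-domain form of $\vN$ by combining the decay and coherent pieces under the $\vrho^{1/4}$-similarity transformation via the collapse $\e^{x}(\tanh(x)-1)=-1/\cosh(x)$ (the paper's \autoref{prop:symmetrized_Dis}), then apply the factorized two-dimensional Fourier transform with the $\sinh$ factor dropped (\autoref{cor:N_timedomain_explicit}), and finally compute the $\ell_1$-norms. The only small imprecision is that the exact value $\norm{n_1}_1=\pi/(4\sqrt{2})$ requires the positivity of both convolved functions together with Fubini (Young's inequality alone gives only an upper bound), but this does not affect the operative bound $\norm{n_1}_1<1$.
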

\begin{restatable}[$\vN$ term for Metropolis weights]{cor}{NMetropolis}\label{cor:NMetropolis}
	If $\sigma_{E}=\frac{1}{\beta}$, then the Metropolis-like weight $\gamma^M(\omega) = \exp\L(-\beta\max\left(\omega+\frac{1}{2\beta},0\right)\R)$
	corresponds to the discriminant where $\vN^M$ satisfies
	\begin{align}\label{eq:Ndiff}
		\nrm{\vN^M-\vN^{M,\eta}}\leq\sum_{a\in A}\nrm{\vA^{a\dagg}\vA^a}\min\left(\frac{6\sqrt{2}\eta\beta\nrm{\vH}}{\pi},\bigO{(\eta\beta\nrm{\vH})^3}\right),
	\end{align}
	where 
\begin{align}\label{eq:N_MetropolisDimLessTime}
		\vN^{M,\eta}:=\sum_{a\in A}\int_{-\infty}^{\infty}n_1(t)\e^{-\ri\beta \vH t}\left( \int_{-\infty}^{\infty}n_2^{M,\eta}(t')\vA^{a\dagger}(\beta t')\vA^a(-\beta t')\rd t'+\frac{3}{4\sqrt{2}\pi}\vA^{a\dagger}\vA^a\right) \e^{\ri \beta\vH t}\rd t,
	\end{align}     
	with $n_1(t)$ as in~\eqref{eq:n1}, and $n_2^{M,\eta} = 6\cdot b_2^{M,\eta}$ as in~\eqref{eq:B2Meta}.
\end{restatable}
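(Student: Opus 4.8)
The plan is to read off the time-domain form of $\vN^{M}$ by the same route already used for the Gaussian discriminant in \autoref{cor:NGaussian} and for the Metropolis coherent term in \autoref{cor:coherentMetropolis}, exploiting that $\gamma^{M}$ is a linear combination of Gaussian weights (\autoref{cor:exact_sym_coefficients}, Eq.~\eqref{eq:convolutionFilter}) and that every transform involved is linear. First I would record how $\vN$ sits inside the discriminant: vectorizing $\CH_\beta=\vrho^{-1/4}\CL[\vrho^{1/4}\cdot\vrho^{1/4}]\vrho^{-1/4}$ produces $\vec{\CH_\beta}$ equal to a transition part plus $\tfrac{1}{2}(\vN\otimes\vI+\vI\otimes\vN^{*})$, and a one-line similarity-transformation computation of the coherent-plus-decay pieces $-\ri[\vB,\cdot]-\tfrac{1}{2}\{\vR,\cdot\}$ of $\CL$ gives $\vN=-2\ri\,\vrho^{-1/4}(\vB-\tfrac{\ri}{2}\vR)\vrho^{1/4}$. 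Feeding in the Bohr-frequency expansions of $\vB$ (\autoref{cor:ExctDissip}) and of $\vR=\sum_{a\in A}\sum_{\nu_1,\nu_2\in B}\alpha_{\nu_1,\nu_2}(\vA^{a}_{\nu_2})^{\dagger}\vA^{a}_{\nu_1}$ (the trace identity preceding \autoref{prop:DB_energy_domain}), using the identity $\tfrac{\ri}{2}(\tanh x-1)=-\tfrac{\ri}{2}\e^{-x}\operatorname{sech}x$ together with the fact that conjugating a Bohr-frequency-$(\nu_1-\nu_2)$ operator by $\vrho^{\mp1/4}$ multiplies it by $\e^{\pm\beta(\nu_1-\nu_2)/4}$, the two exponentials cancel and one obtains
\begin{align}
\vN=-\sum_{a\in A}\sum_{\nu_1,\nu_2\in B}\operatorname{sech}\!\left(\tfrac{\beta(\nu_1-\nu_2)}{4}\right)\alpha_{\nu_1,\nu_2}\,(\vA^{a}_{\nu_2})^{\dagger}\vA^{a}_{\nu_1}.
\end{align}
Thus $\vN$ has exactly the bilinear shape of $\vB$ in \autoref{cor:ExctDissip}, with the energy-difference weight $\tfrac{\ri}{2}\tanh(\beta\nu/4)$ replaced by $-\operatorname{sech}(\beta\nu/4)$, so every subsequent step is the literal analog of the corresponding step for $\vB$.

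Next I would pass to the time domain for a single Gaussian weight, reproducing \autoref{cor:NGaussian}. By \autoref{lem:exact_sym_coefficients} the coefficient $\alpha^{(\omega_\gamma,\sigma_\gamma)}_{\nu_1,\nu_2}$ factorizes as a prefactor times a Gaussian in $\nu_1+\nu_2$ times a Gaussian in $\nu_1-\nu_2$; multiplying by $-\operatorname{sech}(\beta(\nu_1-\nu_2)/4)$ and writing the resulting coefficient as a two-dimensional inverse Fourier integral in the times conjugate to $\nu_1+\nu_2$ and $\nu_1-\nu_2$ turns the product of the $(\nu_1-\nu_2)$-factors into a convolution. The $(\nu_1+\nu_2)$-part reproduces a Gaussian (giving $n_2$), while the $(\nu_1-\nu_2)$-part gives $\tfrac{1}{\cosh(2\pi t)}*_t\e^{-2t^2}$, the convolution of a hyperbolic secant (up to rescaling the Fourier transform of $\operatorname{sech}(\omega/4)$) with a Gaussian; since $\operatorname{sech}$ is even — unlike $\tanh$, whose shifted-Gaussian decomposition produced the odd factor $\sin(-t)$ in $b_1$~\eqref{eq:b1} — there is no oscillatory factor, which is exactly why $n_1$~\eqref{eq:n1} differs from $b_1$. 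The point to carry forward is that the $\operatorname{sech}$ depends only on $\nu_1-\nu_2$ and not on the Gaussian shift $\omega_\gamma$; it therefore lives entirely in the outer weight $n_1$ and is unaffected by taking linear combinations over $\omega_\gamma$.

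For the Metropolis weight I would then invoke $\gamma^{M}=\gamma^{(g)}$ with $g(\omega_\gamma)=1/(\sqrt{2\pi}\sigma_\gamma)$ over $\omega_\gamma=x\in(\beta\sigma_E^2/2,\infty)$, $\sigma_\gamma(x)=\sqrt{2x/\beta-\sigma_E^2}$ (\autoref{cor:exact_sym_coefficients}, Eq.~\eqref{eq:convolutionFilter}). Linearity of the two previous steps keeps the outer weight $n_1$ and produces the inner weight $n_2^{M}(t)=\int g(x)\,n_2^{(x)}(t)\,\rd x$, where $n_2^{(x)}$ is the $x$-dependent $(\nu_1+\nu_2)$-Fourier factor with the $x$-dependent prefactor $\sigma_\gamma(x)/\bigl(2\sqrt{\sigma_E^2+\sigma_\gamma(x)^2}\bigr)$ absorbed. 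Evaluating this one-dimensional integral in closed form — the same symbolic computation already carried out for $b_2^{M}$ (\autoref{cor:coherentMetropolis}, \autoref{apx:Mathematica}) — gives $n_2^{M}(t)=\tfrac{1}{4\sqrt{2}\pi}\tfrac{\e^{-2t^2-\ri t}}{t(2t+\ri)}=b_2^{M}(t)$, which has a simple pole at $t=0$ (a mild logarithmic singularity after the $\e^{\pm\ri\beta\vH t'}$ conjugations). Because $\gamma^{M}$ does not decay as $\omega\to-\infty$ — it equals $1$ on $(-\infty,-\tfrac{1}{2\beta}]$ — its Fourier transform carries in addition a $\tfrac{1}{16\sqrt{2}\pi}\delta(t)$ piece, so $n_2^{M}$ must be read as a Cauchy principal value plus that delta; this is the common origin of the principal-value prescription and of the $\tfrac{1}{16\sqrt{2}\pi}\vA^{a\dagger}\vA^a$ correction term in $\vN^{M,\eta}$. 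I expect this closed-form evaluation — and, in particular, getting the distributional meaning and all constants (including why $n_2^{M}=b_2^{M}$ despite $n_2=8b_2$ for the single Gaussian) right — to be the main technical obstacle; Steps~1, 2 and 4 are comparatively routine.

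Finally I would regularize and estimate. Defining $\vN^{M,\eta}$ as in the statement by replacing the principal-value kernel by the honest $L^{1}$ function $n_2^{M,\eta}$~\eqref{eq:B2Meta} — which adds $\tfrac{\ri}{4\sqrt{2}\pi}\indicator(|t|\le\eta)/t$, cancelling the $-\tfrac{\ri}{4\sqrt{2}\pi\,t}$ pole on $|t|\le\eta$ — and keeping the delta term as the displayed correction, the outer $n_1$, the $\e^{\pm\ri\beta\vH t'}$ conjugations, and the $\tfrac{1}{16\sqrt{2}\pi}\vA^{a\dagger}\vA^a$ term all cancel in the difference, and $\int_{|t'|\le\eta}(1/t')\,\vA^{a\dagger}\vA^a\,\rd t'=0$ in the principal-value sense, leaving
\begin{align}
\vN^{M}-\vN^{M,\eta}=-\frac{\ri}{4\sqrt{2}\pi}\sum_{a\in A}\int_{-\infty}^{\infty}n_1(t)\,\e^{-\ri\beta\vH t}\!\left(\int_{|t'|\le\eta}\frac{\vA^{a\dagger}(\beta t')\vA^a(-\beta t')-\vA^{a\dagger}\vA^a}{t'}\,\rd t'\right)\!\e^{\ri\beta\vH t}\,\rd t.
\end{align}
Writing $g_a(t'):=\vA^{a\dagger}(\beta t')\vA^a(-\beta t')$ and Taylor-expanding at $t'=0$, with $g_a'(0)=\ri\beta\bigl([\vH,\vA^{a\dagger}]\vA^a-\vA^{a\dagger}[\vH,\vA^a]\bigr)$ so that $\norm{g_a'(0)}\lesssim\beta\norm{\vH}\,\norm{\vA^{a\dagger}\vA^a}$, the inner integral equals $2\eta\,g_a'(0)+\bigO{\eta^3}$ (the $\bigO{t'}$ remainder integrates to zero over the symmetric window); combining with $\norm{n_1}_1=\pi/(4\sqrt{2})$ and the triangle inequality then yields the displayed linear-in-$\eta$ bound, and the vanishing of the first-order term gives the sharper $\bigO{(\eta\beta\norm{\vH})^3}$ estimate for small $\eta$. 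An analogous argument handles the remark about dropping the $\tfrac{1}{16\sqrt{2}\pi}$ correction when $[\vH,\sum_{a\in A}\vA^{a\dagger}\vA^a]=0$, using that $\e^{-\ri\beta\vH t}\bigl(\sum_a\vA^{a\dagger}\vA^a\bigr)\e^{\ri\beta\vH t}$ is then independent of $t$.
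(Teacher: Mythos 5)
Your route is the one the paper intends: the paper offers no separate proof of this corollary (it only says ``similar results with minor modifications''), and your reconstruction --- read $\vN$ off the discriminant as $\vN=-\sum_{a}\sum_{\nu_1,\nu_2}\operatorname{sech}\!\left(\beta(\nu_1-\nu_2)/4\right)\alpha_{\nu_1,\nu_2}(\vA^a_{\nu_2})^{\dagger}\vA^a_{\nu_1}$, note that only the $\nu_1-\nu_2$ weight changes relative to $\vB$ (so $b_1\to n_1$ while the $\nu_1+\nu_2$ part and hence the whole linear-combination-of-Gaussians computation for the Metropolis inner kernel, its principal value, and its $\delta(t)$ piece carry over verbatim), then regularize and Taylor-expand --- is exactly the intended adaptation of the proofs of \autoref{cor:coherentMetropolis} and \autoref{cor:NGaussian}. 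Your error estimate is in fact a cleaner repackaging of the paper's $\mu_a$-based argument.

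There are, however, two genuine gaps. First, the $\bigO{(\eta\beta\nrm{\vH})^3}$ branch of \eqref{eq:Ndiff} does not follow from your own computation: you correctly find that the inner integral equals $2\eta\, g_a'(0)+\bigO{\eta^3}$ with $g_a'(0)=\ri\beta\left([\vH,\vA^{a\dagger}]\vA^a-\vA^{a\dagger}[\vH,\vA^a]\right)$, and then invoke ``the vanishing of the first-order term'' --- but that term does not vanish unless $[\vH,\vA^{a\dagger}]\vA^a=\vA^{a\dagger}[\vH,\vA^a]$ (for $\vH,\vA^a$ the Pauli matrices $Z,X$ the combination $\vH\vA^{a\dagger}\vA^a+\vA^{a\dagger}\vA^a\vH-2\vA^{a\dagger}\vH\vA^a$ equals $4Z\neq 0$). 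So your argument delivers only the linear branch of the min; the paper's parallel claim for $\vB^M$ rests on asserting $\mu_a'(0)=0$, which has the same problem for noncommuting $\vH$ and $\vA^a$, so you cannot simply defer to it. Second, you flag but do not resolve whether $n_2^{M,\eta}$ equals $b_2^{M,\eta}$ or $8\,b_2^{M,\eta}$; this is not a side remark but the actual content of the corollary. Propagating the normalization redistribution that yields $n_1=\tfrac14(\cdots)$ and $n_2=8b_2$ in the Gaussian case (\autoref{cor:NGaussian}) linearly through the integral over $\omega_\gamma$ forces the same factor of $8$ on the Metropolis inner kernel (and on the $\tfrac{1}{16\sqrt{2}\pi}$ delta coefficient) if $n_1$ from \eqref{eq:n1} is retained; a complete proof must carry out this bookkeeping and either confirm the stated constants or pin down where the statement's convention departs from the Gaussian one. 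As it stands, the plan is right but these two points are exactly where the corollary's quantitative claims live.
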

In the above cases, we can verify that $\vN$ is Hermitian by $n_1^*(t)=n_1(t)$ and $n_2^*(t)=n_2(-t)$. For $\vN^M$, we also use that $(\vA^{a\dagger}\vA^a)^{\dagger} = \vA^{a\dagger}\vA^a.$
\subsection{Block-encodings}\label{sec:block_encodings}
Our simulation algorithm extensively uses block encodings that are largely borrowed from~\cite{chen2023QThermalStatePrep}. This section aims to instantiate them to state the theorems appropriately, and the curious reader may refer to~\cite[Section III.B]{chen2023QThermalStatePrep}. 
\begin{defn}[Block-encoding for \Lword{}]\label{def:blockLindladian}
	Given a purely irreversible \Lword{} 
\begin{align}
     \CL[\vrho]:=\sum_{j \in J} \L(\vL_j\vrho \vL_j^{\dagger} - \frac{1}{2} \vL_j^{\dagger}\vL_j\vrho - \frac{1}{2}\vrho\vL_j^{\dagger}\vL_j\R), 
\end{align} 
we say that a unitary $\vU$ is a block-encoding for Lindblad operators $\{\vL_j\}_{j\in J}$ if \footnote{In the first register, we could use any orthonormal basis. Sticking to computational basis elements $\ket{j}$ is just for ease of presentation. Intuitively, one can think about $b$ as the number of ancilla qubits used for implementing the Lindblad operators $\vL_j$, while typically $a-b\approx \log|J|$.} 
 \begin{align}
     (\bra{0^b}\otimes \vI)\cdot \vU\cdot(\ket{0^{c}} \otimes \vI)=\sum_{j\in J} \ket{j} \otimes \vL_j \quad \text{for}\quad b\le c \in \mathbb{Z}^{+}.
 \end{align}
\end{defn}
Our implementation of the Gibbs sampler and parent Hamiltonian uses the following components:
\begin{itemize}
	\item Block-encoding $\vV_{jump}$ of the jump operators $\vA^a$ in the form of~\autoref{def:blockLindladian}:
	\begin{align}\label{eq:blockJumps}
		(\bra{0^b}\otimes \vI_{a}\otimes \vI_{sys})]\cdot\vV_{jump}\cdot (\ket{0^c}\otimes \vI_{sys})=\sum_{a\in A} \ket{a} \otimes \vA^a.
	\end{align}
To implement the discriminant, we also assume access to a block-encoding $\vV_{jumpT}$ for the partial transpose $\sum_{a\in A} \ket{a} \otimes (\vA^a)^{T}$.
    \item Quantum Fourier Transform
    \begin{align}
        \vec{QFT}_N: \ket{\bt} \rightarrow \frac{1}{\sqrt{N}} \sum_{\bomega \in S_{\omega_0}} \e^{-\ri \bomega \bt}\ket{\bomega}.
    \end{align}
    We use ``bar'' to denote variables taking discrete values. In particular, the Fourier frequencies $\bomega$ and times $\bt$ are integer multiples of $\omega_0$ and $t_0$ respectively such that 
\begin{align} 
    \omega_0 t_0  = \frac{2\pi}{N}, \quad \text{and} \quad 
    S^{\lceil N \rfloor} &:= \bigg\{ -\left\lceil(N-1)/2\right\rceil, \ldots, -1,0,1,\ldots, \left\lfloor(N-1)/2\right\rfloor \bigg\},\\ 
    \text{and}\quad S^{\lceil N \rfloor}_{\omega_0}&:= \omega_0 \cdot S^{\lceil N \rfloor}, \quad S^{\lceil N \rfloor}_{t_0}:= t_0 \cdot S^{\lceil N \rfloor}.
\end{align} 
    \item Controlled Hamiltonian simulation
    \begin{align}
	\sum_{\bt \in S_{t_0}}\ketbra{\bt}{\bt}\otimes \e^{\pm \ri \bt \vH}.
\end{align}
        \item State preparation oracles for the Fourier Transform weights, acting on the frequency register
        \begin{align}
            \vec{Prep_f}: \ket{\bar{0}} \rightarrow \ket{f}.
        \end{align}
	\item Controlled filter for the Boltzmann factors acting on the frequency register and the Boltzmann weight register 
	\begin{align}
	\vW := \sum_{\bomega\in S_{\omega_0}}  \vY_{1-\gamma(\bomega)}\otimes \ketbra{\bomega}{\bomega} 
	\quad &\text{where} \quad 0 \le \gamma(\bomega) \le 1 
	\end{align}
\item Single qubit Pauli-Y rotations
\begin{align}
\vY_\theta: =\begin{pmatrix} \sqrt{1-\theta} & -\sqrt{\theta}\\ \sqrt{\theta} &  \sqrt{1-\theta} \end{pmatrix}. \label{eq:Y_theta}  
\end{align}
\item Reflection on $b$-qubits
\begin{align}
        \vR_{b}: = 2\ket{0^{b}}\bra{0^{b}} - \vI_{b} 
    \end{align}
\end{itemize}
To feed into the black-box Lindbladian simulation algorithm~\cite[Theorem III.2]{chen2023QThermalStatePrep}, we need block-encodings for the dissipative part and the coherent term; for the coherent Gibbs sampler, we need a block-encoding for the discriminant $\CH_{\beta}$, which we obtain by adding the transition part and the $\vN$ part.

\begin{figure}[t]
\begin{center}
	\newcommand{\scalea}{1.}
	\newcommand{\scaleb}{1.2}
	\begin{quantikz}[wire types={q,b,b,b,b},classical	gap=1mm]
	 	\lstick{\scalebox{\scalea}{$\ket{0}$}}		&\qw &	\qw		&\gate[style={inner xsep=0mm, inner ysep=1mm}]{\scalebox{\scalea}{$\vY_{1/2}$}}	&\qw	&\gate[2,style={inner xsep=1mm, inner ysep=1mm}]{\scalebox{\scaleb}{$\vR_{b+1}$}} &\qw &\gate[style={inner xsep=0mm, inner ysep=1mm}]{\scalebox{\scalea}{$\vY_{1/2}$}}&\qw &\qw &\qw\rstick{\scalebox{\scalea}{$\bra{0}$}}\\
	 	\lstick{\scalebox{\scalea}{$\ket{0^b}$}}		&\qw	&\qw &\gate[3,style={inner xsep=1mm, inner ysep=2mm}]{\scalebox{\scaleb}{$\vV_{jp}$}}	&\qw	&\qw 	&\qw	&\gate[3,style={inner xsep=1mm, inner ysep=2mm}]{\scalebox{\scaleb}{$\vV^{\dagger}_{jp}$}}&\qw&\qw &\rstick{\scalebox{\scalea}{$\bra{0^b}$}} \qw \\[-1mm]
	 	\lstick{\kern-2mm\scalebox{\scalea}{$\ket{0^{c-b}}$}}	&\qw&\qw&\qw	&\qw 					&\qw		&\qw 			& \qw & \qw&\qw &\qw\rstick{\scalebox{\scalea}{$\bra{0^{c-b}}$}} \\[2mm]
	 	\lstick{\scalebox{\scalea}{$\vrho$}}		&\qw&\gate[style={inner xsep=0mm, inner ysep=2mm}]{\scalebox{\scaleb}{$\e^{-\ri\vH \bar{t}}$}}\qw	&	\qw	&\gate[style={inner xsep=2mm, inner ysep=2mm}]{\scalebox{\scaleb}{$\e^{\ri\vH \bar{t}}$}}\qw		&\qw&\gate[style={inner xsep=0mm, inner ysep=2mm}]{\scalebox{\scaleb}{$\e^{-\ri\vH \bar{t}}$}}\qw	&	\qw	&\gate[style={inner xsep=2mm, inner ysep=2mm}]{\scalebox{\scaleb}{$\e^{\ri\vH \bar{t}}$}}\qw		&\qw		& \qw \\
   \lstick{\scalebox{\scalea}{$\ket{\bar{0}}$}}&\gate[style={inner xsep=1mm, inner ysep=1mm}]{\scalebox{\scalea}{$\vec{Prep}'$}} \qw&\ctrl{-1}	&\qw	&\ctrl{-1}			&\qw & \ctrl{-1}&\qw &\ctrl{-1} &\gate[style={inner xsep=1mm, inner ysep=1mm}]{\scalebox{\scalea}{$\vec{Prep}^{\dagger}$}}&\qw \rstick{\scalebox{\scalea}{$\bra{\bar{0}}$}}
	 \end{quantikz}
\end{center} 
  \caption{Circuit for block-encoding~\eqref{eq:fAA}\label{fig:fAA}. The gate $\vec{Prep}$ is a shorthand for $\vec{prep}_{\sqrt{\labs{f_+}}}$ and $\vec{Prep}'$ for $\vec{prep}_{f_+/\sqrt{\labs{f_+}}}$.}
\end{figure}

\begin{prop}[Block-encoding for the coherent term]\label{prop:encoding_coherent}
    Suppose $\norm{f_-}_1,  \norm{f_+}_1 \le 1$. Then, there is a block encoding for
    \begin{align}
        \sum_{\bt_- \in S_{t_0}} f_-(\bt_-)\e^{-\ri \vH \bt_-} \left(\sum_{\bt_+ \in S_{t_0}} f_+(\bt_+) \sum_{a\in A} \vA^{a\dagg}(\bt_+)\vA^a(-\bt_+)\right)\e^{\ri \vH t_-}
    \end{align}
    using constant calls to controlled Hamiltonian simulation, $\vV_{jump}$, $\vec{prep}_{\sqrt{\labs{f_+}}}$,$\vec{prep}_{f_+/\sqrt{\labs{f_+}}}$,  $\vec{prep}_{\sqrt{\labs{f_+}}}$,$\vec{prep}_{f_+/\sqrt{\labs{f_+}}}$ and their adjoints.
\end{prop}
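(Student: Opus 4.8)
The plan is to realize the claimed operator as a Linear‑Combination‑of‑Unitaries (LCU) over the discrete time pair $(\bt_-,\bt_+)$ wrapped around a single ``double‑jump'' building block, following the block‑encoding toolkit of~\cite[Section III.B]{chen2023QThermalStatePrep} (see Figure~\ref{fig:fAA}). First I would put the inner object into Heisenberg form: for fixed $\bt_-,\bt_+\in S_{t_0}$, using $\vA^a(t)=\e^{\ri\vH t}\vA^a\e^{-\ri\vH t}$,
\begin{align}
\vW_{\bt_-,\bt_+}:=\e^{-\ri\vH\bt_-}\L(\sum_{a\in A}\vA^{a\dagg}(\bt_+)\vA^a(-\bt_+)\R)\e^{\ri\vH\bt_-}=\e^{\ri\vH(\bt_+-\bt_-)}\L(\sum_{a\in A}\vA^{a\dagg}\,\e^{-2\ri\vH\bt_+}\,\vA^a\R)\e^{\ri\vH(\bt_++\bt_-)},
\end{align}
so that each of the three Hamiltonian factors splits into at most two \emph{controlled} Hamiltonian simulations --- one controlled by a $\bt_+$‑register, one by a $\bt_-$‑register --- and the target operator becomes $\sum_{\bt_-,\bt_+}f_-(\bt_-)f_+(\bt_+)\,\vW_{\bt_-,\bt_+}$.

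The heart of SELECT is to block‑encode $\sum_{a\in A}\vA^{a\dagg}\vU\vA^a$ for a unitary $\vU$ (here $\vU=\e^{-2\ri\vH\bt_+}$, controlled by the $\bt_+$‑register) given only the block‑encoding $\vV_{jump}$ of $\vG:=\sum_a\ket{a}\otimes\vA^a$ (\autoref{def:blockLindladian}). Composing $\vV_{jump}^\dagg(\vI\otimes\vU)\vV_{jump}$ naively produces $\vG^\dagg(\vI\otimes\vU)\vG=\sum_a\vA^{a\dagg}\vU\vA^a$ \emph{plus} a spurious contribution supported on the subspace where the $b$‑qubit flag differs from $\ket{0^b}$ (it equals $\vW_{\perp}^\dagg(\vI\otimes\vU)\vW_{\perp}$, where $\vW_{\perp}$ is the part of the isometry $\vV_{jump}(\ket{0^c}\otimes\vI)$ landing outside the $\ket{0^b}$‑flag sector). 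I would kill this spurious term using the $\vY_{1/2}$--$\vR_{b+1}$--$\vY_{1/2}$ sandwich of Figure~\ref{fig:fAA}: a short computation shows that the $\vY_{1/2}$‑rotated ancilla qubit together with the reflection $\vR_{b+1}$ effectively inserts the projector onto $\ket{0^b}$ of the flag register \emph{between} the two copies of $\vV_{jump}$, so the two $\vW_{\perp}$ halves never meet and the output is exactly $\sum_a\vA^{a\dagg}\vU\vA^a$; conjugating the result by the outer $\e^{\pm\ri\vH\bt_-}$ (which touch only the system register and hence preserve the block‑encoding) yields a block‑encoding of $\vW_{\bt_-,\bt_+}$ for every basis state $\ket{\bt_-}\ket{\bt_+}$ of the two time registers.

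To assemble the LCU, note that $\norm{\vW_{\bt_-,\bt_+}}\le 1$: since $\sum_a\vA^{a\dagg}\vU\vA^a=\vG^\dagg(\vI\otimes\vU)\vG$ and $\norm{\vG}^2=\norm{\sum_a\vA^{a\dagg}\vA^a}\le 1$ by~\eqref{eq:AAdagger}, the middle operator has norm $\le 1$, and conjugation by the unitaries $\e^{\pm\ri\vH\bt_-}$, $\e^{\pm\ri\vH\bt_+}$ preserves this. Because $\norm{f_\pm}_1\le 1$, factor $f_\pm=\sqrt{\labs{f_\pm}}\cdot(f_\pm/\sqrt{\labs{f_\pm}})$, both factors having $\ell_2$‑norm $\le\sqrt{\norm{f_\pm}_1}\le 1$ and hence preparable as (sub‑normalized) states by $\vec{prep}_{\sqrt{\labs{f_\pm}}}$ and $\vec{prep}_{f_\pm/\sqrt{\labs{f_\pm}}}$ on the two time registers. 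Sandwiching SELECT between these preparation oracles and their adjoints and invoking the standard LCU identity then gives a block‑encoding of $\sum_{\bt_-,\bt_+}f_-(\bt_-)f_+(\bt_+)\,\vW_{\bt_-,\bt_+}$ with subnormalization $\le 1$; every oracle --- controlled Hamiltonian simulation, $\vV_{jump}$, the four $\vec{prep}$ oracles, $\vY_{1/2}$, $\vR_{b+1}$, and their adjoints --- is invoked a constant number of times.

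The one genuinely delicate step is the double‑jump gadget: one must verify that the $\vY_{1/2}$--$\vR_{b+1}$ construction inserts \emph{exactly} the $\ket{0^b}$‑flag projector (with no leftover amplitude on the $\ket{1}$‑branch of the ancilla qubit), so that the cross term $\vW_{\perp}^\dagg(\vI\otimes\vU)\vW_{\perp}$ vanishes identically and the subnormalization comes out exactly $1$ rather than $\frac12$. The rest --- routing the three Hamiltonian factors through the two time registers and checking that the LCU subnormalizations multiply to $\le1$ --- is routine bookkeeping, and since this same gadget already underlies the dissipative‑part block‑encodings of~\cite[Section III.B]{chen2023QThermalStatePrep}, most of it can be cited rather than reproved.
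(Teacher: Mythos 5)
Your proposal is correct and follows essentially the same route as the paper: reduce to block-encoding the inner sum $\sum_{\bt_+}f_+(\bt_+)\sum_a\vA^{a\dagg}(\bt_+)\vA^a(-\bt_+)$ via the circuit of Figure~\ref{fig:fAA}, where the $\vY_{1/2}$--$\vR_{b+1}$--$\vY_{1/2}$ sandwich inserts the $\ket{0^b}$-flag projector between the two $\vV_{jump}$ calls (the paper disposes of your ``delicate step'' by noting $\vY_{1/2}\ket{0}=\ket{+}$, $\bra{0}\vY_{1/2}=\bra{-}$, so the $-\vI_{b+1}$ part of the reflection is annihilated and the factor $2\cdot\tfrac12=1$ gives exact subnormalization), then wrap the result in the outer LCU over $\bt_-$ with the $\vec{prep}$ oracles. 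No gaps.
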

The identical statement applies to the $\vN$ term by replacing $f_{\pm}\rightarrow n_{\pm}$.
\begin{proof}
It suffices to construct 
    \begin{align}
    \sum_{\bt_+ \in S_{t_0}} f_+(\bt_+) \sum_{a\in A} \vA^{a\dagg}(\bt_+)\vA^a(-\bt_+) \label{eq:fAA}   
    \end{align}
    and then apply the operator Fourier Transform. When there is only one jump ($\labs{A} =1$), this merely uses iterations of LCU and controlled Hamiltonian simulation. When a block-encoding gives the set of jumps, see Figure~\ref{fig:fAA}. To see that this yields the desired expression, observe that 
    \begin{align}
        \vY_{1/2}\ket{0} = \ket{+}\quad \text{and}\quad \vY^{\dagger}_{1/2}\ket{0} = \ket{-}.
    \end{align}
Thus, the expression remains the same if we drop the $ - \vI_{b+1} $ term in $\vR_{b+1}$.
\end{proof}

\begin{prop}[Bilinear]\label{prop:bilinear_block}
Suppose $\norm{h_-}_1, \norm{h_+}_1 \leq 1$. Then, there is a block encoding for
\begin{align}
  \sum_{\bt_- \in S_{t_0}} \sum_{\bt_+ \in S_{t_0}} \sum_{a \in A} h_-(\bt_-)h_+(\bt_+) \cdot \vA^a\L(\bt_+-\bt_-\R)\otimes\vA^{a}\L(-\bt_--\bt_+\R)^T
\end{align}
    using constant calls to controlled Hamiltonian simulation, $\vV_{jump}$, $\vV_{jumpT}$, $\vec{prep}_{\sqrt{\labs{h_+}}}$, $\vec{prep}_{h_+/\sqrt{\labs{h_+}}}$,  $\vec{prep}_{\sqrt{\labs{h_+}}}$, $\vec{prep}_{h_+/\sqrt{\labs{h_+}}}$ and their adjoints.
\end{prop}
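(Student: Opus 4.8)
The plan is to run the block-encoding construction behind \autoref{prop:encoding_coherent} (Figure~\ref{fig:fAA}) on the doubled Hilbert space $\BC^{2^n}\otimes\BC^{2^n}$, with the first copy carrying the ``left'' jump factor and the second copy the transposed ``right'' one. First I would peel off the outer operator Fourier transform over $\bt_-$: since $\vH$ is Hermitian, $\vH^T=\overline{\vH}$, so
\begin{align}
\vA^a(\bt_+-\bt_-)\otimes\vA^a(-\bt_--\bt_+)^T=\bigl(\e^{-\ri\vH\bt_-}\otimes\e^{\ri\vH^T\bt_-}\bigr)\bigl(\vA^a(\bt_+)\otimes\vA^a(-\bt_+)^T\bigr)\bigl(\e^{\ri\vH\bt_-}\otimes\e^{-\ri\vH^T\bt_-}\bigr),
\end{align}
and the claimed operator is $\sum_{\bt_-}h_-(\bt_-)$ times the conjugation of the ``inner'' operator $\sum_{\bt_+}h_+(\bt_+)\sum_{a}\vA^a(\bt_+)\otimes\vA^a(-\bt_+)^T$ by $\e^{-\ri\vH\bt_-}\otimes\e^{\ri\vH^T\bt_-}$. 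The outer conjugation is implemented exactly as the operator Fourier transform in \autoref{prop:encoding_coherent}: attach a $\bt_-$ register, prepare it with the appropriate $\vec{Prep}$-type oracle (legal since $\norm{h_-}_1=1$), and apply controlled Hamiltonian simulation on copy $1$ together with its complex conjugate — which simulates $\vH^T$ and costs no new oracle — on copy $2$. It therefore remains to block-encode the inner operator.

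For the inner operator I would mirror Figure~\ref{fig:fAA}, splitting the single $\vV_{jump}$, $\vR_{b+1}$, $\vV_{jump}^\dagger$ sandwich across the two copies: apply $\vV_{jump}$ on (ancilla, index, copy $1$), whose good block deposits $\vA^a$ on copy $1$ and records the shared jump index $\ket{a}$; dress copy $1$ with controlled $\e^{\pm\ri\vH\bt_+}$ to turn $\vA^a$ into $\vA^a(\bt_+)$; apply the ancilla/index reflection (the analogue of $\vR_{b+1}$); then apply $\vV_{jumpT}^\dagger$ on (ancilla, index, copy $2$), whose good block reads $\ket{a}$, uncomputes it, and deposits the transposed jump factor on copy $2$; and finally dress copy $2$ with controlled $\e^{\pm\ri\vH^T\bt_+}$ to turn it into $\vA^a(-\bt_+)^T$. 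Folding in the $\bt_+$-LCU with $\vec{prep}_{\sqrt{\labs{h_+}}}$, $\vec{prep}_{h_+/\sqrt{\labs{h_+}}}$ exactly as in \autoref{prop:encoding_coherent} (legal since $\norm{h_+}_1=1$) and composing with the outer $\bt_-$ dressing then gives the claimed block-encoding; tallying gates yields a constant number of controlled Hamiltonian simulations and $\bigO{1}$ calls each to $\vV_{jump}$, $\vV_{jumpT}$, the $\vec{prep}$ oracles, and their adjoints. (The explicit forms of $h_\pm$, including the mild logarithmic singularity in the Metropolis case, are the content of \autoref{cor:transitionGaussian} and \autoref{cor:transitionMetropolis} and are not needed here.)

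The step I expect to be the main obstacle is the one producing the \emph{correlated} bilinear structure: the same jump index $a$ must act simultaneously on copy $1$ and on copy $2$ with no index register left over. As in Figure~\ref{fig:fAA}, this succeeds because $\vV_{jumpT}^\dagger$, applied in its good block to a state $\sum_a\ket{a}\otimes\ket{\xi^a_1}\otimes\ket{\phi_2}$, coherently uncomputes $\ket{a}$ while depositing the $a$-th operator on copy $2$ irrespective of the copy-$1$ data $\ket{\xi^a_1}$, and the reflection is what makes this action clean on inputs outside the image of $\vV_{jumpT}$. Checking that the net effect is exactly $\sum_a\vA^a(\bt_+)\otimes\vA^a(-\bt_+)^T$ with no spurious $a\neq a'$ cross terms requires carefully tracking the transposes and conjugates — in particular identifying the factor deposited on copy $2$ with $\vA^a(-\bt_+)^T$, using the adjoint-closure $\{\vA^a\colon a\in A\}=\{\vA^{a\dagg}\colon a\in A\}$ — and verifying the subnormalization incurred stays within the $\frac{1}{\widetilde{\mathcal{O}}(1)}$ budget of \autoref{thm:D_cost}; everything else is the LCU and operator-Fourier-transform machinery already developed for \autoref{prop:encoding_coherent} and in~\cite[Section III.B]{chen2023QThermalStatePrep}.
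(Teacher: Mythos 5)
Your proposal is correct and follows the same route as the paper, whose entire proof is the one-line remark that one modifies \autoref{prop:encoding_coherent} by duplicating the system register and swapping the second jump oracle for a transposed one; you have simply spelled out the circuit of Figure~\ref{fig:fAA} on the doubled Hilbert space with the controlled $\e^{\pm\ri\vH\bt}$ (resp.\ $\e^{\pm\ri\vH^T\bt}$) dressing each copy. The one bookkeeping point worth double-checking is exactly the one you flag: the good block of $\vV_{jumpT}^{\dagger}$ is $\sum_a\bra{a}\otimes(\vA^a)^{*}$ rather than $\sum_a\bra{a}\otimes(\vA^a)^{T}$ (the paper's proof instead writes ``$\vV_{jump}^{T}$'', whose good block is the latter), so the identification of the copy-$2$ factor with $\vA^a(-\bt_+)^T$ does require the adjoint-closure relabeling you mention — a looseness present in the paper's own one-sentence proof as well.
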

\begin{proof}
    The proof is a modification of~\autoref{prop:encoding_coherent} by replacing $\vV_{jump}^{\dagger}$ with $\vV_{jump}^{T}$ and duplicating the system register.
\end{proof}

\subsection{Proving~\autoref{thm:L_cost}: complexity for \Lword{} simulation}\label{sec:prove_L_cost}
We now combine the block-encoding to give the overall cost of Gibbs sampling. We can implement the controlled time-evolution up to a truncation time as long as the profiles $f_+ , f_-$ are well-concentrated and smooth in the time domain (and that the Fourier Transforms are well-defined), which is the case since the frequency profiles $\hat{f}_+,\hat{f}_-$ are smooth and concentrated~\autoref{cor:GaussianLikeWeight}-\autoref{prop:MetropolisLikeFilterf}. Also, the following results all require discretization of the time integrals, which is fortunately handled by~\cite[Appendix C]{chen2023QThermalStatePrep}; this sets the required size of the Fourier Transform register, which uses polylogarithmically many qubits. 

\begin{proof}[Proof of~\autoref{thm:L_cost}]
    Use the black-box \Lword{} simulation algorithm~\cite[Theorem III.2]{chen2023QThermalStatePrep} for block-encoding for the coherent term (\autoref{prop:encoding_coherent}) and the dissipative part~\cite[Section III.B.1]{chen2023QThermalStatePrep}. For the Metropolis weight, a logarithmic overhead is incurred for taming the mild singularity in the $b^M_2$; especially, $\vB$ is subnormalized by $ \norm{b^M_2}_1^{-1} = 1/\CO(\log(\beta\nrm{H}/\epsilon))$ to fit an approximation of $\vB$ into a unitary block encoding.
\end{proof}

\subsection{Proving~\autoref{thm:D_cost}: complexity for the discriminant}\label{sec:prove_D_cost}
We may now construct the advertised block encoding for the discriminant. The $\vN$ term is analogous to the coherent term (\autoref{prop:encoding_coherent}). In the implementation, we have discretized the times, which incurred a routine Riemann sum error, which we briefly estimate. It is instructive to visit the one-dimensional integral.

\begin{lem}[Discretization bounds on integrated Heisenberg dynamics]
For any $\vO,$ Hermitian operator $\vH$, and integrable function $f(t),$
\begin{align}
    \lnorm{\int_{-T}^{T} \vO(t) f(t) \rd t - \sum_{\bar{t}\in S_{t_0}} \vO(\bar{t}) f(\bar{t}) t_0} \le \frac{2T^2}{\labs{S_t}} \L(\norm{[\vO,\vH]}\sup_t\labs{f(t)} + \norm{\vO}\sup_t\labs{f(t)'}\R).
\end{align}
\end{lem}
\begin{proof}
Recall an elementary error bound on the Riemann sum over $N$ values
\begin{align}
    \labs{\int_{a}^bf(t) \rd t - \sum f(\bar{t}) t_0} \le \frac{(b-a)^2}{2N} \sup_{t\in [a,b]} \labs{f'(t)}.
\end{align}
Reduce the operator norm to a scalar Riemann sum for each pair of normalized vectors $\ket{\psi},\ket{\phi}$
\begin{align}
\lnorm{\int_{-T}^{T} \vO(t) f(t) \rd t - \sum_{\bar{t}\in S_{t_0}} \vO(\bar{t}) f(\bar{t}) t_0} \le \sup_{\psi,\phi} \labs{\bra{\phi} \L(\int_{-T}^{T} \vO(t) f(t) \rd t - \sum_{\bar{t}\in S_{t_0}} \vO(\bar{t}) f(\bar{t}) t_0 \R)\ket{\psi}}
\end{align}

and bound the derivatives $\labs{\bra{\phi}\frac{\rd}{\rd t} (\vO(t) f(t))\ket{\psi}} \le \norm{\frac{\rd}{\rd t} (\vO(t) f(t))} \le \norm{[\vO,\vH]}\labs{f(t)} +  \norm{\vO}\labs{f'(t)}.$
\end{proof}

\begin{proof}[Proof of~\autoref{thm:D_cost}]
Add the block encodings for the transition part (from \autoref{prop:bilinear_block}), $\vN\otimes \vI$, and $\vI \otimes \vN$ (from \autoref{prop:encoding_coherent}), { and subnormalize by $1/2$ (by~\autoref{prop:parentHam_OFT}).  To bound discretization error for the two-dimensional case with $N_1,$ $N_2$ values, recall
\begin{align}
\labs{\int_{a}^b\int_{c}^d f(t_1,t_2) \rd t_1\rd t_2 - \sum f(\bar{t}_1,\bar{t}_2) \delta t_1 \delta t_2} \le \frac{(b-a) (d-c)}{2} \L( \frac{(b-a)}{N_1}\sup\labs{\partial_{t_1} f(t_1,t_2)} + \frac{(d-c)}{N_2}\sup\labs{\partial_{t_2} f(t_1,t_2)}\R). 
\end{align}
Taking the operator norm for~\autoref{prop:encoding_coherent},\autoref{prop:bilinear_block} for the Gaussian and Metropolis weight, the Riemann sums with truncated integration range $[-T,T] \times [-T,T]$ has an error bound scaling linearly with $\frac{T^3\beta \norm{\vH} \norm{\sum_{a\in A} \vA^{a\dagger} \vA^{a}}}{\labs{S_{t_0}}}$.
}

\end{proof}
Note that we do not implement $\vI+\vec{\CH}_{\beta}$ (as in~\cite[Proposition III.5]{chen2023QThermalStatePrep}) but rather $\vec{\CH}_{\beta}$ itself; implementing the former would allow us to obtain a quadratic speedup on the discriminant gap, which we current do not have.

\section{Discussion}
\label{sec:discussion}
We have constructed the quantum analog of the classical Monte Carlo algorithms with desirable features. We highlight potential future directions as listed.
\begin{itemize}
\item \textbf{Quantum simulation applications.} A key factor in industrial quantum simulation applications~\cite{Chamberland2020BuildingAF,2021_Microsoft_catalysis,THC_google} is effective quantum algorithms for low-energy states. Our algorithm can be employed for any Hamiltonian \textit{without} substantial variational parameters or a case-by-case trial state or adiabatic path. While the mixing time can vary widely, the fact that physically relevant states (molecules or materials) \textit{exist} in Nature suggests a reasonable mixing time in practice. Regarding practical gate complexities, the locality of our algorithm for lattice Hamiltonians may be favorable as we merely need to simulate a $\tCO(\beta)$-radius Hamiltonian patch localized around each jump $\vA^a$.
\item \textbf{Locality and complexity of quantum Gibbs state.} Our algorithm opens new angles on the locality and complexity of Gibbs states (such as the decay of correlation, quantum conditional mutual information, recovery channels, and quantum belief propagation). In particular, the combination of localized jumps and exact detailed balance enables the rigorous study of convergence~\cite{kastoryano2016commuting,capel2021modified} for noncommuting lattice Hamiltonians. Rapid mixing also directly implies the circuit complexity of the purification (Appendix~\ref{sec:arealaw}) through the \Lword{} gap, giving a dynamic perspective on the area law of entanglement~\cite{Hastings2007AnAL}.
\item \textbf{New open-system physics.} 
Just as quantum computing lacks a go-to Monte Carlo algorithm, open system physics lacks a simple, universal \Lword{} that succinctly captures open system thermodynamics. Our algorithm qualifies due to its elegant properties. For example, our \Lword{} enables a precise definition of \textit{dynamical thermal phase transitions} in terms of mixing time that may contrast with static thermal phase transitions. Related concepts include metastable states, the energy landscape, \textit{quantum spin glass}, and self-correcting quantum memories, whose precise formulation for noncommuting Hamiltonians has also been lacking.
\item \textbf{A new algorithmic subroutine.} Classical MCMC algorithms have been widely employed to solve other problems beyond physical simulation, and we may expect the same for our algorithm. A natural example is optimization problems (e.g., constraint satisfaction problems and modern optimization problems), whether applying to classical Hamiltonian (in a setting similar to Quantum Approximate Optimization Algorithms (QAOA)~\cite{farhi2014QAOA}) or quantum Hamiltonians. Another application is Quantum Semidefinite Program Solvers~\cite{brandao2016QSDPSpeedup,apeldoorn2017QSDPSolvers}, where Quantum Gibbs state preparation is routinely invoked. 

\item \textbf{Comparison with existing algorithms.} With a new algorithm at hand, we expect fruitful comparison with existing (quantum or classical) algorithms such as the adiabatic algorithm~\cite{farhi2000QCompAdiabatic}, phase estimation with trial states, tensor network, Quantum Monte Carlo, etc. In particular, understanding the distinction from classical algorithms could either inspire better classical algorithms or expose potential sources of quantum advantage in quantum simulation (e.g., the sign problem or difficulty in contracting PEPS).
\item \textbf{Numerical studies.} As the complement to theory, the explicit form of our \Lword{} also enables direct numerical studies regarding the above notions, e.g., the scaling of mixing time for thermal state or ground states, dynamic phase transitions, and noncommuting quantum memories, and the interplay with tensor networks.
\end{itemize}

To conclude, given the celebrated theoretical and empirical triumph of Markov chain Monte Carlo methods and their successors over the past 70 years, we argue that this work should serve similar roles in quantum computing. Especially given the current skepticism on the practical applicability of quantum computers, our new algorithms bring hope to the community by initiating a new wave of directions covering theory, experiment, numerics, and application. 

\section*{Acknowledgments}
We thank Jonathan Moussa for raising the question of whether exact detailed balance is possible and pointing us to his related work~\cite{Moussa2019LowDepthQM} after the precursor of this work~\cite{chen2023QThermalStatePrep} became public; at that time, we thought that exact detailed balance is incompatible with the energy-time uncertainty principle. 
We also thank Aaron J. Friedman, Jinkang Guo, Oliver Hart, and Andrew Lucas for collaborations on related topics and
Alvaro Alhambra, Anurag Anshu, Simon Apers, Mario Berta, Thiago Bergamaschi, Fernando Brandao, Angela Capel, Garnet Chan, ChatGPT-4, Alex Dalzell, Zhiyan Ding, Steve Flammia, Hsin-Yuan (Robert) Huang, Lin Lin, Yunchao Liu, Sam McArdle, Akshar Ramkumar, Mehdi Soleimanifar, Frederik Nathan, Umesh Vazirani, and Tong Yu for helpful discussions. 
CFC was supported through an internship of the AWS Center for Quantum Computing. AG acknowledges funding from the AWS Center for Quantum Computing.

\bibliographystyle{alphaUrlePrint.bst}
\bibliography{ref,qc_gily}


\newpage
\appendix
\section*{Nomenclature}\label{sec:recap_notation}
This section recapitulates notations.
We write scalars, functions, and vectors in normal font, matrices in bold font $\vO$, and superoperators in curly font~$\CL$. Natural constants $\e, \ri, \pi$ are denoted in Roman font.
\begin{align}
\vH &= \sum_i E_i \ketbra{\psi_i}{\psi_i}&\text{the Hamiltonian of interest and its eigendecomposition}\\
\text{Spec}(\vH) &:= \{ E_i \} & \text{the spectrum of the Hamiltonian}\\
\nu \in B = B(\vH)&:= \text{Spec}(\vH) - \text{Spec}(\vH) &\text{the set of Bohr frequencies}\\
\vP_{E}&:= \sum_{i:E_i = E} \ketbra{\psi_i}{\psi_i}&\text{eigenspace projector for energy $E$}\\
\CL:& & \text{a Lindbladian in the Schrödinger Picture}\\
n: & &\text{ system size (number of qubits) of the Hamiltonian $\vH$}\\
\beta: & &\text{ inverse temperature}\\
\vrho_{\beta}&:= \frac{\e^{-\beta \vH }}{\tr[ \e^{-\beta \vH }]} \quad &\text{the Gibbs state with inverse temperature $\beta$}\\
\ket{\sqrt{\vrho_{\beta}}} &:= \frac{1}{\sqrt{\tr[ \e^{-\beta \vH }]}} \sum_i \e^{-\beta E_i/2} \ket{\psi_i} \otimes \ket{\psi_i^*}\kern-10mm &\text{the purified Gibbs state}\\
\{\vA^a\}_{a \in A}: & &\text{set of jump operators}\\
\labs{A}: & & \text{cardinality of the set of jumps}\\
\vI:& &\text{the identity operator}\\
\bigOt{\cdot},\tOmega (\cdot) :& &\text{complexity expression ignoring (poly)logarithmic factors}
\end{align}
Fourier transform notations:
\begin{align}
\hat{\vA}_{(f)}(\omega) &:= \frac{1}{\sqrt{2\pi}}\int_{-\infty}^{\infty} \e^{-\ri \omega t}f(t) \vA(t)\mathrm{d}t& \text{operator Fourier Transform for $\vA$ weighted by $f$}\\
\hat{f}(\omega)&=\CF(f)=\lim_{K\rightarrow  \infty}\frac{1}{\sqrt{2\pi}}\int_{-K}^{K}\e^{-\ri\omega t} f(t)\mathrm{d}t & \text{the Fourier transform of a scalar function $f$ over inputs $t$}\\		
\vA_\nu&:=\sum_{E_2 - E_1 = \nu } \vP_{E_2} \vA \vP_{E_1} &\text{operator $\vA$ at exact Bohr frequency $\nu$}\end{align}
Norms: 
\begin{align}
	\norm{f(x)}_p&: = \L(\int \labs{f(x)}^p \mathrm{d} x \R)^{\! 1/p} \quad &  \text{the $p$-norm of a scalar function $f$ over inputs $x$ for $p\in[1,\infty]$}\\
\ell_p(\BR)&:= \{f:\BR \rightarrow \BC,\quad \norm{f}_p < \infty\}\quad &\text{the set of integrable functions}\\ 
	\norm{f(x)}&: = \norm{f(x)}_2 = \sqrt{\int \labs{f(x)}^2 \mathrm{d} x} \quad & \text{the 2-norm of a scalar function $f$ over inputs $x$}\\	
	\norm{\ket{\psi}}&: \quad &\text{the Euclidean norm of a vector $\ket{\psi}$}\\
	\norm{\vO}&:= \sup_{\ket{\psi},\ket{\phi}} \frac{\bra{\phi} \vO \ket{\psi}}{\norm{\ket{\psi}}\cdot \norm{\ket{\phi}}} \quad &\text{the operator norm of a matrix $\vO$}\\
  	\norm{\vO}_p&:= (\tr \labs{\vO}^p)^{1/p}\quad&\text{the Schatten p-norm of a matrix $\vO$}\\
  \norm{\CL}_{p-p} &:= \sup_{\vO} \frac{\normp{\CL[\vO]}{p}}{\normp{\vO}{p}}\quad&\text{the induced $p-p$ norm of a superoperator $\CL$}
\end{align}
Linear algebra:
\begin{align}
    \lambda_i(\vO): & \quad &\text{ the $i$-th largest eigenvalue of a matrix $\vO$ sorted by their real parts}\\
 	\lambda_{gap}(\vO)&:=\Re\lambda_1(\vO)-\Re\lambda_2(\vO)\ge 0 \quad &\text{the real spectral gap of a matrix $\vO$}\\
	\vO^*: & \quad & \text{the entry-wise complex conjugate of a matrix $\vO$}\\
 	\vO^\dagger: & \quad & \text{the Hermitian conjugate of a matrix $\vO$}\\
  \ket{\psi^*}&: \quad&\text{ entry-wise complex conjugate of a vector $\ket{\psi}$}
\end{align}

\section{Deriving time-domain represantations}\label{sec:timedomain}
Our calculation for detailed balance has focused on the frequency domain. This appendix applies Fourier transforms to obtain the time-domain representation. The arguments are conceptually straightforward but require some bookkeeping.

For both the Lindbladians and the parent Hamiltonians, we will often encounter a two-dimensional sum over Bohr frequencies. Since there are two energy labels, we employ a two-dimensional Fourier Transform. For any function of frequencies $\hat{f}(\nu_1,\nu_2)$, the time-domain representation of the bilinear expression gives
\begin{align}
    \sum_{a\in A} \sum_{\nu_1,\nu_2\in B} \hat{f}(\nu_1,\nu_2)\vA^{a}_{\nu_1}\otimes (\vA^{a}_{\nu_2})^\dagg =     \sum_{a\in A} \frac{1}{2\pi}\int_{-\infty}^{\infty}\int_{-\infty}^{\infty} f(t_1,t_2) \vA^a(t_1)\otimes \vA^{a\dagger}(-t_2) \rd t_1 \rd t_2 
\end{align}
where we introduced the two-dimensional Fourier Transform
\begin{align}
    f(t_1,t_2): = \frac{1}{2\pi}\int_{-\infty}^{\infty} \int_{-\infty}^{\infty} \hat{f}(\nu_1,\nu_2) \e^{\ri \nu_1 t_1} \e^{\ri \nu_2 t_2}\rd\nu_1\rd \nu_2.\label{eq:2D_FT}
\end{align}
Fortunately, for our usage, the Fourier Transform \textit{decouples} into two iterations of one-dimensional Fourier Transforms, significantly simplifying the presentation and implementation.

\begin{cor}[Factorized time-domain functions]\label{prop:B_timedomain}
If the function factorizes in the energy domain such that
\begin{align}
    \frac{1}{2\pi}\hat{f}(\nu_1,\nu_2) = \hat{f}_+(\nu_1+\nu_2) \cdot \hat{f}_-(\nu_1-\nu_2), 
\end{align}
then  
\begin{align}
 \sum_{a\in A} \sum_{\nu_1,\nu_2\in B} \hat{f}(\nu_1,\nu_2)\vA^{a}_{\nu_1}\otimes (\vA^{a}_{\nu_2})^\dagg = \sum_{a\in A}\int_{-\infty}^{\infty}\int_{-\infty}^{\infty} f_-(t_-) f_+(t_+) \vA^a(-t_--t_+) \otimes \vA^{a\dagg}(t_+-t_-)\rd t_+\rd t_- \label{eq:BGenForm}
\end{align}
and
\begin{align}
    \sum_{a\in A} \sum_{\nu_1,\nu_2\in B} \hat{f}(\nu_1,\nu_2)(\vA^{a}_{\nu_2})^\dagg\vA^{a}_{\nu_1} = \sum_{a\in A}\int_{-\infty}^{\infty}f_-(t_-)\e^{-\ri \vH t_-} \left(\int_{-\infty}^{\infty}f_+(t_+) \vA^{a\dagg}(t_+)\vA^a(-t_+)\rd t_+\right)\e^{\ri \vH t_-}\rd t_-
\end{align}
where the function $f_{\pm}$ are inverse Fourier Transforms of $\hat{f}_{\pm}$.
\end{cor}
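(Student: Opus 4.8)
The plan is to avoid any genuinely two–dimensional Fourier transform: instead, expand each one–dimensional profile $\hat f_\pm$ back to the time domain, and then resum the Bohr–frequency series using the operator Fourier identities recalled in \autoref{sec:OFT}. Concretely, I would first write $\hat f_\pm$ via the one–dimensional inverse Fourier transform in the paper's convention, $\hat f_\pm(\nu)=\tfrac{1}{\sqrt{2\pi}}\int_{-\infty}^{\infty} f_\pm(t)\,\e^{-\ri\nu t}\rd t$, and substitute the hypothesis $\hat f(\nu_1,\nu_2)=2\pi\,\hat f_+(\nu_1+\nu_2)\hat f_-(\nu_1-\nu_2)$ into the double sum $\sum_{a\in A}\sum_{\nu_1,\nu_2\in B}$, interchanging sum and integral. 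The factor $2\pi$ then cancels the two factors of $\tfrac{1}{\sqrt{2\pi}}$, giving
\begin{align}
\hat f(\nu_1,\nu_2)=\int_{-\infty}^{\infty}\int_{-\infty}^{\infty} f_+(t_+)\, f_-(t_-)\, \e^{-\ri\nu_1(t_++t_-)}\,\e^{-\ri\nu_2(t_+-t_-)}\rd t_+\rd t_-,
\end{align}
where I have already regrouped the exponent $\e^{-\ri(\nu_1+\nu_2)t_+}\e^{-\ri(\nu_1-\nu_2)t_-}=\e^{-\ri\nu_1(t_++t_-)}\e^{-\ri\nu_2(t_+-t_-)}$ so as to decouple the two energy labels.

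Next I would carry out the $\nu_1$ and $\nu_2$ sums separately. For the first identity, bilinearity of the tensor product turns the double sum into $\bigl(\sum_{\nu_1}\e^{-\ri\nu_1(t_++t_-)}\vA^a_{\nu_1}\bigr)\otimes\bigl(\sum_{\nu_2}\e^{-\ri\nu_2(t_+-t_-)}(\vA^a_{\nu_2})^\dagger\bigr)$. The first factor equals $\vA^a(-t_+-t_-)$ by the identity $\sum_\nu \vA_\nu\e^{\ri\nu t}=\e^{\ri\vH t}\vA\e^{-\ri\vH t}=:\vA(t)$; the second equals $\vA^{a\dagger}(t_+-t_-)$ after the substitution $\mu=-\nu_2$ together with $(\vA^a_{\nu})^\dagger=(\vA^{a\dagger})_{-\nu}$. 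This reproduces exactly \eqref{eq:BGenForm}. For the second identity the same computation yields instead the operator product $\vA^{a\dagger}(t_+-t_-)\,\vA^a(-t_+-t_-)$; I would then observe that conjugation by $\e^{\mp\ri\vH t_-}$ merely shifts the Heisenberg time, i.e. $\e^{-\ri\vH t_-}\bigl(\vA^{a\dagger}(t_+)\vA^a(-t_+)\bigr)\e^{\ri\vH t_-}=\vA^{a\dagger}(t_+-t_-)\vA^a(-t_+-t_-)$, which lets me pull $\e^{\mp\ri\vH t_-}$ outside the $t_+$ integral and land on the stated form.

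As an alternative one could instead plug the factorization into the general two–dimensional time–domain formula stated at the opening of this appendix and change variables $(t_1,t_2)\mapsto(t_+,t_-)$ with $t_\pm=\pm(t_1\pm t_2)/2$; the Jacobians and the resulting factorization $f(t_1,t_2)=\pi f_+\!\bigl(\tfrac{t_1+t_2}{2}\bigr)f_-\!\bigl(\tfrac{t_1-t_2}{2}\bigr)$ give the same answer, but the direct route above is cleaner and keeps the signs transparent.

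The computation is essentially bookkeeping, so the only genuine obstacle is the routine-but-delicate control of conventions and convergence: tracking the placement of the $2\pi$ and the sign in $\e^{-\ri\omega t}$ across the two Fourier pairs, and justifying the interchange of the Bohr–frequency sums with the time integrals. For the latter one needs the decay of $\hat f_\pm$ (equivalently smoothness of $f_\pm$), which holds in all cases of interest since the profiles arising in the applications are Schwartz-class up to a mild integrable singularity, so dominated convergence / Fubini applies.
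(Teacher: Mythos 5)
Your proposal is correct and follows essentially the same route as the paper: substitute the one-dimensional inverse Fourier transforms for $\hat f_\pm$, regroup the exponent so that the $\nu_1$ and $\nu_2$ sums decouple, and resum each via $\sum_\nu \vA_\nu \e^{\ri\nu t}=\vA(t)$ together with $(\vA_\nu)^\dagger=(\vA^\dagger)_{-\nu}$ (the paper phrases the regrouping as a change of variables $\nu_\pm=\nu_1\pm\nu_2$, but the algebra is identical, and your sign bookkeeping is right). Your explicit treatment of the second identity via conjugation by $\e^{\mp\ri\vH t_-}$ is a welcome addition, since the paper's displayed proof only carries out the tensor-product version.
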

Crucially, the RHS can be implemented via Linear Combination of Unitaries by discretizing the integral. 

\begin{proof}
Since the expression is linear in the sum over jumps $a\in A$, it suffices to prove for any operator $\vA$, dropping the jump labels $a$.
\begin{align}
&\sum_{\nu_1,\nu_2\in B} 2\pi\hat{f}_+(\nu_1+\nu_2) \cdot \hat{f}_-(\nu_1-\nu_2) \vA_{\nu_1}\otimes (\vA_{\nu_2})^\dagg \\[-1mm]
&= \sum_{\nu_1,\nu_2\in B} 2\pi\hat{f}_+(\nu_+) \cdot \hat{f}_-(\nu_-)\L(\vA_{\frac{\nu_++\nu_-}{2}}\otimes \vA_{\frac{\nu_+-\nu_-}{2}}\R)^\dagg \tag*{(Let $\nu_+\!:=\nu_1+\nu_2$ and$\nu_-\!:=\nu_1-\nu_2$)}\\&
		= \sum_{\nu_1,\nu_2\in B} \int_{-\infty}^{\infty}f_+(t_+)e^{-\ri \nu_+t_+}\rd t_+\int_{-\infty}^{\infty}f_-(t_-)e^{-\ri \nu_-t_-}\rd t_- \vA_{\frac{\nu_++\nu_-}{2}}\otimes \L(\vA_{\frac{\nu_+-\nu_-}{2}}\R)^\dagg\tag*{(Fourier Transform)}\\&	
		= \sum_{\nu_1,\nu_2\in B} \int_{-\infty}^{\infty}\int_{-\infty}^{\infty}f_+(t_+)f_-(t_-) \vA_{\frac{\nu_++\nu_-}{2}}(-t_+-t_-)\otimes \L(\vA_{\frac{\nu_+-\nu_-}{2}}\R)^\dagg(t_+-t_-)\rd t_+\rd t_-\tag*{(Since $(\vA_{\nu})^\dagg = (\vA^\dagg)_{-\nu}$)}\\
		&= \int_{-\infty}^{\infty}\int_{-\infty}^{\infty}f_+(t_+)f_-(t_-) \vA(-t_+-t_-)\otimes \vA^{\dagg}(t_+-t_-)\rd t_+ \rd t_-\label{eq:A(t_+-t_-)}.	
\end{align}
The fourth equality uses that
\begin{align} 
\e^{-\ri \nu_+t_+}\e^{-\ri \nu_-t_-} = \exp(\frac{\ri(\nu_+-\nu_-)(t_+-t_-)}{2})\cdot \exp(\frac{\ri(\nu_+ +\nu_-)(-t_+-t_-)}{2})
\end{align}
to conclude the proof.
\end{proof}

Now, we plug in the appropriate functions to arrive at the time-domain functions.

\subsection{Our \Lword{}s} \label{sec:L-time}
We evaluate the Fourier transform for the coherent term $\vB$. The expression looks intimidating, but all that matters for the algorithmic complexity is that they decay rapidly (in the time domain).
\begin{cor}[Explicit time-domain functions]\label{cor:timedomain_explicit}
In the time domain, the coherent term $\vB$ in \eqref{eq:BE} corresponding to coefficients constructed in \eqref{eq:alphaGenDef} reads
\begin{align}\label{eq:fpl}
     f_+(t) & =  2 \int_{\frac{\beta\sigma_E^2}{2}}^{\infty}g(x)\sigma_\gamma(x)\exp\L(-\frac{4 t^2 x}{\beta }-2\ri t x\R) \rd x,\\
 f_-(t) & =\frac{\sigma_E}{\pi\beta}\e^{\frac{\beta^2 \sigma_E^2}{8}} \L(\!\frac{1}{\cosh\L(\frac{2\pi t}{\beta}\R)\!}*_t\sin\left(-\beta  \sigma_E^2 t\right)  \e^{-2\sigma_E^2 t^2}\!\R),
 \label{eq:fmn}
\end{align}
depending on parameters $\sigma_E, g(x),\beta$.\footnote{Note that the function $b_1(t)$ seems to have width $\sim\beta\sigma_E$ due to the convolution by $\frac{1}{\cosh\L(4\pi t/(\beta\sigma_E)\R)}$, meaning that the integral in $t$ seems to require Hamiltonian evolution times up to $\sim\beta$. In fact, the numerics show a $1/\mathrm{poly}$ decay until about $\sim\beta\sigma_E$ (after which the exponential decay starts), suggesting that about $\min\L(\sigma_E^{-1}\poly(1/\epsilon),\beta \log(1/\epsilon)\R)$ Hamiltonian simulation time is required in order to achieve $\epsilon$ precision for the block-encoding of the coherent term. Thus, it might be difficult to obtain exact detailed balance below $\Omega(\beta)$ Hamiltonian evolution times. On the contrary, the function $b_2(t')$ has a width only about $1/\sqrt{\beta \omega_{\gamma}}$, implying that the corresponding other integral in $t'$ can be well-approximated by only using Hamiltonian evolution time $\sim \sqrt{\beta/ \omega_{\gamma}}$.
}
\end{cor}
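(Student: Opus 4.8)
The plan is to feed the frequency-domain formula~\eqref{eq:BE} into the ``factorized time-domain'' identity~\autoref{prop:B_timedomain}, so that the only real work is verifying the factorization hypothesis and evaluating two one-dimensional inverse Fourier transforms. Writing the coherent term as $\vB=\sum_{a\in A}\sum_{\nu_1,\nu_2\in B}\hat f(\nu_1,\nu_2)(\vA^a_{\nu_2})^\dagger\vA^a_{\nu_1}$ with $\hat f(\nu_1,\nu_2)=\frac{\tanh(-\beta(\nu_1-\nu_2)/4)}{2\ri}\alpha^{(g)}_{\nu_1,\nu_2}$, I would insert $\alpha^{(g)}_{\nu_1,\nu_2}$ from~\eqref{eq:alphaGenDef} together with the product form~\eqref{eq:productGaussianAlphas}, and use $\sigma_E^2+\sigma_\gamma(x)^2=2x/\beta$ to rewrite each constituent Gaussian as $\frac{\sigma_\gamma(x)}{2\sqrt{2x/\beta}}\,\e^{-\beta(\nu_1+\nu_2+2x)^2/(16x)}\e^{-(\nu_1-\nu_2)^2/(8\sigma_E^2)}$. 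The decisive observation is that the $\tanh$ factor and the factor $\e^{-(\nu_1-\nu_2)^2/(8\sigma_E^2)}$ depend only on $\nu_-:=\nu_1-\nu_2$ (and not on the integration variable $x$), whereas everything else -- including the entire $x$-integral -- depends only on $\nu_+:=\nu_1+\nu_2$. Hence $\frac{1}{2\pi}\hat f(\nu_1,\nu_2)=\hat f_+(\nu_+)\hat f_-(\nu_-)$ with $\hat f_-(\nu)\propto\frac{\tanh(-\beta\nu/4)}{2\ri}\e^{-\nu^2/(8\sigma_E^2)}$ and $\hat f_+(\nu)\propto\int_{\beta\sigma_E^2/2}^{\infty}g(x)\sigma_\gamma(x)(2x/\beta)^{-1/2}\e^{-\beta(\nu+2x)^2/(16x)}\rd x$, so~\autoref{prop:B_timedomain} applies and yields exactly the claimed structure with $f_\pm$ the inverse Fourier transforms of $\hat f_\pm$.

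For $f_+$ I would invoke Fubini to interchange the inverse Fourier transform with the $x$-integral, and then evaluate, for each fixed $x$, the inverse transform of the shifted Gaussian $\e^{-\beta(\nu+2x)^2/(16x)}$ via the complex-shift version of~\autoref{fact:gaussian_int}: this produces a Gaussian in $t$ modulated by $\e^{-2\ri xt}$. Tracking the $\frac{1}{2\pi}$ of the Fourier convention against the prefactor $\propto g(x)\sigma_\gamma(x)(2x/\beta)^{-1/2}$ collapses the normalization and leaves precisely~\eqref{eq:fpl}.

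The subtle step, and the one I expect to be the main obstacle, is $f_-$: the function $\tanh$ is neither $L^1$ nor $L^2$, so its Fourier transform is only a tempered distribution and cannot be inverted naively. The trick is to avoid transforming $\tanh$ directly and instead to factor
\begin{align}
\hat f_-(\nu)\ \propto\ \frac{\tanh(-\beta\nu/4)}{2\ri}\,\e^{-\nu^2/(8\sigma_E^2)}\ =\ \frac{1}{\cosh(\beta\nu/4)}\cdot\left(\frac{-\sinh(\beta\nu/4)}{2\ri}\,\e^{-\nu^2/(8\sigma_E^2)}\right),
\end{align}
where the first factor is, up to the scalar $\tfrac{\beta}{2}$, the Fourier transform of the self-dual kernel $1/\cosh(2\pi t/\beta)$, and the second factor -- after completing the square to write $\e^{-\nu^2/(8\sigma_E^2)}\sinh(\beta\nu/4)=\tfrac12\e^{\beta^2\sigma_E^2/8}\bigl(\e^{-(\nu-\beta\sigma_E^2)^2/(8\sigma_E^2)}-\e^{-(\nu+\beta\sigma_E^2)^2/(8\sigma_E^2)}\bigr)$, i.e.\ a difference of frequency-shifted Gaussians -- is, up to a scalar, the Fourier transform of the modulated Gaussian $\sin(-\beta\sigma_E^2 t)\,\e^{-2\sigma_E^2 t^2}$. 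The convolution theorem then delivers $f_-$ as the convolution in~\eqref{eq:fmn}, with the prefactor $\frac{\sigma_E}{\pi\beta}\e^{\beta^2\sigma_E^2/8}$ collecting the accumulated constants ($\nrm{1/\cosh(2\pi t/\beta)}_1=\beta/2$, the Gaussian normalization, and the $\frac{1}{2\pi}$ from the Fourier convention).

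What remains is bookkeeping: justifying the interchanges of integration (all profiles are Schwartz functions, or products of a Schwartz function with the polynomially bounded $\tanh$), and -- for the downstream applications -- bounding $\nrm{f_+}_1$ directly from its single Gaussian integral and $\nrm{f_-}_1$ through Young's inequality $\nrm{u*v}_1\le\nrm{u}_1\nrm{v}_1$ together with $\nrm{\e^{-2\sigma_E^2 t^2}}_1=\sqrt{\pi/(2\sigma_E^2)}$. Specializing $\sigma_E=\sigma_\gamma=\omega_\gamma=1/\beta$ then reproduces the dimensionless forms~\eqref{eq:b1}--\eqref{eq:b2}.
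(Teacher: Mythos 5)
Your proposal is correct and follows essentially the same route as the paper's proof: verify the product factorization $\frac{1}{2\pi}\hat f(\nu_1,\nu_2)=\hat f_+(\nu_1+\nu_2)\hat f_-(\nu_1-\nu_2)$ so that \autoref{prop:B_timedomain} applies, obtain $f_+$ by exchanging the $x$-integral with a shifted-Gaussian inverse Fourier transform, and obtain $f_-$ by splitting $\tanh=\sinh/\cosh$ and invoking the convolution theorem $\CF^{-1}(\CF(u)\cdot\CF(v))=u*v/\sqrt{2\pi}$, with the $\sinh\cdot$Gaussian factor handled by completing the square into a difference of frequency-shifted Gaussians. The only differences are cosmetic: you make explicit the non-integrability of $\tanh$ as the motivation for the split (the paper does the same split silently), and you fold in the $\ell_1$-norm bounds that the paper defers to \autoref{cor:GaussianLikeWeight}.
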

Recall that the convolution of two functions over variable $t$ is defined by
\begin{align}
    (f *_t g) (t) := \int_{-\infty}^{\infty} f(s) g(t-s)\rd s.
\end{align}
Nicely, the product structure persists under a convex combination of Gaussians as only $f_+(t)$ depends on $g(x)$. Otherwise, we could have had to consider a linear combination of function products, which is messier to implement. 
\begin{proof}
First, we confirm that the energy domain function indeed has a product structure $\frac{1}{2\pi}\hat{f}(\nu_1,\nu_2) = \hat{f}_+(\nu_1+\nu_2) \cdot \hat{f}_-(\nu_1-\nu_2)$ due to \eqref{eq:productGaussianAlphas}, \eqref{eq:alphaGenDef} and \eqref{eq:BE} for
\begin{align}
	\hat{f}_+ (\nu) & =  \int_{\frac{\beta\sigma_E^2}{2}}^{\infty}\frac{g(x)\sigma_\gamma(x)}{\sqrt{\sigma_E^2+\sigma_\gamma^2(x)}}\exp\L( - \frac{(\nu +2x)^2}{16x/\beta}\R)\rd x, \text{ and}\label{eq:f+}\\
	\hat{f}_- (\nu) & =  \frac{1}{2\pi}\frac{\tanh(-\beta \nu/4)}{2\ri}\exp\L(-\frac{\nu^2}{8\sigma_E^2}\R)
	=\frac{1}{2\pi}\frac{1}{\cosh(-\beta \nu/4)}\cdot \frac{\sinh(-\beta \nu/4)}{2\ri}\exp\L(-\frac{\nu^2}{8\sigma_E^2}\R)\label{eq:f-}.
\end{align}	

Since we work with well-concentrated integrable functions, the Fourier Transforms exist, and we can compute them as follows. We begin with the Gaussian integral associated with $f_+$
\begin{align}
    \frac{1}{\sqrt{2\pi}}\int_{-\infty}^{\infty} \exp\L( - \frac{(\nu +2x)^2}{16x/\beta}\R)\e^{\ri \nu t}\rd \nu=2\sqrt{\frac{2x}{\beta}}\exp\L(-\frac{4 t^2 x}{\beta }-2\ri t x\R).
\end{align}
Thus, using \eqref{eq:f+} and the definition of $\sigma_{\gamma}(x)= \sqrt{2x/\beta-\sigma_E^2} $, we get \eqref{eq:fpl}. In order to compute $f_-(t)$, we use the convolution theorem $\CF^{-1}\L(\CF(f)\cdot\CF(g)\R)=f*g/\sqrt{2\pi}$. Individually, we have   
\begin{align}
    \frac{1}{\sqrt{2\pi}}\int_{-\infty}^{\infty} \frac{\sinh(-\beta \nu/4)}{2\ri}\exp\L(-\frac{\nu^2}{8\sigma_E^2}\R)\e^{\ri \nu t}\rd \nu&= \sigma_E\e^{\frac{\beta^2 \sigma_E^2}{8}} \sin\left(-\beta  \sigma_E^2 t\right)  \e^{-2 \sigma_E^2 t^2}\label{eq:sinhFT}\\
    \frac{1}{\sqrt{2\pi}}\int_{-\infty}^{\infty}\frac{1}{2 \pi \cosh(-\beta \nu/4)} \e^{\ri \nu t}\rd \nu&=\frac{2}{\sqrt{2\pi} \beta\cosh\L(\frac{2\pi t}{\beta}\R)}.
\end{align}
Take the convolution to conclude the calculation.
\end{proof}

The above explicit form allows us to compute the explicit form of the function $f_+(t)$ corresponding to our two main settings. We begin with the Gaussian case.
\coherentTermGaussian*

Indeed, both functions of time are rapidly decaying and have bounded $\ell_1$-norm (as required for LCU implementation).

\begin{proof}
    Setting $g(x)=\delta(x -\omega_\gamma)$ in \autoref{prop:B_timedomain} yields the desired Gaussian weight and $f_+(t)$ in \eqref{eq:fpl} becomes
\begin{align}\label{eq:fplGauss}
	f_+(t) =  2\int_{-\infty}^{\infty}\delta(x -\omega_\gamma)\sigma_\gamma\exp\L(-\frac{4 t^2 x}{\beta }-2\ri t x\R)dx
	&= 2 \sqrt{\frac{2\omega_{\gamma}}{\beta}-\sigma_{E}^2}\exp\L(-\frac{4 t^2 \omega_{\gamma}}{\beta }-2\ri t \omega_{\gamma}\R).
\end{align}
Setting $b_1(t):=\pi\sqrt{\pi}f_-(t/\sigma_{E})$ and $b_2(t'):=f_+(\beta t')/(\pi\sqrt{\pi})$ and applying a change of variables in the integral \eqref{eq:BGenForm} yields the desired result \eqref{eq:GaussianDimLessTime}. Note that the convolution $*_t$ implicitly is an integral over $t$, so we should not forget to rescale $\rd t$ there. 

Lastly, we bound the $\ell_1$-norm of the functions by Hölder's inequality
\begin{align}
\nrm{b_1}_1&\leq\nrm{(1+t^2)^{-1}}_2\nrm{(1+t^2)b_1}_2=\sqrt{\frac{\pi}{2}}\nrm{(1+t^2)b_1}_2\leq 1,
\end{align}
using individual bounds $\int_{-\infty}^\infty\frac{1}{(1+2t^2)^2}\rd t=\frac{\pi}{2}$ and $\int_{-\infty}^\infty\Big|(1+t^2)b_1(t)\Big|^2 \rd t<0.625$. The norm $\norm{b_2}_1=\frac{e^{-1/4}}{2\pi}< \frac{1}{8}$ is a Gaussian integral (\autoref{fact:gaussian_int}).
\end{proof}

The explicit weights corresponding to Metropolis weights are slightly more cumbersome due to taming a logarithmic singularity; see~\autoref{sec:cal_Metropolis}. 

\subsection{Our parent Hamiltonians}\label{sec:parentHamTime}
Based on the Lindbladian, we explicitly evaluate the discriminant in the frequency domain.

\begin{prop}[Symmerized discriminant]
\label{prop:symmetrized_Dis}
In the setting of~\autoref{cor:ExctDissip}, the discriminant corresponding to the $\vrho_{\beta}$-DB \Lword{} reads
    \begin{align}
        \CH_{\beta}&=  \sum_{a\in A} \sum_{\nu_1,\nu_2 \in B} h_{\nu_1,\nu_2}\vA^a_{\nu_1}(\cdot)(\vA^{a}_{\nu_2})^{\dagger} + \frac{1}{2}(\vN (\cdot) + (\cdot) \vN)\\
        \text{or}\quad\vec{\CH}_{\beta} &= \sum_{a\in A} \sum_{\nu_1,\nu_2 \in B} \undersetbrace{\text{transition part}}{h_{\nu_1,\nu_2}\vA^a_{\nu_1}\otimes (\vA^{a}_{\nu_2})^{*}} + \undersetbrace{\text{coherent and decay part}}{\frac{1}{2}(\vN\otimes \vI + \vI\otimes\vN^{*})}
    \end{align}
where $h_{\nu_1,\nu_2}:= \e^{\beta (\nu_1+\nu_2)/4} \alpha_{\nu_1,\nu_2} = h_{-\nu_2,-\nu_1}$ and 
\begin{align}
    \vN &:= -\sum_{a\in A} \sum_{\nu_1,\nu_2 \in B} \frac{\alpha_{\nu_1,\nu_2}}{\cosh(\beta (\nu_1-\nu_2)/4)}  (\vA^{a}_{\nu_2})^\dagg \vA^a_{\nu_1} = \vN^{\dagger}.
\end{align}
\end{prop}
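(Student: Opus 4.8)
The plan is to compute the discriminant $\CH_\beta=\CD(\vrho_\beta,\CL)=\vrho_\beta^{-1/4}\CL[\vrho_\beta^{1/4}\,\cdot\,\vrho_\beta^{1/4}]\vrho_\beta^{-1/4}$ directly from the Bohr-frequency form of the \Lword{} in \autoref{cor:ExctDissip}, splitting it into the transition part and the ``decay $+$ coherent'' part $\CS[\cdot]:=-\ri[\vB,\cdot]-\tfrac12\{\vR,\cdot\}$ with $\vR:=\sum_{a\in A}\sum_{\nu_1,\nu_2\in B}\alpha_{\nu_1,\nu_2}(\vA^a_{\nu_2})^\dagg\vA^a_{\nu_1}$. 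The only algebraic input is the conjugation rule $\vrho_\beta^{s}\vX_\nu\vrho_\beta^{-s}=\e^{-\beta s\nu}\vX_\nu$ for an operator $\vX_\nu$ of Bohr frequency $\nu$, which follows from $\vrho_\beta\propto\e^{-\beta\vH}$ and $[\vH,\vX_\nu]=\nu\vX_\nu$.

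First I would handle the transition part: conjugating $\vA^a_{\nu_1}(\cdot)(\vA^a_{\nu_2})^\dagg$ and using that $(\vA^a_{\nu_2})^\dagg=(\vA^{a\dagg})_{-\nu_2}$ has Bohr frequency $-\nu_2$, the factors $\vrho_\beta^{-1/4}\vA^a_{\nu_1}\vrho_\beta^{1/4}$ and $\vrho_\beta^{1/4}(\vA^a_{\nu_2})^\dagg\vrho_\beta^{-1/4}$ contribute $\e^{\beta\nu_1/4}$ and $\e^{\beta\nu_2/4}$, so $h_{\nu_1,\nu_2}=\e^{\beta(\nu_1+\nu_2)/4}\alpha_{\nu_1,\nu_2}$; the symmetry $h_{\nu_1,\nu_2}=h_{-\nu_2,-\nu_1}$ is then immediate from the skew-symmetry $\alpha_{\nu_1,\nu_2}=\alpha_{-\nu_2,-\nu_1}\e^{-\beta(\nu_1+\nu_2)/2}$ assumed in \autoref{cor:ExctDissip} (cf.\ \autoref{lem:exact_sym_coefficients}).

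Next I would treat the decay $+$ coherent part. Reusing the substitution from the proof of \autoref{lem:FindingCoherenceTerm}, write $\CS[\cdot]=-\ri\vK(\cdot)+\ri(\cdot)\vK^\dagg$ with $\vK:=\vB-\tfrac{\ri}{2}\vR$; then $\CD(\vrho_\beta,\CS)[\cdot]=-\ri\widetilde{\vK}(\cdot)+\ri(\cdot)\widetilde{\vK}^\dagg$ with $\widetilde{\vK}:=\vrho_\beta^{-1/4}\vK\vrho_\beta^{1/4}$. The detailed-balance identity established in that proof is exactly $\vK^\dagg=-\Lambda_{\vrho_\beta}[\vK]$ (the equation $\vQ=0$), which is equivalent to $\widetilde{\vK}^\dagg=-\widetilde{\vK}$. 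Hence $\vN:=-2\ri\widetilde{\vK}$ is Hermitian and $\CD(\vrho_\beta,\CS)[\cdot]=-\ri\{\widetilde{\vK},\cdot\}=\tfrac12(\vN(\cdot)+(\cdot)\vN)$, with no surviving commutator. To obtain the explicit formula I would use $\vB_\nu=\tfrac{\ri}{2}\tanh(\beta\nu/4)\vR_\nu$, so that $\vK=-\tfrac{\ri}{2}\sum_{\nu\in B}(1-\tanh(\beta\nu/4))\vR_\nu$; conjugating each $\vR_\nu$ (Bohr frequency $\nu$) produces a factor $\e^{\beta\nu/4}$, and the elementary identity $(1-\tanh x)\e^{x}=1/\cosh x$ yields $\vN=-\sum_{\nu\in B}\tfrac{1}{\cosh(\beta\nu/4)}\vR_\nu$, which is the claimed expression; Hermiticity of $\vN$ can also be read off directly from $\alpha^*_{\nu_1,\nu_2}=\alpha_{\nu_2,\nu_1}$ and evenness of $\cosh$. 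Finally, vectorizing via $\vX(\cdot)\vY\mapsto\vX\otimes\vY^T$ sends $\vA^a_{\nu_1}(\cdot)(\vA^a_{\nu_2})^\dagg\mapsto\vA^a_{\nu_1}\otimes(\vA^a_{\nu_2})^{*}$ and $\tfrac12(\vN(\cdot)+(\cdot)\vN)\mapsto\tfrac12(\vN\otimes\vI+\vI\otimes\vN^{*})$ using $\vN=\vN^\dagg$, giving $\vec{\CH_\beta}$ (and with it Hermiticity of $\vec{\CH_\beta}$).

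The hard part — really the only step that is not bookkeeping — is recognizing that the detailed-balance equation $\vQ=\vK^\dagg+\Lambda_{\vrho_\beta}[\vK]=0$ from \autoref{lem:FindingCoherenceTerm} is precisely the anti-Hermiticity of the conjugated operator $\widetilde{\vK}$; this is what collapses the non-transition part of the discriminant into the clean Hermitian symmetric term $\tfrac12\{\vN,\cdot\}$ rather than leaving a residual coherent (commutator) piece. Everything else is Bohr-frequency conjugation together with the identity $(1-\tanh x)\e^{x}=1/\cosh x$.
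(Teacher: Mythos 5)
Your proposal is correct and follows essentially the same route as the paper: conjugate the Bohr-frequency form of the \Lword{} by $\vrho_\beta^{\pm 1/4}$, read off $h_{\nu_1,\nu_2}=\e^{\beta(\nu_1+\nu_2)/4}\alpha_{\nu_1,\nu_2}$ from the transition part, and collapse the coherent-plus-decay part via $\vB_\nu=\frac{\ri}{2}\tanh(\beta\nu/4)\vR_\nu$ and the identity $\e^{x}(\tanh x-1)=-1/\cosh x$. Your reorganization through $\vK=\vB-\frac{\ri}{2}\vR$, identifying the detailed-balance equation $\vQ=0$ with anti-Hermiticity of $\widetilde{\vK}=\vrho_\beta^{-1/4}\vK\vrho_\beta^{1/4}$, is a slightly cleaner way to see \emph{a priori} that no residual commutator survives and that $\vN=-2\ri\widetilde{\vK}$ is Hermitian, whereas the paper groups the one-sided multiplications into $\frac12\vN(\cdot)+\frac12(\cdot)\vN^\dagger$ and verifies $\vN=\vN^\dagger$ afterwards by relabeling $\nu_1\leftrightarrow\nu_2$.
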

\begin{proof}
We calculate 
\begin{align}
    \CH_{\beta}&=\vrho^{-1/4}\CL[ \vrho^{1/4}\cdot  \vrho^{1/4}]  \vrho^{-1/4} \\
    &= \sum_{a\in A} \sum_{\nu_1,\nu_2 \in B} \e^{\beta (\nu_1+\nu_2)/4} \alpha_{\nu_1,\nu_2}\vA^a_{\nu_1}(\cdot)(\vA^{a}_{\nu_2})^\dagger\\
&- \ri \sum_{\nu \in B} (\e^{\frac{\beta \nu}{4}}\vB_{\nu}(\cdot)-\e^{-\frac{\beta \nu}{4}}(\cdot) \vB_{\nu}) -\frac{1}{2}\sum_{a\in A} \sum_{\nu_1,\nu_2 \in B} \L(\alpha_{\nu_1,\nu_2} \e^{\frac{\beta (\nu_1-\nu_2)}{4}} (\vA^{a}_{\nu_2})^\dagg \vA^a_{\nu_1} (\cdot) + \alpha_{\nu_1,\nu_2} \e^{-\frac{\beta (\nu_1-\nu_2)}{4}}(\cdot) (\vA^{a}_{\nu_2})^\dagg \vA^a_{\nu_1}
		\R)\\
   &= \sum_{a\in A} \sum_{\nu_1,\nu_2 \in B} h_{\nu_1,\nu_2}\vA^a_{\nu_1}(\cdot)(\vA^{a}_{\nu_2})^\dagger + \frac{1}{2}\vN (\cdot) + \frac{1}{2} (\cdot) \vN^{\dagger}
\end{align}
where
\begin{align}
    \vN &= -2\ri\sum_{\nu\in B}\e^{\beta \nu/4}\vB_{\nu} - \sum_{a\in A} \sum_{\nu_1,\nu_2 \in B} \alpha_{\nu_1,\nu_2} \e^{\frac{\beta (\nu_1-\nu_2)}{4}}(\vA^{a}_{\nu_2})^\dagg \vA^a_{\nu_1}.
\end{align}
We further simplify $\vN$ by expressing $\vB$ as a linear combination of $(\vA^{a}_{\nu_2})^\dagg \vA^a_{\nu_1}$ as in~\eqref{eq:BE}.
\begin{align}
    \vN&=\sum_{a\in A} \sum_{\nu_1,\nu_2 \in B} \bigg(\exp(\beta (\nu_1-\nu_2)/4)\tanh(\beta (\nu_1-\nu_2)/4)- \exp(\beta (\nu_1-\nu_2)/4 )\bigg) \alpha_{\nu_1,\nu_2}(\vA^{a}_{\nu_2})^\dagg \vA^a_{\nu_1}\\
    &= -\sum_{a\in A} \sum_{\nu_1,\nu_2 \in B} \frac{\alpha_{\nu_1,\nu_2}}{\cosh(\beta (\nu_1-\nu_2)/4)}  (\vA^{a}_{\nu_2})^\dagg \vA^a_{\nu_1}\tag*{(By $\e^{x}(\tanh(x)-1) = \frac{-1}{\cosh(x)}$)}\\
    &= \vN^{\dagger} \tag*{(Invariance under $\nu_1\leftrightarrow \nu_2$)},
\end{align}
as advertised.

\end{proof}

{
\begin{prop}[The parent Hamiltonian in terms of operator Fourier transforms]
\label{prop:parentHam_OFT}
    In the setting of~\autoref{prop:symmetrized_Dis}, suppose $\sum_{\nu_1,\nu_2 \in B} \alpha_{\nu_1,\nu_2}\vA^a_{\nu_1}(\cdot)(\vA^{a}_{\nu_2})^{\dagger} = \int_{-\infty}^{\infty}\gamma(\omega) \hat{\vA}^a(\omega)(\cdot)\hat{\vA}^a(\omega)^{\dagger} \rd \omega$ for operator Fourier transform with uncertainty $\sigma.$ Then, the transition part reads
\begin{align}
    e^{-\sigma^2\beta^2/8} \int_{-\infty}^{\infty} \gamma(\omega - \beta \sigma^2/2)e^{\beta \omega /2} \hat{\vA}(\omega)\otimes (\hat{\vA}(\omega))^* \rd \omega, 
\end{align}
and 
\begin{align}
     \vN &= \frac{1}{\sqrt{2\pi}} \int_{-\infty}^{\infty} n(t)\e^{\ri \vH t}\L(\int_{-\infty}^{\infty} \sum_{a} \gamma(\omega) \hat{\vA}^a(\omega)^{\dagger}\hat{\vA}^a(\omega)\rd\omega \R)\e^{-\ri \vH t}\rd t.
\end{align}
    where
    \begin{align}
        n(t) = \frac{2\sqrt{2\pi}}{\beta} \frac{1}{\cosh(2\pi t/\beta )} \quad \text{such that}\quad \norm{n}_1 = \sqrt{2\pi}.
    \end{align}
    Consequently,
    \begin{align}
        \norm{\vec{\CH}} \le (\sup_{\omega} \gamma(\omega) + \sup_{\omega}\gamma(\omega)e^{\beta \omega/2 + \beta^2\sigma^2/4}) \lnorm{\sum_{a}\vA^{a\dagger}\vA^{a}}.
    \end{align}
\end{prop}
In particular, for $\gamma(\omega)$ being Metropolis or Gaussian weights, $\norm{\vec{\CH}} \le 2 \lnorm{\sum_{a}\vA^{a\dagger}\vA^{a}}$. 

\begin{proof}
\begin{align}
    \sum_{\nu_1,\nu_2 \in B} h_{\nu_1,\nu_2}\vA^a_{\nu_1}\otimes (\vA^{a}_{\nu_2})^{*} &= \sum_{\nu_1,\nu_2 \in B} \e^{\beta (\nu_1+\nu_2)/4} \alpha_{\nu_1,\nu_2}\vA^a_{\nu_1}\otimes (\vA^{a}_{\nu_2})^{*}\\
    &= \int_{-\infty}^{\infty} \gamma(\omega) e^{\beta \vH/4}\hat{\vA}(\omega)e^{-\beta \vH/4}\otimes (e^{\beta \vH/4}\hat{\vA}(\omega)e^{-\beta \vH/4} )^* \rd \omega\\
    &= \int_{-\infty}^{\infty} \gamma(\omega)e^{\beta \omega /2 + \sigma^2\beta^2/8} \hat{\vA}(\omega+\frac{\sigma^2\beta}{2})\otimes (\hat{\vA}(\omega+\frac{\sigma^2\beta}{2}))^* \rd \omega\\
    &= e^{-\sigma^2\beta^2/8} \int_{-\infty}^{\infty} \gamma(\omega)e^{\beta \omega_+ /2} \hat{\vA}(\omega_+)\otimes (\hat{\vA}(\omega_+))^* \rd \omega_+ \tag*{(Setting $\omega_+ := \omega+\sigma^2\beta/2$)}.
\end{align}
Therefore, the norm of the transition part, for both Metropolis and Gaussian weights, is bounded by
\begin{align}
    e^{-\sigma^2\beta^2/8} {\sup_{\omega} \gamma(\omega)e^{\beta \omega_+ /2}} \cdot \lnorm{\int_{-\infty}^{\infty} \sum_{a} \hat{\vA}^a(\omega)^{\dagger}\hat{\vA}^a(\omega)\rd\omega} \le {\sup_{\omega} \gamma(\omega)e^{\beta \omega_+ /2}} \norm{\sum_{a} \vA^{a\dagger}\vA^{a}}
\end{align}
using~\cite[Proposition A.2]{chen2023QThermalStatePrep}. 

And the $\vN$ term reads
\begin{align}
    \vN 
    &= \sum_{\nu \in B} \frac{1}{\cosh(\beta \nu/4)}\L(\int_{-\infty}^{\infty} \sum_{a} \gamma(\omega) \hat{\vA}^a(\omega)^{\dagger}\hat{\vA}^a(\omega)\rd\omega \R)_{\nu}\\
    &= \frac{1}{\sqrt{2\pi}} \int_{-\infty}^{\infty} n(t)\e^{\ri \vH t}\L(\int_{-\infty}^{\infty} \sum_{a} \gamma(\omega) \hat{\vA}^a(\omega)^{\dagger}\hat{\vA}^a(\omega)\rd\omega \R)\e^{-\ri \vH t}\rd t \quad \text{where}\quad \hat{n}(-\nu) = \frac{1}{\cosh(\beta \nu/4)}\tag*{(OFT)}.
\end{align}
\begin{align}
    n(t) = \frac{2\sqrt{2\pi}}{\beta} \frac{1}{\cosh(2\pi t/\beta )} \quad \text{such that}\quad \norm{n}_1 = \sqrt{\frac{2}{\pi}} \int_{-\infty}^{\infty}\frac{1}{\cosh(x)}\rd x= \sqrt{2\pi}.
\end{align}
Therefore, 
\begin{align}
    \lnorm{\vN} \le \lnorm{\int_{-\infty}^{\infty} \sum_{a} \gamma(\omega) \hat{\vA}^a(\omega)^{\dagger}\hat{\vA}^a(\omega)\rd\omega} \le \sup_\omega \gamma(\omega) \cdot \norm{\sum_{a} \vA^{a\dagger}\vA^{a}}.
\end{align}
\end{proof}
}
\subsubsection{The transition part}
For the transition part, we quickly obtain the time-domain representation using a two-dimensional Fourier Transform as a corollary of~\autoref{prop:B_timedomain}.
\begin{cor}[Time integrals]\label{cor:b_2dFT}
Suppose $h_{\nu_1,\nu_2} =2\pi \cdot\hat{h}_+(\nu_+)\cdot \hat{h}_-(\nu_-)$, then 
\begin{align}
\sum_{a\in A} \sum_{\nu_1,\nu_2 \in B} h_{\nu_1,\nu_2}\vA^a_{\nu_1}\otimes (\vA^{a}_{\nu_2})^*
 & = \sum_{a \in A} \int_{-\infty}^{\infty}\int_{-\infty}^{\infty} h_-(t_-)h_+(t_+) \cdot \vA^a\L(t_+-t_-\R)\otimes\vA^{a}\L(-t_--t_+\R)^T \rd t_+\rd t_-
\end{align}
where $h_{\pm}(t)$ are Fourier Transforms of $\hat{h}_{\pm}(\nu)$.
\end{cor}

Now, we can evaluate the Fourier transforms explicitly.
\begin{prop}[Linear combination for $h$]\label{prop:linear_comb_h}
For $\alpha_{\nu_1,\nu_2}$ defined in ~\eqref{eq:alphaGenDef} and each $\sigma_E, g(x)$, we have that $h_{\nu_1,\nu_2} =2\pi \hat{h}_+(\nu_+)\cdot \hat{h}_-(\nu_-)$ for the discriminant (\autoref{prop:symmetrized_Dis}) where
\begin{align}
    \hat{h}_+ (\nu) & =  \int_{\frac{\beta\sigma_E^2}{2}}^{\infty}\frac{g(x)\sigma_\gamma(x)}{2\sqrt{\sigma_E^2+\sigma_\gamma^2(x)}}\exp\L( - \frac{\beta\nu^2}{16x} - \frac{\beta x}{4}\R)\rd x \quad \text{and} \quad \hat{h}_-(\nu):= \frac{1}{\pi}\exp\L(-\frac{\nu^2}{8\sigma_E^2}\R),
\end{align}
with the corresponding time-domain functions
\begin{align}
    h_+(t) =& \int_{\frac{\beta\sigma_E^2}{2}}^{\infty}g(x)\sigma_\gamma(x)\exp\L(-\frac{4 t^2 x}{\beta }-\frac{\beta x}{4} \R) \rd x 
    \quad \text{and}\quad h_-(t) = \frac{2\sigma_E}{\pi} \exp( -2\sigma_E^2t^2). \label{eq:h_+t} 
\end{align}
\end{prop}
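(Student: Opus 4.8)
The plan is to reduce the claim to the factorized form of $\alpha^{(\omega_\gamma,\sigma_\gamma)}_{\nu_1,\nu_2}$ established in \autoref{lem:exact_sym_coefficients}, and then evaluate two one-dimensional Gaussian Fourier integrals. First I would write, using \autoref{prop:symmetrized_Dis} and \eqref{eq:alphaGenDef},
\begin{align*}
h_{\nu_1,\nu_2} = \e^{\beta\nu_+/4}\,\alpha^{(g)}_{\nu_1,\nu_2} = \int_{\beta\sigma_E^2/2}^{\infty} g(x)\,\e^{\beta\nu_+/4}\,\alpha^{(x,\sigma_\gamma(x))}_{\nu_1,\nu_2}\,\rd x,
\end{align*}
with $\nu_\pm := \nu_1\pm\nu_2$ and $\sigma_\gamma(x)=\sqrt{2x/\beta-\sigma_E^2}$. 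The key observation is the identity $\sigma_E^2+\sigma_\gamma^2(x) = 2x/\beta$, which turns the $\nu_+$-Gaussian of \eqref{eq:productGaussianAlphas} into one of width $16x/\beta$; then completing the square gives
\begin{align*}
\frac{\beta\nu_+}{4} - \frac{(\nu_+ + 2x)^2}{16x/\beta} = -\frac{\beta\nu_+^2}{16x} - \frac{\beta x}{4}.
\end{align*}
In other words the Gibbs tilt $\e^{\beta\nu_+/4}$ exactly cancels the cross term $\tfrac{\beta\nu_+}{2}$ of the shifted Gaussian, leaving a centered Gaussian in $\nu_+$ together with an $x$-dependent but $\nu$-independent factor $\e^{-\beta x/4}$. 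Since the remaining factor $\exp(-\nu_-^2/8\sigma_E^2)$ of \eqref{eq:productGaussianAlphas} does not depend on $x$, it comes out of the $x$-integral; combining it with the prefactor $\tfrac{\sigma_\gamma(x)}{2\sqrt{\sigma_E^2+\sigma_\gamma^2(x)}}$ yields exactly $h_{\nu_1,\nu_2} = 2\pi\,\hat h_+(\nu_+)\,\hat h_-(\nu_-)$ with $\hat h_\pm$ as stated. The factor $2\pi$ here is precisely the one required by the normalization convention of \autoref{prop:b_2dFT}, and since $\hat h_+$ depends only on $\nu_+^2$ the identity $h_{-\nu_2,-\nu_1}=h_{\nu_1,\nu_2}$ of \autoref{prop:symmetrized_Dis} is automatic.

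Next I would compute the inverse Fourier transforms $h_\pm = \CF^{-1}(\hat h_\pm)$; this is the same kind of calculation as in the proof of \autoref{cor:timedomain_explicit}, only simpler (there is no $\tanh$ factor). For $h_-$ one has a single centered Gaussian, so after completing the square in $-\tfrac{\nu^2}{8\sigma_E^2}+\ri\nu t$, \autoref{fact:gaussian_int} gives $h_-(t) = \tfrac{1}{\sqrt{2\pi}}\cdot\tfrac{1}{2\pi}\int\e^{-\nu^2/8\sigma_E^2}\e^{\ri\nu t}\rd\nu = \tfrac{\sigma_E}{\pi}\e^{-2\sigma_E^2 t^2}$. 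For $h_+$, Fubini lets me Fourier-transform under the $x$-integral; for each fixed $x$ the relevant Gaussian $\e^{-\beta\nu^2/16x}$ is centered (no tilt survives), so \autoref{fact:gaussian_int} gives $\tfrac{1}{\sqrt{2\pi}}\int\e^{-\beta\nu^2/16x}\e^{\ri\nu t}\rd\nu = 2\sqrt{2x/\beta}\,\e^{-4xt^2/\beta}$, which cancels the $\tfrac{1}{2\sqrt{\sigma_E^2+\sigma_\gamma^2(x)}}=\tfrac{1}{2\sqrt{2x/\beta}}$ prefactor and leaves $h_+(t) = \int g(x)\sigma_\gamma(x)\exp(-\tfrac{4t^2x}{\beta}-\tfrac{\beta x}{4})\rd x$, as claimed.

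I do not anticipate a genuine obstacle: the only care required is the bookkeeping of the $\tfrac{1}{\sqrt{2\pi}}$ factors in the Fourier convention fixed in the Nomenclature, keeping the $2\pi$ on the right-hand side of the factorization consistent with \autoref{prop:b_2dFT}, and justifying the interchange of the $x$- and $\nu$-integrals (immediate, since $g\in\ell_1$ and each $\nu$-integrand is an absolutely integrable Gaussian). The one substantive point is the square-completion above, showing that the Gibbs tilt $\e^{\beta\nu_+/4}$ is absorbed exactly---which is precisely where the relation $\sigma_\gamma^2(x)=2x/\beta-\sigma_E^2$ (equivalently $\beta = 2\omega_\gamma/(\sigma_E^2+\sigma_\gamma^2)$ from \autoref{lem:exact_sym_coefficients}) is used.
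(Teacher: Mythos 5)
Your proposal is correct and follows essentially the same route as the paper's proof: absorb the tilt $\e^{\beta\nu_+/4}$ into the $\nu_+$-Gaussian of \eqref{eq:productGaussianAlphas} by completing the square (using $\sigma_E^2+\sigma_\gamma^2(x)=2x/\beta$), then evaluate the two centered Gaussian Fourier integrals under the $x$-integral; your displayed square-completion identity and both inverse transforms check out. (Only a cosmetic slip: the cancelled cross term is $\tfrac{\beta\nu_+}{4}$, not $\tfrac{\beta\nu_+}{2}$, as your own displayed identity shows.)
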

\begin{proof}
    For $\hat{h}_+(\nu)$ term, the cross term in the exponential $\e^{- \beta \nu/4 }$ is precisely cancelled by $h_{\nu_1,\nu_2}:= \e^{\beta (\nu_1+\nu_2)/4} \alpha_{\nu_1,\nu_2}$; the $\hat{h}_-(\nu)$ term remains the same. To obtain the time-domain functions, we simply carry out the Gaussian integral
\begin{align}
    \frac{1}{\sqrt{2\pi}}\int_{-\infty}^{\infty} \exp\L( - \frac{\beta\nu^2}{16x} - \frac{\beta x}{4}\R) \e^{\ri \nu t} \rd \nu=2\sqrt{\frac{2x}{\beta}}\exp\L(-\frac{4 t^2 x}{\beta }- \frac{\beta x}{4}\R)
\end{align}
and use that $\sigma_E^2+\sigma_\gamma^2(x) = \beta/2x $.
\end{proof}
\begin{prop}[The Gaussian case] 
In the setting of~\autoref{prop:linear_comb_h} and for $\alpha_{\nu_1,\nu_2}$ defined in ~\eqref{eq:alpha_DB}, we have
\begin{align}
\quad \hat{h}_+(\nu) &:= \frac{\sigma_\gamma}{2\sqrt{\sigma_E^2+\sigma_\gamma^2}} \cdot \exp\L( - \frac{\nu^2+(2\omega_\gamma)^2}{8(\sigma_E^2+\sigma_{\gamma}^2)}\R)
\quad \text{and} \quad \hat{h}_-(\nu):= \frac{1}{\pi}\exp\L(-\frac{\nu^2}{8\sigma_E^2}\R).
\end{align}
\end{prop}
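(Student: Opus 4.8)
The plan is to read this off \autoref{prop:linear_comb_h} by specializing the linear-combination weight to a single Gaussian, namely $g(x) = \delta(x-\omega_\gamma)$ in the family \eqref{eq:alphaGenDef}. With that choice the coefficients $\alpha^{(g)}_{\nu_1,\nu_2}$ collapse to the single-Gaussian coefficients $\alpha^{(\omega_\gamma,\sigma_\gamma)}_{\nu_1,\nu_2}$ of \eqref{eq:alpha_DB}, and the $x$-integral defining $\hat{h}_+$ in \autoref{prop:linear_comb_h} evaluates at the single point $x=\omega_\gamma$, giving
\begin{align}
\hat{h}_+(\nu) = \frac{\sigma_\gamma}{2\sqrt{\sigma_E^2+\sigma_\gamma^2}} \exp\L(-\frac{\beta\nu^2}{16\omega_\gamma} - \frac{\beta\omega_\gamma}{4}\R),
\end{align}
while $\hat{h}_-(\nu)=\frac{1}{2\pi}\exp(-\nu^2/(8\sigma_E^2))$ is unchanged, since it does not involve $g$.

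The remaining step is to rewrite the two exponents using the defining parameter relation $\sigma_\gamma^2 = 2\omega_\gamma/\beta - \sigma_E^2$ from \eqref{eq:ExctDissip}, equivalently $\sigma_E^2+\sigma_\gamma^2 = 2\omega_\gamma/\beta$, i.e. $\beta = 2\omega_\gamma/(\sigma_E^2+\sigma_\gamma^2)$. Substituting this gives $\frac{\beta}{16\omega_\gamma} = \frac{1}{8(\sigma_E^2+\sigma_\gamma^2)}$ and $\frac{\beta\omega_\gamma}{4} = \frac{(2\omega_\gamma)^2}{8(\sigma_E^2+\sigma_\gamma^2)}$, so the exponent combines into $-\frac{\nu^2+(2\omega_\gamma)^2}{8(\sigma_E^2+\sigma_\gamma^2)}$, which is precisely the advertised form. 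As a cross-check (and an alternative self-contained route avoiding \autoref{prop:linear_comb_h}), one can argue directly from the product formula \eqref{eq:productGaussianAlphas} of \autoref{lem:exact_sym_coefficients} together with the identity $h_{\nu_1,\nu_2} = \e^{\beta(\nu_1+\nu_2)/4}\alpha_{\nu_1,\nu_2}$ of \autoref{prop:symmetrized_Dis}: writing $\nu_\pm := \nu_1\pm\nu_2$, the factor $\exp(-\nu_-^2/(8\sigma_E^2))$ supplies $2\pi\,\hat{h}_-(\nu_-)$ untouched, and the $\nu_+$-dependent part is $\exp(\beta\nu_+/4)\cdot\frac{\sigma_\gamma}{2\sqrt{\sigma_E^2+\sigma_\gamma^2}}\exp\big(-(\nu_++2\omega_\gamma)^2/(8(\sigma_E^2+\sigma_\gamma^2))\big)$; expanding the square and using $\beta/4 = \omega_\gamma/(2(\sigma_E^2+\sigma_\gamma^2))$, the linear-in-$\nu_+$ term from the prefactor cancels the linear-in-$\nu_+$ term produced by completing the square, leaving $\exp\big(-(\nu_+^2+(2\omega_\gamma)^2)/(8(\sigma_E^2+\sigma_\gamma^2))\big) = 2\pi\,\hat{h}_+(\nu_+)$.

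No real obstacle is expected — the content is a single Gaussian completion-of-the-square — so the only thing to be careful about is bookkeeping: keeping the $2\pi$ in the normalization convention $h_{\nu_1,\nu_2} = 2\pi\,\hat{h}_+(\nu_+)\hat{h}_-(\nu_-)$ on the correct side (it is absorbed into $\hat{h}_-$), and invoking the constraint $\beta(\sigma_E^2+\sigma_\gamma^2)=2\omega_\gamma$ consistently. That cancellation of the cross term is exactly why the mere ``skew-symmetry'' of \autoref{lem:exact_sym_coefficients} upgrades, for the discriminant, to the clean even-Gaussian form stated here.
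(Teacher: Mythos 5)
Your proof is correct and matches the paper's (implicit) argument: the paper states this proposition without a separate proof precisely because it is the specialization $g(x)=\delta(x-\omega_\gamma)$ of \autoref{prop:linear_comb_h}, followed by the substitution $\sigma_E^2+\sigma_\gamma^2=2\omega_\gamma/\beta$ to recombine the exponent — exactly your first route. Your cross-check via \eqref{eq:productGaussianAlphas} and the cancellation of the linear-in-$\nu_+$ term by the prefactor $\e^{\beta\nu_+/4}$ is likewise the same cancellation the paper invokes in the proof of \autoref{prop:linear_comb_h}, so nothing is missing.
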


\transitionGaussian*
\transitionMetropolis*
\begin{proof} 
    Setting $g(x)=\frac{1}{\sqrt{2\pi}\sigma_\gamma(x)}$ on the interval $\left(\frac{\beta\sigma_E^2}{2},\infty\right)$, yields in \eqref{eq:h_+t}
\begin{align}
        h_+(t) = \int_{\frac{\beta\sigma_E^2}{2}}^{\infty}g(x)\sigma_\gamma(x)\exp\L(-\frac{4 t^2 x}{\beta }-\frac{\beta x}{4} \R) \rd x &=\frac{1}{\sqrt{2\pi}}\int_{\frac{\beta\sigma_E^2}{2}}^{\infty} \exp\L(-\frac{4 t^2 x}{\beta }- \frac{\beta x}{4} \R) \rd x\\
        & =\frac{1}{\beta}\frac{\e^{-2\sigma_E^2t^2 -\beta^2 \sigma_E^2/8}}{4\sqrt{2\pi}(\frac{t^2}{\beta^2} + \frac{1}{16})}.
\end{align}
Set $\sigma_{E} = 1/\beta$ to conclude the calculation. To obtain the $\ell_1$-norm bound, integrate $\frac{\e^{-1/8}}{4 \sqrt{2\pi}} \int_{-\infty}^{\infty}\frac{\e^{-2x^2}}{x^2+1/16} \rd x = \sqrt{\frac{\pi}{2}} \erfc(1/\sqrt{8}) < 0.78$. 
\end{proof}
\subsubsection{The $\vN$-term}
We instantiate the time-domain presentation of the $\vN$ term (adapted from the calculation for the coherent term $\vB$ (\autoref{cor:timedomain_explicit})).

\begin{cor}[Explicit time-domain functions]\label{cor:N_timedomain_explicit}
In the time domain, the $\vN$ term corresponding to coefficients constructed in \eqref{eq:alphaGenDef} can be written as
\begin{align}
    \vN = \sum_{a\in A}\int_{-\infty}^{\infty}f_-(t_-)\e^{-\ri \vH t_-} \left(\int_{-\infty}^{\infty}f_+(t_+) \vA^{a\dagg}(t_+)\vA^a(-t_+)\rd t_+\right)\e^{\ri \vH t_-}\rd t_-\label{eq:Nterm}
\end{align}
for $f_+(t)$ as in~\eqref{eq:fpl} and 
\begin{align}
 f_-(t) & =\frac{2\sigma_E}{\pi\beta} \L(\!\frac{1}{\cosh\L(\frac{2\pi t}{\beta}\R)\!}*_t \e^{-2\sigma_E^2 t^2}\!\R),
\end{align}
depending on parameters $\sigma_{\gamma},\sigma_E, g(x),\beta$.
\end{cor}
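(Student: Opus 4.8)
The plan is to reduce this corollary to the factorized Fourier-transform identity of \autoref{prop:B_timedomain}, mirroring almost verbatim the derivation of the coherent term in \autoref{cor:timedomain_explicit}. The starting point is the frequency-domain expression for the $\vN$-term furnished by \autoref{prop:symmetrized_Dis},
\begin{align}
\vN=-\sum_{a\in A}\sum_{\nu_1,\nu_2\in B}\frac{\alpha_{\nu_1,\nu_2}}{\cosh\!\L(\beta(\nu_1-\nu_2)/4\R)}\,(\vA^{a}_{\nu_2})^{\dagger}\vA^{a}_{\nu_1},
\end{align}
which is a bilinear expression in the Bohr components $\vA^a_\nu$ with coefficient $\hat{f}(\nu_1,\nu_2):=-\alpha_{\nu_1,\nu_2}/\cosh(\beta(\nu_1-\nu_2)/4)$. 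So the only thing to establish is that $\hat f$ factorizes as a function of $\nu_1+\nu_2$ times a function of $\nu_1-\nu_2$; \autoref{prop:B_timedomain} then outputs exactly the double time integral \eqref{eq:Nterm}, with $f_\pm$ the inverse Fourier transforms of the two factors.

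For the factorization I would use the explicit linear-combination coefficients. From \eqref{eq:productGaussianAlphas} and \eqref{eq:alphaGenDef}, the factor $\exp(-(\nu_1-\nu_2)^2/(8\sigma_E^2))$ is independent of the integration variable $x$ and hence pulls out of the $g$-integral, giving $\alpha_{\nu_1,\nu_2}=\exp(-(\nu_1-\nu_2)^2/(8\sigma_E^2))\cdot\hat f_+(\nu_1+\nu_2)$ with $\hat f_+$ precisely the function of \eqref{eq:f+}. Dividing by $2\pi\cosh(\beta(\nu_1-\nu_2)/4)$ then yields
\begin{align}
\frac{1}{2\pi}\hat f(\nu_1,\nu_2)=\hat f_+(\nu_1+\nu_2)\cdot\hat\phi(\nu_1-\nu_2),\qquad\hat\phi(\nu):=-\frac{1}{2\pi\cosh(\beta\nu/4)}\exp\!\L(-\frac{\nu^2}{8\sigma_E^2}\R),
\end{align}
which is the hypothesis of \autoref{prop:B_timedomain}. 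Its ``$+$''-factor contributes $f_+=\CF^{-1}(\hat f_+)$, which is the same function as in \eqref{eq:fpl} (it has already been computed in the coherent-term derivation, since the identical $\hat f_+$ occurs there), and its ``$-$''-factor contributes $f_-=\CF^{-1}(\hat\phi)$.

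The remaining calculation is to evaluate $\CF^{-1}(\hat\phi)$. Since $\hat\phi$ is a constant multiple of a product of $1/\cosh(\beta\nu/4)$ and a Gaussian, I would use the convolution theorem to write $f_-$ as the convolution of the two inverse transforms: $\CF^{-1}(1/\cosh(\beta\nu/4))$ is proportional to $1/\cosh(2\pi t/\beta)$ (the near-self-duality of $1/\cosh$ under the Fourier transform, as already invoked in \autoref{cor:timedomain_explicit}), and $\CF^{-1}(\exp(-\nu^2/(8\sigma_E^2)))$ is proportional to $\exp(-2\sigma_E^2 t^2)$ by \autoref{fact:gaussian_int}; collecting the $1/\sqrt{2\pi}$ normalization factors produces $f_-(t)=\frac{2\sigma_E}{\pi\beta}\bigl(\tfrac{1}{\cosh(2\pi t/\beta)}*_t e^{-2\sigma_E^2 t^2}\bigr)$. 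The specializations stated in \autoref{cor:NGaussian} and \autoref{cor:NMetropolis} would then follow by choosing $g$ to be a delta (Gaussian weight) or the inverse-width weight (Metropolis weight) and rescaling $t\mapsto t/\sigma_E$, exactly as for $\vB$.

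I expect essentially no conceptual obstacle — it is the same factorization trick used for $\vB$, and the heavy lifting is already encapsulated in \autoref{prop:symmetrized_Dis} and \autoref{prop:B_timedomain}. The one place demanding care is the bookkeeping of the overall constant and sign: $\vN$ carries an explicit minus sign, and, in contrast with \eqref{eq:fmn}, the weight on $\nu_1-\nu_2$ has collapsed from $\tfrac{\tanh(-\beta\nu/4)}{2\ri}\exp(-\nu^2/8\sigma_E^2)$ to $-\tfrac{1}{\cosh(\beta\nu/4)}\exp(-\nu^2/8\sigma_E^2)$ — this is exactly the identity $\e^{x}(\tanh x-1)=-1/\cosh x$ already used in \autoref{prop:symmetrized_Dis} — so the $\sinh$ that generated the $\sin(-\beta\sigma_E^2 t)$ factor and the $\e^{\beta^2\sigma_E^2/8}$ prefactor in \eqref{eq:fmn} degenerates to a constant, which is why neither appears in the $\vN$-term's $f_-$. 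Correctly tracking the various $\tfrac{1}{2\pi}$ and $\tfrac{1}{\sqrt{2\pi}}$ factors through the convolution theorem is the only genuinely error-prone step.
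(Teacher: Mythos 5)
Your proposal is correct and follows essentially the same route as the paper: the paper's proof is literally ``follow the proof of \autoref{cor:timedomain_explicit} but drop the $\sinh(-\beta\nu/4)/(2\ri)$ factor,'' after which the convolution of $1/\cosh$ with the Gaussian yields the stated $f_-$, exactly as you compute. The one wrinkle --- your $\hat\phi$ carries the explicit minus sign of $\vN$ from \autoref{prop:symmetrized_Dis} while your final $f_-$ (like the paper's) is written without it --- is a sign the paper itself glosses over, so it does not distinguish your argument from theirs.
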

\begin{proof}
    Follow the proof of~\autoref{cor:timedomain_explicit}, but drop the $\sinh(-\beta \nu/4)/(2 \ri)$ term in~\eqref{eq:sinhFT}. The convolution then gives
    \begin{align}
    \frac{1}{\sqrt{2\pi}}\int_{-\infty}^{\infty} \exp\L(-\frac{\nu^2}{8\sigma_E^2}\R)\e^{\ri \nu t}\rd \nu&= 2\sigma_E \e^{-2 \sigma_E^2 t^2}.
    \end{align}
\end{proof}
 We instantiate similar results with minor modifications without replicating the proofs.
\Ngaussian* 

\begin{proof}
    As both functions are positive $\frac{1}{\cosh(2\pi t)}, \exp(-2 t^2)>0$, we have that $\norm{n_1(t)}_1=\int_{-\infty}^\infty n_1(t)\rd t=\frac{2\sqrt{\pi}}{3}\L(\int_{-\infty}^\infty \frac{1}{\cosh(2\pi t)}\rd t\R)\cdot \L(\int_{-\infty}^\infty \exp(-2 t^2)\rd t\R)$, where the second equality follows from the fact that the integral of a convolution is the product of the integrals of the convolved functions due to Fubini's theorem. The first integral evaluates to $1/2$ and the second evaluates to $\sqrt{\pi/2}$ due to \autoref{fact:gaussian_int}, therefore $\norm{n_1(t)}_1=\pi/\sqrt{18}$.
    The second bound follows from the analogous bound on $b_2$ in \autoref{cor:GaussianLikeWeight}. The factors of $3$ and $\frac{1}{3}$ are redistributed to ensure both $n_1$ and $n_2$ are normalized. Note the overall extra factor of $2$ in~\autoref{cor:N_timedomain_explicit} compared with~\autoref{cor:timedomain_explicit}.
\end{proof}
\NMetropolis*

\section{Calculating the coherent term for the Metropolis-like weights}\label{sec:cal_Metropolis}

In this section, we dedicate to calculating the weights corresponding to the Metropolis-like weight in order to arrive at an expression that enables efficient implementation via LCU.

\begin{prop}[Metropolis-like weights]\label{prop:MetropolisLikeFilterf}
    Setting $g(x)=\frac{1}{\sqrt{2\pi}\sigma_\gamma(x)}$ on the interval $\left(\frac{\beta\sigma_E^2}{2},\frac{s^2}{\beta}\right)$, yields in \eqref{eq:fpl}
\begin{align}\label{eq:fplTrun}
	f^{(s)}_+(t)=
	\frac{1}{\beta}\frac{\e^{-2\sigma_E^2t^2 -\ri \beta \sigma_E^2 t}-\e^{-4\frac{t^2}{\beta^2} s^2 -2\ri \frac{t}{\beta} s^2}}{\sqrt{2\pi} \frac{t}{\beta} (2 \frac{t}{\beta}+\ri)}.
\end{align}
In particular, in the $s\rightarrow \infty$ limit, the second term vanishes, but we can only interpret the result as a distribution 
\begin{align}\label{eq:fplTrunInf}
	f^{(\infty)}_+(t)=\lim_{\eta\rightarrow 0+}\indicator(|t|\geq\eta)\frac{1}{\beta}\frac{\e^{-2\sigma_E^2t^2 -\ri \beta \sigma_E^2 t}}{\sqrt{2\pi} \frac{t}{\beta} (2 \frac{t}{\beta}+\ri)}+\sqrt{\frac{\pi}{2}}\delta(t).
\end{align}
\end{prop}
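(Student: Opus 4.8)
The statement splits into the closed-form evaluation for finite $s$ and the distributional limit $s\to\infty$. For the first part: substituting $g(x)=\frac{1}{\sqrt{2\pi}\,\sigma_\gamma(x)}$ into \eqref{eq:fpl} cancels the square root, $g(x)\sigma_\gamma(x)=\frac{1}{\sqrt{2\pi}}$, so truncating $x$ to $\left(\frac{\beta\sigma_E^2}{2},\frac{s^2}{\beta}\right)$ leaves the elementary integral $f^{(s)}_+(t)=\frac{1}{\sqrt{2\pi}}\int_{\beta\sigma_E^2/2}^{s^2/\beta}\e^{-\frac{4t^2x}{\beta}-2\ri tx}\rd x=\frac{1}{\sqrt{2\pi}\,a}\bigl[\e^{ax}\bigr]_{\beta\sigma_E^2/2}^{s^2/\beta}$ with $a=-\frac{4t^2}{\beta}-2\ri t$. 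Writing $-a=2\beta\cdot\frac{t}{\beta}\bigl(2\frac{t}{\beta}+\ri\bigr)$, plugging in the two endpoints, and pulling $\frac{1}{\beta}$ out front reproduces \eqref{eq:fplTrun}. (This is the step the companion propositions attribute to Mathematica; it is also a two-line hand computation.) The apparent pole at $t=0$ is removable, since the bracketed numerator vanishes there, consistent with $f^{(s)}_+$ being an entire function of $t$ for each finite $s$.

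For the limit: if $t\neq0$ then $\bigl|\e^{-4t^2s^2/\beta^2-2\ri ts^2/\beta}\bigr|=\e^{-4t^2s^2/\beta^2}\to0$, so $f^{(s)}_+(t)\to g(t):=\frac{1}{\beta}\frac{\e^{-2\sigma_E^2t^2-\ri\beta\sigma_E^2 t}}{2\sqrt{2\pi}\frac{t}{\beta}(2\frac{t}{\beta}+\ri)}$ pointwise. But $g$ has a simple pole at $t=0$ with residue $\frac{1}{2\sqrt{2\pi}\,\ri}$ (the denominator behaves like $\frac{2\sqrt{2\pi}\,\ri}{\beta}t$), hence is not locally integrable, and the pointwise limit alone cannot be the distributional limit: a point mass at $0$ must appear. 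To pin it down, write $f^{(s)}_+=g-B_s$ with $B_s(t):=\frac{1}{\beta}\frac{\e^{-4t^2s^2/\beta^2-2\ri ts^2/\beta}}{2\sqrt{2\pi}\frac{t}{\beta}(2\frac{t}{\beta}+\ri)}$. Since $g$ and $B_s$ carry the same $\frac{1}{t}$-singularity, $f^{(s)}_+$ is genuinely continuous and, for every Schwartz $\phi$, $\langle f^{(s)}_+,\phi\rangle=\langle\mathrm{p.v.}\,g,\phi\rangle-\langle\mathrm{p.v.}\,B_s,\phi\rangle$, where $\langle\mathrm{p.v.}\,g,\phi\rangle$ is exactly the distribution in the first term of \eqref{eq:fplTrunInf}.

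It remains to show $\lim_{s\to\infty}\langle\mathrm{p.v.}\,B_s,\phi\rangle=-\sqrt{\pi/8}\,\phi(0)$. The substitution $\tau=\frac{2s^2t}{\beta}$ turns $\langle\mathrm{p.v.}\,B_s,\phi\rangle$ into $\mathrm{p.v.}\int_{-\infty}^{\infty}\frac{\e^{-\tau^2/s^2}\,\e^{-\ri\tau}}{2\sqrt{2\pi}\,\tau\,(\tau/s^2+\ri)}\,\phi\!\bigl(\tfrac{\beta\tau}{2s^2}\bigr)\rd\tau$; as $s\to\infty$ the regulator $\e^{-\tau^2/s^2}\to1$, $\tau/s^2+\ri\to\ri$, and $\phi(\beta\tau/2s^2)\to\phi(0)$, so the integral tends to $\frac{\phi(0)}{2\sqrt{2\pi}\,\ri}\,\mathrm{p.v.}\int\frac{\e^{-\ri\tau}}{\tau}\rd\tau=\frac{\phi(0)}{2\sqrt{2\pi}\,\ri}\cdot(-\ri\pi)=-\sqrt{\tfrac{\pi}{8}}\,\phi(0)$, using the classical value $\mathrm{p.v.}\int\frac{\e^{-\ri\tau}}{\tau}\rd\tau=-\ri\pi$. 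Combining, $\lim_{s\to\infty}\langle f^{(s)}_+,\phi\rangle=\langle\mathrm{p.v.}\,g,\phi\rangle+\sqrt{\pi/8}\,\phi(0)$, i.e.\ $f^{(s)}_+\to f^{(\infty)}_+$ as tempered distributions, which is \eqref{eq:fplTrunInf}.

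The main obstacle is precisely this last limit exchange: the $\tau$-integral is only conditionally convergent (oscillation of $\e^{-\ri\tau}$ against $1/\tau$), and the Gaussian regulator $\e^{-\tau^2/s^2}$, which secures absolute convergence at each finite $s$, weakens as $s\to\infty$. I would make it rigorous by treating $|\tau|\ge1$ with one integration by parts in $\tau$ — using $\phi,\phi'\in\CS(\BR)$ to build an $s$-uniform integrable majorant — and treating $|\tau|<1$ by splitting the integrand into its odd part (where the principal value produces the needed cancellation and $s$ enters only through smooth, uniformly bounded factors) and its locally integrable even part, to which dominated convergence applies directly. The remaining ingredients — the finite-$s$ integral, the pointwise $t\neq0$ limit, and the bookkeeping of the matching $1/t$-singularities — are routine.
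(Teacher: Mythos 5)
Your proof is correct, and it takes a genuinely different route from the paper's. For the finite-$s$ formula both arguments are the same elementary antiderivative computation (your identification of $-a=2\beta\cdot\frac{t}{\beta}(2\frac{t}{\beta}+\ri)$ and the cancellation of the removable singularity at $t=0$ check out). For the $s\to\infty$ limit, the paper stays in the frequency domain: it evaluates $\hat f^{(s)}_+(\nu)$ explicitly in terms of error functions, passes to the limit there (where everything is bounded and pointwise), and then inverts the Fourier transform of the limit by peeling off the $\sgn(\nu)$ discontinuity --- whose inverse transform is the principal value of $\frac{\ri}{\sqrt{8\pi}t}$ --- and the residual constant $\frac14$, which produces the $\sqrt{\pi/8}\,\delta(t)$ term. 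You instead work entirely in the time domain: you take the pointwise limit for $t\neq 0$, observe that the limit has a non-integrable simple pole at $0$ with residue $\frac{1}{2\sqrt{2\pi}\ri}$, and extract the point mass by rescaling the subtracted term $B_s$ via $\tau=2s^2t/\beta$ and invoking $\mathrm{p.v.}\int\frac{\e^{-\ri\tau}}{\tau}\rd\tau=-\ri\pi$; the coefficient $-\frac{\pi}{2\sqrt{2\pi}}=-\sqrt{\pi/8}$ agrees with the paper. Your final limit-exchange is only sketched, but the plan (integration by parts for $|\tau|\ge 1$ against the Schwartz test function, odd/even splitting near $0$) is standard and will go through. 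The trade-off: your argument is more self-contained and directly establishes distributional convergence $f^{(s)}_+\to f^{(\infty)}_+$, which the paper only gets implicitly; the paper's detour through $\hat f^{(s)}_+$ is not wasted, however, since the explicit erf-formulas \eqref{eq:fplTrunHat}--\eqref{eq:fplMetHat} are reused later (e.g., in the perturbation argument following \autoref{prop:exactMetro}) and exhibit the connection to the Metropolis weight, so if you adopted your route in the paper those identities would have to be derived separately.
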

Which again decays rapidly; we would need to pay attention to the $t\sim 0$ regime due to the delta function $\delta(t)$ and $1/t$ dependence.

\begin{proof} We substitute $g(x)=\frac{1}{\sqrt{2\pi}\sigma_\gamma(x)}$ in \eqref{eq:fpl}, which due to $\sigma_E^2+\sigma_\gamma^2=\frac{2\omega_{\gamma}}{\beta}$ gives
	\begin{align}
		f^{(s)}_+(t) :=2 \int_{\frac{\beta\sigma_E^2}{2}}^{\infty}\frac{1}{\sqrt{2\pi}\sigma_\gamma(x)}\sigma_\gamma(x)\exp\L(-\frac{4 t^2 x}{\beta }-2\ri t x\R) \rd x
		&=\frac{2}{\sqrt{2\pi}}\int_{\frac{\beta\sigma_E^2}{2}}^{\frac{s^2}{\beta}}\exp\L(-\frac{4 t^2 x}{\beta }-2\ri t x\R) \rd x\\&
		=\frac{1}{\beta}\frac{\e^{-2\sigma_E^2t^2 -\ri \beta \sigma_E^2 t}-\e^{-4\frac{t^2}{\beta^2} s^2 -2\ri \frac{t}{\beta} s^2}}{\sqrt{2\pi} \frac{t}{\beta} (2 \frac{t}{\beta}+\ri)}.
	\end{align}
In order to properly understand the $s\rightarrow \infty$ limit we also need to analyze \eqref{eq:f+}:
\begin{align}\label{eq:fplTrunHat}
	\hat{f}^{(s)}_+(\nu) &=   \int_{\frac{\beta\sigma_E^2}{2}}^{\infty}\frac{g(x)\sigma_\gamma}{\sqrt{\sigma_E^2+\sigma_\gamma^2}}\exp\L( - \frac{(\nu +2x)^2}{16x/\beta}\R)\rd x
	=\int_{\frac{\beta\sigma_E^2}{2}}^{\frac{s^2}{\beta}}\sqrt{\frac{\beta}{4\pi x}}\exp\L( - \frac{(\nu +2x)^2}{16x/\beta}\R)\rd x\\&
	=\frac{1}{2} \left(\e^{-\frac{\beta \nu}{2}}\erf\left(\frac{\nu}{\sqrt{8} \sigma_E\!}-\frac{\beta 
		\sigma_E}{\sqrt{8}}\right)
	\!-\erf\left(\frac{\nu}{\sqrt{8} \sigma_E\!}+\frac{\beta 
		\sigma_E}{\sqrt{8}}\right) 
	\!+ \e^{-\frac{\beta \nu}{2}}\erf\left(\frac{s}{2}-\frac{\beta \nu}{4 s}\right)
	\!+\erf\left(\frac{s}{2}+\frac{\beta \nu}{4 s}\right)\right),
\end{align}
which in the $s\rightarrow \infty$ limit becomes
\begin{align}\label{eq:fplMetHat}
	\hat{f}^{(\infty)}_+(\nu) =\frac{1}{2} \left(\erfc\left(\frac{\beta 
		\sigma_E}{\sqrt{8}}+\frac{\nu}{\sqrt{8} \sigma_E\!}\right)+\e^{-\frac{\beta \nu}{2}}\erfc\left(\frac{\beta 
		\sigma_E}{\sqrt{8}}-\frac{\nu}{\sqrt{8} \sigma_E\!}\right)\right).
\end{align}

In order to compute $f^{(\infty)}_+(\nu)$, let us consider the function $\hat{f}^{(\infty)}_{+,0}(\nu):=\hat{f}^{(\infty)}_+(\nu)-\frac{1-\sgn(\nu)}{2}$, where 
\begin{align}
 \sgn(\nu) = \begin{cases}
     0 \quad &\text{if}\quad \nu =0\\
\sgn(\nu)=\nu/|\nu| &\text{if}\quad \nu\neq 0     
 \end{cases}.   
\end{align}
Since $\hat{f}^{(\infty)}_{+,0}(\nu)\in\ell_1$, its inverse Fourier Transform exists, and it can be computed as $f^{(\infty)}_{+,0}(t)=\frac{1}{\beta}\frac{\e^{-2\sigma_E^2t^2 -\ri \beta \sigma_E^2 t}+2\ri\frac{t}{\beta}-1}{\sqrt{2\pi} \frac{t}{\beta} (2 \frac{t}{\beta}+\ri)}$ after a change of variables by utilizing the (complex) Laplace transform of the error function $\erf$ \cite{wiki2023LaplaceTransform}. Moreover, since $\hat{f}^{(\infty)}_{+,0}(\nu)$ is smooth, apart from $\nu=0$ where its value is the mean of its left and right side limits, standard results in the theory of Fourier integrals (see, e.g., \cite[Chapter 18.3.1.d]{zorich2016MathAnalII}) \nolinebreak imply
\begin{align}
	\hat{f}^{(\infty)}_{+,0}(\nu)=\lim_{\eta\rightarrow 0+}\frac{1}{\sqrt{2\pi}}\int_{-\frac{1}{\eta}}^{\frac{1}{\eta}}f^{(\infty)}_{+,0}(\nu)e^{-\ri \nu t}\rd t
	=\lim_{\eta\rightarrow 0+}\frac{1}{\sqrt{2\pi}}\int_{-\frac{1}{\eta}}^{\frac{1}{\eta}}\frac{1}{\beta}\frac{\e^{-2\sigma_E^2t^2 -\ri \beta \sigma_E^2 t}+2\ri\frac{t}{\beta}-1}{\sqrt{2\pi} \frac{t}{\beta} (2 \frac{t}{\beta}+\ri)}\rd t.
\end{align} 
On the other hand, it is a standard calculation that
\begin{align}
	\frac{1}{2}\sgn(\nu)=
	\lim_{\eta\rightarrow 0+}\frac{1}{4\pi}\int_{\eta}^{\infty}\frac{\sin(\nu t)}{t}\rd t
	=\lim_{\eta\rightarrow 0+}\frac{1}{\sqrt{2\pi}}\left(\int_{-\frac{1}{\eta}}^{-\eta}\frac{\ri}{\sqrt{2\pi}t }\e^{-\ri \nu t}\rd t + \int_{\eta}^{\frac{1}{\eta}}\frac{\ri}{\sqrt{2\pi}t }\e^{-\ri \nu t}\rd t\right).
\end{align}
Since $f^{(\infty)}_{+,0}(\nu)$ is bounded in the neighborhood of $0$ we can combine the above two expression, yielding 
\begin{align}
	\hat{f}^{(\infty)}_{+}(\nu)-\frac{\sgn(\nu)}{2}&=\hat{f}^{(\infty)}_{+,0}(\nu)-\frac{1}{2}\\&
	=\lim_{\eta\rightarrow 0+}\frac{1}{\sqrt{2\pi}}\int_{-\infty}^{\infty}\indicator(|t|\in (\eta,1/\eta))\left(f^{(\infty)}_{+,0}(\nu)-\frac{\ri}{\sqrt{2\pi}t}\right)e^{-\ri \nu t}\rd t\\&
	=\lim_{\eta\rightarrow 0+}\frac{1}{\sqrt{2\pi}}\int_{-\infty}^{\infty}\indicator(|t|\in (\eta,1/\eta))\frac{1}{\beta}\frac{\e^{-2\sigma_E^2t^2 -\ri \beta \sigma_E^2 t}}{\sqrt{2\pi} \frac{t}{\beta} (2 \frac{t}{\beta}+\ri) }\e^{-\ri \nu t}\rd t\\&
	=\lim_{\eta\rightarrow 0+}\frac{1}{\sqrt{2\pi}}\int_{-\infty}^{\infty}\indicator(|t|\geq\eta)\frac{1}{\beta}\frac{\e^{-2\sigma_E^2t^2 -\ri \beta \sigma_E^2 t}}{\sqrt{2\pi} \frac{t}{\beta} (2 \frac{t}{\beta}+\ri) }\e^{-\ri \nu t}\rd t,
\end{align}
where the last equality holds because of the rapid decay of the function for large $t$ values. We can conclude the proof by observing that the Fourier Transform of $\sqrt{\frac{\pi}{2}}\delta(t)$ is $\frac{1}{2}$.
\end{proof}

In the $s=\infty$ case, the diverging $\ell_1$ norm of $f^{(\infty)}_+(t)$ raises the question: how big could the norm $\nrm{\vB^M}$ be? 
In the following we spell out the explicit form of $\vB$ by removing the singularity at $t= 0$, which shows that in fact $\nrm{\vB^M}\leq\bigO{\log(\beta\nrm{\vH})}$, because $\nrm{f_-}_1=\bigO{1}$.
Although the exact formula becomes cumbersome, it enables us to find a very precise approximation by a much simpler formula in \autoref{cor:coherentMetropolis}.

\begin{prop}[Exact form of Metropolis coherent term]\label{prop:exactMetro}
	The coherent term $\vB^M$ corresponding to the quasi-Metropolis weight $\gamma^{(\infty)}_{\sigma_E}(\omega)$ in \eqref{eq:convolutionFilter} can be written as
	\begin{align}\vB^M =  \sum_{a\in A}\int_{-\infty}^{\infty}f_-(t_-)\e^{-\ri \vH t_-} \vO_a^M\e^{\ri \vH t_-}\rd t_-	,
	\end{align}
	where $f_-(t_-)$ is defined in \eqref{eq:fmn}, and for arbitrary $\theta > 0$ we have 
	\begin{align}
		\vO_a^M&=\int_{-\infty}^{\infty}\e^{\ri \vH t}\vA^{a\dagg}\left(\frac{\e^{-2\sigma_E^2t^2 -\ri \beta \sigma_E^2 t}+\indicator(|t|\leq\theta)\ri\left(2\frac{t}{\beta}+\ri\right)}{\sqrt{2\pi} t (2 \frac{t}{\beta}+\ri)}\e^{-2\ri \vH t}-\indicator(|t|\leq\theta)\sqrt{\frac{2}{\pi}}\vH\sinc(2\vH t)\right) \vA^a\e^{\ri \vH t}\rd t\\&
		\kern-6mm+\frac{1}{\sqrt{2\pi}}\int_{-\theta}^{\theta}\cos(\vH t)\vA^{a\dagg}\cos(2 \vH t)
		\vA^{a}\vH\sinc(\vH t)+\sinc(\vH t)\vH\vA^{a\dagg}\cos(2 \vH t)
		\vA^{a}\cos(\vH t)\rd t
		+\sqrt{\frac{\pi}{2}}\vA^{a\dagg}\vA^a.
	\end{align}
\end{prop}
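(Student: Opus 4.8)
\emph{Proof proposal.} The plan is to reduce everything to the general factorized time-domain identity of \autoref{cor:timedomain_explicit}: for any admissible filter the coherent term equals $\vB=\sum_{a\in A}\int_{-\infty}^{\infty}f_-(t_-)\e^{-\ri\vH t_-}\vO_a\e^{\ri\vH t_-}\rd t_-$ with $\vO_a=\int_{-\infty}^{\infty}f_+(t_+)\vA^{a\dagg}(t_+)\vA^a(-t_+)\rd t_+$, and for the quasi-Metropolis weight $\gamma^{(\infty)}_{\sigma_E}$ the profile $f_+$ is the distribution $f^{(\infty)}_+$ of \autoref{prop:MetropolisLikeFilterf}, \eqref{eq:fplTrunInf}. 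Since the outer structure with $f_-$ from \eqref{eq:fmn} is already as claimed, it suffices to evaluate $\vO_a^M:=\int_{-\infty}^{\infty}f^{(\infty)}_+(t)\,g_a(t)\,\rd t$, where $g_a(t):=\vA^{a\dagg}(t)\vA^a(-t)=\e^{\ri\vH t}\vA^{a\dagg}\e^{-2\ri\vH t}\vA^a\e^{\ri\vH t}$ is an entire, bounded, operator-valued function and the distribution $f^{(\infty)}_+$ acts entrywise. Writing $f^{(\infty)}_+(t)=\mathrm{PV}\!\big[\tfrac{\e^{-2\sigma_E^2t^2-\ri\beta\sigma_E^2 t}}{\sqrt{8\pi}\,t\,(2t/\beta+\ri)}\big]+\sqrt{\pi/8}\,\delta(t)$ and using $g_a(0)=\vA^{a\dagg}\vA^a$, the $\delta$-part immediately contributes the last term $\sqrt{\pi/8}\,\vA^{a\dagg}\vA^a$, so the remaining job is the principal-value integral of $\tfrac{\e^{-2\sigma_E^2t^2-\ri\beta\sigma_E^2 t}}{\sqrt{8\pi}\,t\,(2t/\beta+\ri)}g_a(t)$.

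First I would desingularize. Away from $t=0$ the integrand is smooth and Gaussian-decaying, so the only problem is the simple pole of $1/t$. Splitting at an arbitrary $\theta>0$ and keeping the $|t|>\theta$ part untouched, on $|t|\le\theta$ I would use $\tfrac{\e^{-2\sigma_E^2t^2-\ri\beta\sigma_E^2 t}}{\sqrt{8\pi}\,t\,(2t/\beta+\ri)}=\tfrac{\e^{-2\sigma_E^2t^2-\ri\beta\sigma_E^2 t}+\ri(2t/\beta+\ri)}{\sqrt{8\pi}\,t\,(2t/\beta+\ri)}-\tfrac{\ri}{\sqrt{8\pi}\,t}$; the first term on the right is \emph{regular} at $t=0$, since its numerator vanishes there (it equals $1+\ri\cdot\ri=0$) and is differentiable, so pairing it with $g_a$ gives an ordinary integral which — after absorbing the scalar coefficient into $\vA^{a\dagg}(\cdots)\vA^a$ and using $\tfrac{\ri(2t/\beta+\ri)}{t(2t/\beta+\ri)}=\tfrac{\ri}{t}$ — is exactly the first term inside the braces of the claimed $\vO_a^M$, with the $\indicator(|t|\le\theta)$ bookkeeping precisely the $|t|\le\theta$ versus $|t|>\theta$ split. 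What remains is $-\tfrac{\ri}{\sqrt{8\pi}}\,\mathrm{PV}\!\int_{|t|\le\theta}\tfrac{g_a(t)}{t}\,\rd t$.

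Next I would evaluate that leftover by parity. The symmetric principal value kills the even-in-$t$ part of $g_a$, so only $g_a^{\mathrm{od}}(t)/t$ (an ordinary bounded function) contributes. Expanding $\e^{\pm\ri\vH t}=\cos(\vH t)\pm\ri\sin(\vH t)$ and $\e^{-2\ri\vH t}=\cos(2\vH t)-\ri\sin(2\vH t)$ and retaining the four terms with an odd total number of sine factors gives $g_a^{\mathrm{od}}(t)=\ri\big[\cos(\vH t)\vA^{a\dagg}\cos(2\vH t)\vA^a\sin(\vH t)+\sin(\vH t)\vA^{a\dagg}\cos(2\vH t)\vA^a\cos(\vH t)-\cos(\vH t)\vA^{a\dagg}\sin(2\vH t)\vA^a\cos(\vH t)+\sin(\vH t)\vA^{a\dagg}\sin(2\vH t)\vA^a\sin(\vH t)\big]$. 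Dividing by $t$, the two $\cos(2\vH t)$ terms — using $\tfrac{\sin(\vH t)}{t}=\vH\sinc(\vH t)$ and the prefactor $-\tfrac{\ri}{\sqrt{8\pi}}\cdot\ri=\tfrac{1}{\sqrt{8\pi}}$ — reproduce exactly the standalone integral $\tfrac{1}{\sqrt{8\pi}}\int_{-\theta}^{\theta}[\cos(\vH t)\vA^{a\dagg}\cos(2\vH t)\vA^a\vH\sinc(\vH t)+\sinc(\vH t)\vH\vA^{a\dagg}\cos(2\vH t)\vA^a\cos(\vH t)]\,\rd t$ in the claim; for the two $\sin(2\vH t)$ terms I would use $\vA^{a\dagg}\tfrac{\sin(2\vH t)}{t}\vA^a=\vA^{a\dagg}\,2\vH\sinc(2\vH t)\,\vA^a$, then reintroduce the missing cross-terms $\pm\ri\cos(\vH t)\,P\,\sin(\vH t)$ (with $P=\vA^{a\dagg}\vH\sinc(2\vH t)\vA^a$), which vanish upon integrating over the symmetric interval, so as to repackage $\cos(\vH t)\pm\ri\sin(\vH t)$ back into $\e^{\pm\ri\vH t}$, and recognize (using $\tfrac{2}{\sqrt{8\pi}}=\tfrac{1}{\sqrt{2\pi}}$) exactly the subtraction $-\indicator(|t|\le\theta)\,\vH\sinc(2\vH t)/\sqrt{2\pi}$ inside the sandwich $\e^{\ri\vH t}\vA^{a\dagg}(\cdots)\vA^a\e^{\ri\vH t}$ of the main integral. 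Assembling the three pieces yields the stated formula, and $\theta$-independence is automatic from the arbitrariness of the split.

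The hard part will be the bookkeeping of the last step rather than anything conceptual: respecting operator ordering (the scalar $1/t$ may be absorbed into the \emph{sandwiched} $\sin(2\vH t)$ but not onto the outer $\e^{\pm\ri\vH t}$'s), tracking the normalization constants ($\tfrac{1}{\sqrt{8\pi}}=\tfrac{1}{2\sqrt{2\pi}}$ together with the factor $2$ from $\tfrac{\sin(2\vH t)}{t}=2\vH\sinc(2\vH t)$), and checking that the odd terms reintroduced to re-form $\e^{\pm\ri\vH t}$ indeed cancel under the symmetric integral. A minor technical point is legitimizing the entrywise pairing of the distribution $f^{(\infty)}_+$ with the operator-valued $g_a$ and exchanging the defining $\eta\to0^+$ limit with integration; this follows from dominated convergence, since the $\eta$-truncated integrands are dominated by a fixed Gaussian-times-$O(1)$ integrable function thanks to the regularity at $t=0$ produced in the desingularization step.
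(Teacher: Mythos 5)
Your proposal is correct and follows essentially the same route as the paper's proof: both start from the PV-plus-delta form of $f^{(\infty)}_+$ in \eqref{eq:fplTrunInf}, apply the same desingularization identity to isolate the $\indicator(|t|\le\theta)\,\ri/(\sqrt{8\pi}\,t)$ piece, and then use parity of the symmetric principal value together with $\sin(\vH t)/t=\vH\sinc(\vH t)$ to convert the residual singular part into the regular $\sinc$ terms. The only (immaterial) difference is ordering — the paper first splits $\e^{-2\ri\vH t}$ into its $\cos(2\vH t)$ and $\sin(2\vH t)$ parts and then invokes parity only for the cosine piece, whereas you apply parity to the whole sandwich at once and repackage afterwards — and the bookkeeping you flag as the hard part checks out.
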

\begin{proof}
	By \autoref{prop:B_timedomain}-\autoref{prop:MetropolisLikeFilterf} we know that 
	\begin{align}
		\vO_a^M&=\lim_{\eta\rightarrow 0+}\int_{-\infty}^{\infty}\left(\indicator(|t|\geq\eta)\frac{\e^{-2\sigma_E^2t^2 -\ri \beta \sigma_E^2 t}}{\sqrt{2\pi} t (2 \frac{t}{\beta}+\ri)}+\sqrt{\frac{\pi}{2}}\delta(t)\right) \vA^{a\dagg}(t)\vA^a(-t)d t.
	\end{align}
	We decompose the above integral in order to remove its singularity at $t=0$. We start with the decomposition 
 \begin{align}
     \frac{\e^{-2\sigma_E^2t^2 -\ri \beta \sigma_E^2 t}}{\sqrt{2\pi} t (2 \frac{t}{\beta}+\ri)}=\frac{\e^{-2\sigma_E^2t^2 -\ri \beta \sigma_E^2 t}+\indicator(|t|\leq\theta)\ri\left(2\frac{t}{\beta}+\ri\right)}{\sqrt{2\pi} t (2 \frac{t}{\beta}+\ri)}
	-\indicator(|t|\leq\theta)\frac{\ri}{\sqrt{2\pi} t},
 \end{align} 
implying that
	\begin{align}
		\frac{\e^{-2\sigma_E^2t^2 -\ri \beta \sigma_E^2 t}}{\sqrt{2\pi} t (2 \frac{t}{\beta}+\ri)}\vA^{a\dagg}(t)\vA^a(-t)
		&=\e^{\ri \vH t}\vA^{a\dagg}\cdot\left(
		\frac{\e^{-2\sigma_E^2t^2 -\ri \beta \sigma_E^2 t}}{\sqrt{2\pi} t (2 \frac{t}{\beta}+\ri)}\e^{-2\ri \vH t}\right)\vA^{a}\e^{\ri \vH t}\\&	
		=\e^{\ri \vH t}\vA^{a\dagg}\cdot \left(\frac{\e^{-2\sigma_E^2t^2 -\ri \beta \sigma_E^2 t}+\indicator(|t|\leq\theta)\ri\left(2\frac{t}{\beta}+\ri\right)}{\sqrt{2\pi} t (2 \frac{t}{\beta}+\ri)}
		-\indicator(|t|\leq\theta)\frac{\ri}{\sqrt{2\pi} t}\right)\e^{-2\ri \vH t}\vA^{a}\e^{\ri \vH t},\\[-10mm]	
	\end{align}
	where
	\begin{align}
		\frac{\ri}{\sqrt{2\pi} t}\e^{-2\ri \vH t}
		&=\frac{\ri}{\sqrt{2\pi} t}\cos(2 \vH t)
		+\frac{1}{\sqrt{2\pi} t}\underset{\!\!2\vH t\sinc(2\vH t)\!\!}{\underbrace{\sin(2\vH t)}}	
		=\ri\frac{\cos(2 \vH t)}{\sqrt{2\pi}t}
		+\sqrt{\frac{2}{\pi}}\vH\sinc(2\vH t).	
	\end{align}
	Finally, observe that due to parity reasons, we have
	\begin{align}
		\int_{-\infty}^{\infty}&\indicator(|t|\geq\eta)\e^{\ri \vH t}\vA^{a\dagg}\indicator(|t|\leq\theta)\frac{\ri}{\sqrt{2\pi}}\frac{\cos(2 \vH t)}{t}
		\vA^{a}\e^{\ri \vH t}\rd t\\&
		=\frac{-1}{\sqrt{2\pi}}\int_{-\infty}^{\infty}\indicator(\eta\leq|t|\leq\theta)\left(\cos(\vH t)\vA^{a\dagg}\frac{\cos(2 \vH t)}{t}
		\vA^{a}\sin(\vH t)+\sin(\vH t)\vA^{a\dagg}\frac{\cos(2 \vH t)}{t}
		\vA^{a}\cos(\vH t)\right)\rd t\\&
		=\frac{-1}{\sqrt{2\pi}}\int_{-\infty}^{\infty}\indicator(\eta\leq|t|\leq\theta)\bigg(\cos(\vH t)\vA^{a\dagg}\cos(2 \vH t)
		\vA^{a}\vH\sinc(\vH t)+\sinc(\vH t)\vH\vA^{a\dagg}\cos(2 \vH t)
		\vA^{a}\cos(\vH t)\bigg)\rd t.
	\end{align}	
	Since in the above provided decomposition, every (matrix) function is bounded in the neighborhood of $0$, we can obtain the $\eta\rightarrow 0+$ limit by simply removing the indicator $\indicator(|t|\geq\eta)$.
	\begin{align}
		\lim_{\eta\rightarrow 0+}\int_{-\infty}^{\infty}&\indicator(|t|\geq\eta)\frac{\e^{-2\sigma_E^2t^2 -\ri \beta \sigma_E^2 t}}{\sqrt{2\pi} t (2 \frac{t}{\beta}+\ri)} \vA^{a\dagg}(t)\vA^a(-t)d t\\&
		=\int_{-\infty}^{\infty}\e^{\ri \vH t}\vA^{a\dagg}\cdot \left(\frac{\e^{-2\sigma_E^2t^2 -\ri \beta \sigma_E^2 t}+\indicator(|t|\leq\theta)\ri\left(2\frac{t}{\beta}+\ri\right)}{\sqrt{2\pi} t (2 \frac{t}{\beta}+\ri)}\e^{-2\ri \vH t}
		-\indicator(|t|\leq\theta)\sqrt{\frac{2}{\pi}}\vH\sinc(2\vH t)\right)\vA^{a}\e^{\ri \vH t}\rd t\\&
		+\frac{1}{\sqrt{2\pi}}\int_{-\theta}^{\theta}\cos(\vH t)\vA^{a\dagg}\cos(2 \vH t)
		\vA^{a}\vH\sinc(\vH t)+\sinc(\vH t)\vH\vA^{a\dagg}\cos(2 \vH t)
		\vA^{a}\cos(\vH t)\rd t.
	\end{align}			
	We conclude by noting that $\int_{-\infty}^{\infty}\sqrt{\frac{\pi}{2}}\delta(t) \vA^{a\dagg}(t)\vA^a(-t)d t=\sqrt{\frac{\pi}{2}}\vA^{a\dagg}\vA^a$.
\end{proof}
Note that this exact formula could be directly and efficiently implemented using QSVT. However, the gains are minimal, as it would only reduce subnormalization from $\bigO{\log(\beta\nrm{H}/\epsilon)}$ (this being the $\ell_1$ norm required to achieve $\epsilon$ precision in \autoref{cor:coherentMetropolis}) to $\bigO{\log(\beta\nrm{H})}$ (the $\ell_1$ norm of the weight function corresponding to the natural choice $\theta=1/{\beta\nrm{\vH}}$ in \autoref{prop:exactMetro}). 

More interestingly, this exact formula also seems to hold in the infinite-dimensional case, giving rise to exact detailed balance in the infinite-dimensional version as well. We leave it for future work to verify that the construction and its analysis can indeed be generalized to infinite-dimensional systems.

Moreover, the above result actually also implies that the $\Lword{}$ corresponding to \eqref{eq:fplTrun} also well-approximates its $s\rightarrow \infty$ limit, i.e., the above Metropolis-like $\Lword{}$. The argument goes as follows: first, ``round'' the Hamiltonian $\vH$ to discretize its spectrum at some finite resolution $\ll 1/\beta$. Due to the form of \eqref{eq:fplTrun}, the resulting perturbation of the \Lword{} is bounded. Then, take the $s\rightarrow \infty$ limit. Since \eqref{eq:fplTrunHat} is exponentially close to its limit \eqref{eq:fplMetHat} and there is a limited number of Bohr frequencies due to rounding, the resulting perturbation of the \Lword{} is once again bounded. Finally, undo the rounding, which again causes a bounded perturbation due to the exact form of the Metropolis coherent term (\autoref{prop:exactMetro}). Carefully executing these bounds in the $\sigma_{E}=\beta$ case should show that for $s=\Theta((\beta\nrm{H}+\log(1/\epsilon)+1)^2)$ the resulting \Lword{} is $\epsilon$-close to its $s=\infty$ limit.

Now we show how to approximate the exact Metropolis \Lword{} in a different efficient way.

\MetropolisLikeFilter*
\begin{proof}
First, let us establish the norm bound~\eqref{eq:B2Meta}. Observe that
\begin{align}
	b_2^{M,\eta}(t)&=b_2^{M,1}(t)-\frac{1}{2\sqrt{2}\pi}\indicator(\eta<|t|\leq1)\frac{\ri}{t}.
\end{align}
Using the above identity and applying a triangle inequality and then Hölder's inequality gives 
\begin{align}
    \normp{b_2^{M,\eta}}{1} &\leq\nrm{(1+4t^2)^{-1}}_2\nrm{(1+4t^2)b_2^{M,1}}_2\!+\!\frac{1}{2\sqrt{2}\pi}\lnormp{\indicator(\eta<|t|\leq1)\frac{\ri}{t}}{1}
\end{align}
To obtain~\eqref{eq:B2Meta}, use that $\int_\eta^1\frac{1}{t}\rd t=\ln(1/\eta)$,  $\int_{-\infty}^\infty\frac{1}{(1+4t^2)^2}\rd t=\frac{\pi}{4}$, $\int_{1}^\infty\big|\frac{\exp(-2t^2)(2t-\ri)}{t}\big|^2\rd t=\e^{-4}-\sqrt{\pi}\erfc(2)$, and a direct computation of 
$\int_{-1}^1\Big|\frac{\exp\L(-2t^2 -\ri t\R)(2t-\ri)+\ri\left(1+4t^2\right)}{t}\Big|^2<16$.	
	
Next, setting $\theta=\eta\beta$ in \autoref{prop:exactMetro} and scaling the variables $t,t'$ by a factor of $\beta$ reveals that 
	\begin{align}\label{eq:BdiffProof}
		\vB^M-\vB^{M,\eta}=\frac{1}{\sqrt{2}\pi^2}\sum_{a\in A}\int_{-\infty}^{\infty}b_1(t)\e^{-\ri \beta\vH t} \vQ_a\e^{\ri \beta\vH t}\rd t,
	\end{align}
	where
	\begin{align}
		\vQ_a&=
		\frac{1}{2}\int_{-\eta}^{\eta}\cos(\beta\vH t')\vA^{a\dagg}\cos(2 \beta\vH t')
		\vA^{a}\beta\vH\sinc(\beta\vH t')+\sinc(\beta\vH t')\beta\vH\vA^{a\dagg}\cos(2 \beta\vH t')
		\vA^{a}\cos(\beta\vH t')\rd t'\\&
		-\int_{-\eta}^{\eta}\e^{\ri \beta\vH t'}\vA^{a\dagg}\big(\beta\vH\sinc(2\beta\vH t')\big) \vA^a\e^{\ri \beta\vH t'}\rd t'.
	\end{align}
	We can decompose in the second term $\e^{\ri \beta\vH t'}=\cos(\beta\vH t')+\ri\sin(\beta\vH t')$. Due to parity reasons, we can see that the second term can be replaced by 
	\begin{align}
		-\int_{-\eta}^{\eta}\cos(\beta\vH t')\vA^{a\dagg}\big(\beta\vH\sinc(2\beta\vH t')\big) \vA^a\cos(\beta\vH t')-\sin(\beta\vH t')\vA^{a\dagg}\big(\beta\vH\sinc(2\beta\vH t')\big) \vA^a\sin(\beta\vH t')\rd t'.
	\end{align}	
	Now, let us define
	\begin{align}
		\mu_a(\vX):=&\cos(\vX)\vA^{a\dagg}\cos(2 \vX)
		\vA^{a}\vX\sinc(\vX)+\sinc(\vX)\vX\vA^{a\dagg}\cos(2 \vX)
		\vA^{a}\cos(\vX)\\&
		-2\cos(\vX)\vA^{a\dagg}\big(\vX\sinc(2\vX)\big) \vA^a\cos(\vX)
		+2\sin(\vX)\vA^{a\dagg}\big(\vX\sinc(2\vX)\big) \vA^a\sin(\vX),
	\end{align}	
	so that 
	\begin{align}
		\vQ_a&=
		\int_{-\eta}^{\eta}\frac{\mu_a(\beta\vH t')}{2t'}\rd t'.
	\end{align}
	If $\vX$ is Hermitian, by the triangle inequality, we can see that $\nrm{\mu_a(\vX)}\leq 6\nrm{\vA^a}^2\nrm{\vX}$.
	Moreover, from its definition, we can see that $\mu_a(\vX)$ is an odd analytic entire function, and its derivative is $0$ at $\vX=0$.
	From this it directly follows that $\mu_a(\vX)=\bigO{\nrm{\vA^a}^2\nrm{\vX}^3}$ for every Hermitian $\vX$. 
	Due to the triangle inequality,   
	\begin{align}
		\nrm{\vQ_a}&\leq
		\int_{-\eta}^{\eta}\!\frac{\nrm{\mu_a(\beta\vH t')}}{2|t'|}\rd t'
		\leq \int_{-\eta}^{\eta}\nrm{\vA^a}^2\nrm{\beta\vH}\min\left(\frac{6}{2},\bigO{\nrm{\beta\vH t'}^2}\right)\rd t'
		=\nrm{\vA^a}^2\min\bigg(6\eta\beta\nrm{\vH},\bigO{(\eta\beta\nrm{\vH})^3}\bigg).
	\end{align}	
	Finally, we can prove \eqref{eq:Bdiff} for individual jumps by combining the above bound and the inequality $\nrm{b_1}_1\leq 1$ within \eqref{eq:BdiffProof}. To obtain the bound in terms of the norm-of-sum $\norm{\sum_a \vA^{a\dagger}\vA^a}$, note that for any matrices $\vX,\vY,\vZ$, we have
 \begin{align}
     \norm{\sum_a \vX \vA^{a\dagger}\vY \vA^a\vZ}&\le \norm{\vX}\cdot \norm{\sum_a \vA^{a\dagger}\otimes \bra{a}}\cdot \norm{\vY\otimes\vI} \cdot\norm{\sum_a \vA^a\otimes \ket{a}}\cdot \norm{\vZ} \\
     &=\norm{\vX} \norm{\vY} \norm{\vZ} \cdot \norm{\sum_a \vA^{a\dagger}\vA^a}.
 \end{align}
 And rewrite $\mu_a(\vX)$ as a linear combination of terms each bounded by $\CO(\norm{\sum_a \vA^{a\dagger}\vA^a}\norm{\vX}^3)$ by canceling out the linear $\CO(\vX)$ terms.
\end{proof}

\section{The parent Hamiltonian and quasi-adiabatic algorithms}\label{sec:arealaw}

The parent Hamiltonian $\vec{\CH}_{\beta}$ can be understood as a frustration-free Hamiltonian (up to a sign, \autoref{prop:exact_discriminant_fixedpoint}). In this section, we import existing analytic techniques for studying gapped phases to understand the structure of the purified Gibbs state. Consider a geometrically local Hamiltonian on $D$-dimensional lattice with single-site jumps $\norm{\vA^a} = 1$ which acts \textit{extensively} $\norm{\sum_{a \in A} \vA^{a\dagger}\vA^a} \propto n$) on the lattice (here, the natural normalization is such that $\norm{\vec{\CH}_{\beta}} \propto n.$). Then, the associated parent Hamiltonian also respects the geometric locality, up to an exponentially decaying tail. In particular, if the parent Hamiltonian has a system-sized independent spectral gap across a range of temperatures $[0,\beta],$ then one can efficiently prepare the purified Gibbs state by employing \textit{quasi-adiabatic evolution}~\cite{hastings2005quasiadiabatic,hastings2010quasi,bachmann2012automorphic} across the temperatures.

\begin{lem}[{Quasi-adiabatic evolution~\cite[Proposition 2.4]{bachmann2012automorphic}}]\label{lem:quasi_adibatic_W}
    Consider a one-parameter family of Hamiltonians 
    \begin{align}
     \vH(s)\quad \text{for}\quad s \in [0,1]   \quad \text{with minimal gap}\quad \Delta.
    \end{align} 
    Then, the family of ground-state projectors can be generated by a time-dependent Hamiltonian
    \begin{align}
        \vP'(s) = \ri [ \vW(s),\vP(s)]\quad \text{where}\quad 
    \vW(s):= \int_{-\infty}^{\infty}\rd t\ w(t) \int_0^t du \ \e^{\ri t \vH(s)}\vH'(s)\e^{-\ri t \vH(s)}\rd t\quad \text{for}\quad s \in [0,1]\label{eq:W}
\end{align}
for any weight function $w(t)$ satisfying $\int_{-\infty}^{\infty} \labs{w(t)}\rd t < \infty,$ 
\begin{align}
    \int_{-\infty}^{\infty} w(t)\rd t &=1 \quad \text{and}\quad \tilde{w}(\omega) = 0 \quad \text{if}\quad \labs{\omega} \ge \Delta.
\end{align}
\end{lem}
In particular, a handy choice of the filter is \cite[Lemma 2.3]{bachmann2012automorphic}. 
\begin{align}
    w_{\Delta}(t) = c_{\Delta} \prod_{n=1}^{\infty} \L(\frac{\sin(a_n t)}{a_n t}\R)^{2}\quad \text{where}\quad a_n = \frac{a_1}{n\ln(n)} \quad \text{for}\quad n \ge 2
\end{align}
and $a_1$ such that $\sum_{n=1}^{\infty} a_n = \Delta/2$.\footnote{This is a convolution of uniform distribution over intervals $[-a_1,a_1],[-a_1,a_1],\cdots, [-a_n,a_n],[-a_n,a_n],\cdots $ in the frequency domain.} It holds that
\begin{align}\label{eq:w_t}
    0 \le w_{\Delta}(t) \le 2(e\Delta)^2 t \exp\L(-\frac{2}{7}\frac{\Delta t}{\ln(\Delta t)^2}\R)\quad \text{where}\quad t \ge \frac{e^{1/\sqrt{2}}}{\Delta}.
\end{align}

For our purposes, we simply identify
\begin{align}
    \vH(s)\leftarrow -\vec{\CH}_{s\beta} \quad \text{and}\quad \Delta \leftarrow \min_{s}\lambda_{gap}( \vec{\CH}_{s\beta})
\end{align}
and note that at $\beta =0,$ the purified Gibbs state is simply the Bell state across two copies. We need to flip the sign since the parent Hamiltonian is $\CH$ (and the Lindbladian spectrum) is always nonpositive. The discriminant gap $\Delta$ is expected to be independent of the system size when the Gibbs state has decay of correlation (under the normalization $\norm{\vA^a} =1$), but may diverge with the system size near phase transitions. 

    Algorithmically, the above should be considered an alternative to modern adiabatic algorithms based on amplitude amplification~\cite[Appedix G]{chen2023QThermalStatePrep}. The quasi-adiabatic evolution is particularly powerful in the case of geometrically local Hamiltonians, as it manifestly gives a quasi-local unitary circuit that may be significantly more efficient than general-purpose global adiabatic algorithms. We claim that, the unitary $\vV(1)$ be implemented using
    \begin{align}
        n \cdot \frac{\poly_{D}(\beta+1)}{\poly(\Delta)} \cdot  \poly\log(n/\epsilon)\quad \text{2-qubit gates and ancillas}
    \end{align}
    up to $\epsilon$-error in the operator norm by the HHKL scheme~\cite{haah2018QAlgSimLatticeHam}. In the rest of the section, we instantiate the main algorithmic components (\autoref{prop:high_precision_quasi},\autoref{lem:beta_ctrl_parent_Ham},\autoref{lem:beta_ctrl_parent_Ham_derivative}) and defer the error and locality analysis to~\cite{bergamaschi2025quantum}.

\subsection{Simulating the quasi-adiabatic operator using time-dependent Hamiltonian simulation}

The main idea of the HHKL scheme~\cite{haah2018QAlgSimLatticeHam} is to glue together individual polylogarithmically-sized patches, which are each simulated to a high precision.
Here, we estimate the algorithmic cost for the high-precision time-dependent Hamiltonian simulation subroutine. Consider a discretization $\bar{s}\in S_{s_0}$ of $[0,1]$ with evenly spaced points.

\begin{prop}[High precision simulation of quasi-adiabatic evolution]
\label{prop:high_precision_quasi}
Consider the one-parameter family of unitaries generated by the time-dependent Hermitian matrix $\vW(s)$ as in~\eqref{eq:W}, defined by the parent Hamiltonian
\begin{align}
     \vV'(s) = \ri \vW(s)\vV(s) \quad \text{for each}\quad s \in [0,1], \quad \text{and}\quad \vV(0)=\vI
\end{align}
with parameter $\Delta.$ Suppose that the Hamiltonian and its derivatives satisfy
\begin{align}
    \sup_{s\in [0,1]}\norm{\vH(s)} \le \alpha, \quad \sup_{s\in [0,1]}\norm{\vH'(s)} \le \alpha_1\quad \text{and}\quad \sup_{s\in [0,1]}\norm{\vH''(s)} \le \alpha_2.
\end{align}
Then, $\vV(s)$ can be simulated to $\epsilon$-spectral norm using 
\begin{align}
&\tCO\L((\frac{\alpha}{\Delta}+1)(\frac{\alpha_1}{\Delta}+1)\R)\quad &\text{block-encodings of}\quad \sum_{\bar{s} \in S_{s_0}} \frac{\vH(\bar{s})}{\alpha}  \otimes \ket{\bar{s}}\bra{\bar{s}}\\
&\tCO(\frac{\alpha_1}{\Delta}+1)\quad &\text{block-encodings of} \sum_{\bar{s} \in S_{s_0}} \frac{1}{\alpha_1}\vH'(\bar{s}) \otimes \ketbra{\bar{s}}{\bar{s}},\\
&\tCO(\frac{\alpha_1}{\Delta}+1) &\text{additional two-qubit gates and additional ancillas}.
\end{align}
with discretization parameter $\log(S_{s_0}) =\tCO(1).$ Here, $\tCO(1)$ absorbs polylogarithmic factors of $1/\epsilon, \alpha_1,\alpha_2,\Delta.$ 
\end{prop}

Note that the dependence on $1/\Delta$ is quadratic: one from the width of the filter, the other from the normalization $\int\labs{w(t)t}\rd t \propto 1/\Delta.$
This is simply an application of the high-precision time-dependent Hamiltonian simulation~\cite[Corollary 4]{low2018hamiltonianInt} for the quasi-adiabatic operator $\vW(s).$ Here, $\alpha,$ $\alpha'$ both scale with the number of jumps, so we cannot directly simulate the full system; the above is intended to be invoked, according to~\cite{haah2018QAlgSimLatticeHam}, for suitable subsets of the jumps and the neighbouring Hamiltonian supported on polylogarithmic patches. 

To construct $\vW(s),$ the main calculation is an explicit implementation of the LCU weight for $w(t).$
Consider
\begin{align}
v(t):=\sqrt{c_{\Delta}} \prod_{n=1}^{\infty} \frac{\sin(a_n t)}{a_n t} \quad \text{such that}\quad v(t)^2 = w(t).
\end{align}
Since this function in the frequency domain has compact support in $[-\frac{\Delta}{2},\frac{\Delta}{2}],$ we can consider the Chebyshev series.

\begin{lem}
    The Chebyshev series for the Fourier weight $\hat{v}(\nu),$ supported on $[-\Delta/2,\Delta/2]$,
\begin{align}
    \hat{v}_{\Delta}(\nu) = \sqrt{\Delta} \sum_{j=1}^{\infty} a_jT_j(2\nu/\Delta )\quad \text{satisifies that}\quad \labs{a_j} \le c_2 e^{ - c_1 p/\ln^2(p)}
\end{align}
for some absolute constants $c_1,c_2>0$.
\end{lem}
\begin{proof}
Extract the Chebyshev coefficients by the orthogonality of Chebyshev polynomials
\begin{align}
    \sqrt{\Delta}a_p &= \int_0^{\pi} T_p(\cos(\theta)) \hat{v}_{\Delta}(\frac{\Delta}{2}\cos(\theta)) \rd \theta\\
    &= \int_{0}^{\pi} \cos(p\theta) \hat{v}_{\Delta}(\frac{\Delta}{2}\cos(\theta)) \rd \theta \\
    &= \int_{0}^{\pi} \cos(p\theta) \frac{1}{\sqrt{2\pi}}\int_{-\infty}^{\infty} v_{\Delta}(t)e^{-i t\frac{\Delta}{2} \cos(\theta)} \rd t\rd \theta\\
    &= \frac{\pi i^{p}}{\sqrt{2\pi}}\int_{-\infty}^{\infty} v_{\Delta}(t)  J_n(-\Delta t/2) \rd t \tag*{(By definition of Bessel functions )}\\
    &= \frac{\pi i^{p}}{\sqrt{2\pi}}\int_{-\infty}^{\infty} \frac{2v_{\Delta}(2t/\Delta )}{\Delta}  J_p(-t) \rd t \tag*{(Change of integration variable)}.
\end{align} 

We evaluate the integral using the piecewise bound that $J_{\alpha}(t)\le \frac{\labs{t/2}^{\alpha}}{\Gamma(\alpha+1)} $ for real $t$,\footnote{https://dlmf.nist.gov/10.9} $\alpha > -1/2,$ and $\labs{J_{\alpha}(t)}\le 1$ for real $t$, $\alpha \ge 0$ \footnote{https://dlmf.nist.gov/10.14}
\begin{align}
    \labs{a_p}&\le \frac{c}{\sqrt{\Delta}}\int_{\labs{t}\le T} \labs{v_{\Delta}(2t/\Delta)} \frac{\labs{t/2}^{p}}{p!} \frac{2\rd t}{\Delta} + \frac{c}{\sqrt{\Delta}}\int_{\labs{t}> T} \labs{v_{\Delta}(2t/\Delta)} \frac{2\rd t}{\Delta}\\
    &\le c 2^{-p} \undersetbrace{\le const.}{\int_{-\infty}^{\infty} \labs{v_{\Delta}(2t/\Delta)} \frac{2\rd t}{\Delta\sqrt{\Delta}}} + c \int_{\labs{t}> T} \labs{v_{\Delta}(2t/\Delta)} \frac{2\rd t}{\Delta \sqrt{\Delta}} \tag*{(Choosing that $T =\frac{p}{e}.$)}\\
    &\le c 2^{-p}+ c e^{ - c_1 p/\ln^2(p)} \le c_2 e^{ - c_1 p/\ln^2(p)},
\end{align}
as advertised.
\end{proof}

\begin{lem}[State preparation (frequency domain)]\label{lem:stateprep_v}
For any discretization of an interval $\{\bar{\nu}\} \subset I \subset [\Delta/2,\Delta/2]$ into $2^{n}$ many points with spacing $\omega_0$, there exists an absolute constant $c$  and an $\epsilon$ approximate block encoding such that 
\begin{align}
\frac{c}{\sqrt{\Delta}} \sum_{\bar \nu} \hat{v}(\bar{\nu}) \ket{\bar{\nu}}\bra{\bar{\nu}}
\end{align}
at gate complexity $\CO( n\log \frac{1}{\epsilon} (\log \log \frac{1}{\epsilon})^3).$ 
\end{lem}
\begin{proof}
    Truncate the Chebyshev series at order $p = \Theta(\log \frac{1}{\epsilon} \cdot (\log \log \frac{1}{\epsilon})^3 )$ and apply QSVT to the operator $ \frac{2}{\Delta}\sum_{\bar \nu = -\Delta/2}^{\Delta/2} \bar{\nu} \ket{\bar{\nu}}\bra{\bar{\nu}}.$ 
\end{proof}

\begin{lem}[State preparation (time domain)]\label{lem:stateprep_t}
For the $\bt \in S_{t_0},$ with $S_{t_0}= N = 2^n$, suppose that $2\pi/t_0 \ge \Delta.$ Then, there exists a state preparation circuit for an $\CO((\Delta N t_0)^2 e^{-\frac{2\Delta N t_0}{ 7\ln^2(2\Delta N t_0)}})$-approximation for
\begin{align}
\frac{\sum_{\bt \in S_{t_0}} v(\bt) \ket{\bt}}{\norm{\sum_{\bt \in S_{t_0}} v(\bt) \ket{\bt}}_2}
\end{align}
at gate complexity $\CO( \log(\Delta Nt_0)\log \frac{1}{\epsilon} (\log \log \frac{1}{\epsilon})^3 + \log(N)^2).$ 
\end{lem}
\begin{proof}
First, apply $\frac{c}{\sqrt{\Delta}} \sum_{\bar \nu = \omega_0(-N'/2 -1) }^{\omega_0N'/2} \hat{v}(\bar{\nu}) \ket{\bar{\nu}}\bra{\bar{\nu}}$ (\autoref{lem:stateprep_v}) to the uniform superposition $\frac{1}{\sqrt{N'}}\sum_{\bar \nu = \omega_0(-N'/2 -1) }^{\omega_0N'/2} \ket{\bar{\nu}}$ on $N'=2^{n'} \le N$ points such that $\omega_02^{n'} > \Delta,$ which gives $\propto \sum_{\nu \in S_{\omega_0}} \hat{v} (\bnu) \ket{\nu}.$ Next, consider a Quantum Fourier transform on $N$ dimensions with spacing 
$\omega_0, t_0$ such that $\omega_0t_0 = \frac{2\pi}{N}$~\cite{chen2023QThermalStatePrep}. Then,
\begin{align}
\frac{\sqrt{2\pi}}{\sqrt{t_0N}}(QFT)^{-1} \sum_{\bnu \in S_{\omega_0}} \hat{v} (\bnu) \ket{\bnu}&=\frac{\sqrt{2\pi}}{\sqrt{t_0N}}(QFT)^{-1} \sum_k \sum_{\bnu \in S_{\omega_0}} \hat{v} (\bnu + kN\omega_0) \ket{\bnu} \tag*{(Since $N\omega_0 = \frac{2\pi}{t_0}\ge \Delta$)}\\
&= \sqrt{t_0}\sum_{k=-\infty}^{\infty} \sum_{\bt \in S_{t_0}} (v (\bt+kNt_0))\ket{\bt} \tag*{(By~\cite[Fact A.1]{chen2023QThermalStatePrep})}\\
&\approx \sqrt{t_0}\sum_{\bt \in S_{t_0}} v (\bt)\ket{\bt}
\end{align}
the last approximation is bounded by $\CO((\Delta N t_0)^2 e^{-\frac{2\Delta N t_0}{ 7\ln^2(2\Delta N t_0)}})$, due to the decay of $v(t)$ at $t\ge N t_0/2.$ Lastly, tune a suitable subnormalization constant and apply constant rounds of exact amplitude amplification~\cite[Theorem 3]{mcArdle2022QStatePreparationWOArithm}. The gate complexity is due to~\autoref{lem:stateprep_v} and QFT.
\end{proof}

\begin{cor}[Block-encoded, controlled Quasi-adiabatic evolution]\label{cor:beta_ctrl_Qausi}
In the setting of~\autoref{prop:high_precision_quasi}, for any discretization $\bar{s},$ there exists an $\epsilon$-approximate block encoding for the controlled Quasi-adiabatic evolution 
\begin{align}
    \sum_{s \in } \frac{\vW(\bar{s})}{c_1\alpha_1} \otimes \ketbra{\bar{s}}{\bar{s}} 
\end{align}
for an absolute constant $c_1$, using 
\begin{align}
    &\tCO(\frac{\alpha}{\Delta}+1)\quad &\text{Block-encodings of}\quad \frac{\vH(\bar{s})}{\alpha}  \otimes \ket{\bar{s}}\bra{\bar{s}},\\
    &\tCO(1)\quad &\text{Block-encodings of}\quad \frac{\vH'(\bar{s})}{\alpha_1} \otimes \ket{\bar{s}}\bra{\bar{s}},\\
    &\tCO(1) &\text{additional two-qubit gates and additional ancillas}.
\end{align}
\end{cor}
Here, the additional ancillas exclude those involved with the block-encodings for $\vH$, $\vH'.$
\begin{proof}
Our analysis will be pointwise for each $\bar{s}.$ To implement the double integral, we write that
\begin{align}
\int_{-\infty}^{\infty}\rd t\ w(t) \int_0^t du e^{\ri u \vH(s)}\vH^{'a}(s)\e^{-\ri u \vH(s)} \approx T\int_{-T}^{T}\rd t\ w(t) \frac{1}{T}\int_0^t du e^{\ri u \vH(s)}\vH^{'a}(s) e^{-\ri u \vH(s)},    
\end{align}
and implement the discrete approximation $\frac{1}{T} \int_0^{\bar{t}} du e^{\ri u \vH(s)} \vH^{'a}(s)e^{-\ri u \vH(s)}\otimes \ket{\bar{t}}\bra{\bar{t}}$ by another LCU and combine with the state preparation (\autoref{lem:stateprep_t}).
\end{proof}

\subsubsection{Proof of \autoref{prop:high_precision_quasi}}
We quickly checked that this fits into the conditions for~\cite[Theorem 3, Corollary 4]{low2018hamiltonianInt}. 
\begin{proof}
We compute
    \begin{align}
        b&= \sup_{s}\norm{\vW(s)} \le \sup_{s} \norm{\vH'(s)} \int_{-\infty}^{\infty} \labs{w(t)t} \rd t  \le \frac{c_1\alpha_1}{\Delta}\\
        b_1&= \int_0^{1} \norm{\vW'(s)}\rd s \\ 
        &\le \sup_{s}\norm{\vW'(s)} \le \sup_{s}\norm{\vH''(s)}\int_{-\infty}^{\infty} \labs{w(t)t} \rd t  + \sup_{s}\norm{[\vH(s),\vH'(s)]}\int_{-\infty}^{\infty} \labs{w(t)t^2} \rd t \le \frac{c_1\alpha_1}{\Delta} + \frac{c_2\alpha\alpha_2}{\Delta^2},
    \end{align}
    and suppose the block encoding for $\sum_{\bar{s} \in S_{s_0}} \vW(\bar{s})  \otimes \ket{\bar{s}}\bra{\bar{s}}$ uses $n_a$-many ancillas. Then, by~\cite[Theorem 3, Corollary 4]{low2018hamiltonianInt}, we can choose the number of discretization points $M = \labs{S_{s_0}} = \Theta(\frac{b_1+ b^2}{b\epsilon})$ such that $\vV(s)$ can be simulated using $\CO( b \log(b/\epsilon)/\log\log(b/\epsilon)+ \log(1/\epsilon)/\log\log(1/\epsilon))$ queries to $\sum_{\bar{s} \in S_{s_0}} \vW(\bar{s})  \otimes \ket{\bar{s}}\bra{\bar{s}}$, $\CO( n_a + \log(M))$ ancillas, and $\CO(b (n_a + \log(M)))$ primitive gates. Instantiate block-encodings for $\vW$ (\autoref{cor:beta_ctrl_Qausi}) using that of $\vH,\vH'$ to conclude the proof.
\end{proof}

\subsection{Additional gates for $\beta$-control gates}
The quantum algorithm subroutine for HHKL uses high-precision time-dependent Hamiltonian simulation. The main input to this framework is to have an additional control register for the cooling schedule.

Consider a discretized, linear annealing schedule 
\begin{align}
    S_{\beta_0} = \{0, \beta_0, \cdots ,\beta \}\label{eq:linear-schedule}
\end{align}
and the associated set of uncertainties 
\begin{align}
    \{\sigma(0),\sigma(\beta_0),\cdots \sigma(\beta)\}. 
\end{align}
as long as $\sigma \le \frac{1}{\beta}.$ For simplicity, we take $\sigma = \frac{1}{\beta},$ and in that case, the Gaussian filters $f_{\sigma(\beta)}\equiv f_{\beta}$, and transition weights $\gamma_{\sigma,\beta}\equiv \gamma_{\beta}$ are functions of the temperature. (More generally, if one wishes to independently tune $\sigma < 1/\beta,$ one would need a controlled version of the state-preparation unitaries for the double time-integrals~\autoref{cor:b_2dFT}, \autoref{cor:N_timedomain_explicit}.)

The easiest extension to the algorithmic framework here is to use temperature-time controlled Hamiltonian simulation
    \begin{align}
	\sum_{\bbeta \in S_{\beta_0}} \sum_{\bt \in S_{t_0}} \e^{\pm \ri \bt \bbeta \vH} \otimes \ketbra{\bt}{\bt} \otimes \ketbra{\bbeta}{\bbeta}
\end{align}
which can be obtained by adding $\bbeta$-controls on the synthesis of the time-controlled Hamiltonian simulation. Indeed, all the integral expressions can be written so that the only $\beta$ dependence is in the $\beta$-rescaling of the Heisenberg dynamics.

\begin{lem}[Block-encoded {$\beta$}-controlled parent Hamiltonian]\label{lem:beta_ctrl_parent_Ham}
Consider the linear temperature schedule~\eqref{eq:linear-schedule} for $0\rightarrow\beta.$ Then, in the setting of ~\autoref{thm:D_cost}, the gate complexity extends to the block-encoding for the $\beta$-controlled parent Hamiltonian 
\begin{align}
    \sum_{\bbeta \in S_{\beta_0}} \frac{1}{2}\vec{\CH}_{\bbeta} \otimes \ketbra{\bbeta}{\bbeta},
\end{align}
by extending all uses of controlled Hamiltonian simulation with $\beta$-controls. 
\end{lem}

To implement the quasi-adiabatic evolution, we also need access to the derivatives.
\begin{lem}[Block-encoded {$\beta$}-controlled parent Hamiltonian]\label{lem:beta_ctrl_parent_Ham_derivative}
Consider the linear temperature schedule~\eqref{eq:linear-schedule} for $0\rightarrow\beta,$ and suppose $\norm{\sum_{a\in A} [\vH,\vA^a]^{\dagger}[\vH,\vA^a]}^{1/2} \le \alpha.$ Then, there is an absolute constant $c$ such that block-encoding for the $\beta$-controlled parent Hamiltonian 
\begin{align}
\frac{1}{c\alpha}\sum_{\bbeta \in S_{\beta_0}} \vec{\CH}'_{\bbeta} \otimes \ketbra{\bbeta}{\bbeta}.
\end{align}
can be implemented using the gate complexity ~\autoref{thm:D_cost}, and $\tCO(1)$ additional block encoding for $\frac{1}{\alpha}\sum_{a\in A} [\vH,\vA^a] \otimes \ket{a}$ and $\frac{1}{\alpha} \sum_{a\in A} [\vH,\vA^a]^{T} \otimes \ket{a}.$ 
\end{lem}
In both \autoref{lem:beta_ctrl_parent_Ham} and \autoref{lem:beta_ctrl_parent_Ham_derivative}, we assumed the main text normalization $\norm{\sum_{a\in A} \vA^{\dagger a}\vA^{a}} \le 1.$
 
\subsection{Properties of the derivatives in the Gaussian case}
Recall, the parent Hamiltonian generally takes the following form
\begin{align}
    \vec{\CH} = \vec{\CS} + \frac{1}{2}(\vN\otimes \vI + \vI \otimes \vN^*).
\end{align}
The $\beta$-derivatives of the parent Hamiltonian can be inferred from the time-domain expressions~\autoref{cor:b_2dFT},\autoref{cor:N_timedomain_explicit}. For illustration, we explicitly calculate the case of Gaussian transition weights with $\sigma = \sigma_{\gamma} = 1/\beta$ (see also~\cite{rouze2025efficient}). There, the expressions are involved with double integrals, and the $\beta$-dependences are isolated in the rescaling of the Heisenberg dynamics
\begin{align}
    \vec{\CS} &= \sum_{a \in A} \iint_{-\infty}^{\infty} h_1(t)h_2(t') \vA^a\L(\beta(t'-t)\R)\otimes\vA^{a}\L(-\beta(t'+t)\R)^T \rd t'\rd t \\
    \vN &:=\sum_{a\in A}\iint_{-\infty}^{\infty}n_1(t)n_2(t')\vA^{a\dagger}(\beta (t'-t))\vA^a(-\beta (t'+t))\rd t' \rd t,
\end{align}
where
\begin{align}
h_1(t)&= \frac{2}{\pi} e^{ -2t^2}, \quad h_2(t)= e^{-1/4}\cdot  e^{-4t^2}\\ 
     n_1(t) &:=\frac{2\sqrt{\pi}}{3}\cdot  \L(\!\frac{1}{\cosh\L(2\pi t\R)\!}*_t  e^{-2 t^2}\!\R), \quad
 n_2(t) := \frac{3}{\pi\sqrt{\pi}}\exp\L(-4 t^2-2\ri t \R),
\end{align}

\begin{prop}[First and second derivatives] For the Gaussian transition weights, there exist absolute constants $c_1,c_2 >0$ such that
\begin{align}
    \lnorm{\frac{\rd}{\rd \beta}\vec{\CH}_{\beta}} &\le c_1 \lnorm{\sum_{a \in A}[\vH,\vA^a][\vH,\vA^a]^{\dagger}}^{1/2} \lnorm{\sum_{a \in A} \vA^{a\dagger}\vA^{a}}^{1/2}.\\
    \lnorm{\frac{\rd^2}{\rd \beta^2}\vec{\CH}_{\beta}} &\le c_2 \lnorm{\sum_{a \in A}[\vH,[\vH,\vA^a][\vH,[\vH,\vA^a]^{\dagger}}^{1/2} \lnorm{\sum_{a \in A} \vA^{a\dagger}\vA^{a}}^{1/2} + c_2 \lnorm{\sum_{a \in A}[\vH,\vA^a][\vH,\vA^a]^{\dagger}}.
\end{align}
\end{prop}

\begin{proof}
    Take derivative
\begin{align}
    \frac{\rd}{\rd \beta} \CS 
    & = \sum_{a \in A} i\iint_{-\infty}^{\infty}h_1(t)h_2(t') \\
    &\L( (t'-t)[\vH,\vA^a\L(\beta(t'-t)\R)]\otimes\vA^{a}\L(-\beta(t+t')\R)^T - (t+t')\vA^a\L(\beta(t'-t)\R)\otimes[\vH,\vA^{a}\L(-\beta(t+t')\R)]^T \R)\rd t'\rd t.
\end{align}
Introduce purifying operators  to absorb the sum over jumps
\begin{align}
    \sum_{a\in A} [\vH,\vA^a\L(\beta(t'-t)\R)] \otimes \bra{a},\quad \sum_{a\in A} \vA^a\L(\beta(t'-t)\R)^{T} \otimes \ket{a}, \\
    \quad \sum_{a\in A} \vA^a\L(\beta(t'-t)\R)\otimes \bra{a}, \quad \sum_{a\in A} [\vH,\vA^a\L(\beta(t'-t)\R)]^{T} \otimes \ket{a},
\end{align}
and take the operator norm obtain
\begin{align}
    \norm{\frac{\rd}{\rd \beta} \CS} &\le  \norm{\sum_{a \in A}[\vH,\vA^a][\vH,\vA^a]^{\dagger}}^{1/2} \norm{\sum_{a \in A} \vA^{a}\vA^{a\dagger}}^{1/2}\iint_{-\infty}^{\infty}2(\labs{t'}+\labs{t})\labs{h_1(t)h_2(t')}  \rd t'\rd t.
\end{align}
Similarly, 
\begin{align}
		\norm{\frac{\rd}{\rd \beta}\vN} \le \norm{\sum_{a \in A}[\vH,\vA^a][\vH,\vA^a]^{\dagger}}^{1/2} \norm{\sum_{a \in A} \vA^{a}\vA^{a\dagger}}^{1/2} \iint_{-\infty}^{\infty}2(\labs{t'}+\labs{t})\labs{n_1(t)n_2(t')}  \rd t\rd t' 
\end{align} 
Note that $\sum_{a \in A} \vA^{a}\vA^{a\dagger} = \sum_{a \in A} \vA^{a\dagger}\vA^{a}$, and $\sum_{a \in A}[\vH,\vA^a][\vH,\vA^a]^{\dagger} = \sum_{a \in A}[\vH,\vA^a]^{\dagger}[\vH,\vA^a]$ since the set of jumps contains their adjoints.
The calculation for the second derivatives is routine, with different integration constants.
\end{proof}

\subsection{Comparing with the Lindbladian case}
Of course, we may apply the same line of thought to Lindbladians, assuming the quasi-local Lindbladian mixes in logarithmic time. However, there are subtle differences. The spectral gap of the parent Hamiltonian is qualitatively the same as the mixing time of our Lindbladian (which has a vanishing anti-Hermitian part)
\begin{align}
    \frac{\ln(2)}{\lambda_{gap}(\CH)} \le t_{mix}(\CL) \le \frac{\ln(2\norm{\vrho^{-1/2}})}{\lambda_{gap}(\CH)}.
\end{align}
The conversion overhead $\ln(2\norm{\vrho^{-1/2}})= \CO(\beta \norm{\vH})$ could scale with the system size in general. However, to drop this $\CO(\beta \norm{\vH})$ factor (i.e., proving \textit{rapid mixing} such that the mixing time scales logarithmically with the system size) amounts to proving a more stringent quantum log-Sobolev inequality, which has been highly nontrivial~\cite{capel2021modified}. Even if we assume rapid mixing, to get a low-depth circuit for the Gibbs state, we still need an efficient algorithm to implement the Lindbladian evolution that parallelizes the Lindblad operators. However, the HHKL algorithm~\cite{haah2018QAlgSimLatticeHam}, achieving this parallelization for local Hamiltonians, made critical use of \textit{reversing} time evolutions, which does not obviously apply to the dissipative setting. 

As a remedy, one may consider trotterizing into quasi-local brickwork nonunitary circuits, giving a \textit{discrete-time} Quantum Markov chain\footnote{The observation that discretization can be helpful despite deviating from the continuum was made in~\cite{ding2023single}.}
\begin{align}
    \e^{\CL \theta} \rightarrow \CN := \prod_{g}\e^{\CL_g \theta} \quad \text{where}\quad \CL = \sum_{g} \CL_g \quad \text{such that each} \quad \CL_g\quad \text{is a sum of quasi-local, nearly commuting block}.
\end{align}
That is, we regroup the \Lword{} $ \CL = \sum_{a\in A} \CL^a$ into quasi-local blocks to exploit the fact that nearly disjoint Lindbladians can be efficiently implemented in parallel (after spatial truncation).
However, the Trotter error from first-order product formulas is extensive; thus, the discrete-time channel would not approximate the continuous-time evolution in general, and the mixing time analysis might require quantitatively different formalism. Nevertheless, the Gibbs state remains stationary 
\begin{align}
    \CN[\vrho_{\beta}] = \vrho_{\beta},
\end{align}
and could potentially share similar mixing behavior as the continuous case in practice.

\section{Why Gaussians?}
\label{sec:why_gaussian}
Our direct calculation confirms the correctness of our ansatz. But why does the filter $f(t)$ need to be Gaussian~\eqref{eq:OpOFT}? In this section, we try to derive the Gaussians from scratch, which can be viewed as an alternative view of detailed balance in the time domain. While we try to make our arguments precise, we stop at the physicists' level of rigor and do not attempt to extract a mathematical theorem. 

Assume that the filter function is real on the real axis
\begin{align}
f(t) = f(t)^* \quad \text{for each}\quad t \in \BR  
\end{align}
and complex analytic, independent of $\beta$, and that the jump operator $\vA = \vA^{\dagger}$ is Hermitian (Let us focus on a single jump and drop the jump labels $a\in A$.). The goal is to solve for the viable choices of $f(t)$.

We calculate the associated ``transition'' part~\eqref{eq:exact_DB_L} in the time domain
\begin{eqnarray}
    \CT^{\dagger}[\cdot]&=& \int_{-\infty}^\infty \gamma(\omega) \hat{\vA}(\omega)^\dag(\cdot)\hat{\vA}(\omega)\rd\omega\\
    &=& \int  \gamma(\omega) \e^{\ri\omega(t-t')}f(t)f(t')\vA(t)(\cdot)\vA(t')\rd\omega\rd t\rd t'\quad \text{(using} \quad f(t) = f(t)^*)\\
    &=& \frac{1}{\sqrt{2\pi}}\int   c(t-t') f(t)f(t')\vA(t)(\cdot)\vA(t') \rd t\rd t'.
\end{eqnarray}
The third line uses the inverse Fourier Transform
\begin{align}
    c(t) = \frac{1}{\sqrt{2\pi}}\int_{-\infty}^{\infty}\e^{\ri \omega t} \gamma(\omega) \rd\omega .
\end{align}
We may interpret $c(t)$ as a certain correlation function (hence the notation). Now, note that conjugating the Gibbs state yields
\begin{align}
    \Lambda(\hat{\vA}(\omega))&=\int_{-\infty}^{\infty} \e^{-\ri \omega t}f(t) \e^{\ri\vH(t-\ri\beta/2)}\vA \e^{-\ri \vH(t-  \ri\beta/2)} \rd t\\
    &=\int \e^{-\ri s\omega }\e^{\beta \omega/2} \vA(s) f(s+\ri\beta/2) \rd s\quad &\text{(setting}\quad s:=t-\ri\beta/2).
\end{align}
Then, we get that 
\begin{align}
    &\Gamma^{-1}\circ \CT \circ \Gamma[\cdot]\\
    &= \int_{-\infty}^\infty \gamma(\omega) \Lambda(\hat{\vA}(\omega))(\cdot)\Lambda^{-1}(\hat{\vA}(\omega)^\dag)\rd\omega\\
    &= \int_{-\infty}^{\infty}\int\int \gamma(\omega) \e^{-\ri\omega(s-s'+\ri\beta)}f(s+\ri\beta/2)f(s'-\ri\beta/2)\vA(s)(\cdot)\vA(s')\rd s \rd s'\rd\omega\quad &\text{(setting}\quad s':= t' + \ri \beta)\\
    &= \frac{1}{\sqrt{2\pi}} \int\int c(s'-s-\ri\beta) f(s+\ri\beta/2)f(s'-\ri\beta/2)\vA(s)(\cdot)\vA(s')\rd s \rd s'\quad &\text{(continuing}\quad c(z),\  z \in \BC) \\
    &= \frac{1}{\sqrt{2\pi}} \int \int  c(t'-t-\ri\beta) f(t+\ri\beta/2)f(t'-\ri\beta/2)\vA(t)(\cdot)\vA(t') \rd t\rd t' \quad &\text{(shifting integration)}.
\end{align} 
The third line uses that $(\vA(z))^{\dagger} = \vA(z^*)$ for Heisenberg evolution at complex times. The last line shifts the integrals by $s \rightarrow t$ and $s' \rightarrow t'$ (assuming the absence of poles across the strip).

Comparing the coefficient of product integrals over $\vA(t)\cdot \vA(t')$ for each $t, t'$, the condition 
\begin{equation}\label{eq:constaint_gamma_f}
c(t-t') f(t)f(t')= c(t'-t-\ri\beta) f(t+\ri\beta/2)f(t'-\ri\beta/2)\quad\text{for each}\quad t, t'\in \BR \quad \text{ensures}\quad 
     \CT^{\dagger}=\Gamma^{-1}\circ \CT \circ \Gamma.
\end{equation}
 This condition can be rearranged as (whenever the denominators are non-zero)
\begin{align}\label{eq:constaint_gamma_f_v2}
    \frac{c(t-t')}{c(t'-t-\ri\beta)} =  \frac{f(t+\ri\beta/2)f(t'-\ri\beta/2)}{f(t)f(t')}\quad \text{for each}\quad t, t'\in \BR.
\end{align}
\subsection{Solving the functional equation}
We proceed to solve~\eqref{eq:constaint_gamma_f_v2} for a fixed $\beta$. Consider a change of variable $t\leftrightarrow t'$
\begin{align}
        \frac{c(t'-t)}{c(t-t'-\ri\beta)} =  \frac{f(t'+\ri\beta/2)f(t-\ri\beta/2)}{f(t')f(t)}\label{eq:alternative_tt'}.
\end{align}
Then, divide the two equations and observe that the LHS depends only on the time difference $t-t'$
\begin{align}
\eqref{eq:constaint_gamma_f_v2}/\eqref{eq:alternative_tt'} :  f_1(t-t') = \frac{f(t+ i\beta/2)}{f(t-\ri\beta/2)} \cdot \frac{f(t'-\ri\beta/2)}{f(t'+\ri\beta/2)} : = g(t)/g(t').
\end{align}
We must have that 
\begin{align}
    g(t) = a \e^{bt} \quad \text{for constants}\quad a, b \quad \text{independent of}\quad t.
\end{align}
Thus, 
\begin{align}
    a\e^{bt} f(t-\ri\beta/2) &= f(t+\ri\beta/2) \\
    \implies\quad a\e^{bt} \e^{-i \beta/2 \cdot \partial_t } f(t) &= \e^{ i \beta/2 \cdot \partial_t} f(t) \quad &\text{(analyticity)}\\
    \implies \quad \e^{a'+b't-i \beta \cdot \partial_t } f(t) &= f(t) \quad &\text{(since}\quad [t, \partial_t] \propto 1).
\end{align}
The last equality regards $t$ and $\partial_t$ as linear operators acting on functions and uses the matrix-exponential fact that
\begin{align}
    [\vX,\vY] \quad \text{commutes with}\quad \vX, \vY \quad \text{implies}\quad \e^{\vX}\e^{\vY} =\e^{\vX+\vY+[\vX,\vY]/2}. 
\end{align}
We can get rid of the exponential by regarding it as an eigenproblem
\begin{align}
    \e^{\vX}\ket{v} = \ket{v} \quad \text{implies}\quad  \vX \ket{v} = \ri 2\pi \BZ \cdot \ket{v}.
\end{align}
This amounts to solving the differential equation
\begin{align}
    (a_1 \partial_t + a_2 t +a_3 ) f(t) = 0 
\end{align}
which has the general solution being Gaussians (the constants may depend on the fixed $\beta$.). Plugging back into~\eqref{eq:constaint_gamma_f_v2}, we can solve for $c(t)$, which would not be unique as it allows for linear combinations.

As a sanity check, we re-derive the constraints between the Gaussian parameters and the temperature in the time picture. Let 
\begin{equation}
    c(x)=e^{ - i a x }\e^{-x^2/\delta^2}, ~~~~~~{\rm and}~~~~~~f(t)=e^{-t^2/\kappa^2},
\end{equation}
where $a,\delta,\kappa$ are all real parameters. Then, letting $t-t':=x$ we get:
\begin{eqnarray}
    \log\left[\frac{c(t-t')}{c(t'-t-\ri\beta)} \right]&=&  \log\left[\frac{f(t+\ri\beta/2)f(t'-\ri\beta/2)}{f(t)f(t')}\right]\\
    \implies \quad \log\left[\frac{e^{-iax }\e^{-x^2/\delta^2}}{e^{ia(x+\ri\beta) }\e^{-(x+\ri\beta)^2/\delta^2}} \right]&=&  \log\left[\frac{e^{-(t+\ri\beta/2)^2/\kappa^2 }\e^{-(t'-\ri\beta/2)^2/\kappa^2}}{e^{-t^2/\kappa^2-t'^2/\kappa^2}}\right],\\
    \text{it suffices if}\quad 2ix(\frac{\beta}{\delta^2} - a)+ a\beta-\frac{\beta}{\delta^2}&=& \frac{-ix\beta}{\kappa^2}+\frac{\beta^2}{2\kappa^2}.
\end{eqnarray}

Equating the imaginary and real parts leads to two linearly dependent equations with the same solution:
\begin{equation}
    a=\beta(\frac{1}{\delta^2}+\frac{1}{2\kappa^2}).
\end{equation}
Then identifying $a\equiv \omega_\gamma$, $\gamma^{-2}=\sigma^2_\gamma/2$, and $\kappa^{-2}=\sigma^2_E$ leads to the same relationship between $\omega_\gamma, \sigma^2_\gamma, \sigma^2_E$ and $\beta$ as in~\eqref{eq:ExctDissip}.
\section{Other notions of detailed balance}\label{sec:otherDB}
In addition to our KMS detailed balance condition (\autoref{defn:DB}), other quantum detailed balance conditions have also been studied (see, e.g.,~\cite{Alicki1976OnTD,fangola2007GeneratorsDetailedBal,carlen2017gradient}). In this section, we will discuss two variants of quantum detailed balance, and only the first one seems to work.
\subsection{Detailed balance with unitary drift}
We constructed a \Lword{} satisfying the KMS detailed balance condition. However, many of our results remain valid even if we add a suitable Hamiltonian term, which might enable alternative constructions.  
\begin{defn}[KMS-detailed balance with unitary drift~{\cite[Section 5]{fangola2007GeneratorsDetailedBal}}]\label{defn:unitary_DB}
	We say that the \Lword{} $\CL$ satisfies $\vrho$-detailed balance with unitary drift (in short $\vrho$-DBU) with respect to a full-rank state $\vrho$ if there exists a Hermitian operator $\vQ$ such that
	\begin{align}\label{eq:detailedBalUpToCoh}
		\CL^{\dagger}[\cdot] - \sqrt{\vrho}^{-1}\CL[ \sqrt{\vrho} \cdot \sqrt{\vrho}]\sqrt{\vrho}^{-1} = \ri [\vQ,\cdot].
	\end{align}
Or, in terms of discriminants,
\begin{align}
	\CA(\vrho,\CL) := 
    \frac{\CD(\vrho,\CL) - \CD(\vrho,\CL)^{\dagger}}{2} = -\ri \vrho^{1/4}[\vQ, \vrho^{-1/4}(\cdot)\vrho^{-1/4}]\vrho^{1/4}.
\end{align}
\end{defn}
That is, we relax the KMS detailed balance condition (\autoref{defn:DB}) by allowing the RHS to be any commutator term $\ri [\vQ,\cdot]$\footnote{In an email exchange with Jonathan Moussa, he mentioned an analogous unitary effect for discrete quantum channels.}; this enlarges the family of possible Lindbladians for the stationary state $\vrho$.
\begin{prop}[Fixed point]\label{prop:DBU_fixedpoint}
	If a \Lword{} $\CL$ is $\vrho$-detailed-balanced with unitary drift, then 
	\begin{align}
		\CL[\vrho] =0.
	\end{align}
\end{prop}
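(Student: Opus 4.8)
The plan is to mimic the proof of~\autoref{prop:DB_fixedpoint}, exploiting the single additional fact that a commutator annihilates the identity. Recall that a \Lword{} is trace-preserving, which is equivalent to its adjoint being unital, i.e.\ $\CL^{\dagger}[\vI]=0$. First I would simply evaluate the defining identity~\eqref{eq:detailedBalUpToCoh} on the input $\vI$:
\begin{align}
\CL^{\dagger}[\vI] - \sqrt{\vrho}^{-1}\CL[\sqrt{\vrho}\,\vI\,\sqrt{\vrho}]\sqrt{\vrho}^{-1} = \ri[\vQ,\vI] = 0,
\end{align}
since $[\vQ,\vI]=0$ for any $\vQ$. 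Because $\sqrt{\vrho}\,\vI\,\sqrt{\vrho}=\vrho$ and $\CL^{\dagger}[\vI]=0$ by unitality, this collapses to $\sqrt{\vrho}^{-1}\CL[\vrho]\sqrt{\vrho}^{-1}=0$.

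To finish, I would conjugate both sides by $\sqrt{\vrho}$; this is legitimate because $\vrho$ is full-rank by hypothesis, so $\sqrt{\vrho}$ is invertible. One obtains $\CL[\vrho]=0$, as claimed.

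There is essentially no obstacle here: the only items to verify are that the map in question is genuinely trace-preserving (this is built into the definition of a \Lword{}, so $\CL^{\dagger}$ is unital), and that the relaxation from KMS detailed balance to detailed balance \emph{with unitary drift} is harmless for the fixed-point statement precisely because the extra term $\ri[\vQ,\cdot]$ is a commutator and therefore contributes nothing when tested against $\vI$. If anything, one might remark that the very same computation shows that $\vrho$ remains stationary for any member of the larger DBU family, which is the point of introducing \autoref{defn:unitary_DB} in the first place.
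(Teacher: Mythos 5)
Your proof is correct and follows exactly the paper's argument: evaluate~\eqref{eq:detailedBalUpToCoh} at $\vI$, note the commutator vanishes and $\CL^{\dagger}[\vI]=0$ by trace preservation, then undo the conjugation using that $\vrho$ is full-rank.
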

\begin{proof}
	Evaluate the superoperator~\eqref{eq:detailedBalUpToCoh} for the identity $\vI$  and conclude $\sqrt{\vrho}^{-1}\CL[ \sqrt{\vrho} \vI \sqrt{\vrho}]\sqrt{\vrho}^{-1}=\CL^{\dagger}[\vI]=0$.
\end{proof}
Intuitively, we can certainly add any Hermitian $\vQ$ that commutes with $\vrho$ without changing the stationary state. In fact, this is the only possibility\footnote{We thank Jonathan Moussa for pointing us to~\cite[Lemma 28]{fangola2007GeneratorsDetailedBal}. This further simplifies the derivation and the presentation, clarifying that our construction actually has $\vQ=0$ and a vanishing anti-Hermitian component.}.
\begin{prop}[Structure of $\vQ${~\cite[Lemma 28]{fangola2007GeneratorsDetailedBal}}]
    In the setting of~\autoref{defn:unitary_DB}, $\vQ$ must commute with $\vrho$.
\end{prop}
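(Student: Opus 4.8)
The plan is to show that the hypothesis $\CA(\vrho,\CL) = -\ri\,\vrho^{1/4}[\vQ,\vrho^{-1/4}(\cdot)\vrho^{-1/4}]\vrho^{1/4}$ forces $[\vQ,\vrho]=0$ by testing it on cleverly chosen inputs, so that the degrees of freedom in $\vQ$ that do \emph{not} commute with $\vrho$ become visibly incompatible with the structural constraints already known about $\CL$. Concretely, I would first note that $\CA(\vrho,\CL)$ is, by construction, the anti-Hermitian part of the discriminant $\CD(\vrho,\CL)$, and a Lindbladian discriminant always annihilates the vectorized stationary state $\ket{\sqrt{\vrho}}$ from the right in the sense $\CD(\vrho,\CL)[\vrho^{1/2}]$ relates to $\CL[\vrho]=0$; in particular $\CA(\vrho,\CL)[\vrho^{1/2}] = 0$ because both $\CD$ and $\CD^\dagger$ kill $\vrho^{1/2}$ (the latter since $\CL^\dagger[\vI]=0$). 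Plugging $\vrho^{1/2}$ into the right-hand side gives $-\ri\,\vrho^{1/4}[\vQ,\vrho^{-1/4}\vrho^{1/2}\vrho^{-1/4}]\vrho^{1/4} = -\ri\,\vrho^{1/4}[\vQ,\vI]\vrho^{1/4} = 0$, which is automatically satisfied — so this single test is not enough, and one must exploit more of the Lindblad structure.

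The key additional input is that the transition part of $\CL$, under the similarity transformation defining $\CD$, is completely positive, and more importantly the \emph{decay/coherent} part of $\CD$ — call its operator form $\vN$ as in~\autoref{prop:symmetrized_Dis} — is Hermitian. Then $\CA(\vrho,\CL)$ decomposes as the anti-Hermitian part of the CP transition piece plus the anti-Hermitian part coming from $\vN\otimes\vI + \vI\otimes\vN^*$; since $\vN=\vN^\dagger$, the latter contributes nothing, so $\CA(\vrho,\CL)$ is entirely the anti-Hermitian part of $\sum_a\sum_{\nu_1,\nu_2} h_{\nu_1,\nu_2}\,\vA^a_{\nu_1}(\cdot)(\vA^a_{\nu_2})^\dagger$. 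I would compare this, term by term in the Bohr-frequency decomposition, with the right-hand side $-\ri[\vQ,\cdot]$ written in the energy basis. The commutator $[\vQ,\vX]$ contributes a term $\vQ\vX - \vX\vQ$; decomposing $\vQ = \sum_\mu \vQ_\mu$ into its Bohr components and matching the $(\nu_1,\nu_2)$ grading, one finds that the off-diagonal blocks $\vQ_\mu$ with $\mu\neq 0$ would have to be expressible through the $\vA^a_{\nu_1}(\cdot)(\vA^a_{\nu_2})^\dagger$ structure in a way that the Hermiticity/skew-symmetry of the coefficients $h_{\nu_1,\nu_2}=h_{-\nu_2,-\nu_1}$ and $h_{\nu_1,\nu_2}^*=h_{\nu_2,\nu_1}$ (from $\vec\alpha\ge 0$) forbid unless $\vQ_\mu=0$ for all $\mu\neq 0$, i.e.\ $[\vQ,\vrho]=0$.

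Alternatively — and this is probably the cleaner route, since~\cite[Lemma 28]{fangola2007GeneratorsDetailedBal} is cited for exactly this — I would invoke the abstract argument: $\CD^\dagger = \CD - \ri\,\widetilde\vQ$ for the conjugated operator $\widetilde\vQ(\cdot) := \vrho^{1/4}[\vQ,\vrho^{-1/4}(\cdot)\vrho^{-1/4}]\vrho^{1/4}$, and taking the adjoint of both sides gives $\CD = \CD^\dagger + \ri\,\widetilde\vQ^\dagger$, hence $\widetilde\vQ^\dagger = \widetilde\vQ$ forces a symmetry on $\vQ$; combined with the fact that $\CD$ has a real spectrum bounded above by $0$ (Lindbladian origin) while a nonzero anti-Hermitian commutator part misaligned with $\vrho$ would push eigenvalues off the real axis relative to $\CD^\dagger$, one concludes $\widetilde\vQ$ itself must be self-adjoint \emph{and} anti-self-adjoint after the appropriate real/imaginary split, which pins $\vQ$ to commute with $\vrho$. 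The main obstacle is bookkeeping: getting the conjugations $\Gamma^{\pm1/4}$ and the transpose/$*$ conventions straight so that "the anti-Hermitian part of a CP map equals $-\ri[\vQ,\cdot]$" is manipulated into a statement purely about the grading of $\vQ$ against $\mathrm{Spec}(\vH)$; once that is set up, the vanishing of the nonzero-frequency components of $\vQ$ follows from linear independence of the $\vA^a_{\nu}$-bilinears across distinct Bohr frequencies, exactly as in the proof of~\autoref{lem:FindingCoherenceTerm}.
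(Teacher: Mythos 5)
There is a genuine gap, and it is worth noting first that the paper offers no proof of this proposition at all --- it defers entirely to \cite[Lemma 28]{fangola2007GeneratorsDetailedBal} --- so your proposal stands or falls on its own. Your second paragraph cannot work as a proof of the stated claim: the proposition is a structural fact about an \emph{arbitrary} super-operator satisfying $\vrho$-DBU with some Hermitian $\vQ$, whereas you argue from the transition coefficients $h_{\nu_1,\nu_2}$ and the Hermitian $\vN$ of the paper's explicit construction, which already satisfies exact detailed balance with $\vQ=0$; restricting to that case proves nothing in the setting of \autoref{defn:unitary_DB}. Your third paragraph starts on the correct foot --- $\CA=(\CD-\CD^\dagger)/2$ is anti-self-adjoint by construction, so the hypothesis $\CA=-\ri\widetilde{\vQ}$ with $\widetilde{\vQ}[\cdot]:=\vrho^{1/4}[\vQ,\vrho^{-1/4}(\cdot)\vrho^{-1/4}]\vrho^{1/4}$ forces $\widetilde{\vQ}^\dagger=\widetilde{\vQ}$ --- but then derails: the assertion that $\CD$ ``has a real spectrum'' is false precisely in the DBU setting (a nonzero $\CA$ is exactly what spoils it), and your conclusion that $\widetilde{\vQ}$ is simultaneously self- and anti-self-adjoint would give $\widetilde{\vQ}=0$, i.e.\ $\vQ\propto\vI$, which overshoots the claim and is not what is being asserted.

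The missing step --- and the only one needed --- is elementary and requires no spectral or complete-positivity input. Expanding the conjugations gives $\widetilde{\vQ}[\vX]=\vM\vX-\vX\vM^\dagger$ with $\vM:=\vrho^{1/4}\vQ\vrho^{-1/4}$ (so $\vM^\dagger=\vrho^{-1/4}\vQ\vrho^{1/4}$), whence the Hilbert--Schmidt adjoint is $\widetilde{\vQ}^\dagger[\vX]=\vM^\dagger\vX-\vX\vM$. Self-adjointness of $\widetilde{\vQ}$ therefore reads $(\vM-\vM^\dagger)\vX+\vX(\vM-\vM^\dagger)=0$ for all $\vX$; taking $\vX=\vI$ yields $\vM=\vM^\dagger$, i.e.\ $\vrho^{1/4}\vQ\vrho^{-1/4}=\vrho^{-1/4}\vQ\vrho^{1/4}$, hence $[\vQ,\vrho^{1/2}]=0$ and so $[\vQ,\vrho]=0$. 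Equivalently, in the Bohr decomposition $\vQ=\sum_{\mu}\vQ_\mu$ the condition $\vM=\vM^\dagger$ becomes $\bigl(\e^{-\beta\mu/4}-\e^{\beta\mu/4}\bigr)\vQ_\mu=0$ for each $\mu$, killing every $\mu\neq 0$ component --- this is the analogue of the step in \autoref{lem:FindingCoherenceTerm} that you correctly gesture at, but it must be applied to the Bohr components of $\vQ$ itself, not to the $\vA^a_\nu$-bilinears of the paper's particular Lindbladian.
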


Indeed, we can solve for $\vB$ and $\vQ$ by modifying the argument for \autoref{cor:findingQ}
	\begin{align}
 \vQ &= \vQ^{\dagger}\\
 &\Updownarrow\\[-1mm]	
		(\Lambda-\Lambda^{-1})(\vB) &= \frac{\ri}{2}(2\CI -(\Lambda+\Lambda^{-1}))(\vR)\\[-0mm]
		&\Updownarrow\\[-1mm]		
		\sum_{\nu \in B}(\e^{\frac{\beta \nu}{2}}-\e^{\frac{-\beta \nu}{2}})\vB_{\nu}&=\frac{\ri}{2}\sum_{\nu \in B}(2-\e^{-\frac{\beta \nu}{2}}-\e^{\frac{\beta \nu}{2}})\vR_{\nu}\label{eq:tanhStep}\\
		&\Updownarrow\\[-1mm]	
		\sum_{\nu \in B\setminus\{0\}}\vB_{\nu}&=\frac{\ri}{2} \sum_{\nu \in B\setminus\{0\}} \tanh\left(\frac{\beta \nu}{4}\right) \vR_{\nu}.\tag*{($\e^{\frac{\beta \nu}{2}}-\e^{\frac{-\beta \nu}{2}}\ne 0 \iff \nu \ne 0$)}
	\end{align}
Thus, allowing the term $\vQ\ne 0$ merely amounts to dropping constraints on the $\vB_0$ component (which is exactly the set of operators that commute with the Hamiltonian).

This relaxed version of detailed balance comes with the conceptual price of making the \Lword{} ``non-self-adjoint'' under similarity transformation. Due to the anti-self-adjoint component $\CA(\vrho,\CL)^{\dagger}$, the spectral theory of convergence (at first glance) seems to break down since the right eigenvectors are not orthogonal in the Hilbert-Schmidt norm. Fortunately, the following observation comes to the rescue.
\begin{prop}\label{prop:stationarityHA}
If a \Lword{} $\CL$ is $\vrho$-detailed-balanced with unitary drift, then 
\begin{align}
	\CA(\vrho,\CL)[\sqrt{\vrho}] = \CH(\vrho,\CL)[\sqrt{\vrho}]=0.
\end{align}    
\end{prop}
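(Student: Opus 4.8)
The plan is to notice that $\CH(\vrho,\CL)$ and $\CA(\vrho,\CL)$ are, respectively, the Hermitian and anti-Hermitian parts of the discriminant $\CD:=\CD(\vrho,\CL)$, namely $\CH=\tfrac12(\CD+\CD^{\dagger})$ and $\CA=\tfrac12(\CD-\CD^{\dagger})$, so that it suffices to prove the two identities $\CD[\sqrt{\vrho}]=0$ and $\CD^{\dagger}[\sqrt{\vrho}]=0$; the two assertions of the proposition then follow by forming the half-difference and the half-sum. Each of these identities reduces to one of the two structural properties of a Lindbladian that is $\vrho$-DBU: that $\vrho$ is a fixed point ($\CL[\vrho]=0$, which is exactly~\autoref{prop:DBU_fixedpoint}), and that $\CL$ is trace-preserving (equivalently $\CL^{\dagger}[\vI]=0$).

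Explicitly, I would first compute, using $\vrho^{1/4}\vrho^{1/2}\vrho^{1/4}=\vrho$,
\begin{align}
    \CD[\sqrt{\vrho}] = \vrho^{-1/4}\CL\big[\vrho^{1/4}\sqrt{\vrho}\,\vrho^{1/4}\big]\vrho^{-1/4} = \vrho^{-1/4}\,\CL[\vrho]\,\vrho^{-1/4} = 0 .
\end{align}
Dually, the Hilbert--Schmidt adjoint of the discriminant is $\CD^{\dagger}[\cdot]=\vrho^{1/4}\CL^{\dagger}[\vrho^{-1/4}(\cdot)\vrho^{-1/4}]\vrho^{1/4}$ (a direct consequence of the definition of the adjoint, the vanishing of whose difference with $\CD$ itself being precisely the KMS detailed balance condition of~\autoref{defn:DB}), so using $\vrho^{-1/4}\vrho^{1/2}\vrho^{-1/4}=\vI$,
\begin{align}
    \CD^{\dagger}[\sqrt{\vrho}] = \vrho^{1/4}\CL^{\dagger}\big[\vrho^{-1/4}\sqrt{\vrho}\,\vrho^{-1/4}\big]\vrho^{1/4} = \vrho^{1/4}\,\CL^{\dagger}[\vI]\,\vrho^{1/4} = 0 .
\end{align}
Therefore $\CA[\sqrt{\vrho}]=\tfrac12(\CD-\CD^{\dagger})[\sqrt{\vrho}]=0$ and $\CH[\sqrt{\vrho}]=\tfrac12(\CD+\CD^{\dagger})[\sqrt{\vrho}]=0$, which is the claim.

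As a cross-check, $\CA[\sqrt{\vrho}]=0$ can also be read off directly from the explicit commutator form $\CA(\vrho,\CL)[\cdot]=-\ri\,\vrho^{1/4}\big[\vQ,\vrho^{-1/4}(\cdot)\vrho^{-1/4}\big]\vrho^{1/4}$ of~\autoref{defn:unitary_DB}: since $\vrho^{-1/4}\sqrt{\vrho}\,\vrho^{-1/4}=\vI$ and $[\vQ,\vI]=0$, it vanishes regardless of any property of $\vQ$ (in particular the commutation $[\vQ,\vrho]=0$ is not needed here). There is no genuine obstacle in this argument --- it is bookkeeping of fractional powers of $\vrho$ --- and the only subtlety worth stating is that its two halves consume \emph{different} Lindbladian properties, fixed-point stationarity for $\CD$ and trace preservation for $\CD^{\dagger}$. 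For the intended application to the spectral theory of convergence one then simply vectorizes: $\CH[\sqrt{\vrho}]=0$ becomes $\vec{\CH}\ket{\sqrt{\vrho}}=0$, exhibiting the purified Gibbs state as an honest zero-eigenvector of the Hermitian operator $\vec{\CH}$ even though the full discriminant $\CD$ fails to be self-adjoint in the $\vrho$-DBU setting.
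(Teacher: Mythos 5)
Your proof is correct; the paper actually states \autoref{prop:stationarityHA} without supplying a proof, and your argument (writing $\CH,\CA$ as the Hermitian/anti-Hermitian parts of $\CD$ and killing $\CD[\sqrt{\vrho}]$ via $\CL[\vrho]=0$ from \autoref{prop:DBU_fixedpoint} and $\CD^{\dagger}[\sqrt{\vrho}]$ via trace preservation $\CL^{\dagger}[\vI]=0$) is exactly the one implicit in the surrounding text, mirroring how the paper proves \autoref{prop:DBU_fixedpoint} itself. The cross-check via the commutator form $-\ri\,\vrho^{1/4}[\vQ,\vrho^{-1/4}(\cdot)\vrho^{-1/4}]\vrho^{1/4}$ is also a valid independent confirmation for the $\CA$ half.
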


In other words, the specific form of detailed balance implies that the anti-self-adjoint component preserves the eigenvector $\sqrt{\vrho}$; the particular eigenvector $\sqrt{\vrho}$, which we care about, is orthogonal to other eigenvectors. Therefore, the second eigenvalue of $\CH$ corresponds to the contraction of the Hilbert-Schmidt norm of the other eigenvector, controlling the mixing time.
\begin{restatable}[Mixing time from spectral gap (adapted from{~\cite[Proposition II.2]{chen2023QThermalStatePrep}})]{prop}{mixingtimegapDB}\label{prop:mixing_time_from_gap}
	If a \Lword{} $\CL$ satisfies $ \vrho$-DBU, then 
    \begin{align}
        t_{mix}(\CL) \le \frac{ \ln(2\norm{\vrho^{-1/2}})}{\lambda_{gap}(\CH)} \quad \text{where} \quad \CH:= \frac{\CD(\vrho,\CL)^{\dagger} + \CD(\vrho,\CL)}{2},
    \end{align}
    and the mixing time $t_{mix}$ is the smallest time for which
    \begin{align}
\lnormp{\e^{\CL t_{mix}}[\vrho_1-\vrho_2]}{1} \le \frac{1}{2} \normp{\vrho_1-\vrho_2}{1} \quad \text{for any states}\quad \vrho_1, \vrho_2.
\end{align}
\end{restatable}
In other words, adding a coherent term only seems to help with the convergence.
\begin{proof}
Write $\vR = \vrho_1-\vrho_2$, then
     \begin{align}
         \lnormp{ \e^{\CL t}[\vR]}{1} 
         &= \lnormp{\vrho^{1/4}\e^{\CD t}[\vrho^{-1/4}\vR\vrho^{-1/4}]\vrho^{1/4}}{1} 	\\
         &\le \lnormp{\vrho^{1/4}}{4}\cdot \lnormp{\e^{\CD t}[\vrho^{-1/4}\vR\vrho^{-1/4}]}{2}\cdot\lnormp{\vrho^{1/4}}{4}\\
         &\le \e^{-\lambda_{gap}(\CH) t} \lnormp{\vrho^{-1/4}\vR\vrho^{-1/4}}{2}\\
         &\le \e^{-\lambda_{gap}(\CH)t}\norm{\vrho^{-1/4}}^2 \normp{\vR}{2} \\
         &\le \e^{-\lambda_{gap}(\CH)t}\norm{\vrho^{-1/4}}^2 \normp{\vR}{1} \\
         &= \e^{-\lambda_{gap}(\CL^{\dagger})t}\norm{\vrho^{-1/2}} \normp{\vR}{1}. 
     \end{align}
     The first inequality uses Hölder's inequality. The second inequality uses the orthogonality to the leading eigenvector such that $\tr[\sqrt{\vrho}\cdot\vrho^{-1/4}\vR\vrho^{-1/4}]=\tr[\vR]=0$. Take the logarithm to conclude the proof.
\end{proof}

However, if we wish to obtain a parent Hamiltonian corresponding to this Lindbladian, the anti-Hermitian component forces us to consider the \textit{symmetrized discriminant}
\begin{align}
    \vec{\CH_{\beta}} = \frac{ \vec{\CD} + \vec{\CD^{\dagger}}}{2}.
\end{align}
Unfortunately, the presence of the anti-Hermitian component breaks the qualitative analogy between mixing times and discriminant gaps; fast mixing can hold despite a small symmetrized discriminant gap, marking a limitation of the parent Hamiltonian approach. Incorporating a coherent term, the general conversion bound is as follows~\cite{chen2023QThermalStatePrep}; unfortunately, some bounds are only meaningful if the anti-Hermitian part $\CA$ is small enough. 
\begin{prop}[Spectral gap from mixing time{~\cite[Proposition E.5]{chen2023QThermalStatePrep}}]\label{prop:mixing_to_gap}
    For any \Lword{} $\CL$, let $-\lambda_{\mathrm{Re}(gap)}(\CL)$ be the second largest real part in its spectrum (counted by algebraic multiplicity), then
    \begin{align}\label{eq:gapBound}
        \lambda_{gap}(\CH)+2\nrm{\CA}_{2-2}\ge\nrm{\CA}_{2-2}-\lambda_{2}(\CH)\ge\lambda_{\mathrm{Re}(gap)}(\CL)\ge \frac{\ln(2)}{t_{mix}(\CL)}.
    \end{align}
    Moreover, if $\lambda_{\mathrm{Re}(gap)}(\CL)\geq 2\nrm{\CA}_{2-2}$, then there is unique eigenvalue $\lambda_{1}(\CH)\geq -\nrm{\CA}_{2-2}$ and
    \begin{align}\label{eq:gapRelations}
        \lambda_{gap}(\CH)+2\nrm{\CA}_{2-2}\ge\nrm{\CA}_{2-2}-\lambda_{2}(\CH)\ge\lambda_{\mathrm{Re}(gap)}(\CL).
    \end{align}    
\end{prop}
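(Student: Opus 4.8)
The plan is to move everything onto the discriminant $\CD:=\CD(\vrho,\CL)$, which is related to $\CL$ by the similarity transformation $\CD=\Phi^{-1}\circ\CL\circ\Phi$ with $\Phi[\cdot]:=\vrho^{1/4}(\cdot)\vrho^{1/4}$, so that $\mathrm{Spec}(\CD)=\mathrm{Spec}(\CL)$ and $\lambda_{\mathrm{Re}(gap)}(\CL)=\lambda_{\mathrm{Re}(gap)}(\CD)$. Write $\CD=\CH+\CA$ for its Hilbert--Schmidt-Hermitian and anti-Hermitian parts ($\CA=\CA(\vrho,\CL)$ as in \autoref{defn:unitary_DB}). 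By \autoref{prop:stationarityHA}, $\CH[\sqrt{\vrho}]=\CA[\sqrt{\vrho}]=0$, hence $\CD^{\dagger}[\sqrt{\vrho}]=(\CH-\CA)[\sqrt{\vrho}]=0$; consequently $\CD$ (and $\CH$, $\CA$, $\CD^{\dagger}$) map the codimension-one subspace $V:=\{\sqrt{\vrho}\}^{\perp}$ into itself, so $\CD$ is block-diagonal on $\BC\sqrt{\vrho}\oplus V$. Recalling that the $\vrho$-detailed-balanced semigroup $\e^{\CD t}$ is a Hilbert--Schmidt contraction, $\CH\preceq 0$, so $\lambda_1(\CH)=0$ (eigenvector $\sqrt{\vrho}$), $\lambda_2(\CH)=\lambda_{\max}(\CH|_V)\le 0$ (its eigenvalues are those of $\CH$ with one copy of $0$ removed), and $\lambda_{gap}(\CH)=-\lambda_2(\CH)$. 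The leftmost inequality of \eqref{eq:gapBound} is then immediate: $\lambda_{gap}(\CH)+2\nrm{\CA}_{2-2}=-\lambda_2(\CH)+2\nrm{\CA}_{2-2}\ge\nrm{\CA}_{2-2}-\lambda_2(\CH)$.

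For the rightmost inequality I would run the argument of \autoref{prop:mixing_time_from_gap} in reverse, using Gelfand's spectral-radius formula. Any Hermitian traceless operator equals $\tfrac12$ its trace norm times a difference of two states, so the mixing hypothesis upgrades to $\nrm{\e^{t_{mix}\CL}[\vX]}_1\le\tfrac12\nrm{\vX}_1$ for every Hermitian traceless $\vX$; since $\e^{t\CL}$ preserves Hermiticity and the trace, iterating gives $\nrm{\e^{n t_{mix}\CL}[\vX]}_1\le 2^{-n}\nrm{\vX}_1$, i.e.\ the trace-norm operator norm of $\e^{n t_{mix}\CL}$ restricted to the traceless sector is $\le 2^{-n}$. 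Letting $n\to\infty$, Gelfand's formula shows the spectral radius of $\e^{t_{mix}\CL}$ on that sector is $\le\tfrac12$; but this spectral radius is $\e^{-\lambda_{\mathrm{Re}(gap)}(\CL)\,t_{mix}}$, which yields $\lambda_{\mathrm{Re}(gap)}(\CL)\ge\ln 2/t_{mix}(\CL)$. (If $\lambda_{\mathrm{Re}(gap)}(\CL)=0$ all three inequalities are trivial, so assume it is positive.)

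The crux is the middle inequality $\nrm{\CA}_{2-2}-\lambda_2(\CH)\ge\lambda_{\mathrm{Re}(gap)}(\CL)$: writing $\lambda_\star$ for the eigenvalue of $\CD|_V$ of largest real part (so $\mathrm{Re}(\lambda_\star)=-\lambda_{\mathrm{Re}(gap)}(\CL)$), we must show $\mathrm{Re}(\lambda_\star)\ge\lambda_2(\CH)-\nrm{\CA}_{2-2}$ --- morally, the anti-Hermitian perturbation $\CA$ can speed up the decay of the slowest Hermitian mode by at most $\nrm{\CA}_{2-2}$. Concretely, take a unit eigenvector $\vX\in V$ of $\CH|_V$ with eigenvalue $\lambda_2(\CH)$; then $\nrm{(\CD|_V-\lambda_2(\CH)\vI)\vX}_2=\nrm{\CA\vX}_2\le\nrm{\CA}_{2-2}$, i.e.\ $\vX$ is an $\nrm{\CA}_{2-2}$-approximate eigenvector of $\CD|_V$ at $\lambda_2(\CH)$. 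To upgrade this to a genuine eigenvalue of real part $\ge\lambda_2(\CH)-\nrm{\CA}_{2-2}$ I would track $g(t):=\dotp{\vX}{\e^{\CD|_V t}\vX}$: from $g'(t)=\lambda_2(\CH)g(t)-\dotp{\CA\vX}{\e^{\CD|_V t}\vX}$ together with the \emph{uniform} bound $\nrm{\e^{\CD|_V t}}_{2-2}\le\e^{\lambda_2(\CH)t}$ (valid because $\CH|_V\preceq\lambda_2(\CH)\vI$) one gets $|g(t)|\ge\e^{\lambda_2(\CH)t}(1-\nrm{\CA}_{2-2}t)$ on $[0,1/\nrm{\CA}_{2-2})$; restarting this estimate from the (still nearly extremal) renormalized vector $\e^{\CD|_V t}\vX/\nrm{\e^{\CD|_V t}\vX}_2$ over consecutive windows of length $\sim 1/\nrm{\CA}_{2-2}$ yields $\nrm{\e^{\CD|_V t}}_{2-2}\ge\e^{(\lambda_2(\CH)-\bigO{\nrm{\CA}_{2-2}})t}$ for all $t$, so the growth bound of $\CD|_V$ --- equal to its spectral abscissa in finite dimension --- is $\ge\lambda_2(\CH)-\bigO{\nrm{\CA}_{2-2}}$; sharpening the windowing, or replacing it by a resolvent estimate that exploits that the Hermitian part of $\CD|_V-z\vI$ becomes definite once $\mathrm{Re}(z)$ leaves $\mathrm{Spec}(\CH|_V)$, is what recovers the clean constant $1$. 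I expect this passage --- from ``approximate eigenvector with residual $\nrm{\CA}_{2-2}$'' to ``true eigenvalue within $\nrm{\CA}_{2-2}$ in real part'' for the \emph{non-normal} operator $\CD|_V=\CH|_V+\CA|_V$ --- to be the main obstacle, since the naive pseudospectral inclusion is far too lossy for non-normal operators; the special structure (Hermitian $\CH|_V$ plus small anti-Hermitian $\CA|_V$) is exactly what makes a sharp statement possible.

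Finally, for \eqref{eq:gapRelations}: assuming $\lambda_{\mathrm{Re}(gap)}(\CL)\ge 2\nrm{\CA}_{2-2}$, the middle inequality just established forces $2\nrm{\CA}_{2-2}\le\nrm{\CA}_{2-2}-\lambda_2(\CH)$, hence $\lambda_2(\CH)\le-\nrm{\CA}_{2-2}$, so $\lambda_1(\CH)=0$ is the unique eigenvalue of $\CH$ lying in $[-\nrm{\CA}_{2-2},\infty)$; the two displayed inequalities of \eqref{eq:gapRelations} are then precisely the first two inequalities of \eqref{eq:gapBound}, already proven.
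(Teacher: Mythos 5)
The paper never proves this proposition itself; it is imported verbatim from the companion work \cite[Proposition E.5]{chen2023QThermalStatePrep}, and the only hint given here is the remark that ``the proof proceeds by showing $\lambda_{\mathrm{Re}(gap)}(\CL)\ge \ln(2)/t_{mix}(\CL)$.'' Measured against that, your treatment of the two outer inequalities is sound: the Gelfand-formula argument on the traceless sector correctly yields $\lambda_{\mathrm{Re}(gap)}(\CL)\ge\ln 2/t_{mix}(\CL)$, and the leftmost inequality reduces to $\lambda_1(\CH)\ge-\nrm{\CA}_{2-2}$, which holds because $0\in\mathrm{Spec}(\CD)$ and $\mathrm{Re}\ipc{x}{\CD[x]}=\ipc{x}{\CH[x]}$ force $\lambda_1(\CH)\ge 0$. (Your route to this via $\CH[\sqrt{\vrho}]=0$ and $\CH\preceq 0$ silently assumes $\vrho$-DBU, while the statement is claimed for any Lindbladian; this is a repairable scope issue.)

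The genuine gap is exactly where you flag it, and it is worse than a missing sharpening. You reduce the middle inequality to the claim that an operator of the form $\CH|_V+\CA|_V$ (Hermitian plus anti-Hermitian) must possess an eigenvalue with real part at least $\lambda_{\max}(\CH|_V)-\nrm{\CA|_V}_{2-2}$, and you expect that a refined windowing or resolvent estimate will recover the constant $1$. No such estimate exists at this level of generality, because the claim is \emph{false} for general Hermitian-plus-anti-Hermitian sums: take the nilpotent matrix
\begin{align}
\vM=\left(\begin{smallmatrix}0&1&1\\[1pt]0&0&1\\[1pt]0&0&0\end{smallmatrix}\right),\qquad \vH:=\tfrac12(\vM+\vM^\dagger),\qquad \vA:=\tfrac12(\vM-\vM^\dagger).
\end{align}
Here $\vM+\vM^\dagger$ is the all-ones matrix minus the identity, so $\lambda_{\max}(\vH)=1$; the real antisymmetric part has eigenvalues $0,\pm\tfrac{\ri\sqrt{3}}{2}$, so $\nrm{\vA}=\tfrac{\sqrt 3}{2}<1$; yet $\mathrm{Spec}(\vM)=\{0\}$, so every eigenvalue has real part $0<\lambda_{\max}(\vH)-\nrm{\vA}$. (Direct sums of such blocks kill the analogous statement for $\lambda_2$ as well.) Consequently your plan --- ``approximate eigenvector of $\CH|_V$ with residual $\nrm{\CA}$ implies a true eigenvalue of $\CD|_V$ within $\nrm{\CA}$ in real part'' --- cannot be completed with \emph{any} universal constant from the decomposition $\CD=\CH+\CA$ alone; a correct proof must exploit additional structure of Lindbladian discriminants (e.g., complete positivity, the contraction property of $\e^{t\CD}$ in the KMS inner product, or, in the DBU case, the specific form $\CA=-\tfrac{\ri}{2}[\vQ,\cdot]$ with $[\vQ,\vrho]=0$), none of which enters your argument. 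The ``Moreover'' part is derived correctly \emph{from} the middle inequality, so it inherits the same gap.
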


\begin{restatable}[Spectral gap from mixing time]{cor}{spectralgapmixingDB}\label{corr:gapBound}
	If a \Lword{} $\CL$ satisfies $ \vrho$-DBU and $\frac{\ln(2)}{2 t_{mix}(\CL)}\geq\nrm{\CA}_{2-2}$, then\footnote{The bound cannot be qualitatively strengthened without additional structural understanding or assumptions. E.g., consider $\vH\!:=\vZ-(1+\epsilon)\vI$, $\vA\!:=\ri \vX$; the eigenvalues of $\vH+\vA$ are $-1-\epsilon$, while $\lambda_1(\vH)=-\epsilon$, $\nrm{\vA}=1$, and the relaxation (mixing) \nolinebreak time \nolinebreak is \nolinebreak $\bigO{1}$.

} 
    \begin{align}
        \lambda_{gap}(\CH)\ge \frac{\ln(2)}{2t_{mix}(\CL)}.
    \end{align}
\end{restatable}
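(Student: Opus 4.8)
The statement to prove is: if $\CL$ satisfies $\vrho$-DBU and $\frac{\ln(2)}{2 t_{mix}(\CL)} \ge \nrm{\CA}_{2-2}$, then $\lambda_{gap}(\CH) \ge \frac{\ln(2)}{2 t_{mix}(\CL)}$. The plan is to derive this as a direct consequence of \autoref{prop:mixing_to_gap}, whose hypotheses I must first verify from the assumption $\frac{\ln(2)}{2 t_{mix}(\CL)} \ge \nrm{\CA}_{2-2}$.

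First I would invoke the chain of inequalities \eqref{eq:gapBound}, which holds for any \Lword{}: in particular $\lambda_{\mathrm{Re}(gap)}(\CL) \ge \frac{\ln(2)}{t_{mix}(\CL)}$. Combining this with the hypothesis gives
\begin{align}
\lambda_{\mathrm{Re}(gap)}(\CL) \ge \frac{\ln(2)}{t_{mix}(\CL)} = 2\cdot\frac{\ln(2)}{2 t_{mix}(\CL)} \ge 2\nrm{\CA}_{2-2},
\end{align}
so the additional hypothesis of the ``moreover'' part of \autoref{prop:mixing_to_gap} is satisfied. Hence we may use \eqref{eq:gapRelations}: $\lambda_{gap}(\CH) + 2\nrm{\CA}_{2-2} \ge \nrm{\CA}_{2-2} - \lambda_2(\CH) \ge \lambda_{\mathrm{Re}(gap)}(\CL)$. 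In particular $\lambda_{gap}(\CH) \ge \lambda_{\mathrm{Re}(gap)}(\CL) - 2\nrm{\CA}_{2-2}$.

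Now I would close the argument by bounding the right-hand side. Using $\lambda_{\mathrm{Re}(gap)}(\CL) \ge \frac{\ln(2)}{t_{mix}(\CL)}$ once more together with the hypothesis $\nrm{\CA}_{2-2} \le \frac{\ln(2)}{2 t_{mix}(\CL)}$,
\begin{align}
\lambda_{gap}(\CH) \ge \lambda_{\mathrm{Re}(gap)}(\CL) - 2\nrm{\CA}_{2-2} \ge \frac{\ln(2)}{t_{mix}(\CL)} - 2\cdot\frac{\ln(2)}{2 t_{mix}(\CL)} = 0,
\end{align}
which is not yet sharp enough. The better route is to not spend the full gap: write $\lambda_{\mathrm{Re}(gap)}(\CL) = \frac{\ln(2)}{2 t_{mix}(\CL)} + \big(\lambda_{\mathrm{Re}(gap)}(\CL) - \frac{\ln(2)}{2 t_{mix}(\CL)}\big)$ and observe $\lambda_{\mathrm{Re}(gap)}(\CL) - \frac{\ln(2)}{2 t_{mix}(\CL)} \ge \frac{\ln(2)}{t_{mix}(\CL)} - \frac{\ln(2)}{2 t_{mix}(\CL)} = \frac{\ln(2)}{2 t_{mix}(\CL)} \ge 2\nrm{\CA}_{2-2}$; therefore $\lambda_{\mathrm{Re}(gap)}(\CL) - 2\nrm{\CA}_{2-2} \ge \frac{\ln(2)}{2 t_{mix}(\CL)}$, giving the claim.

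The only subtlety — and the step that needs care rather than difficulty — is making sure the hypotheses of \autoref{prop:mixing_to_gap} are genuinely met before using \eqref{eq:gapRelations}, since that refined bound is conditional on $\lambda_{\mathrm{Re}(gap)}(\CL) \ge 2\nrm{\CA}_{2-2}$; I verified this above from the factor-of-$2$ slack in the hypothesis. There is no real analytic obstacle here: the corollary is essentially an accounting exercise that trades the slack in the assumption $\nrm{\CA}_{2-2} \le \frac{\ln(2)}{2 t_{mix}(\CL)}$ against the $-2\nrm{\CA}_{2-2}$ correction appearing in \eqref{eq:gapRelations}. The footnote example ($\vH = \vZ - (1+\epsilon)\vI$, $\vA = \ri\vX$) shows the constant cannot be pushed beyond $\frac{\ln 2}{2 t_{mix}}$ without further structural input, so no stronger statement should be attempted by this method.
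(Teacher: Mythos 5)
There is a genuine gap in your final step. Your route only ever uses the outermost inequality of \eqref{eq:gapRelations}, namely $\lambda_{gap}(\CH)\ge\lambda_{\mathrm{Re}(gap)}(\CL)-2\nrm{\CA}_{2-2}$, and from the available facts ($\lambda_{\mathrm{Re}(gap)}(\CL)\ge\frac{\ln 2}{t_{mix}}$ and $\nrm{\CA}_{2-2}\le\frac{\ln 2}{2t_{mix}}$) this can only ever give $\lambda_{gap}(\CH)\ge 0$, as you yourself noted. Your attempted repair then asserts $\frac{\ln(2)}{2 t_{mix}(\CL)}\ge 2\nrm{\CA}_{2-2}$, but the hypothesis only gives $\frac{\ln(2)}{2 t_{mix}(\CL)}\ge\nrm{\CA}_{2-2}$ — you have doubled the wrong side. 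Unwinding your decomposition, the claim $\lambda_{\mathrm{Re}(gap)}(\CL)-2\nrm{\CA}_{2-2}\ge\frac{\ln 2}{2t_{mix}}$ is equivalent to $\lambda_{\mathrm{Re}(gap)}(\CL)\ge\frac{\ln 2}{2t_{mix}}+2\nrm{\CA}_{2-2}$, and the right-hand side can be as large as $\frac{3\ln 2}{2t_{mix}}$ under the stated hypothesis, which is not dominated by the lower bound $\lambda_{\mathrm{Re}(gap)}(\CL)\ge\frac{\ln 2}{t_{mix}}$. So this route cannot close.

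The missing idea, which is what the paper's proof uses, is to exploit the \emph{middle} term of \eqref{eq:gapRelations} rather than the outer one: $\nrm{\CA}_{2-2}-\lambda_{2}(\CH)\ge\lambda_{\mathrm{Re}(gap)}(\CL)$. Once you have verified $\lambda_{\mathrm{Re}(gap)}(\CL)\ge 2\nrm{\CA}_{2-2}$ (which you did correctly), the ``moreover'' clause of \autoref{prop:mixing_to_gap} guarantees a unique eigenvalue $\lambda_1(\CH)\ge-\nrm{\CA}_{2-2}$, and \autoref{prop:stationarityHA} shows $0$ is an eigenvalue of $\CH$, so $\lambda_1(\CH)=0$ and hence $-\lambda_2(\CH)=\lambda_{gap}(\CH)$. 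The middle inequality then reads $\nrm{\CA}_{2-2}+\lambda_{gap}(\CH)\ge\lambda_{\mathrm{Re}(gap)}(\CL)\ge\frac{\ln 2}{t_{mix}}$, i.e.\ the correction is a single $\nrm{\CA}_{2-2}\le\frac{\ln 2}{2t_{mix}}$ rather than $2\nrm{\CA}_{2-2}$, and this yields $\lambda_{gap}(\CH)\ge\frac{\ln 2}{t_{mix}}-\frac{\ln 2}{2t_{mix}}=\frac{\ln 2}{2t_{mix}}$ exactly as claimed. In short: the identification $\lambda_1(\CH)=0$ is not a side remark but the step that saves the factor of $\nrm{\CA}_{2-2}$ your accounting is missing.
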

\begin{proof}
    Observe that $\frac{\ln(2)}{2 t_{mix}(\CL)}\geq\nrm{\CA}_{2-2}$ implies $\lambda_{\mathrm{Re}(gap)}(\CL)\geq 2\nrm{\CA}_{2-2}$ due to \eqref{eq:gapBound}, which in turn by \eqref{eq:gapRelations} also implies that $\lambda_{1}(\CH)=0$ (since $0$ is an eigenvalue due to \autoref{prop:stationarityHA}). Thus combining \eqref{eq:gapBound}-\eqref{eq:gapRelations} yields
    \begin{align}
        \nrm{\CA}_{2-2}+\lambda_{\mathrm{gap}}(\CH)\ge\lambda_{\mathrm{Re}(gap)}(\CL)\ge \frac{\ln(2)}{t_{mix}(\CL)}. \tag*{\qedhere}
    \end{align}    
\end{proof}

The proof of the above proceeds by showing that $\lambda_{\mathrm{Re}(gap)}(\CL)\ge \frac{\ln(2)}{t_{mix}(\CL)}$, however without any prior knowledge on $\nrm{\CA}_{2-2}$ it does not seem to be possible to lower bound $\lambda_{gap}(\CH)$ in general.
Indeed, if the \Lword{} $\CL$ satisfies $ \vrho$-detailed balance with unitary drift, then $\CL$ and $\CD(\vrho,\CL)=\CH + \CA$ are related by a similarity transform and are thus co-spectral. Due to \autoref{prop:stationarityHA} we know that $V:=\text{Span}(\sqrt{\vrho})^\perp$ is an invariant subspace of both $\CH$ and $\CA$, and thus also of $\CL$. Since $\CH_{|V}\preceq - \lambda_{gap}(\CH) \CI$ and $\CA_{|V}$ is anti-Hermitian we get that every eigenvalue of $\CL^{\dagger}_{|V}$ has real part at most $-\lambda_{gap}(\CH)$, thus $\lambda_{\mathrm{Re}(gap)}(\CL)\geq\lambda_{gap}(\CH)$, which is an inequality in the ``wrong'' direction.

\subsection{$s$-detailed balance}
Our KMS detailed balance condition (\autoref{defn:sDB}) is a special case ($s=1/2$) of a larger family of $s$-detailed balance condition.
\begin{defn}[$s$-detailed balance condition]\label{defn:sDB}
For a normalized, full-rank state $ \vrho\succ 0$ and an scalar $0\le s\le 1$, we say that an super-operator $\CL$ satisfies $(s,\vrho)$-detailed balance (or $s$-DB in short) if
\begin{align}
     \CL^{\dagger}[ \cdot] =\vrho^{s-1}\CL[ \vrho^{1-s}\cdot  \vrho^{s}]  \vrho^{-s}.
\end{align}
\end{defn}
Since $\vrho$ is an operator, different choices of $0\le s\le 1$ yield different detailed balance conditions. Nevertheless, they all prescribe the same fixed point.
\begin{prop}[Fixed point]\label{prop:sDB_fixedpoint}
	If a \Lword{} $\CL$ is $(s,\vrho)$-detailed-balanced, then $\CL[\vrho] =0$.
\end{prop}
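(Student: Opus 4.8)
The plan is to mimic exactly the one-line argument used for \autoref{prop:DBU_fixedpoint} (and implicitly for \autoref{prop:DB_fixedpoint}): exploit unitality of the adjoint Lindbladian together with the invertibility of the full-rank reference state. Since $\CL$ is a Lindbladian, $\e^{t\CL}$ is trace-preserving for all $t$, hence $\CL$ is trace-annihilating, i.e.\ $\tr(\CL[\vX])=0$ for all $\vX$, which is equivalent to $\CL^{\dagger}[\vI]=0$. This is the only structural input we need.

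First I would evaluate the defining identity of $(s,\vrho)$-detailed balance on the identity operator: taking $\cdot = \vI$ in
\begin{align}
\CL^{\dagger}[\cdot] = \vrho^{s-1}\CL\bigl[\vrho^{1-s}\cdot\vrho^{s}\bigr]\vrho^{-s}
\end{align}
gives $0=\CL^{\dagger}[\vI] = \vrho^{s-1}\CL\bigl[\vrho^{1-s}\vI\,\vrho^{s}\bigr]\vrho^{-s} = \vrho^{s-1}\CL[\vrho]\,\vrho^{-s}$, where I used $\vrho^{1-s}\vrho^{s}=\vrho$. Second, since $\vrho\succ 0$ is full-rank, the operators $\vrho^{s-1}$ and $\vrho^{-s}$ are bounded and invertible; multiplying on the left by $\vrho^{1-s}$ and on the right by $\vrho^{s}$ yields $\CL[\vrho]=0$, as claimed.

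There is essentially no obstacle here — the statement is a formal consequence of unitality of $\CL^{\dagger}$ and the fact that the conjugation $\vX\mapsto\vrho^{1-s}\vX\vrho^{s}$ sends $\vI$ to $\vrho$ for every $s$. The only point worth a sentence of care is justifying $\CL^{\dagger}[\vI]=0$ from the Lindblad form (trace preservation of $\e^{t\CL}$, differentiate at $t=0$), and noting that full rank is exactly what makes the similarity transformation invertible so that one can cancel $\vrho^{s-1}$ and $\vrho^{-s}$; this is also why the hypothesis $\vrho\succ 0$ appears in \autoref{defn:sDB}.
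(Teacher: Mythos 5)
Your proof is correct and matches the paper's approach: the paper leaves this proposition unproved but establishes the analogous \autoref{prop:DBU_fixedpoint} by exactly your argument, namely evaluating the detailed balance identity on $\vI$ and using $\CL^{\dagger}[\vI]=0$ together with the invertibility of the full-rank $\vrho$.
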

The case $s=0$ corresponds to the so-called Gelfand-Naimark-Segal (GNS) inner product. One naturally wonders if our constructions also apply here. Unfortunately, the case $s=1/2$ appears to be a special point in the interval $0\le s\le 1$. Let us revisit the energy domain representation of detailed balance (\autoref{prop:DB_energy_domain}) for the transition part. Consider a super-operator parameterized by a Hamiltonian $\vH$, $\beta$, and a set of operators including its adjoints $\{\vA^a\colon a\in A\}=\{\vA^{a\dagg}\colon a\in A\}$:
\begin{align}
    \CT = \sum_{a\in A} \sum_{\nu_1,\nu_2\in B}\alpha_{\nu_1,\nu_2} \vA^a_{\nu_1}(\cdot)(\vA^a_{\nu_2})^{\dagger}.
\end{align}
Then, the $s$-detailed balance condition 
\begin{align}
    \CT^{\dagger}[ \cdot] =\vrho^{s-1}\CT[ \vrho^{1-s}\cdot  \vrho^{s}]  \vrho^{-s} \quad \text{demands that}\quad \quad \alpha_{\nu_1,\nu_2}= \alpha_{-\nu_2,-\nu_1} \exp(-\beta (1-s)\nu_1-\beta s \nu_2 ) \quad \text{for each}\quad \nu_1,\nu_2 \in B.
\end{align}
However, this appears to be a strong condition; if we apply a change of variable $(\nu_1,\nu_2)\rightarrow (-\nu_2,-\nu_1)$,
\begin{align}
    \alpha_{-\nu_2,-\nu_1}= \alpha_{\nu_1,\nu_2} \exp(\beta (1-s)\nu_2+\beta s \nu_1 ) \quad \text{for each}\quad \nu_1,\nu_2 \in B.
\end{align}
Multiply the two to see that (see~\cite[Lemma 2.5]{carlen2017gradient} for an abstract argument)
\begin{align}
    \alpha_{\nu_1,\nu_2}\alpha_{-\nu_2,-\nu_1}= \alpha_{\nu_1,\nu_2}\alpha_{-\nu_2,-\nu_1} \exp(\beta (1-2s)(\nu_2-\nu_1) ) \quad \text{for each}\quad \nu_1,\nu_2 \in B.
\end{align}
Therefore, if $s \ne \frac{1}{2}$ and $\beta \ne 0$, we must have that
\begin{align}
    \alpha_{\nu_1,\nu_2} = 0 \quad \text{if}\quad \nu_1\ne\nu_2 \label{eq:nu1_nu2}
\end{align}
which contradicts our construction, and currently, we do not know how to algorithmically ensure~\eqref{eq:nu1_nu2}. The only existing Lindbladian~\cite{carlen2017gradient} that satisfies~\eqref{eq:nu1_nu2} is the Davies' generator~\cite{davies74}, which requires resolving the level spacing using a (exponentially) long Hamiltonian simulation time.
\section{Mathematica code for symbolically verifying our calculations}\label{apx:Mathematica}
This appendix displays the Mathematica code that symbolically verifies the heavy algebra calculations and Fourier transforms, as well as
\mmaDefineMathReplacement[≤]{<=}{\leq}
\mmaDefineMathReplacement[≥]{>=}{\geq}
\mmaDefineMathReplacement[≠]{!=}{\neq}
\mmaDefineMathReplacement[→]{->}{\to}[2]
\mmaDefineMathReplacement[⧴]{:>}{:\hspace{-.2em}\to}[2]
\mmaDefineMathReplacement{∈}{\in}
\mmaDefineMathReplacement{∉}{\notin}
\mmaDefineMathReplacement{ℝ}{\mathbb{R}}
\mmaDefineMathReplacement{∞}{\infty}
\mmaDefineMathReplacement{±}{\pm}
\mmaDefineMathReplacement{α}{\alpha}
\mmaDefineMathReplacement{β}{\beta}
\mmaDefineMathReplacement{γ}{\gamma}
\mmaDefineMathReplacement{σ}{\sigma}
\mmaDefineMathReplacement{ω}{\omega} 
\phantom{text}\\[-3mm]
\begin{itemize}[leftmargin=-1mm]
	\item Setting up proper parameters for the Fourier Transforms and globally applicable assumptions:
\begin{mmaCell}[functionlocal=x,mathreplacements=bold]{Code}
		SetOptions[FourierTransform, FourierParameters -> {0, -1}];
		SetOptions[InverseFourierTransform, FourierParameters -> {0, -1}];
		$Assumptions = \mmaUnd{β} > 0 && \mmaUnd{σ} > 0 && \mmaUnd{ω} ∈ ℝ && t ∈ ℝ;
\end{mmaCell}

	\item Verifying \eqref{eq:convolutionFilterFinite} by taking $c\leftarrow0$ and \eqref{eq:convolutionFilter} by additionally setting $s\leftarrow\infty$, finally \eqref{eq:convolutionFilterAPX} by resetting $c\leftarrow\beta\sigma_E$:
\begin{mmaCell}[functionlocal=x,mathreplacements=bold]{Code}
		Exp[-(\mmaUnd{ω} + x)^2/(2 (2x/\mmaUnd{β} - \mmaUnd{σ}^2))] Sqrt[2Pi(2x/\mmaUnd{β} - \mmaUnd{σ}^2)];
		Integrate[
\end{mmaCell}
\begin{mmaCell}{Output}
		1/2 E^(-(1/4)β(βσ^2 + 2ω + Abs[βσ^2 + 2ω])) 
			*(-Erf[c/2 - (β Abs[βσ^2 + 2ω])/(4c)] + Erf[s/2 - (β Abs[βσ^2 + 2ω])/(4s)]  
			+ E^(1/2 β Abs[βσ^2 + 2ω])*(	- Erf[c/2 + (β Abs[βσ^2 + 2ω])/(4c)] 
							 + Erf[s/2 + (β Abs[βσ^2 + 2ω])/(4s)])  )
\end{mmaCell}

\item Verifying the integral in the proof of \autoref{cor:transitionMetropolis}:
\begin{mmaCell}[functionlocal=x,mathreplacements=bold]{Code}
		Integrate[Exp[-2 x^2]/(x^2 + 1/16), {x, -Infinity, Infinity}] / Sqrt[32 Pi Exp[1/4]]
\end{mmaCell}
\begin{mmaCell}{Output}
		Sqrt[Pi/2] Erfc[1/Sqrt[8]]
		0.773389
\end{mmaCell}

	\item Verifying the inverse Fourier Transform copmutations at the end of the proof of \autoref{prop:B_timedomain}:
\begin{mmaCell}[functionlocal=w,mathreplacements=bold]{Code}
		InverseFourierTransform[Exp[-(2x + \mmaFnc{ω})^2/(16x/\mmaUnd{β})], \mmaFnc{ω}, t, Assumptions -> x > 0 && \mmaUnd{σ} > 0]
		InverseFourierTransform[Exp[-\mmaFnc{ω}^2/(8\mmaUnd{σ}^2)] Sinh[-\mmaUnd{β}\mmaFnc{ω}/4]/(2 I), \mmaFnc{ω}, t, 
								Assumptions -> \mmaUnd{β} > 0 && \mmaUnd{σ} > 0 && x > 0];
		Exp[-2 t^2 \mmaUnd{σ}^2] \mmaUnd{σ};
		InverseFourierTransform[1/Cosh[-\mmaUnd{β}\mmaFnc{ω}/4]/(2Pi), \mmaFnc{ω}, t] // FullSimplify
\end{mmaCell}
\begin{mmaCell}{Output}
		2 Sqrt[2] E^(-2tx (I + (2t)/β)))/Sqrt[β/x] 
\end{mmaCell}
\begin{mmaCell}{Output}
		-E^(-2t^2σ^2 + (β^2σ^2)/8) σ Sin[tβσ^2]
\end{mmaCell}
\begin{mmaCell}{Output}
		(Sqrt[2/Pi] Sech[(2Pi t)/β])/β
\end{mmaCell}

	\item Verifying the computations in the proof of \autoref{prop:MetropolisLikeFilterf}:
\begin{mmaCell}[functionlocal=x,mathreplacements=bold]{Code}
		Integrate[Exp[-4t^2x/\mmaUnd{β} - 2I\mmaUnd{t}\mmaFnc{x}]/Sqrt[2Pi], {x, 1/(2 \mmaUnd{β}), s^2/\mmaUnd{β}}]	
		Integrate[Sqrt[\mmaUnd{β}/(16Pi x)] Exp[-(\mmaUnd{ω} + 2 x)^2/(16x/\mmaUnd{β})], {x, \mmaUnd{β}\mmaUnd{σ}^2 /2, s^2/\mmaUnd{β}}, 
						Assumptions -> $Assumptions && s > \mmaUnd{β}\mmaUnd{σ}/Sqrt[2]];
		InverseFourierTransform[
		((
\end{mmaCell}
\begin{mmaCell}{Output}
		((E^(-((t (2t + Iβ))/β^2)) - E^(-((2s^2t (2t + Iβ))/β^2))) β)/(2 Sqrt[2Pi] t (2t + Iβ))
\end{mmaCell}
\begin{mmaCell}{Output}
		1/4 (-Erf[(βσ^2 + ω)/(2 Sqrt[2] σ)] + Erf[s/2 + (βω)/(4s)] 
			+ E^(-((βω)/2)) (Erf[s/2 - (βω)/(4s)] - Erf[(βσ^2 - ω)/(2 Sqrt[2] σ)])) 
\end{mmaCell}
\begin{mmaCell}{Output}
		1/4 (E^(-((β ω)/2)) Erfc[(βσ^2 - ω)/(2 Sqrt[2] σ)] + Erfc[(βσ^2 + ω)/(2 Sqrt[2] σ)])
\end{mmaCell}
\begin{mmaCell}{Output}
		(E^(-t(2t + Iβ)σ^2)β)/(2 Sqrt[2Pi] t(2t + Iβ)) + 1/2 Sqrt[Pi/2] DiracDelta[t]
\end{mmaCell}

\item Verifying \eqref{eq:tanhStep}:
\begin{mmaCell}[functionlocal=x,mathreplacements=bold]{Code}
		I/2 (2 - Exp[-\mmaUnd{β}\mmaUnd{ω}/2] - Exp[\mmaUnd{β}\mmaUnd{ω}/2])/(Exp[-\mmaUnd{β}\mmaUnd{ω}/2] - Exp[\mmaUnd{β}\mmaUnd{ω}/2]) // FullSimplify
\end{mmaCell}
\begin{mmaCell}{Output}
		I/2 Tanh[βω/4]
\end{mmaCell}

\end{itemize}
\end{document}